\crefname{equation}{}{} 
\colorlet{refkey}{orange!20}
\colorlet{labelkey}{blue!30}
\numberwithin{equation}{section}
\newtheorem{theorem}{Theorem}[section]
\newtheorem{proposition}[theorem]{Proposition}
\newtheorem{lemma}[theorem]{Lemma}
\newtheorem{claim}[theorem]{Claim}
\crefname{claim}{Claim}{Claims}
\newtheorem{corollary}[theorem]{Corollary}
\newtheorem*{question*}{Question}
\newtheorem{fact}[theorem]{Fact}
\theoremstyle{definition}
\newtheorem{definition}[theorem]{Definition}
\newtheorem*{definition*}{Definition}
\theoremstyle{remark}
\newtheorem*{remark}{Remark}
\newcommand{\alert}[1]{}
\renewcommand{\tilde}{\widetilde}
\newcommand{\norm}[1]{\left\lVert#1\right\rVert}
\newcommand{\poly}{\mathrm{poly}}
\newcommand{\fC}{\mathcal{C}}
\newcommand{\potential}{\phi}
\newcommand{\eps}{\varepsilon}
\newcommand{\R}{\mathbb{R}}
\newcommand{\calS}{\mathcal{S}}
\newcommand{\diam}{\textrm{diam}}
\newcommand{\parent}{\textrm{par}}
\newcommand{\fO}{\mathcal{O}}
\newcommand{\setssp}{set-source shortest path\xspace}
\newcommand{\tO}{\tilde{O}}
\newcommand{\tG}{\tilde{G}}
\newcommand{\congest}{\ensuremath{\mathsf{CONGEST}}\xspace}
\newcommand{\congestPA}{\ensuremath{\mathsf{CONGEST}^{PA}}\xspace}
\newcommand{\aggregate}{\ensuremath{\mathsf{Minor}}\text{-}\ensuremath{\mathsf{Aggregation}}\xspace}
\newcommand{\pram}{\ensuremath{\mathsf{PRAM}}\xspace}
\newcommand{\euler}{\ensuremath{\textsc{Eulerian-}\textsc{Orientation}}\xspace}
\newcommand{\1}[1]{\mathbb{1}[#1]}
\newcommand{\cheat}{\ensuremath{^{\mathrm{cheat}}}\xspace}
\newcommand{\Path}{\mathrm{Path}}
\newcommand{\oEuler}{\fO^{Euler}}
\newcommand{\oRound}{\fO^{Round}}
\newcommand{\oDist}{\fO^{Forest}}
\newcommand{\oPot}{\fO^{Pot}}
\newcommand{\oObliv}{\fO^{Obliv}}
\newcommand{\Dhop}{\hopDiameter{G}}
\newcommand{\sstar}{s^*}
\newcommand{\vD}{v_D}
\newcommand{\opt}{\mathrm{OPT}}
\newcommand{\dist}{\operatorname{dist}}
\newcommand{\shortcutQuality}[1]{\mathrm{Shortcut}\-\mathrm{Quality}(#1)}
\newcommand{\hopDiameter}[1]{\mathrm{Hop}\-\mathrm{Diam}(#1)}
\newcommand{\virt}{\ensuremath{_{\mathrm{virt}}}\xspace}
\newcommand{\todob}[1]{}   
\newcommand{\valSNC}{\tau}
\title{Undirected $(1+\eps)$-Shortest Paths via Minor-Aggregates:\\ \normalsize Near-Optimal Deterministic Parallel \& Distributed Algorithms}
\author{
Václav Rozhoň 
\textcircled{r} Christoph Grunau \\
\small ETH Zurich \\
\small rozhonv@inf.ethz.ch \& cgrunau@inf.ethz.ch\\
 \and\textcircled{r}\footnote{The author ordering was randomized using \url{https://www.aeaweb.org/journals/policies/random-author-order/generator}. 
 It is requested that citations of this work list the authors separated by \texttt{\textbackslash textcircled\{r\}} instead of commas: Rozhon \textcircled{r} Grunau \textcircled{r} Haeupler \textcircled{r} Zuzic \textcircled{r} Li.}\and
	Bernhard Haeupler\\
  \small ETH Zurich \& Carnegie Mellon University\\
  \small bernhard.haeupler@inf.ethz.ch
 \and
Goran Zuzic\\
\small ETH Zurich \\
\small goran.zuzic@inf.ethz.ch \\
\and\textcircled{r}\and
	Jason Li \\ 
  \small UC Berkeley\\
  \small jmli@cs.cmu.edu
}
\date{}
\begin{document}
\maketitle
\thispagestyle{empty}
 
\begin{abstract}
\noindent This paper presents near-optimal deterministic parallel and distributed algorithms for computing $(1+\eps)$-approximate single-source shortest paths in any undirected weighted graph.
\medskip

\noindent On a high level, we deterministically reduce this and other shortest-path problems to $\tO(1)$ \footnote{We use $\tO$-notation to suppress polylogarthmic factors in the number of nodes $n$, e.g., $\tO(m) = m \log^{O(1)} n$. We use the term near-linear to mean $\tO(m)$.} Minor-Aggregations. 
A Minor-Aggregation computes an aggregate (e.g., max or sum) of node-values for every connected component of some subgraph.

\medskip

\noindent Our reduction immediately implies:
\begin{itemize}
    \item Optimal deterministic parallel (\pram) algorithms with $\tO(1)$ depth and near-linear work.
    \item Universally-optimal deterministic distributed (\congest) algorithms, whenever deterministic Minor-Aggregate algorithms exist. For example, an optimal $\tO(\hopDiameter{G})$-round deterministic \congest algorithm for excluded-minor networks.
\end{itemize}

\medskip

\noindent Several novel tools developed for the above results are interesting in their own right:

\begin{itemize}
    \item A local iterative approach for reducing shortest path computations ``up to distance $D$'' to computing low-diameter decompositions  ``up to distance $\frac{D}{2}$''. Compared to the recursive vertex-reduction approach of \cite{Li20}, our approach is simpler, suitable for distributed algorithms, and eliminates many derandomization barriers.
    \item A simple graph-based $\tO(1)$-competitive $\ell_1$-oblivious routing based on low-diameter decompositions that can be evaluated in near-linear work. The previous such routing~\cite{goranci2022universally} was $n^{o(1)}$-competitive and required $n^{o(1)}$ more work.   

    \item A deterministic algorithm to round any fractional single-source transshipment flow into an integral tree solution. 

    \item The first distributed algorithms for computing Eulerian orientations. 
\end{itemize}

\smallskip

\end{abstract}

\newpage

\tableofcontents
\thispagestyle{empty}




\newpage
\setcounter{page}{1}

\section{Introduction}\label{sec:intro}

This paper gives essentially near-optimal\footnote{We refer to optimality up to polylogarithmic factors as \emph{near-optimality}, while \emph{almost-optimality} refers to optimality up to subpolynomial, i.e., $n^{o(1)}$, factors.} deterministic parallel algorithms for various $(1+\eps)$-approximate shortest-path-type problems in undirected weighted graphs. These problems include computing $(1+\eps)$-approximate shortest paths or shortest path trees from a single source (shortened to $(1+\eps)$-SSSP) and $(1 + \eps)$-approximate minimum transshipment (shortened to $(1+\eps)$-transshipment). Our algorithms run in polylogarithmic time and require near-linear work. This is optimal up to polylogarithmic factors. All prior parallel deterministic algorithms with polylogarithmic depth for $(1+\eps)$-SSSP required $n^{\Theta(1)}$ more work and, to our knowledge, no sub-quadratic work NC algorithm was known for $(1+\eps)$-transshipment. We also give deterministic distributed algorithms for the above problems, including the first non-trivial distributed deterministic $(1+\eps)$-transshipment algorithm. 

Computing a single-source shortest path is one of the most fundamental problems in combinatorial optimization. Already in 1956, Dijkstra \cite{dijkstra1959dijkstra} gave a simple deterministic $O(m \log n)$ time algorithm for computing an exact single-source shortest path tree in any directed graph. Parallel and distributed algorithms on the other hand have been subject of over three decades of extensive research \cite{ullman1991high,cohen1994polylog,klein1997randomized,spencer1997time,galil1997all,alon1997exponent,zwick1998all,shi1999time,Nan14,EN16,henzinger2016almost,BKKL17,Elk17,elkin2017linearsize,FN18,GL18,haeuplerLi2018sssp,elkin2019hopsets,EN19,henzinger2019deterministic,LPP19,CM20,elkinMatar2021detPRAMSSSP,goranci2022universally} with much remaining unknown. Much recent research has focused on $(1+\eps)$-approximate shortest paths in undirected graphs, the problem solved in this paper. We provide a detailed summary of this prior work next and describe related work on SSSP algorithms in other computational models or on other related problems, including all-pair-shortest-path and SSSP in directed graphs, in \Cref{sec:morerelatedwork}. We focus on randomized and deterministic parallel algorithms with polylogarithmic depth, i.e., RNC and NC algorithms. 

\paragraph{Parallel Algorithm.}
The \pram model of parallel computation was introduced in 1979. After a decade of intense study, Karp and Ramachandran noted in their influential 1989 survey on parallel algorithms~\cite{karp1989survey} that the ``transitive closure bottleneck'' for shortest-path-type problem required work ``far in excess of the time required to solve the problem sequentially''. Indeed, these algorithms build on matrix multiplication or on transitive closure routines and require $O(n^{3})$ work. A line of work \cite{ullman1991high,klein1997randomized,spencer1997time,cohen1997using,shi1999time} provided algorithms with different work-time trade-offs but without much improvement on the cubic work bound for fast parallel algorithms.



In a major breakthrough, Cohen~\cite{cohen1994polylog} gave a randomized $(1 + \eps)$-SSSP algorithm based on hopsets with $\tO_\rho(1)$ time and $O(m^{1+\rho})$ work for any constant $\rho > 0$. Algorithms with slightly improved time-work tradoffs were given~\cite{elkin2017linearsize,elkin2019hopsets} but these algorithms still required a polynomial $m^{\rho}$ factor more work than Dijkstra's algorithm to parallelize shortest path computations to polylogarithmic parallel time. 

This remained the state-of-the-art until 2020 when methods from continuous optimization developed by Sherman~\cite{She13,She17b} enabled further progress on shortest-path type problems. In particular, Sherman gave an $m \cdot n^{o(1)}$ sequential algorithm for the $(1+\eps)$-approximate minimum transshipment problem. \emph{Transshipment}, also known as uncapacitated min-cost flow, Wasserstein distance, or optimal transport, is the other major problem solved in this paper. The input to the transshipment problem on some weighted graph $G$ consists of a \emph{demand} $b$ which specifies for each node a positive or negative demand value with these values summing up to zero. The goal is to find a flow of minimum cost which sends flow from nodes with surplus to nodes with positive demands. Here the cost of routing a unit of flow over an edge is proportional to the weight/length of the edge. For example, an optimal transshipment flow for a demand with $b(s)=-1$ and $b(t)=1$ is simply a unit flow from $s$ to $t$ which decomposes into a distribution over shortest $(s,t)$-paths. We remark that transshipment is in many ways a powerful generalization of shortest-path problems, including computing SSSP (trees), with the exception that extracting trees or paths from a continuous transshipment flow often remains a hard problem itself which requires highly nontrivial rounding algorithms. While having an innocent feel to them, such rounding steps have a history of being inherently randomized and being algorithmic bottlenecks for shortest-path problems on several occasions. 

Sherman's almost-linear time transshipment algorithm did not imply anything new for SSSP at first, but it inspired two independent approaches by \cite{Li20} and Andoni, Stein, and Zhong \cite{AndoniSZ20} that led to near-optimal randomized parallel shortest path algorithms with polylogarithmic time and near-linear work. In particular, Li~\cite{Li20} improved several parts of Sherman's algorithm and combined it with the vertex-reduction framework that Peng~\cite{Pen16} had introduced to give the first near-linear time algorithm for $(1+\eps)$-maximum-flow. \cite{Li20} also adopts a randomized rounding algorithm of \cite{BKKL17} based on random walks to extract approximate shortest-path trees from transshipment flows. Overall, this results in randomized PRAM algorithms for both $(1+\eps)$-transshipment and $(1+\eps)$-SSSP with polylogarithmic time and near-linear work.
Concurrently, Andoni, Stein, and Zhong \cite{AndoniSZ20} achieved the same result for $(1+\eps)$-transshipment by combining Sherman's framework with ideas from hop-sets. They also provide a randomized rounding based on random walks algorithm that can extract an $s$-$t$ shortest path. Later, this rounding was extended in Zhong's thesis~\cite{zhong2021new} to extract the full $(1+\eps)$-SSSP tree. 

It is remarkable that all parallel $(1+\eps)$-SSSP algorithms described above (with the exception of the super-quadratic work algorithms before Cohen's 25 year old breakthrough) crucially rely on randomization for efficiency. Indeed, the only modern deterministic parallel $(1+\eps)$-SSSP algorithm is a very recent derandomization of Cohen's algorithm by Elkin and Matar~\cite{elkinMatar2021detPRAMSSSP}. This algorithm suffers from the familiar $O(m^{1+\rho})$ work bound for a polylogarithmic time parallel algorithm. For the $(1+\eps)$-transshipment problem even less is known. Indeed we are not aware of any efficient deterministic parallel algorithm before this work and would expect that much larger polynomial work bounds would be required to achieve a deterministic algorithm with polylogarithmic parallel time.

In this paper, we give deterministic parallel algorithms for both the $(1 + \eps)$-SSSP and the $(1+\eps)$-transshipment problem in undirected weighted graphs. Both algorithms only require near-linear work and run in polylogarithmic time. This solves these two and various other shortest-path-type problems optimally, up to polylogarithmic factors. 

\begin{restatable}[Deterministic Parallel SSSP and Transshipment]{theorem}{mainparallel}
\label{thm:main_parallel}
There is a deterministic parallel algorithm that, given an undirected graph with nonnegative weights, computes a $(1+\eps)$-approximate single-source shortest path tree or a $(1+\eps)$-approximation to minimum transshipment in $\tO(m \cdot \eps^{-2})$ work and $\tO(1)$ time in the \pram model for any $\eps \in (0,1]$.
\end{restatable}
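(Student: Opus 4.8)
The plan is to reduce both problems to $\tO(1)$ invocations of a Minor-Aggregation primitive and then implement that primitive deterministically in $\tO(1)$ \pram depth and near-linear work. For transshipment this follows the Sherman-style first-order framework: phrasing $(1+\eps)$-transshipment as the $\ell_1$-minimization $\min\{\lVert W f\rVert_1 : B f = b\}$ (with $B$ the incidence matrix and $W$ the diagonal of edge lengths), a linear $\ell_1$-\emph{oblivious routing} operator $R$ that is $\alpha$-competitive yields an $O(\alpha)$-quality preconditioner; running $\tO(\alpha^2\eps^{-2})$ steps of accelerated gradient descent on the softmax-smoothed (dual) objective then produces a $(1+\eps)$-approximate flow together with a matching potential/dual certificate. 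Each step only multiplies a vector by $B$, $B^{\top}$, $R$, $R^{\top}$, and diagonal matrices, so the bottleneck becomes: (i) evaluate $R$ and $R^{\top}$ quickly and deterministically, and (ii) round the resulting fractional flow to a tree when a tree is wanted.

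For (i) I would build the paper's promised simple graph-based $\tO(1)$-competitive oblivious routing from a laminar family of low-diameter decompositions at geometrically decreasing scales: to route a demand one repeatedly sums the demand inside each current cluster, moves it to a cluster center (paying the cluster's diameter), contracts, and recurses. A standard LDD stretch argument shows competitiveness $\tO(1)$ (the randomness of the LDD being replaced by a deterministic LDD with the same guarantees), and — crucially — because $R$ is defined purely by the cluster hierarchy, both $R$ and $R^{\top}$ are computed by $\tO(1)$ Minor-Aggregations (sum demand per cluster, broadcast the correction). This is exactly what removes the $n^{o(1)}$ competitiveness and $n^{o(1)}$ work overhead of the routing of \cite{goranci2022universally}. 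The deterministic near-linear-work polylog-depth LDD itself is obtained from the standard exponential-start-time clustering, derandomized.

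For SSSP I avoid a single global transshipment solve and instead use the local iterative ``distance doubling'': an approximate shortest-path structure correct ``up to distance $D$'' is boosted to one correct ``up to distance $2D$'' using only a low-diameter decomposition at scale $\Theta(D)$ plus $\tO(1)$ Minor-Aggregations, so $\tO(1)$ doublings suffice after the standard reduction to polynomially bounded aspect ratio. Each doubling solves a bounded-distance transshipment instance with the preconditioner above, producing a fractional single-source flow; the final ingredient is the deterministic rounding of such a flow to an integral shortest-path tree. I would orient the flow, reduce removing its cycles to computing an Eulerian orientation of an auxiliary multigraph, give a deterministic $\tO(1)$-depth near-linear-work Eulerian-orientation algorithm (again expressed through Minor-Aggregations), and then peel off the tree; the dual certificate carried along the doublings certifies the $(1+\eps)$ bound.

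It remains to implement one Minor-Aggregation deterministically in \pram: compute connected components of the relevant subgraph (deterministic, $\tO(1)$ depth and $\tO(m)$ work by standard parallel connectivity) and then aggregate node values per component by segmented scan/sorting; composing the $\tO(1)$ such rounds used above keeps depth $\tO(1)$, and tracking the $\eps$-dependence through the $\tO(\eps^{-2})$ optimizer iterations gives total work $\tO(m\eps^{-2})$. The main obstacle is obtaining determinism, near-linear work, and $\tO(1)$ competitiveness simultaneously from the oblivious routing — the LDD must be deterministic with near-linear work, polylog depth, and polylog diameter/separation ratio, and $R,R^{\top}$ must be expressible in $\tO(1)$ Minor-Aggregations — together with making the fractional-flow-to-tree rounding deterministic, which has historically been the randomized bottleneck (random walks in \cite{Li20,AndoniSZ20,BKKL17}); the Eulerian-orientation reduction is the new device that must be made to work deterministically in parallel.
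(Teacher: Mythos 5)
Your high-level architecture matches the paper: reduce both problems to $\tO(1)$ Minor-Aggregations via Sherman-style boosting against a graph-based $\ell_1$-oblivious routing, use an iterative "increase-the-distance-scale" loop rather than Li's vertex reduction, round the fractional single-source flow to a tree through Eulerian orientations, and simulate each Minor-Aggregation in near-linear work and polylog depth. The iterative-scale idea, the boosting reduction, and the Euler-based rounding are exactly the paper's devices.

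The genuine gap is in your oblivious routing construction, and it is not a detail. You propose the "route-to-center-and-contract-over-a-laminar-LDD-hierarchy" scheme and assert that a "standard LDD stretch argument" gives $\tO(1)$-competitiveness (modulo replacing the randomized LDD by a deterministic one with the same guarantees). This fails, and the paper devotes its overview (Section 5.1) to explaining exactly why. The standard argument controls only the \emph{expected} cost increase per level: if $u,v$ at distance $d$ are split by a scale-$r$ decomposition, the cost jumps by $\Theta(r)$, which happens with probability $\tO(d/r)$. That gives $\tO(1)$ per level in expectation, but (a) derandomizing to a fixed clustering gives a worst-case blowup of $\Theta(r/d)$ per level, and (b) even randomized, merging demands at cluster centers and treating the merged demand as atomic forces a \emph{multiplicative} polylog loss per level, and with $\Theta(\log n/\log\log n)$ levels this is polynomial, not polylogarithmic, competitiveness. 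In short, the recipe you wrote is precisely the inherent barrier the paper identifies, not a proof of the theorem.

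The paper's routing circumvents this with two ideas absent from your sketch. First, the flow is never merged into an atomic blob at a cluster center: the flow at a node is kept indexed by its \emph{origin} $v$, so that the amount a node sends toward the next cluster center is a function of that origin's own distance to its cluster boundary, keeping per-level losses \emph{additive} rather than multiplicative. Second, a node $v$ sends only a $p_C(v)\approx \tilde d_C(v)/D_j$ fraction of its $v$-associated flow to the center of $C$, where $\tilde d_C(v)$ is (an approximation of) $v$'s distance to the outside of $C$; close neighbors $u,v$ then have $|p_C(u)-p_C(v)|\le \dist(u,v)/D_j$ by the Lipschitz property of the distance label, which is what the competitiveness analysis (\cref{lem:flow-uv-diff,lem:competitive-ratio-final}) actually uses. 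Third, deterministically there is no single LDD with the needed per-edge guarantee; the object you must use is a \emph{sparse neighborhood cover}, a collection of $O(\log n)$ overlapping clusterings per scale with a padding guarantee, and the routing has to aggregate over all $O(\log n)$ clusterings at each scale with normalization by $w_j(v)=\sum_C p_C(v)$ (\cref{lem:w-lower-bound}), which bounds below the amount a node "participates" at each scale. Relatedly, the clusterings across scales are not laminar, and the analysis does not need them to be; what is needed is \cref{obs:C-inside} (a nonzero $f_{C,C'}$ forces $C\subseteq C'$ for that origin). Without these three ingredients your $\tO(1)$-competitiveness claim does not follow, so the theorem is not proved by the proposal as written.

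Two smaller points: the scale-increase step uses the sparse cover at scale $D/\poly\log n$, not $\Theta(D)$, to build the structure at scale $D$ (you cannot compute a decomposition at scale $D$ without first having distance information up to $D$, which is exactly what the step is producing); and the Euler-based rounding must be done carefully to yield a flow \emph{supported on a tree}, not merely an integral flow — cancelling cycles by $\pm 2^i$ on Euler orientations per Cohen only gives integrality, and the paper's extension (arborescence extraction on each bit level, \cref{lem:structure}) is what actually produces a tree.
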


\paragraph{Distributed Algorithm.}
We also give new distributed $(1+\eps)$-SSSP and $(1+\eps)$-transshipment algorithms in the standard \congest model of distributed computing. 
For distributed algorithms a lower bound of $\Omega(\sqrt{n} + \Dhop)$ rounds is known for worst-case topologies~\cite{elkin2006unconditional,sarma2012distributed}, where $\Dhop$ is the unweighted (i.e., hop) diameter of the network. We refer to this as the \emph{existential lower bound} since it depends on the parameterization by $(n, \Dhop)$, as opposed to the later-discussed \emph{universal lower bound} which does not.

In his 2004 survey on distributed approximation, Elkin~\cite{elkin2004distributed} pointed to the distributed complexity of shortest paths approximations as one of two fundamental and wide-open problems in the area---as no non-trivial algorithms complementing the above lower bounds were known.

Since then, the $(1+\eps)$-SSSP problem is one of the most studiedonly give a brief summary here: Lenzen and Patt-Shamir gave an $\tO(n^{1/2 + \eps} + \Dhop)$-round $O(1 + \frac{\log 1/\eps}{\eps})$-SSSP algorithm \cite{lenzen2013fast} and a $(1+o(1))$-approximation in $\tO(n^{1/2}\cdot \Dhop^{1/4}+\Dhop)$ rounds was given by Nanongkai \cite{Nan14}. 
Building on Sherman's transshipment framework mentioned before, \cite{BKKL17} gave a randomized $\tO(\eps^{-3} \cdot (\sqrt n + \Dhop))$ algorithm.
The first algorithm improving over the $\Omega({\sqrt{n}})$ barrier was given in \cite{haeuplerLi2018sssp} with a running time of $\shortcutQuality{G} \cdot n^{o(1)}$ albeit with a bad $n^{o(1)}$-approximation. 
This was recently improved by \cite{goranci2022universally} to a $(1+\eps)$-approximation with the same round complexity. 
Using \cite{ghaffari_haeupler2021shortcuts_in_minor_closed}, this gives a $(1+\eps)$-approximation algorithm with round complexity $\Dhop \cdot n^{o(1)}$ for any excluded minor topology (e.g., planar graphs). 
All the above algorithms are randomized: The only deterministic distributed algorithm is the $(1 + o(1))$-SSSP algorithm of \cite{henzinger2019deterministic} with a $(n^{1/2+o(1)} + \Dhop^{1+o(1)})$ running time. For $(1+\eps)$-transshipment, the only known sublinear distributed algorithm is the randomized $\shortcutQuality{G} \cdot n^{o(1)}$-round algorithm of \cite{goranci2022universally}. No non-trivial  deterministic distributed $(1+\eps)$-transshipment algorithm (with sub-linear round complexity) was known prior to this work.

The distributed results of this paper include deterministic distributed algorithms for $(1+\eps)$-SSSP, its generalization $(1+\eps)$-\setssp (in which we are looking for distances from a subset of nodes), and  $(1+\eps)$-transshipment. Our results improve over the previous best running times of deterministic distributed algorithms for each of the above problems, both in the worst-case and for any excluded minor graph. Both the round and the message complexities (in $KT_0$, see \cite{peleg2000distributed} for the definition) of our algorithms are (existentially) optimal, up to $\poly\log n$ factors, in all cases. 


\begin{restatable}[Deterministic Distributed SSSP and Transshipment]{theorem}{maindistributed}
\label{thm:main_distributed}
There is a deterministic \congest algorithms that, given an undirected graph with non-negative weights, compute a $(1+\eps)$-approximate \setssp forest or a $(1+\eps)$-transshipment solution for any $\eps \in (0,1]$. Our algorithms have an optimal message complexity of $\tO (m) \cdot  \eps^{-2}$ and are guaranteed to terminate in at most 
\begin{enumerate}
    \item within at most $\tO(\sqrt{n} + \Dhop) \cdot \eps^{-2}$ rounds and
    \item within at most $\tO(\Dhop) \cdot  \eps^{-2}$ rounds if $G$ does not contain any $\tO(1)$-dense minor.
\end{enumerate}
\end{restatable}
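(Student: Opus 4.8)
The plan is to obtain \Cref{thm:main_distributed} as a corollary of the model-independent reduction that forms the technical core of the paper: $(1+\eps)$-SSSP, its \setssp generalization, and $(1+\eps)$-transshipment all reduce \emph{deterministically} to $\tO(1)\cdot\eps^{-2}$ invocations of a Minor-Aggregation primitive, plus $\tO(m)\cdot\eps^{-2}$ additional local computation/communication. Treating that reduction as a black box (it is stated and proved earlier), the distributed theorem follows by plugging in the \congest cost of a single deterministic Minor-Aggregation step and multiplying through, so the work is concentrated in (a) implementing one Minor-Aggregation round deterministically in \congest with the right round/message budget, and (b) bookkeeping the message complexity.

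For step (a) I would recall that a Minor-Aggregation operates on a minor of $G$ formed by contracting a chosen edge subset and attaching virtual nodes of $\tO(1)$ degree, and then computes for each connected component of the contracted graph an aggregate (sum/max) of node values. Bounded-degree virtual nodes can be hosted and simulated by incident real nodes with $\tO(1)$ overhead, so the real content is part-wise aggregation over the components of a subgraph — exactly the task solved by \emph{low-congestion shortcuts}: given a shortcut of quality $Q$, one Minor-Aggregation round costs $\tO(Q)$ rounds and $\tO(m)$ messages. In the worst case, deterministic constructions of shortcuts of quality $Q=\tO(\sqrt n+\Dhop)$ are available; for graphs excluding an $\tO(1)$-dense minor, shortcuts of quality $Q=\tO(\Dhop)$ exist and can likewise be built deterministically. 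Substituting $Q$ into the $\tO(1)\cdot\eps^{-2}$ Minor-Aggregation calls of the reduction yields the two claimed round bounds $\tO(\sqrt n+\Dhop)\cdot\eps^{-2}$ and $\tO(\Dhop)\cdot\eps^{-2}$.

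For step (b), each Minor-Aggregation call — shortcut-based part-wise aggregation plus the bookkeeping for contracted and virtual nodes — can be realized with $\tO(m)$ messages (a constant number of $O(\log n)$-bit messages per edge per aggregation sub-step); summing over $\tO(1)\cdot\eps^{-2}$ calls and the $\tO(m)\cdot\eps^{-2}$ local-communication overhead of the reduction itself gives the stated $\tO(m)\cdot\eps^{-2}$ message bound, which is existentially optimal since $\Omega(m)$ messages are required just to read the input in $KT_0$. Since the reduction already outputs a \setssp forest (resp.\ a transshipment flow) directly, no separate distributed rounding or path-extraction step is needed beyond what the reduction provides, so all three problems are handled at once.

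The main obstacle I anticipate is the \emph{determinism} of the Minor-Aggregation simulation in \congest, i.e.\ deterministically constructing low-congestion shortcuts of near-optimal quality and doing so within the same round budget, so that shortcut construction does not dominate the overall cost. In the worst-case regime this means appealing to a deterministic shortcut construction of quality $\tO(\sqrt n+\Dhop)$, and in the excluded-minor regime to a deterministic analogue of the minor-closed shortcut constructions; the care lies in verifying that such constructions are available, that they compose with the $\tO(1)$-round-per-call accounting, and that the only topological parameter entering the bound is the hop-diameter $\Dhop$ (not the weighted diameter). A secondary point is checking that ``no $\tO(1)$-dense minor'' is precisely the hypothesis under which $\tO(\Dhop)$-quality deterministic shortcuts exist, so that part~2 of the statement is sharp as phrased.
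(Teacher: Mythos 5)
Your high-level plan — reduce the shortest-path problems to a small number of Minor-Aggregations and then simulate each aggregation via deterministic low-congestion shortcuts of the appropriate quality — matches part of the paper's argument, and your shortcut-quality figures ($\tO(\sqrt n + \Dhop)$ in general, $\tO(\Dhop)$ for excluded-minor graphs) are the right ones. But there is a genuine gap: the paper's reduction is \emph{not} purely to Minor-Aggregation rounds. \Cref{theorem:cycleten} gives an algorithm that uses $\tO(1/\eps^2)$ \aggregate rounds \emph{together with oracle calls to the Eulerian-Orientation oracle} $\oEuler$, and $\oEuler$ is not an instance of Minor-Aggregation. The paper makes this explicit in the remark after \Cref{euler:lem:sqrt}: the natural Eulerian-orientation procedure aggregates over parts that are only \emph{edge}-disjoint, not node-disjoint, and therefore does not fit the partwise-aggregation / low-congestion-shortcut interface. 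By declaring that the entire reduction lands in the Minor-Aggregation model, you implicitly assume the flow-rounding step (which is where $\oEuler$ is invoked, via \Cref{thm:rounding}) is free, and it is not.

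Closing that gap is exactly what \Cref{thm:euler_dist} (and all of \Cref{sec:euler}) is for: a new, separately engineered deterministic \congest algorithm for Eulerian orientation built from low-congestion \emph{cycle covers} (Parter--Yogev), achieving $\tO(\sqrt n + \Dhop)$ rounds in general, $\tO(\Dhop)$ rounds on excluded-minor networks, and $n^{o(1)}$ \congestPA rounds. The paper then combines \Cref{theorem:cycleten} with \Cref{thm:oracleSimulation} (which accounts for both the $\tO(\sqrt n + \Dhop)$ or $\tO(\Dhop)$ cost of each \aggregate round \emph{and} the cost of each $\oEuler$ oracle call) to obtain the two round bounds. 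For \setssp, the paper additionally attaches a virtual super-source with (near-)zero-weight edges to $S$ and uses \Cref{thm:simulating-virtual-nodes}; your observation that the reduction ``already outputs a \setssp forest directly'' is closer to the truth than you seem to realize, but the mechanism is this virtual-super-source trick, not a separate path-extraction argument. In short: your shortcut-based accounting is the right frame for the \aggregate rounds, but you need to surface the Eulerian-Orientation oracle as a distinct primitive, note that it is \emph{not} subsumed by partwise aggregation, and invoke the paper's dedicated distributed algorithm for it before the round bounds follow.
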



\underline{Universal optimality.} Recently, the pervasive $\tilde{\Omega}(\Dhop + \sqrt{n})$ distributed lower bounds have been extended to a near-tight universal lower bound~\cite{haeupler2021universally} which shows that most optimization problems including all problems studied in this paper require $\tilde{\Omega}(\shortcutQuality{G})$ rounds on any communication graph $G$. Here $\shortcutQuality{G}$ is a natural graph parameter (we refer the interested reader to \cite{haeupler2021universally} for a formal definition). For experts interested in universally-optimal distributed algorithms we remark that the results of this paper imply strong conditional results. We state these results for the interested reader in \Cref{sec:main_theorems}.


\subsection{Technical Overview}

\begin{figure}
    \centering
    \includegraphics[width = \textwidth]{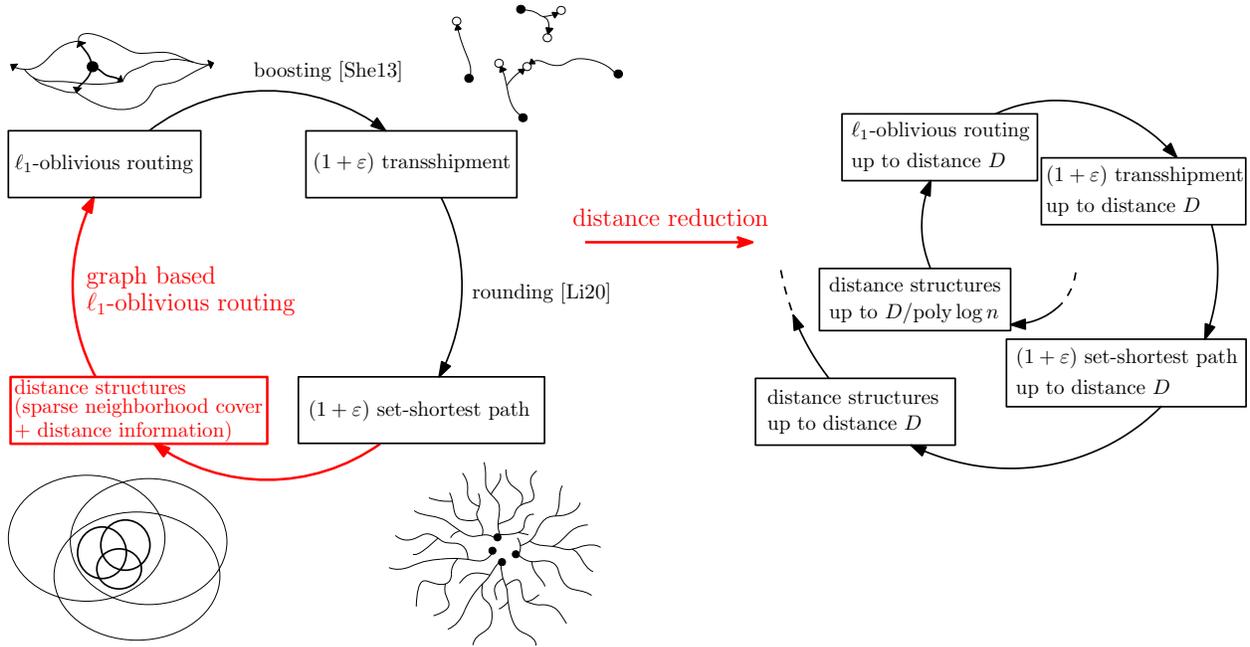}
    \caption{This figure summarizes our approach with our contributions marked in red. It is known that a $\tO(1)$-competitive solution to $\ell_1$-oblivious routing can be boosted to a $(1+\eps)$-approximate transshipment which can then be rounded to yield a $(1+\eps)$-approximate SSSP tree. We close the circle of reductions by constructing certain distance structures from the approximate SSSP tree. Think of those distance structures as a family of clusterings on different scales, storing some additional distance information. One of our main technical contributions is the efficient construction of a $\tO(1)$-approximate oblivious routing from the distance structures, which closes the loop. \\
    This loop of reductions itself is not very useful as the complexity of the problems does not decrease. 
    We break this loop by our \emph{distance reduction} framework by showing one can construct distance structures ``up to distance $D$'' using distance structures ``up to distance $D /\poly\log n$. 
    After $O(\log n)$ iterations, we build up the desired solution to any of the four problems, in particular to approximate transshipment and set-SSSP. }
    \label{fig:main}
\end{figure}

While the results for SSSP and transshipment are important, the main impact of this paper will likely be the new tools and algorithmic ideas developed in this paper. We expect these ideas to be applicable beyond shortest path problems. Next, we give a brief summary of the most relevant parts and ideas of prior work which are needed to understand our work and put in proper context. We then give an informal high-level overview of our new algorithm and some of the new tools developed for it. A readable and more precise technical proof overview is given in \Cref{sec:localiterative}.

\subsubsection{Background and Prior Work}

{\bfseries Transshipment Boosting, $\ell_1$-Oblivious Routing, and Transshipment Flow Rounding} \\
All modern SSSP-algorithms, including ours, compute shortest paths via the transshipment problem (see text before \Cref{thm:main_parallel} for a definition). The key idea in this approach of Sherman~\cite{She13, She17b} is that even a rather bad $\alpha$-approximation algorithm for (dual) transshipment can be \emph{boosted} to a $(1+\eps)$-approximation. This is achieved via the multiplicative weights method (or equivalently: gradient descent) and requires only $\poly(\alpha, \eps^{-1}, \log n)$ invocations of the $\alpha$-approximation algorithm~\cite{She17b,BKKL17,zuzic2021simple}. A particularly convenient way of obtaining such a boostable $\alpha$-approximation is to design a linear matrix $R$ which maps any node-demand $b$ to an $\alpha$-approximate transshipment flow for $b$. Such a matrix is called an $\ell_1$-oblivious routing because linearity forces each node to route its demand obliviously, i.e., without knowledge of the demand on other nodes. As mentioned before, in order to obtain an actual $(1+\eps)$-SSSP tree from a transshipment flow most algorithms~\cite{Li20} are using an approach of \cite{becker2019low} which produces a $(1+\eps)$-SSSP tree after $O(\log n)$ adaptive applications of some rounding algorithm. All parts of this reduction except for the rounding algorithm are deterministic.

Putting all these pieces together proves that all one needs to obtain an efficient (deterministic) algorithm for both the  $(1+\eps)$-transshipment and the $(1+\eps)$-SSSP problem is a $\tO(1)$-competitive $\ell_1$-oblivious routing that can be evaluated efficiently and a (deterministic) rounding algorithm. Our algorithm uses these steps in a black-box fashion except that we need to replace the randomized rounding algorithm by a new deterministic transshipment flow rounding procedure.

{\bfseries Vertex Reduction Framework}\\
Unfortunately, it is clear that $\ell_1$-oblivious routing cannot be done without having some approximate shortest path information. This chicken and egg problem is resolved via a clever vertex-reduction framework of Li~\cite{Li20}.  
The vertex-reduction framework relies on a cyclic sequence of efficient problem reductions. For transshipment~\cite{Li20}, these problems are SSSP, transshipment, $\ell_1$-oblivious routing, and $\ell_1$-embedding. Specifically:
\begin{itemize}
\item To solve $(1+\eps)$-SSSP on $G$, it is sufficient to solve $(1+\eps)$-transshipment on $G$, 
\item for which it is sufficient to construct $\tilde{O}(1)$-competitive $\ell_1$-oblivious routing on $G$ (via boosting),
\item for which it is sufficient to construct an $\tilde{O}(1)$-distortion $\ell_1$-embedding on $G$,
\item for which it is sufficient to solve $O(\log^2 n)$ instances of $\tilde{O}(1)$-SSSP on a sequence of graphs $G'_1, G'_2, \ldots, G'_{O(\log^2 n)}$, which are resolved recursively.
\end{itemize}
As stated, this simply reduced $(1+\eps)$-SSSP on $G$ to multiple $\tilde{O}(1)$-SSSP on graphs that are slightly larger than $G$, which isn't particularly helpful on its own. The key idea to transform this ``branching cycle'' into a branching spiral that terminates (fast) is to add a step into the cycle which applies ultrasparsification on $G'_i$, which is an operation that transforms a graph with $n$ nodes to a smaller graph with $\frac{n}{\gamma}$ nodes such that distances in the smaller graph $\tilde{O}(\gamma)$-approximate distances in the larger graph. Applying this step, we can reduce $(1 + \eps)$-SSSP on a graph with $n$ nodes to $O(\log^2 n)$ instances of $\tilde{O}(1)$-SSSPs on smaller graphs with $O\left(\frac{n}{\log^{100} n}\right)$ nodes. It is easy to see that the total size of all recursive calls falls exponentially on each subsequent level, hence the total parallel runtime is $\tilde{O}(1)$ and the total work $\tilde{O}(m)$ (since a single sequence of reductions requires $\tilde{O}(m)$ work).

{\bfseries Minor Aggregates and the Low-Congestion Shortcuts Framework}\\
The low-congestion shortcuts framework was originally intended for designing simple and efficient \emph{distributed} graph algorithms and was developed over a long sequence of works~\cite{GH16,haeupler2016low,haeupler2016near,haeupler2018round,haeupler2018minor,ghaffari_haeupler2021shortcuts_in_minor_closed,haeupler2021universally,kogan2021low,ghaffari2021hop,GHR21}. However, this paper argues that the framework provides a natural language even for developing fast parallel algorithms. Indeed, we present our parallel SSSP algorithm using the framework and our hope is that this choice simplifies the exposition, even before considering the benefits of immediately obtaining distributed results.

We describe our algorithms in the recently introduced \aggregate model~\cite{goranci2022universally, ghaffari2021universally}, which offers an intuitive interface to the recent advancements in the low-congestion shortcut framework. In the \aggregate model, one can (1) contract edges, thereby operating a minor, (2) each (super)node in the contracted graph can compute an aggregate (e.g., min, max, sum) of surrounding neighbors, and (3) add $\tilde{O}(1)$ arbitrarily-connected virtual nodes over the course of the algorithm. The goal is to design $\tO(1)$-round algorithms in this model. Such a \aggregate algorithm can be compiled to a near-optimal algorithm in both the parallel and distributed settings.


\subsubsection{Our New Tools}


{\bfseries Near-Optimal Graph-Based $\ell_1$-oblivious routing}\\
The first key contribution of this paper is a new construction of graph-based $\ell_1$-oblivious routing with drastically improved guarantees from the so-called sparse neighborhood covers. A sparse neighborhood cover of distance scale $D$ is a collection of $O(\log n)$ clusterings (partitioning of the node set into disjoint \emph{clusters}) such that (1) each cluster has diameter at most $D$, and (2) each ball in $G$ of radius $D / (\log^{C} n)$ is fully contained in at least one cluster (for some fixed constant $C > 0$). Specifically, given sparse neighborhood covers for all $O(\log n)$ exponentially-increasing distance scales $\beta, \beta^2, \beta^3, \dots, \poly(n)$ for some $\beta = \tO(1)$ along with some extra distance information, we construct an $\tO(1)$-competitive $\ell_1$-oblivious routing. 
The algorithm greatly differs from all prior approaches, as all of them had inherent barriers preventing them from achieving deterministic near-optimality, which we describe below.

\underline{Derandomization issues:} Efficient constructions of an $\ell_1$-oblivious routing either use an $\ell_1$-embedding or so-called low-diameter decompositions; this is a clustering problem whose deterministic version is also known as a sparse neighborhood cover. 
Sparse neighborhood covers were introduced in the seminal work of \cite{awerbuch1990sparse} and applied with great success for many problems, including approximate shortest paths and other distance based problems \cite{cohen98pairwise_covers,bartal2021RamseyEmbeddings}. 

Since an efficient deterministic $\ell_1$-embedding is not known, we need a deterministic construction of sparse neighborhood covers.  
Luckily, many of the ideas required to derandomize the computation of sparse neighborhood covers were developed very recently by a sequence of papers that derandomized the closely related network decomposition problem which is, essentially, an unweighted version of the sparse cover problem~\cite{rozhon_ghaffari2019decomposition,ghaffari_grunau_rozhon2020improved_network_decomposition,chang_ghaffari2021strong_diameter}.
In \cite{elkin2022decompositions}, these unweighted results were extended to an algorithm constructing sparse neighborhood cover for weighted graphs and this is the result that we use to solve the shortest path problem here. 

{\bfseries Distance-reduction framework: Iterative and Locality-Friendly}\\
As we explained earlier, our new $\ell_1$-oblivious routing construction reduces the SSSP problem in $G$ to computing sparse neighborhood covers along with some extra distance information.
Clearly, this requires our SSSP computation to compute some distance information, leading again to a chicken and egg problem. Unfortunately, directly applying the vertex-reduction framework is not compatible with our desire to obtain fast minor-aggregation (or distributed) algorithms. In particular, at the lowest level of the recursion, the vertex-reduction framework generates a polynomial number of constant-size SSSP problems, each of which essentially corresponds to a minor of $G$. While these problems can each be solved in constant time in the parallel setting, the fact that these minors can be arbitrarily overlapping and each correspond to large diameter subsets in $G$ means that polynomially many instead of the desired polylogarithmic number of minor aggregations are necessary.

This paper therefore designs a novel, completely different, and more locality-friendly complexity-reduction framework: \emph{distance reduction}. On a high-level, we show that obtaining ``sparse neighborhood covers up to distance scale $D$'' can be reduced to several ``shortest path computations up to distance $D$'' which are computable from ``sparse neighborhood covers up to distance $D/\poly\log n$'' (see \cref{fig:main}). 
This iterative and local nature of our distance-reduction framework directly translates into a small overall number of minor aggregations; this is in stark contrast to the inherently recursive vertex-reduction framework (which recurses on different graphs, requiring a recursive approach).
As a nice little extra, combining our distance-reduction framework with the new $\ell_1$-embedding makes for an algorithm that is (in our not exactly unbiased opinion) a good bit simpler than the previous algorithms of \cite{Li20,AndoniSZ20}. 
We note that this framework is more reminiscent of older approaches to hopset constructions~\cite{cohen1994polylog, EN16}. In these approaches, hopsets over longer paths which use at most $\ell / 2$ edges are used to bootstrap the construction of hopsets over paths using at most $\ell$ edges.


A key definition to formalize what exactly the ``up to distance $D$'' in our distance-reduction framework means is the following. We attach a virtual, so-called, \emph{cheating node} and connect it to all other nodes with an edge of length $D$. This new graph naturally preserves distance information ``up to distance $D$'' and the complexity of computing distances increases as $D$ increases.










{\bfseries Derandomization: Deterministic Transshipment Flow Rounding via Eulerian Tours}\\
All previous transshipment approaches crucially use randomization. The most significant challenge we had to overcome in making our results deterministic stem from the following issue: The only problem that remains to be derandomized in this paper is \emph{rounding} a (fractional) transshipment solution to (a flow supported on) a tree, a crucial step in the distance-reduction framework. We prove the following theorem (only the parallel version is stated for simplicity). 


\begin{theorem}\label{thm:eulerrounding-intro}
There is a deterministic parallel algorithm which takes as an input any fractional transshipment flow $f$ satisfying some single-source transshipment demand $b$ in the graph $G$ and in near-linear work and polylogarithmic time outputs a flow $f'$ of equal or smaller cost which is supported on some tree in $G$.
\end{theorem}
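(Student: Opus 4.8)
\emph{Step 1 --- reduce to finding a good tree.} Since $b$ is a single-source demand, fix the source $s$, so that (after possibly flipping the sign of $b$) we have $b(v)\ge 0$ for all $v\neq s$. For any tree $T$ of $G$ that contains $s$ and every node with $b(v)\neq 0$ there is a \emph{unique} flow $f_T$ supported on $T$ that satisfies $b$, obtained by routing each demand along its unique $T$-path; its cost is $\cost(f_T)=\sum_{v} b(v)\cdot\dist_T(s,v)$, because the flow across a tree edge $e$ is exactly the demand of the side not containing $s$, which is $\ge 0$. So it suffices to produce, deterministically in near-linear work and $\tO(1)$ time, such a $T$ with $\cost(f_T)\le\cost(f)$. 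Such a tree always exists: take any shortest-path tree $T_0$ of $s$ inside the subgraph $\supp(f)$ with the original weights, decompose $f$ into $s$-to-sink path-flows plus circulations, and note that $\cost(f)\ge\sum_j\alpha_j\,\mathrm{len}(Q_j)\ge\sum_v b(v)\cdot\dist_{\supp(f)}(s,v)=\cost(f_{T_0})$ (here the decomposition paths $Q_j$ lie in $\supp(f)$, and non-negativity of the demands is exactly what lets $\sum_e w(e)\,|b(T_e)|$ collapse to $\sum_v b(v)\,\dist_T(s,v)$). The content of the theorem is thus to compute \emph{some} tree meeting this bound, in the minor-aggregation model, without circularly solving shortest paths on all of $G$.

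\emph{Step 2 --- reduce to an Eulerian instance.} Clear the denominators of $f$ and view the result as a directed multigraph $M$ on $V$ carrying one unit of flow on each parallel copy of an edge, oriented along $f$ (bit-complexity is handled by keeping $O(\log n)$ bits of the fractional values and absorbing the rounding, and by first discarding zero-flow edges). At every node $v\notin\{s\}\cup\supp(b)$ the in-degree of $M$ equals its out-degree; at $s$ the out-excess, and at each demand node $v$ the in-excess, equals the scaled $b(v)$. Attach a weight-$0$ ``return'' edge from each demand node $v$ to $s$ with multiplicity the scaled $b(v)$. Now $M$ is balanced at every node, so each weakly connected component of $M$ is Eulerian; components not containing $s$ are pure circulations of $G$ and are deleted (this cannot increase the cost), and the surviving component contains $s$ and all demand nodes. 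We then invoke the paper's deterministic parallel/distributed \euler{} primitive to compute an Eulerian tour of this component.

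\emph{Step 3 --- read a tree off the tour.} Cutting the Eulerian tour at the weight-$0$ return edges decomposes it into a cyclically ordered list of walks $\sigma_1,\dots,\sigma_k$ of $G$, each starting at $s$ and ending at a demand node, with $k$ equal to the scaled total demand, each demand node $v$ occurring as an endpoint exactly (scaled) $b(v)$ times, and $\sum_i\mathrm{len}(\sigma_i)=\cost(f)$ up to the scaling factor. Replacing each $\sigma_i$ by its loop-erased subpath only decreases $\sum_i\mathrm{len}(\sigma_i)$, so we may take the $\sigma_i$ to be simple $s$-to-sink paths. From this collection we extract a single tree $T$ spanning $s$ and all demand nodes together with the guarantee $\cost(f_T)\le\sum_i\mathrm{len}(\sigma_i)=\cost(f)$ --- either by a shortest-path computation restricted to the much smaller subgraph $\bigcup_i\sigma_i$, or (to stay inside $\tO(1)$ minor-aggregations) by processing the paths in tour order and growing $T$ edge-by-edge while charging the canonical tree flow on each newly added edge to the part of the walks it is responsible for. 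Undoing the scaling yields the desired $f'$.

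\emph{Main obstacle.} The real difficulty is Step 3: converting the family of $s$-to-sink walks handed back by the Eulerian tour into one tree whose canonical flow is provably no more expensive than $f$, and doing so with only polylogarithmically many minor-aggregation rounds so that the step compiles to both near-linear-work \pram and \congest. A secondary point, discharged by a separate result of the paper, is computing the Eulerian tour itself in the distributed setting; and the rounding incurred when clearing denominators, while routine, must be tracked so that the final cost comparison remains exact (or its error is absorbed at a scale that is later negligible).
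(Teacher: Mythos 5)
Your high-level outline (reduce to a tree, make the flow Eulerian, use an Eulerian tour to decompose into $s$-to-sink walks, then extract a tree) is a reasonable first instinct, but Step~2 contains a work-complexity gap that is fatal and is exactly what the paper's argument is designed to avoid. When you ``clear the denominators'' and view $f$ as a directed multigraph $M$ carrying one unit on each parallel copy, the multiplicities are $\mathrm{poly}(n)$ (you need $\Theta(\log n)$ bits of precision, so the scaling factor is $\mathrm{poly}(n)$). The resulting multigraph has $\mathrm{poly}(n)\cdot m$ edges, the Eulerian tour has that length, and the collection of walks $\sigma_1,\dots,\sigma_k$ has $k = \mathrm{poly}(n)$ walks of total length $\mathrm{poly}(n)\cdot m$. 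Nothing downstream of this can be near-linear work; you cannot even write down the walks. The paper's bit-scaling structure exists precisely to sidestep this: it runs $O(\log n)$ phases, in phase $i$ isolating only the edges whose current flow value is an odd multiple of $2^i$, which is a simple subgraph of $G$ with at most $m$ edges, and calls the \euler oracle once per phase on this bounded-size graph. The split $f_i=f_{i+1}+\tilde f_i$ with $\tilde f_i$ supported on an arborescence-derived subgraph $B_i$, and the choice $f_{i+1}=f_i-\tilde f_i\pm h_i$, keeps every intermediate object of size $O(m)$, which is what saves the work bound.

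Your Step~3 is also not discharged even granting the walks for free, and you flag this yourself. Option~(a), running shortest paths on $\bigcup_i\sigma_i$, is circular: after loop-erasure this union can still be essentially all of $\supp(f)$, and computing a shortest-path tree there is the problem we set out to solve. Option~(b), ``growing $T$ edge-by-edge while charging the canonical tree flow to the walks,'' is not an argument; different edge orderings give trees of different cost, and it is not clear how the charging survives when a later walk would prefer an edge that an earlier walk's choice makes unavailable without creating a cycle. The paper resolves the tree-recovery step by a genuinely different mechanism: it processes the phase-$i$ forests $B_i$ in \emph{decreasing} order of $i$, maintains at each node a tentative distance label $\tilde d(v)$ and parent pointer, and proves by a commodity-tracking argument (each integral commodity unit carries a ``distance traveled'' tag, never exceeding the running $\tilde d$) that the final tree satisfies $\sum_v b(v)\,d_T(s,v)\le\ell(f)$. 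The correctness of that step leans on the structural properties of the bit-scaling invariant ($f_i$ is a nonnegative multiple of $2^i$, $b_i\ge 0$ off $s$, the arborescence $A_i$ exists), none of which are present in your one-shot construction. In short: the existence argument in your Step~1 is fine, but the algorithmic content is missing, and the bit-scaling loop with per-phase Eulerian orientation and the distance-label recovery are the two key ideas you would need to supply.
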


While such a flow rounding seems an unlikely bottleneck for a deterministic SSSP algorithm, we remark that even for randomized algorithms a lot of complexity has come from this rounding step in the algorithms of \cite{Li20, AndoniSZ20, BKKL17}---all these approaches are based on random walks and are inherently randomized.


Note that if the fractional flow $f$ is acyclic (has no \emph{directed} cycles), then there is a trivial randomized rounding which simply samples one outgoing edge for each node in $G$ through which flow is routed, choosing the probability of each edge proportional to the outflow in $f$. It is easy to see that retaining the edges that are in the connected component of the source truthfully samples a tree-supported flow from $f$. The complexity comes in once $f$ is not acyclic as the sampled edges can now create many connected components. This requires finding the cycles in these components, contracting the edges and again running a randomized out-edge sampling on the remaining graph. 

Deterministically none of the above works. Indeed, we are not aware of any simple(r) way of deterministically obtaining a tree-supported flow even if $f$ is acyclic. Our rounding procedure can be seen as a generalization of an algorithm of Cohen's rounding~\cite{cohen1995approximateMaxFlow}, which can be used to round any fractional transshipment flow to an integral flow by scaling flow values to only leave integral and half-integral flow values and then finding an Eulerian tour covering all half-integral edges. Pushing one half-unit of flow in the cheaper direction of this Eulerian tour makes all flow values fully integral and allows the scaling to be reduced. At the end of this procedure, all flow values are integral but this does not guarantee that the flow is supported on a tree. To eliminate any non-tree like parts of the flow we show how to keep the algorithm running and find further Eulerian tours until the flow becomes tree-supported. 

{\bfseries A Distributed Eulerian Tour Algorithm}\\
The problem of computing Eulerian Tours can be stated as follows. Given an undirected graph with all degrees even, direct the edges in such a way that the out-degree of every node equals the in-degree. Note that we do not require connectedness. While computing Eulerian tours is a well-known parallel primitive which can be efficiently computed in near-linear work and polylogarithmic time \cite{atallah_vishkin1984euler_pram}, there are, to our knowledge, no distributed algorithms known for this problem. 

Therefore, to also give \emph{distributed} SSSP algorithms in this paper we need to design efficient distributed Eulerian-tour algorithms that can then be used in the Eulerian-tour-based rounding procedure of \Cref{thm:eulerrounding-intro}.  We build the first algorithm computing such an Eulerian tour orientation by using algorithms from \cite{parter_yogev2019cycle_covers_minor_closed,parter_yogev2019cycle_decomp_near_linear} for low-congestion cycle covers. Interestingly, these low-congestion cycle covers were only developed for the completely unrelated purpose of making distributed computation resilient to Byzantine faults introduced by an adversary in a recent line of work \cite{parter_yogev2019cycle_covers_minor_closed,parter_yogev2019cycle_decomp_near_linear,hitron_parter2021adversarial_compilers,hitron2021broadcast}. 

\begin{theorem}[Informal]\label{thm:euler-intro}
There is a deterministic \congest algorithm which, given any Eulerian subgraph $H$ of the network $G$ as an input, computes an Eulerian tour orientation in $\tO(\sqrt{n} + \Dhop)$ rounds or $\tO(\Dhop)$ rounds if $G$ is an excluded-minor graph. 
\end{theorem}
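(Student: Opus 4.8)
The plan is to reduce computing an Eulerian orientation of $H$ to (i) decomposing $E(H)$ into edge-disjoint simple cycles and (ii) orienting each such cycle as a directed cycle, carrying out (ii) through the minor-aggregation / low-congestion-shortcut primitives used throughout the paper rather than by walking along cycles. First note that, since every vertex of $H$ has even degree, $H$ has no bridge: a bridge would separate a connected component of $H$ into parts $A,B$, and then $\sum_{v\in A}\deg_H(v)=2|E(H[A])|+1$ would be odd, contradicting that it is a sum of even numbers. I would work on each connected component of $H$ in parallel (communicating over all of $G$), so assume $H$ connected and bridgeless. An orientation of $H$ is Eulerian if and only if it arises from a partition of $E(H)$ into simple cycles each oriented cyclically: each vertex then gets exactly one incoming and one outgoing edge per cycle through it, so $\mathrm{indeg}(v)=\mathrm{outdeg}(v)=\deg_H(v)/2$. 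Hence it suffices to produce such a decomposition and then orient each cycle consistently.

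\textbf{Cycle decomposition via low-congestion cycle covers.} For the decomposition I would invoke the distributed low-congestion cycle-cover machinery of~\cite{parter_yogev2019cycle_covers_minor_closed,parter_yogev2019cycle_decomp_near_linear}: a bridgeless graph admits a family of simple cycles covering every edge at least once and at most $c=\tO(1)$ times, and such a family is computable deterministically in $\tO(\sqrt n+\Dhop)$ rounds in general and in $\tO(\Dhop)$ rounds when $G$ excludes a fixed minor. To turn a cover into an edge-disjoint decomposition, one can color the conflict graph on cover cycles (two cycles conflict iff they share an edge), process color classes in order, claim a cycle (removing its edges from consideration) whenever all its edges are still unclaimed, and recurse on the leftover edges, which always induce an Eulerian, hence bridgeless, subgraph; using the overlap bound $c$ to argue that a constant fraction of the edges is claimed per phase bounds the number of phases. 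Alternatively one can cite a distributed cycle-\emph{decomposition} primitive directly. Making this step stay inside the round budget (e.g.\ coloring the conflict graph, whose adjacency must be simulated across the long cover cycles) is one of the places that needs care.

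\textbf{Orienting the cycles.} Given edge-disjoint simple cycles $C_1,C_2,\dots$ partitioning $E(H)$, I would have each vertex $v$ behave as $\deg_H(v)/2$ virtual copies, one per cycle through it, each copy owning the two $C_i$-edges of that cycle; since distinct copies use disjoint incident edges this incurs no bandwidth blow-up, and $H$ now behaves as a disjoint union of simple cycles. For each $C_i$: elect its minimum-identifier vertex $v^*$ by an aggregation along $C_i$, let $e^*=(v^*,w)$ be the $C_i$-edge from $v^*$ to its smaller-identifier $C_i$-neighbor, orient $e^*$ from $v^*$ to $w$, and orient the remaining path $P_i=C_i-e^*$ (a path with endpoints $v^*$ and $w$) so that every edge points toward $v^*$ — equivalently, root $P_i$ at $v^*$ and orient every edge from child to parent. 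Since these are standard forest operations (leader election and rooting over the disjoint trees $P_1,P_2,\dots$ embedded in $G$), they run simultaneously via the low-congestion-shortcut / minor-aggregation toolkit in $\tO(\shortcutQuality{G})=\tO(\sqrt n+\Dhop)$ rounds, and $\tO(\Dhop)$ rounds for excluded-minor $G$. A direct check shows that orienting every $C_i$ this way makes $e^*$ close up the cycle $v^*\to w\rightsquigarrow v^*$, so every vertex ends up with equal in- and out-degree, i.e.\ an Eulerian orientation of $H$.

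\textbf{Accounting and main obstacle.} Summing over the $\tO(1)$ peeling phases, the total round complexity is dominated by the cycle-cover computations and the forest aggregations, giving $\tO(\sqrt n+\Dhop)$ rounds in general and $\tO(\Dhop)$ rounds for excluded-minor networks; each edge of $G$ carries only $\tO(1)$ messages, so the message complexity is near-linear, and every invoked subroutine is deterministic. The main obstacle is exactly the combination of step (ii) with the cover-to-decomposition conversion: ensuring that orienting the (possibly long) cycles and extracting an honest edge-disjoint decomposition both stay within the $\tO(\sqrt n+\Dhop)$ / $\tO(\Dhop)$ budget, which is precisely why these tasks are routed through the low-congestion-shortcut machinery instead of propagating information along individual cycles, and why the bounded overlap of the cycle cover is essential. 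The reduction to the bridgeless case and the local virtual splitting of vertices are routine.
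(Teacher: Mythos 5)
Your high-level plan — cycle cover, peel to a decomposition, orient cycles — shares its skeleton with the paper, but two steps you treat as routine are exactly where the paper's real work is, and as written they don't go through.

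First, the orienting step. You propose to virtually split each $v$ into $\deg_H(v)/2$ copies so the cycles become node-disjoint, then use partwise aggregation (shortcuts / minor-aggregation) to elect a leader and root each path. But the shortcut framework only supports aggregation over parts that are node-disjoint \emph{as subgraphs of the communication network $G$}. After splitting, the cycles are still highly node-overlapping in $G$ (they only became disjoint in a graph you invented), so you cannot schedule the aggregations simultaneously. This is not a technicality: the paper explicitly flags it in the remark following \cref{euler:lem:sqrt} ("the parts where aggregation happens are only edge-disjoint and not node-disjoint"). The paper avoids this by never orienting a long cycle via aggregation. Instead it builds a separated strong-diameter network decomposition of the \emph{subdivided} graph, runs the cycle-cover algorithm only inside each small-diameter cluster so every produced cycle has $\tO(1)$ hops, takes an MIS in the conflict graph so the chosen cycles are edge-disjoint and short, and then orients those short cycles directly in $\tO(1)$ rounds. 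For the $\tO(\sqrt n+\Dhop)$ bound the paper uses a separate mechanism (\cref{euler:lem:sqrt}): pair up edges at each node to form closed walks, merge them via deterministic ruling sets, and communicate along a path by walking it if it is shorter than $\sqrt{|E(H)|}$ and broadcasting over $G$ otherwise; since at most $\sqrt{|E(H)|}$ paths are long this stays within budget. Your proposal has no substitute for either mechanism.

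Second, the decomposition step. You assert that Parter–Yogev gives a cycle cover with overlap $c=\tO(1)$ ``in general'' in $\tO(\sqrt n+\Dhop)$ rounds, and then argue $\tO(1)$ peeling phases suffice. For general graphs the guarantee (the paper's \cref{lem:general_cycle_cover}) is $c=O(n^{\eps})$ and $d=2^{O(1/\eps)}\Dhop$ in $O(2^{O(1/\eps)}n^{O(\eps)}\Dhop)$ rounds — the overlap is polynomial, not polylogarithmic. So your phase count does not collapse to $\tO(1)$, and the claimed $\tO(\sqrt n+\Dhop)$ round bound for general graphs is unsupported. The paper's route for general $G$ is to peel with the $n^{\eps}$-congestion cover only until $\le 2n$ edges remain and then switch to \cref{euler:lem:sqrt}; a per-phase MIS is also needed (on the conflict graph of cycles) to make the peeled set edge-disjoint, which you gesture at but which needs the cluster structure to simulate the conflict graph within budget. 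In short, the missing ideas are (a) keeping cycles short via network decomposition so they can be oriented locally, and (b) the $\sqrt m$/broadcast trick for general graphs; your plan needs both and has neither.
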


The most general and fully formal statements of all results proven in this paper are given in \Cref{sec:main_theorems}.


\section{Preliminaries and Terminology}\label{sec:preliminaries}

In this section, we first review some basic terminology related to graphs, clusterings, shortest paths and (transhipment) flows. Afterwards, we describe the computational models we use in this paper.

\paragraph{Graphs.} 
For a graph $G$ we use $V(G), E(G)$ to denote its node set and edge set, respectively. The set of neighbors of a node $v$ is denoted with $\Gamma_G(v) \subseteq V$.
An eulerian graph is a graph with $|\Gamma_G(v)|$ being even for every $v \in V(G)$.
It is often convenient to direct $E$ consistently. For simplicity and without loss of generality, we assume that $V = \{v_1, v_2, \ldots, v_{n}\}$ and define $\vec{E} = \{ (v_i,v_j) \mid (v_i, v_j) \in E, i < j\}$. We identify $E$ and $\vec{E}$ by the obvious bijection. We chose this orientation for simplicity and concreteness: arbitrarily changing the orientations does not influence the results (if done consistently). We also introduce the notation $\overleftrightarrow{E}$ which includes each edge of $\{u, v\} \in E$ in both directions: $(u, v) \in \overleftrightarrow{E}$ and $(v, u) \in \overleftrightarrow{E}$.

A weighted graph $G$ is an undirected graph, together with a function $\ell_G : E \rightarrow \{1, \ldots, n^{O(1)}\}$ that assigns a polynomially-bounded non-negative weight to each edge in $E$. We sometimes use just $\ell$ if $G$ is clear from context. This weight function induces a distance function $\dist_G(u, v)$ on $G$. For two sets of nodes $A,B \subseteq V(G)$ we define $\dist_G(U,W) = \min_{u \in U, w \in W} \dist_G(u, w)$ and sometimes write $\dist_G(U, w)$ for $\dist_G(U, \{w\})$. 
A ball $B_G(u, r)$ of radius $r$ around a node $u$ is a set of nodes $v$ such that $\dist_G(u, v) \le r$. The diameter of a weighted graph $G$ is defineed as $\diam(G) = \max_{u,v} \dist_G(u,v)$ and its hop diameter $\hopDiameter{G}$ is the longest unweighted shortest path (i.e., in terms of hops and not weight). 

\paragraph{Clusters and Clusterings.}
A \emph{cluster} $C$ is a set of nodes $C \subseteq V(G)$. 
A \emph{clustering} $\fC$ is a collection of disjoint clusters.
Finally, a \emph{sparse neighborhood cover} of a graph $G$ with \emph{covering radius} $R$ is a collection of $\gamma := O(\log n)$ \footnote{Sometimes we use $\log(\cdot)$ outside the $O$-notation, in which case it stands for $\log_2(\cdot)$. \alert{is this even used anywhere?} } clusterings $\fC_1, \dots, \fC_\gamma$ such that for each node $v \in V(G)$ there exists some $i \in \{1,\ldots,\gamma\}$ and some $C \in \fC_i$ with $B(v, R) \subseteq C$.




\paragraph{Flows, Transshipment, and SSSP.}
Given a graph $G=(V,E)$, a flow is either a function $f:\overleftrightarrow{E}\to\mathbb R_{\ge0}$ (that is identified with a vector $\R_{\ge 0}^{\overleftrightarrow{R}}$ by the obvious identification), or a function $f : \vec{E} \to \R$ (identified with $\R^{\vec{E}}$). We often use one or the other flow notation for convenience. Given $f \in \overleftrightarrow{E}$, we can covert it to $f' \in \vec{E}$ by letting $(f')_{(u, v)} := f(u, v) - f(v, u)$ and, similarly, in the other direction $f_{(u, v)} := \1{(u, v) \in \vec{E}} f'_{(u, v)} - \1{(v, u) \in \vec{E}} f'_{(v, u)}$. A \emph{demand} $b$ is simply a function $b : V \to \mathbb{R}$ (or equivalently $\R^V$) satisfying $\sum_{v \in V} b_v = 0$. We say a flow $f \in \R^{\overleftrightarrow{E}}$ \emph{satisfies} a demand $b \in \R^V$ if for all $v \in V$, we have $\sum_{u:(u,v)\in\overleftrightarrow{E}}(f(u,v)-f(v,u))=b(v)$. Given edge weights $\ell \in \R^{E}$, the \emph{transshipment cost} of the flow is $\ell(f) := \sum_{(u,v)\in\overleftrightarrow E}\ell(\{u,v\})f(u,v)$.

Furthermore, we say that a function $\phi: V(G) \rightarrow \mathbb{R}$ (a vector in $\mathbb{R}^{V(G)}$) is a \emph{potential} if $|\phi(u) - \phi(v)| \le \dist_G(u, v)$ for all $u,v \in V$. Equivalently, we can require that for every edge $\{u, v\}$ we have $|\phi(u) - \phi(v)| \le \ell(\{u,v\})$. We note the following useful fact about potentials.
\begin{fact}
\label{fact:potential_min_max}
If $\phi_1, \phi_2, \ldots, \phi_k$ are potentials, then $\phi_{min}$ and $\phi_{max}$ are also potentials. Here, $\phi_{min}(v) = \min_{i \in [k]} \phi_i(v)$ and
$\phi_{max}(v) = \max_{i \in [k]} \phi_i(v)$ for all $v\in V$.
\end{fact}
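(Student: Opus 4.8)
\textbf{Proof proposal for \Cref{fact:potential_min_max}.}

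The plan is to verify the defining inequality of a potential directly for $\phi_{\min}$ and $\phi_{\max}$, working edge by edge. Recall that $\phi$ is a potential precisely when $|\phi(u)-\phi(v)| \le \ell(\{u,v\})$ for every edge $\{u,v\} \in E$, which is the same as the pair of inequalities $\phi(u)-\phi(v) \le \ell(\{u,v\})$ and $\phi(v)-\phi(u)\le\ell(\{u,v\})$. So I would fix an arbitrary edge $\{u,v\}$ and bound $\phi_{\min}(u)-\phi_{\min}(v)$ from above by $\ell(\{u,v\})$; by the symmetry $u \leftrightarrow v$ this also gives the lower bound, and since the edge was arbitrary we are done. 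The case of $\phi_{\max}$ is entirely analogous (or can be deduced from the $\phi_{\min}$ case applied to $-\phi_1,\dots,-\phi_k$, using that $-\phi$ is a potential whenever $\phi$ is and that $\max_i \phi_i = -\min_i(-\phi_i)$).

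For the key computation: let $j \in [k]$ be an index achieving $\phi_{\min}(v) = \phi_j(v)$. Then
\[
\phi_{\min}(u) - \phi_{\min}(v) \;\le\; \phi_j(u) - \phi_j(v) \;\le\; \ell(\{u,v\}),
\]
where the first inequality holds because $\phi_{\min}(u) = \min_{i}\phi_i(u) \le \phi_j(u)$, and the second because $\phi_j$ is a potential. Swapping the roles of $u$ and $v$ gives $\phi_{\min}(v)-\phi_{\min}(u) \le \ell(\{u,v\})$ as well, hence $|\phi_{\min}(u)-\phi_{\min}(v)| \le \ell(\{u,v\})$. This holds for every edge, so $\phi_{\min}$ is a potential. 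Running the same argument with $\max$ in place of $\min$ (so that now $\phi_{\max}(u) \ge \phi_j(u)$ for the index $j$ attaining $\phi_{\max}(u)=\phi_j(u)$, and we compare against $\phi_j(v) \le \phi_{\max}(v)$) shows $\phi_{\max}$ is a potential.

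There is essentially no obstacle here — the statement is a routine consequence of the "pointwise min/max of $1$-Lipschitz-type functions stays $1$-Lipschitz" phenomenon, and the only thing to be a little careful about is picking the index that attains the min (resp.\ max) at the \emph{right} endpoint of the edge so that the two-term chain of inequalities lines up correctly. One could also induct on $k$ using the $k=2$ case, but the direct argument above is cleaner and avoids the bookkeeping.
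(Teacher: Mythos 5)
Your proof is correct. The paper states this as an unproved fact, and your argument — fixing an edge, choosing the index $j$ that attains $\phi_{\min}$ at $v$, and chaining $\phi_{\min}(u)-\phi_{\min}(v)\le\phi_j(u)-\phi_j(v)\le\ell(\{u,v\})$, with symmetry and the analogous $\max$ case — is exactly the standard way to verify it.
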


This now allows us to formally define (approximate) transshipment and SSSP.
\begin{definition}[Transshipment]
  Given a (connected) graph $G = (V, E)$ with lengths $\ell \in \R^E$ and a demand $b \in \R^V$, the transshipment problem asks us to find (1) a flow $f \in \R^{\vec{E}}$, and (2) a potential $\potential \in \R^V$ that minimize/maximize their respective program in the following primal-dual formulation:
  \begin{align*} 
    \min~ \ell(f) : \text{$f$ is a flow satisfying $b$}, && \max~ b^{\top} \potential : \text{$\potential$ is a potential}. 
  \end{align*}
  For $\alpha \ge 1$, a solution $(f, \potential)$ is $\alpha$-approximate if both $f$ and $\potential$ are feasible ($f$ satisfies $b$, $\potential$ is a potential), and $\ell(f) \le \alpha \cdot b^{\top} \potential$. Note that a $1$-approximate solution is optimal.
\end{definition}
\begin{definition}[SSSP and \setssp] 
  Suppose we are given a (connected) graph $G = (V, E)$ with lengths $\ell \in \R^E$. Given a \emph{set-source} $S \subseteq V$, the $\alpha$-approximate \setssp problem asks us to compute a forest $F$ spanning $G$ such that $\dist_F(S, u) \le \alpha \cdot \dist_G(S, u)$ for each node $v \in V$. The problem is called $\alpha$-approximate SSSP when $S = \{ s \}$. The singleton node $s \in S$ is called the \emph{source}.
\end{definition}

 

\paragraph{$\ell_1$-Oblivious Routing.}
Let $G = (V, E)$ be a fixed graph with weights $\ell \in \R_{\ge 0}^{E}$. For a demand $b \in \R^V$ with $\sum_{v \in V} b_v = 0$ we define $\norm{b}_\opt := \ell(f_b^*)$ to be the cost of the minimum-cost flow that satisfies the demand $b$. Then, the $\ell_1$-oblivious routing is defined as follows.


\begin{definition}[$\ell_1$-oblivious routing]
  For a graph $G=(V,E)$, a matrix $R \in \R^{\vec{E} \times V}$ is called a (linear) $\ell_1$-oblivious routing if for any demand $b \in \R^V$ with $\sum_{v \in V} b_v = 0$ the flow $f := R b \in \R^{\vec{E}}$ satisfies the demand $b$. Furthermore, an $\ell_1$-oblivious routing $R$ is $\alpha$-competitive if for all demands $b \in \R^V$ with $\sum_{v \in V} b_v = 0$ we have that the cost of the flow $f_{R,b}:=Rb$ is at most $\alpha \norm{b}_\opt$, i.e., $\ell(f_{R,b}) \leq \alpha \cdot \norm{b}_\opt$.
\end{definition}

\subsection{Model: Minor-Aggregation for \pram}
\label{subsec:intro_minor_aggregation}

In this section we give a minimal-but-complete description of the \aggregate interface, which exposes to the user the following operations on a graph: (1) contract edges, thereby operating a minor, (2) each (super)node in the contracted graph can compute an aggregate (e.g., min, max, sum) of surrounding neighbors, and (3) add $\tilde{O}(1)$ arbitrarily-connected virtual nodes over the course of the algorithm. The goal is to design $\tO(1)$-round algorithms in this model. Such a \aggregate algorithm can be compiled to a near-optimal algorithm in both the parallel and distributed settings.

The full model, as defined in \cite{goranci2022universally,ghaffari2021universally}, is used in the distributed setting which requires specifying some low-level details we can leave out here. For a more in-depth discussion of \aggregate and its extensions to distributed models see \Cref{sec:computational-models}.

\textbf{Aggregations.} An \emph{aggregation operation} $\bigoplus$ is any associative function that combines two $\tilde{O}(1)$-bit inputs into one of equal size (e.g., sum or max). For multiple inputs $x_1, x_2, \ldots, x_k$, their \emph{aggregate} $\bigoplus_{i=1}^k x_i$ is defined by arbitrarily aggregating messages until one remains.

\begin{definition}[Minor-Aggregation model in \pram]\label{def:aggregation-congest-intro}
  We are given an undirected graph $G = (V, E)$. There is a single \pram processor and private memory associated with each node and each edge. Between-processor communication happens during synchronous rounds, and processors perform $\tilde{O}(1)$-work local computation between rounds. Each round consists of the following three steps (in that order).
  \begin{itemize}
  \item \textbf{Contraction step.} Each edge (associated processor) $e$ chooses a value $c_e = \{\bot, \top\}$. This defines a new \emph{minor network} $G' = (V', E')$ constructed as $G' = G / \{ e : c_e = \top \}$, i.e., by contracting all edges with $c_e = \top$ and self-loops removed. Vertices $V'$ of $G'$ are called supernodes, and we identify supernodes with the subset of nodes $V$ it consists of, i.e., if $s \in V'$ then $s \subseteq V$.

  \item \textbf{Consensus step.} Each individual node $v \in V$ chooses a $\tilde{O}(1)$-bit value $x_v$. For each supernode $s \in V'$ we define $y_s := \bigoplus_{v \in s} x_v$, where $\bigoplus$ is some aggregation operator. All nodes $v \in s$ learn $y_s$.

  \item \textbf{Aggregation step.} Each edge $e \in E'$ connecting supernodes $a \in V'$ and $b \in V'$ learns $y_a$ and $y_b$, and chooses two $\tilde{O}(1)$-bit values $z_{e, a}, z_{e, b}$ (i.e., one value for each endpoint). Finally, (every node of) each super $s \in V'$ learns the aggregate of its incident edges in $E'$, i.e., $\bigotimes_{e \in \text{incidentEdges(s)}} z_{e, s}$ where $\bigotimes$ is some aggregation operator.
  \end{itemize}
\end{definition}

\smallskip

\textbf{PRAM simulation.} Simulating contractions and aggregations is a well-known and simple primitive in \pram that can be performed in $\tilde{O}(1)$ time and $\tilde{O}(m)$ work (see e.g. \cref{thm:oracleSimulation}). 

We note that the following operations can be efficiently simulated.
\begin{itemize}
\item \textbf{Minors.} It is immediate from the model that any \aggregate algorithm can be run on a minor in a black-box manner.
\item \textbf{Node-disjoint scheduling.} Similarly, it is immediate that we can simultaneously run separate algorithms on node-disjoint and connected subgraphs of $G$.
\item \textbf{Virtual nodes.} Given a graph $G$, we can construct a \emph{virtual graph} by starting with a connected graph and repeatedly adding a \emph{virtual node} that is connected to an arbitrary set of neighbors. Due to the constraints in the distributed setting, we constrain the number of virtual nodes to $\tilde{O}(1)$. A \aggregate round on such graphs can be simulated in \pram with $\tilde{O}(m)$ work and $\tilde{O}(1)$ depth. For example, we frequently use this operation to connect a ``supersource'' node, enabling us to use our standard SSSP algorithm to implement a set-source-$S$ shortest path.
\end{itemize}

\textbf{Distributed storage and input.} Our \aggregate algorithms operate on objects that are \emph{``distributedly'' stored}, meaning that each processor has its own part of the object in private memory. Specifically, we store a \emph{node vector} $x \in \R^V$ by storing $x_v$ in node $v$. Similarly, edge vectors $x \in \R^{E}, x \in \R^{\vec{E}}, x\in \R^{\overleftrightarrow{E}}$ are distributed over edges. Subgraphs $H \subseteq G$ are specified by distributedly storing the characteristic vectors $\mathbb{1}_{V(H)} \in \R^{E(G)}$ and $\mathbb{1}_{E(H)} \in \R^{E(G)}$. Initially, the problem-specific input is assumed to be distributedly stored. For transshipment and SSSP, we assume the lengths (edge vector) and demands (node vector) are stored. At termination, the output is required to be stored.

\textbf{Oracles.} An oracle is a black-box function that takes a certain input (on the original graph) and produces a specified output. An algorithm with access to an oracle can, in each round, choose to perform a regular \aggregate round, or to call a single oracle (i.e., multiple oracle calls per round are forbidden). Some oracles we use only have an efficient \pram implementation, but no efficient \aggregate implementation.


\textbf{Ancestor and subtree sums.} The following operation is often useful. Given a (distributedly stored) tree $T$ and a root $r \in V(T)$, there is a deterministic $\tilde{O}(1)$-round \aggregate algorithm that computes for each node the ancestor and descendent sum (or a general aggregate) with respect to $T$ rooted at $v$ (see \Cref{lemma:ancestor-subtree-sum} for a fully formal version).


\section{SSSP via Minor-Aggregations: A Local Iterative Reduction Cycle}
\label{sec:localiterative}

In \cref{sec:intro} and \cref{fig:main}, we gave an idealized and informal description of our algorithm.
The real set of reductions our algorithm is built on is not quite the perfect cycle from \cref{fig:main}, but instead it looks like \cref{fig:oracles}. 
In \cref{sec:key_definitions}, we give a formal definition for each part of this new ``cycle'' and explain how the parts of the new cycle correspond to parts in the old cycle. 
Once we have defined each part of the new cycle, each arrow in \cref{fig:oracles} corresponds to a formal statement.
These formal statements can be found in \cref{sec:implications_cycle}, along with informal explanations how the statements can be proven. \\
Finally, \cref{sec:main_theorems} contains statements of our main theorems together with simple proof sketches.
The formal proofs for all the results stated in \cref{sec:implications_cycle} and \cref{sec:main_theorems} can be found in \cref{sec:missingProofs}.

\begin{figure}[ht]
    \centering
    \includegraphics[width = \textwidth]{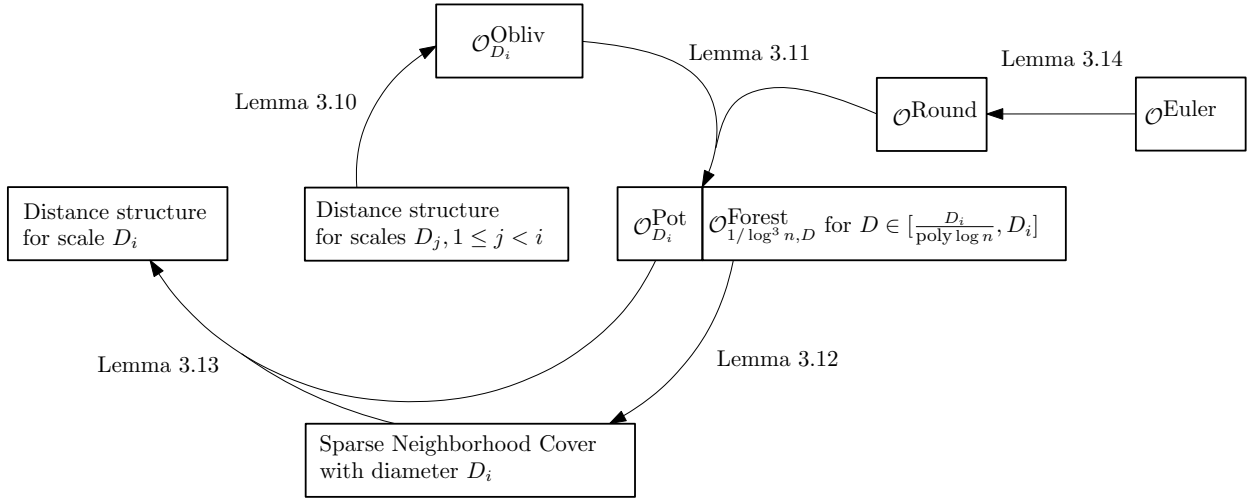}
    \caption{The figure illustrates a single iteration of our local iterative reduction cycle. At the beginning of the cycle, the algorithm has already computed a distance structure for every scale $D_j$ with $D_j < D_i$. After completion of the cycle, the algorithm has computed a distance structure for scale $D_i$, under the assumption that it has access to the rounding oracle $\oRound$ (or the \euler oracle $\oEuler$). \cref{lem:cycleone} states this result formally.
    Moreover, each arrow in the cycle corresponds to a formal statement of the following form: Given $A$ (and $B$), then one can efficiently (in $\tilde{O}(1)$ \aggregate rounds) compute $C$. }
    \label{fig:oracles}
\end{figure}

\subsection{Key Definitions and Oracles}
\label{sec:key_definitions}

The definition of the oblivious routing oracle $\oObliv_{D_i}$ relies on the definition of the graph $G_{S,D}$ and the definition of distance scales.

We start with the definition of the graph $G_{S,D}$.

\begin{definition}[Graph $G_{S,D}$]
Let $S \subseteq V$ and $D \in [\poly(n)]$. We construct the graph $G_{S,D}=(V \cup \{\vD,\sstar\}, E \cup \{\{\vD,u\} \mid u \in V\} \cup \{\{\sstar,u\} \mid u \in S\})$ by adding two additional nodes $\vD$ and $\sstar$ to $G$. The node $\vD$ is connected to each node in $V$ and the node $\sstar$ is connected to all the nodes in $S$, with all the new edges having a weight of $D$. Moreover, we denote with $G_S$ the unweighted graph one obtains by discarding the edge weights in the weighted graph $G_{S,D}$.
\end{definition}

Informally speaking, $G_{S,D}$ preserves distance information only up to distance $2D$. More formally, any shortest path between two nodes in $G$ remains a shortest path in $G_{S,D}$ if this path is of length at most $2D$. Shortest paths in $G$ longer than $2D$, on the other hand, are not preserved in $G_{S,D}$, since any two nodes in $G_{S,D}$ have a shortest path of length $2D$ via $\vD$.
Moreover, consider a shortest path tree from vertex $\sstar$ up to distance $2D$. If one removes $\sstar$ and $\vD$ from this tree, then the resulting forest is a shortest path forest in $G$ from set $S$ up to distance $D$. 

We next give the definition of distance scales, along with defining global parameters which are used throughout the paper.
\begin{definition}[Distance Scales]\label{def:distance_scales}
We set $\valSNC = \log^7(n)$, $\beta = 8\valSNC$ and $D_i = \beta^i$.

A distance scale is a value contained in the set $\{D_i \colon i \in \mathbb{N}, D_i \leq n^2 \max_{e \in E} \ell(e)\}$.
In particular, if we denote with $i_{max}$ the largest integer $i$ for which $D_{i}$ is a distance scale, then $D_{i_{max}} \geq \diam(G)$.
\end{definition}

We are finally ready to formally define oracle $\oObliv_{D_i}$, the $\ell_1$-Oblivious Routing Oracle for scale $D_i$. It corresponds to computing an ``$\ell_1$-oblivious routing up to distance $D_i$'' in \cref{fig:main}.

\begin{restatable}[$\ell_1$-Oblivious Routing Oracle for scale $D_i$ --- $\oObliv_{D_i}$]{definition}{oblivousroutingoracle}
\label{def:oracle_oblivousrouting}
The $\ell_1$-Oblivious Routing Oracle for scale $D_i$, $\oObliv_{D_i}$, takes as input a set $S \subseteq V$ as well as a demand $b$ and a flow $f$ for $G_{S,D_i}$. It outputs $R_S b$ and $R_S^T f$ for a fixed $\tilde{O}(1)$-competitive $\ell_1$-oblivious routing $R_S$ for $G_{S,D_i}$.
\end{restatable}

Next, we give the formal definition of the forest oracle $\oDist_{\eps,D}$.

\begin{restatable}[$(1+\eps)$-Approximate Forest for distance $D$ rooted at $S$ / Forest Oracle --- $\oDist_{\eps,D}$]{definition}{distanceoracle}\label{def:oracle_dist}
The forest oracle $\oDist_{\eps,D}$ takes as input a node set $S \subseteq V$.
The output is a $(1+\eps)$-approximate forest for distance $D$ rooted at $S$, which is a forest $F$ rooted at $S$ such that the following holds.
\begin{enumerate}
    \item For every $u \in V(G)$ with $\dist_G(S,u) \le D$ we have $u \in V(F)$.
    \item For every $u \in V(F)$, $\dist_F(S,u) \le (1+\eps)D$. 
\end{enumerate}
\end{restatable}

For a given distance scale $D_i$, having access to $\oDist_{\frac{1}{\log^3(n)},D}$ for every $D \in \left[ \frac{D_i}{\valSNC}, D_i\right]$ corresponds to ``$(1+\eps)$ set-shortest path up to distance $D_i$'' in \cref{fig:main}.

The notion of ``$(1+\eps)$ transshipment up to distance $D$'' in \cref{fig:main} does not have a direct correspondence in \cref{fig:oracles}.
However, the potential oracle $\oPot_{D_i}$ outputs a potential capturing a certain version of (dual) transshipment potentials. 
The oracle $\oPot_{D_i}$ is defined as follows:

\begin{definition}[Potential for scale $D_i$ with respect to set $S$ / Potential Oracle --- $\oPot_{D_i}$] \label{def:potential}
The potential oracle $\oPot_{D_i}$ takes as input a node set $S \subseteq V$.
The output is a potential $\phi_{S, D_i} \in \R^{V(G_{S, D_i})}$ for scale $D_i$ with respect to $S$.
A potential for scale $D_i$ with respect to a set $S \subseteq V$ is a non-negative potential $\phi_{S,D_i}$ (satisfying the conditions given in \Cref{sec:preliminaries}) such that:
\begin{enumerate}
    \item $\forall v \in S:\ \phi_{S,D_i}(v) = 0$ and
    \item $\dist_G(v,S) \geq \frac{D_i}{\valSNC}$ implies $\phi_{S,D_i}(v) \geq 0.5\frac{D_i}{\valSNC}$.
\end{enumerate} 
\end{definition}

We next give the definition of a distance structure for scale $D_i$.
\begin{definition}[Distance Structure for scale $D_i$]
\label{def:distance_structure}
A distance-structure for scale $D_i$
consists of a sparse neighborhood cover with covering radius $\frac{D_i}{\valSNC}$. Moreover, each cluster $C$ in one of the clusterings comes with 
\begin{enumerate}
    \item a tree $T_C$ of diameter at most $D_i$ which spans $C$ and is rooted at some node $v_C \in C$. We refer to $v_C$ as the cluster center.
    \item A potential for scale $D_i$ with respect to $V \setminus C$ (known to nodes in $C$). \alert{we never defined what potentials with respect to a subset means.. also, with respect to which weighted graph is this? (I suppose $G_{D_i}$ or something.. but this is important to say?}
\end{enumerate}
\end{definition}

\alert{also, I think that the dist structure potentials are quite hard to parse for non-authors of this paper .. we should explain that this potentials with respect to the outside of the cluster actually means we want to know the distance to the boundary, but we are happy with an approximate distance that satisfies the potential smoothness properties or whatever.}

In \cref{fig:main}, ``distance structures up to distance $D$'' correspond to having a distance structure for every scale $D_j$ with $D_j \leq D$.

In \cref{sec:intro} we outlined the problem of rounding a transshipment flow, and that rounding can be deterministically reduced to solving $\tilde{O}(1)$ \euler problems. 
We now define the oracles for the two problems.

The rounding oracle is defined as follows. \alert{in sec 5 we actually give an implementation for a perfect $\eps = 0$ rounding oracle .. Christoph explained that this is to future-proof the proof as someone might come up with a (distributed) rounding faster algorithm in the future. However, without any explanation here this is extremely confusing (at least to me). Can we please add a explanation why we state it in this form.}

\begin{restatable}[Rounding Oracle --- $\oRound_{\eps}$]{definition}{roundingoracle}
\label{def:oracle_rounding}
The Rounding Oracle $\oRound_{\eps}$ takes as input a weighted graph $H$ with length function $\ell_H$ and a flow $f$ on $H$. The weighted graph $H$ needs to be a subgraph of some graph $H'$ that one can obtain from $G$ by adding up to $\tilde{O}(1)$ virtual nodes and adding edges of arbitrary nonnegative length incident to the virtual nodes. The flow $f$ needs to satisfy the following condition. Let $b$ be the demand that $f$ routes. Then, $b(v) \geq 0$ for all $v \in V(H)$ except for some $s \in V(H)$ called the source. 
The output is a rooted tree $T$ with root $s$ spanning all vertices of $H$ with non-zero demand such that $\sum_{v:b(v) \neq 0}b(v)\dist_T(s,v) \le (1+\eps')\ell_H(f)$ with $\eps' = \min \left( \eps,\frac{1}{20\log^3(n)\valSNC} \right)$. 
\end{restatable} 

Note that $\oRound_{\eps}$ takes as input a weighted graph that can have more vertices than $G$. This allows us to use $\oRound_{\eps}$ to round a transshipment flow defined on the graph $G_{S,D}$. 
Moreover, we sometimes just write $\oRound$, without a precision parameter $\eps$, which we define as $\oRound = \oRound_1 = \oRound_{\frac{1}{20\log^3(n)\valSNC}}$.

The \euler oracle is defined next. 

\begin{restatable}[\euler Oracle --- $\oEuler$]{definition}{euleroracle}
\label{def:oracle_euler}
The \euler oracle $\oEuler$ takes as input a Eulerian graph $H$. The graph $H$ has to be a subgraph of some graph $H'$ that one can obtain from $G$ by adding up to $\tilde{O}(1)$ virtual nodes and adding any edges incident to the virtual nodes.
The output is an orientation of the edges of $H$ such that the in-degree of every node is equal to its out-degree.
\end{restatable}

\subsection{Formal Statements Corresponding to \cref{fig:oracles}}
\label{sec:implications_cycle}

For the sake of this section, we say that we can solve a problem or compute a structure \emph{efficiently} if there exists a \aggregate algorithm for the task that runs in $\tilde{O}(1)$ rounds. 

In this subsection, we give one formal statement for each arrow in \cref{fig:oracles}, as promised at the beginning of \cref{sec:localiterative}.
Before that, we state a result which captures the main essence of our local iterative reduction cycle. 

\begin{restatable}[Main Lemma]{lemma}{cycleone}
\label{lem:cycleone}
Assume a distance structure for every scale $D_j$ smaller than scale $D_i$ and oracle $\oRound$ are given. A distance structure for scale $D_i$ can be efficiently computed.
\end{restatable}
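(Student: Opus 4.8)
The plan is to traverse the reduction cycle of \cref{fig:oracles} arrow by arrow, using the hypothesis---distance structures for all scales $D_j < D_i$---to bootstrap a distance structure for scale $D_i$. I would proceed in the following order. First, I would use the distance structures for scales below $D_i$ (in particular for scale $D_i/\valSNC$, since $D_i/\beta = D_i/(8\valSNC) < D_i$ and the covers at that scale have covering radius $\frac{D_i}{\beta\valSNC}$, comfortably inside scale $D_i$) to implement the $\ell_1$-oblivious routing oracle $\oObliv_{D_i}$. This is the main new algorithmic content alluded to in the technical overview: a sparse neighborhood cover for each exponentially increasing scale up to $D_i$, together with the extra distance information stored in the distance structures (the cluster-spanning trees $T_C$ and the potentials with respect to $V\setminus C$), is exactly what the new graph-based construction needs to produce an $\tO(1)$-competitive oblivious routing $R_S$ on $G_{S,D_i}$ that can be evaluated (both $R_S b$ and $R_S^\top f$) in $\tO(1)$ \aggregate rounds.

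Second, I would feed $\oObliv_{D_i}$ into Sherman-style boosting (the multiplicative-weights / gradient-descent procedure, which is deterministic and costs only $\poly(\alpha,\eps^{-1},\log n)$ oracle calls) to obtain a $(1+\eps)$-approximate transshipment solution on $G_{S,D_i}$ for the single-source demand concentrated at $\sstar$; this yields both a near-optimal fractional flow $f$ and a near-optimal potential. From the potential I extract the output of $\oPot_{D_i}$: restricting/shifting the dual potential on $G_{S,D_i}$ gives a non-negative potential vanishing on $S$ and bounded below by $0.5\frac{D_i}{\valSNC}$ on nodes at distance $\ge \frac{D_i}{\valSNC}$ from $S$, using that $G_{S,D_i}$ faithfully represents $G$-distances up to $2D_i \gg D_i/\valSNC$. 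Third, I would apply the rounding oracle $\oRound$ to the fractional single-source flow $f$ on $G_{S,D_i}$ to obtain a tree $T$ of cost at most $(1+\eps')\ell(f)$; after deleting the virtual nodes $\sstar$ and $\vD$ this tree becomes a forest in $G$ which, by the distance-to-cost relationship and the near-optimality of $f$, realizes $\dist_F(S,u)\le (1+\eps)D$ for all $u$ reached and covers every $u$ with $\dist_G(S,u)\le D$---i.e. it implements the forest oracle $\oDist_{\eps,D}$ for all the required values $D\in[\frac{D_i}{\valSNC},D_i]$ (running it at the $O(\log \valSNC)=\tO(1)$ relevant values of $D$, or one scan, via node-disjoint scheduling where applicable).

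Fourth and finally, with $\oDist_{\frac{1}{\log^3 n},D}$ available for every $D\in[\frac{D_i}{\valSNC},D_i]$ and with $\oPot_{D_i}$ available, I would assemble the distance structure for scale $D_i$: I would invoke the deterministic weighted sparse-neighborhood-cover construction of \cite{elkin2022decompositions} at covering radius $\frac{D_i}{\valSNC}$ to get the $O(\log n)$ clusterings; for each cluster $C$ I would run the forest oracle with source set equal to a cluster center (or grow a shortest-path tree inside $C$ up to distance $D_i$) to produce the spanning tree $T_C$ of diameter $\le D_i$; and I would run the potential oracle $\oPot_{D_i}$ with $S = V\setminus C$ to attach the required potential with respect to the outside of $C$. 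Each of these steps is $\tO(1)$ \aggregate rounds, the clusterings are disjoint so their per-cluster computations can be scheduled node-disjointly, and the total is $\tO(1)$ rounds, which is what ``efficiently'' means here. I expect the main obstacle to be the first step---correctly specifying and analyzing the new oblivious-routing construction from the distance structures and showing it is both $\tO(1)$-competitive and $\tO(1)$-round evaluable---since everything else (boosting, the distance-preservation properties of $G_{S,D}$, and assembling the final structure) is comparatively routine given the oracles and the cited external results; a secondary subtlety is bookkeeping the $\eps$'s so that the $(1+\eps')$ loss in $\oRound$ and the $(1+\eps)$ loss in boosting compose to the $(1+\frac{1}{\log^3 n})$ guarantee demanded inside the distance structure.
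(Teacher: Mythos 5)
Your high-level route is identical to the paper's: traverse the cycle of \cref{fig:oracles}, first building $\oObliv_{D_i}$ from the given distance structures (\cref{lem:cycletwo}), then boosting + rounding to obtain $\oPot_{D_i}$ and $\oDist_{\frac{1}{\log^3 n},D}$ on the cheating graphs $G_{S,D_i}$ (\cref{lem:cyclethree}), then invoking the deterministic weighted sparse-cover construction (\cref{lem:cyclefour}), and finally attaching the per-cluster potentials (\cref{lem:cyclefive}). Up to here you have essentially rediscovered the same composition.

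There is, however, a genuine gap in your final assembly step. You propose to ``run the potential oracle $\oPot_{D_i}$ with $S=V\setminus C$ to attach the required potential'' for each cluster $C$ and assert that because the clusters in a clustering are disjoint, these computations can be ``scheduled node-disjointly.'' This does not work. The oracle $\oPot_{D_i}$ is a black box that takes a single global input $S\subseteq V$ and, by the model's accounting, costs one round per invocation; there can be $\Theta(n)$ clusters, so per-cluster calls are not $\tO(1)$ rounds. Node-disjoint scheduling applies to \aggregate algorithms run on disjoint connected subgraphs, not to independent oracle calls with global inputs, and the potential $\phi_{V\setminus C,D_i}$ depends on the whole graph, not only on the nodes inside $C$. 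The subtlety is that one cannot even call the oracle once per \emph{clustering} with $S$ equal to the complement of the union of clusters, because a clustered node can be adjacent to a different cluster in the same clustering. The paper's \cref{lem:cyclefive} resolves this with a bit-identifier trick: each cluster gets a balanced bit string; for each bit position $j$ one calls $\oPot_{D_i}$ once with $S = V\setminus V_{\mathrm{active},j}$ (the complement of the union of ``active'' clusters at bit $j$); for each cluster $C$ one then takes $\phi_{V\setminus C,D_i}(v) = \min_{j:\ C\text{ active at }j}\phi_j(v)$, which is again a potential by \cref{fact:potential_min_max} and can be checked to satisfy \Cref{def:potential}. This uses only $O(\log n)$ oracle calls per clustering. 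You also suggest obtaining the trees $T_C$ by a per-cluster forest-oracle call with source set $\{v_C\}$; this has the same too-many-calls problem, and the resulting forest need not stay inside $C$. In the paper the spanning trees $T_C$ are an explicit output of the sparse cover construction itself (\cref{lem:cyclefour}, i.e., \cite[Theorem~C.4]{elkin2022decompositions}), so no per-cluster oracle calls are needed there at all.
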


\cref{lem:cycleone} is a simple corollary of the next four lemmas. The formal proof can be found in \cref{sec:missingProofs}.

\begin{restatable}{lemma}{cycletwo}
\label{lem:cycletwo}
Assume a distance structure for every scale $D_j$ smaller than scale $D_i$ is given. Then, $\oObliv_{D_i}$ can be efficiently computed. 
\end{restatable}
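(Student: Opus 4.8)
\textbf{Proof plan for \Cref{lem:cycletwo}.}

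The plan is to construct the $\ell_1$-oblivious routing $R_S$ on $G_{S,D_i}$ directly from the distance structures for the smaller scales, following the ``graph-based oblivious routing from sparse neighborhood covers'' idea advertised in the technical overview. First I would recall that $\oObliv_{D_i}$ only needs to produce the two linear maps $b \mapsto R_S b$ and $f \mapsto R_S^T f$ applied to a given demand/flow; it does not need to materialize the matrix $R_S$. So it suffices to (a) \emph{define} a fixed $\tilde O(1)$-competitive linear routing $R_S$ on $G_{S,D_i}$ whose description depends only on the available distance structures, and (b) show that both $R_S b$ and $R_S^T f$ can be evaluated in $\tilde O(1)$ \aggregate rounds. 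For (a), the natural construction is the standard recursive/telescoping oblivious routing over distance scales: for each scale $D_j \le D_i$ and each clustering $\fC_\ell$ in the sparse neighborhood cover of that scale, route the ``residual'' demand within each cluster $C$ toward its cluster center $v_C$ along the spanning tree $T_C$ (which has diameter $\le D_j$), then aggregate the demands of cluster centers up to the next scale, and so on; at the top scale the remaining demand is sent through $\vD$ (and $\sstar$). Competitiveness $\tilde O(1)$ follows because every unit of demand in $G_{S,D_i}$ between points at distance $d$ is, by the covering property, contained in some cluster of the smallest scale $D_j \gtrsim d \cdot \polylog n$, so it only gets charged the $O(\log n)$ clusterings at each of the $O(\log n)$ scales up to $D_j$, each costing $O(D_j) = \tilde O(d)$ — giving total stretch $\polylog n$ against $\norm{b}_\opt$; the potentials stored in the distance structure are what certify that this routing is competitive against the \emph{optimal} (not just tree) flow, via the dual/potential lower bound from \Cref{fact:potential_min_max} and the definition of $\norm{\cdot}_\opt$.

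Next I would handle the two evaluation directions separately. For $R_S b$: the forward direction is a scale-by-scale sweep. Within one scale's clustering, each cluster is a node-disjoint connected subgraph, so by node-disjoint scheduling we can in parallel run an ancestor/subtree-sum computation on each $T_C$ (using \Cref{lemma:ancestor-subtree-sum}) to push all demand in $C$ to $v_C$ and record the flow on every tree edge; this is $\tilde O(1)$ rounds. Summing the contributions over the $O(\log n)$ clusterings of a scale, then moving to the next of $O(\log n)$ scales, and finally routing the leftover through the virtual nodes $\vD, \sstar$ (each a single virtual node, handled in $\tilde O(1)$ rounds), gives $\tilde O(1)$ rounds total. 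For $R_S^T f$: this is the transpose map, which takes a flow and returns a node vector; concretely $R_S^T$ reverses the roles — one reads off, for each tree edge of each $T_C$, the value of $f$ on that edge, and performs the reverse aggregation (subtree sums of $f$ along $T_C$, propagated down the scale hierarchy). The key observation is that the transpose of an ``aggregate along a tree then pass to coarser scale'' operator is again an ``aggregate along the same tree'' operator in the opposite direction, so the same ancestor/subtree-sum primitive evaluates it in $\tilde O(1)$ rounds. I would make sure the virtual-node bookkeeping is consistent: $G_{S,D_i}$ has only $2$ extra nodes over $G$, well within the $\tilde O(1)$ virtual-node budget, and all clusterings/trees in the distance structures for scales $D_j < D_i$ live in (a minor of) $G$, hence the whole computation runs in the \aggregate model on $G_{S,D_i}$.

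The main obstacle I expect is the competitiveness analysis, specifically getting the stretch down to $\polylog n$ rather than $n^{o(1)}$ — this is exactly the point where the cited prior work \cite{goranci2022universally} lost polynomially. The subtlety is twofold: first, one must argue that routing a unit of flow between $u$ and $v$ is ``caught'' by some cluster at a scale $D_j$ that is only $\polylog n$ larger than $\dist_{G_{S,D_i}}(u,v)$ (this is precisely the covering-radius guarantee $D_j/\valSNC$ with $\valSNC = \log^7 n$, and the geometric spacing $\beta = 8\valSNC$ of scales), and second, that the per-scale charge telescopes: a demand pair is only ``active'' at scales up to its catching scale, so its total cost is $\sum_{D_j \le D_{\text{catch}}} O(\log n) \cdot O(D_j) = O(\log n)\cdot O(D_{\text{catch}})$, a geometric sum dominated by its largest term $\tilde O(\dist(u,v))$. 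Finally one integrates over an optimal flow decomposition of $b$ to conclude $\ell(R_S b) \le \polylog n \cdot \norm{b}_\opt$; the nonnegative potentials in the distance structures are used to certify that the within-cluster distances we route along are not too much larger than the true distances to cluster boundaries, which is what prevents the stretch from compounding across scales. I would also double-check the boundary case where $\dist_{G_{S,D_i}}(u,v) = 2D_i$ via $\vD$, ensuring the top-scale routing through the virtual node is itself $O(D_i)$-competitive, which is immediate since any demand has $\norm{b}_\opt \ge$ something comparable or is simply routed optimally through $\vD$.
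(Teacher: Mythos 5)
Your evaluation strategy (reading off $R_S b$ and $R_S^T f$ via per-cluster ancestor/subtree sums, node-disjoint scheduling within each clustering, and simulating the two virtual nodes of $G_{S,D_i}$) is essentially how the paper handles \Cref{lem:distributed-oblivious}, and that part is fine. The construction and the competitiveness analysis, however, are the standard ``merge residual demand at cluster centers and recurse'' routing, and this is precisely the approach the paper identifies as \emph{inherently} unable to reach $\polylog n$-competitiveness (see ``Idea 1'' in \Cref{sec:routing-overview}). Your telescoping claim---``a demand pair is only `active' at scales up to its catching scale''---is the step that fails. With a sparse cover of $\gamma = O(\log n)$ clusterings, the demand $\mathbbm{1}_u - \mathbbm{1}_v$ split equally among the clusterings is fully cancelled at the catching scale $D_j$ in at most one of the $\gamma$ clusterings; the remaining $\approx (1-1/\gamma)$ fraction of the demand becomes a new demand between cluster centers at distance up to $O(D_j)$, which then lives on into scale $D_{j+1}$. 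At scale $D_{j+k}$ the surviving mass is roughly $(1-1/\gamma)^k$ and its per-scale cost is $O(D_{j+k}) = \beta^k O(D_j)$. Since $\beta = 8\valSNC = \Theta(\log^7 n) \gg \gamma/(\gamma-1)$, the sum $\sum_k (1-1/\gamma)^k \beta^k$ diverges; the total cost is dominated by the top scale $O(D_i)$, not by $\tO(\dist(u,v))$. This is the same compounding loss that limited \cite{goranci2022universally} to $n^{o(1)}$.

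The paper's construction in \Cref{sec:routing-formal} avoids this by two ideas your proof does not incorporate. First, the flow is never merged and re-routed: the $v$-associated flow is a fixed ``product flow'' $f_{C,C'}(v) = \frac{p_C(v)}{w_j(v)}\cdot\frac{p_{C'}(v)}{w_{j+1}(v)}$ determined entirely by which clusters contain the \emph{origin} $v$, and \Cref{obs:C-inside} guarantees the routing path always moves from a center $r_C$ into an enclosing cluster $C'\supseteq C$. Second, the quantities $p_C(v) = \max\{0,\tilde d_C(v)/D_j - 0.25/\valSNC\}$ use the stored potentials $\tilde d_C$ to \emph{dampen} the fraction of flow that a node near a cluster boundary contributes to that cluster---the potentials are not a dual certificate here, they are inputs to the routing matrix. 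Together these give the Lipschitz bound $|f_{C,C'}(u)-f_{C,C'}(v)| \le O(\gamma\valSNC\,\dist_G(u,v)/D_j)$ (\Cref{lem:flow-uv-diff}), which multiplied by the path length $O(D_j)$ yields an \emph{additive} $\tO(\dist_G(u,v))$ per scale rather than a multiplicative loss, and summing over $O(\log n)$ scales gives $\tO(1)$-competitiveness (\Cref{lem:competitive-ratio-final}, \Cref{lem:competitive-ratio}). Without both ideas, the approach you describe has no route to closing the gap you yourself flagged as the ``main obstacle.''
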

\cref{lem:cycletwo} follows from a simple adaption of our $\tO(1)$-competitive $\ell_1$-oblivious routing construction, explained in \cref{sec:l1-oblivious-routing}.

\begin{restatable}{lemma}{cyclethree}
\label{lem:cyclethree}
Assume oracle $\oRound$ and $\oObliv_{D_i}$ are given. Then, $\oPot_{D_i}$ and $\oDist_{\frac{1}{\log^3(n)},D}$ can be efficiently computed for every $D \in \left[ \frac{D_i}{\valSNC},D_i \right]$.
\end{restatable}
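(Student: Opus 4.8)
\textbf{Proof plan for \cref{lem:cyclethree}.}

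The plan is to run Sherman's boosting framework on the graph $G_{S,D_i}$, using $\oObliv_{D_i}$ as the $\tilde{O}(1)$-competitive oracle, to obtain a $(1+\eps)$-approximate primal-dual transshipment solution for the single-source demand concentrated at $\sstar$, and then to extract the potential $\phi_{S,D_i}$ from the (near-optimal) dual and the forest $F$ from the (near-optimal) primal after rounding it via $\oRound$. Concretely, first I would set up the transshipment instance on $G_{S,D_i}$ with demand $b$ that places all mass on $\sstar$ against a uniform sink, or—more convenient for a forest rooted at $S$—the demand with $b(\sstar) = -|V|$ and $b(v)=1$ for $v \in V$; the optimal cost of this instance is $\sum_{v} \dist_{G_{S,D_i}}(\sstar,v)$, and by the distance-preservation property of $G_{S,D_i}$ discussed after its definition, each such term equals $D + \dist_G(S,v)$ when $\dist_G(S,v)\le D$ and equals exactly $2D$ otherwise (up to the $\sstar$-edge of length $D$). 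Boosting requires only $\poly(\log n, \eps^{-1})$ calls to $\oObliv_{D_i}$ (each call evaluates $R_S b$ and $R_S^\top f$, exactly what the oracle provides), all of which run in $\tilde{O}(1)$ \aggregate rounds, so the whole boosting phase is efficient.

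For the potential oracle: the boosted dual gives a potential $\potential$ on $V(G_{S,D_i})$ with $b^\top \potential \ge (1-\eps')\,\opt$. By standard manipulations (shifting so $\potential(\sstar)=0$, and using \cref{fact:potential_min_max} to truncate via $v \mapsto \min(\potential(v), 2D)$ and $v\mapsto \max(\potential(v),0)$, which keeps it a potential), I would argue that near-optimality of $b^\top\potential$ forces $\potential(v)$ to be close to $\dist_{G_{S,D_i}}(\sstar,v)$ for most $v$, and in particular that every $v$ with $\dist_G(v,S)\ge D_i/\valSNC$ has $\potential(v) \ge 0.5\, D_i/\valSnc$—here I pick $\eps'$ small enough (the definition of $\oRound$ already hands us $\eps' = \frac{1}{20\log^3 n\,\valSNC}$, which is what makes the constant $0.5$ work) so that the additive slack $\eps'\cdot \opt \le \eps' \cdot \poly(n)\cdot D_i$ is dwarfed once one localizes the argument to a single vertex via the dual LP's structure; restricting $\potential$ to $V(G)$ and subtracting the (constant) $\sstar$-edge contribution gives the required $\phi_{S,D_i}$ with $\phi_{S,D_i}|_S = 0$.

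For the forest oracle: the boosted primal gives a flow $f$ on $G_{S,D_i}$ of cost $\le (1+\eps')\opt$ satisfying $b$; this $b$ is single-source (all demand nonnegative except at $\sstar$), so $\oRound$ applies and returns a tree $T$ rooted at $\sstar$ with $\sum_v b(v)\dist_T(\sstar,v) \le (1+\eps')\ell(f) \le (1+\eps')^2 \opt$. Deleting $\sstar$ (and, if present, $\vD$) from $T$ yields a forest $F$ in $G$ rooted at $S$. To see $F$ satisfies \cref{def:oracle_dist} for the target $D\in[D_i/\valSNC, D_i]$: the total-cost guarantee together with $b(v)=1$ for $v\in V$ means the average over-stretch is small, but we need a per-vertex guarantee, so I would instead truncate—run $\oRound$ on the subgraph reachable within distance $\approx (1+\eps/2)D$ from $\sstar$, or post-process $T$ by cutting it at depth $(1+\eps)D$ from $\sstar$ and re-rooting the severed pieces—so that every retained $v$ has $\dist_F(S,v)\le(1+\eps)D$, while every $v$ with $\dist_G(S,v)\le D$ is retained because its distance from $\sstar$ in an optimal flow is $\le D + D = 2D < (1+\eps)\cdot 2D$, hence it survives truncation. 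Iterating (or running in parallel) over the $O(\log_\beta \valSNC) = O(1)$ relevant values of $D$ in $[D_i/\valSNC, D_i]$ keeps the round count $\tilde O(1)$.

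The main obstacle I expect is the per-vertex control in both halves: boosting natively only certifies the global objective $b^\top\potential \ge (1-\eps')\opt$ and $\ell(f)\le(1+\eps')\opt$, whereas \cref{def:potential} and \cref{def:oracle_dist} demand guarantees that hold for \emph{every} vertex individually. Getting there requires either (i) exploiting the single-source structure—routing demand $1$ from each $v$ and using that the dual value decomposes as $\sum_v \potential(v)$, so a deficit at one vertex is a deficit in the sum—combined with the a priori bound $\potential(v)\le \dist_{G_{S,D_i}}(\sstar,v)$ that a feasible potential always satisfies, or (ii) the truncation/re-rooting trick above to convert an average guarantee into a worst-case one at the cost of a slightly worse approximation constant, which is harmless here since we only need $(1+\frac{1}{\log^3 n})$ and $(1+\eps)$ respectively. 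Making the $\eps'$ bookkeeping line up with the stated constants ($0.5$ in the potential, $\frac{1}{\log^3 n}$ in the forest) is the fiddly part but is exactly why $\oRound$'s precision was defined as $\frac{1}{20\log^3 n\,\valSNC}$.
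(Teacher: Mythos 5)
Your instinct to run boosting on $G_{S,D_i}$ with the single-source demand at $\sstar$ and then round via $\oRound$ is the right starting point, and it matches the paper's high-level route. But you correctly identify the crux — per-vertex guarantees — and then your proposed fixes do not actually close the gap, while the paper's proof uses a different mechanism (the phased construction of \cref{theorem:boosting_rounding}) that you don't invoke.

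Concretely, both of your suggested repairs fail. For the forest, the truncation argument claims that every $v$ with $\dist_G(S,v)\le D$ ``survives truncation because its distance from $\sstar$ in an optimal flow is $\le 2D$,'' but the thing being truncated is the rounded tree $T$, not the optimal flow, and the only guarantee you have on $T$ is the aggregate bound $\sum_v b(v)\dist_T(\sstar,v) \le (1+\eps')^2\,\opt$. A single vertex $v$ with $\dist_G(S,v)\le D$ can have $\dist_T(\sstar,v)$ far above $(1+\eps)D$ while the aggregate bound still holds (a $\Theta(\eps'\opt) \approx \eps' n D_i$ over-stretch can be concentrated on a few vertices). Truncating would then delete such a $v$, violating condition~1 of \cref{def:oracle_dist}. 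For the potential, the ``localize the dual LP'' step is similarly unsound: the slack $\eps'\cdot\opt$ is on the order of $\eps' n D_i$, which is not small compared to the per-vertex target of $\Theta(D_i/\valSNC)$, so near-optimality of $b^\top\phi$ gives no per-vertex lower bound on $\phi(v)$.

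What the paper actually does is invoke \cref{theorem:boosting_rounding}, which delivers a potential $\phi$ with $\phi(\sstar)=0$ and a tree $T$ satisfying $\dist_T(\sstar,v)\le(1+\eps)\phi(v)$ \emph{for every vertex $v$ individually}. This per-vertex guarantee is obtained not from a single boosting + rounding pass, but from $O(\log n)$ phases: in each phase one calls \cref{lem:transshipment_primal_dual} (boosting) and $\oRound$ with a demand supported only on the currently ``unhappy'' nodes, a node becomes happy once $\dist_{T_i}(\sstar,v)\le(1+\eps)\phi_i(v)$, and a potential-counting argument shows the unhappy mass shrinks geometrically. The final $\phi$ is the coordinatewise maximum of the $\phi_i$ (still a potential by \cref{fact:potential_min_max}) and $T$ is assembled by taking, per vertex, the best parent pointer across the $T_i$. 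Once you have this per-vertex tree/potential pair, extracting $\oDist$ (by keeping the part of $T$ within distance $(1+\eps)(D_i+D)$ of $\sstar$) and $\oPot_{D_i}$ (via $\phi_{S,D_i}(v)=\max(\phi(v)-D_i,0)$) is straightforward — the chain $\phi(v)\ge \dist_T(\sstar,v)/(1+\eps)\ge(D_i+\dist_G(v,S))/(1+\eps)$ is what makes the constant $0.5\,D_i/\valSNC$ work, and it relies on the pointwise tree-potential bound, not the aggregate one. You should incorporate this iterative unhappy-set construction (or cite \cref{theorem:boosting_rounding}) rather than attempting to extract per-vertex guarantees from a single global transshipment bound.
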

\cref{lem:cyclethree} follows mainly from previous work. More precisely, having access to a $\tO(1)$-competitive $\ell_1$-oblivious routing in $G_{S,D_i}$, we can compute the following two objects via boosting and rounding \cite{She17, BKKL17, Li20, goranci2022universally}. First, a $(1+\eps)$-SSSP-tree in $G_{S,D_i}$ rooted at $\sstar$. Second, an individually good $(1+\eps)$-approximate potential in $G_{S,D_i}$ for the single-source transshipment demand with source $\sstar$. 
For completeness, we give a summary of these steps in the proof of \cref{theorem:boosting_rounding}.

If one looks at what the aforementioned tree and potential in $G_{S,D_i}$ correspond to in the graph $G$, then one can relatively straightforwardly transform them to obtain a $\left(1+\frac{1}{\log^3 n}\right)$-approximate forest for $D$ rooted at $S$ and a potential for scale $D_i$ with respect to $S$, assuming $\eps = \frac{1}{\poly(\log n)}$ is sufficiently small.
The details of this transformation can be found in the proof of \cref{lem:cyclethree} in \cref{sec:missingProofs}.

\begin{restatable}{lemma}{cyclefour}
\label{lem:cyclefour}
Assume oracle $\oDist_{\frac{1}{\log^3(n)},D}$ is given for every $D \in \left[ \frac{D_i}{\valSNC},D_i \right]$. Then, a sparse neighborhood cover with covering radius $\frac{D_i}{\valSNC}$ together with a rooted spanning tree $T_C$ of diameter at most $D_i$ for every cluster $C$ in the cover can be computed efficiently.
\end{restatable}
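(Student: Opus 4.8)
The plan is to show that a sparse neighborhood cover with covering radius $\frac{D_i}{\valSNC}$ together with low-diameter spanning trees can be extracted directly from the output of a single sparse-cover construction on a suitably modified (minor/virtual-node) graph, using the forest oracle $\oDist_{\frac{1}{\log^3(n)},D}$ to grow the trees and certify the distance guarantees. Concretely, recall that the paper already invokes the deterministic weighted sparse-neighborhood-cover construction of \cite{elkin2022decompositions} (mentioned in the technical overview). That construction, run at scale $\frac{D_i}{\valSNC}$, produces $\gamma = O(\log n)$ clusterings such that every ball of radius $\frac{D_i}{\valSNC}$ is contained in some cluster, and every cluster has (weak or strong) diameter $O\!\left(\frac{D_i}{\valSNC}\cdot\polylog n\right) \le D_i$ by our choice of $\valSNC = \log^7 n$ and $\beta = 8\valSNC$. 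The only gap between what such a construction gives and what \cref{def:distance_structure} demands (ignoring the potentials, which are handled in the companion lemma \cref{lem:cyclethree}) is: (i) an explicit rooted spanning tree $T_C$ of each cluster $C$ of diameter $\le D_i$, and () the fact that all of this must be done in $\tilde O(1)$ \aggregate rounds using only the oracle $\oDist$ (not a general shortest-path primitive).

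First I would argue that the Elkin--Haeupler--Rozhon--etc.\ sparse-cover algorithm can itself be phrased as $\tilde O(1)$ \aggregate rounds with access to $\oDist_{\frac{1}{\log^3(n)},D}$ for $D$ in the stated window $\left[\frac{D_i}{\valSNC}, D_i\right]$: the construction is a $\polylog n$-iteration ball-growing / ruling-set procedure, and each ``grow a ball of radius $r$ around a set of centers'' step for $r$ in that window is exactly a call to the forest oracle (attach a virtual supersource $\sstar$ to the current center set $S$, invoke $\oDist_{\frac{1}{\log^3(n)},r}$, and read off which nodes landed in $V(F)$ and at what approximate depth). Because $\oDist$ returns a $(1+\tfrac{1}{\log^3 n})$-approximate forest, each ball-growing step overshoots its target radius by at most a $(1+\tfrac{1}{\log^3 n})$ factor; over $\polylog n$ iterations this compounds to at most a constant factor, which is absorbed into the $\valSNC$ slack. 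This yields the clustering family with each cluster contained in a radius-$O(\tfrac{D_i}{\valSNC}\polylog n)$ ball around its center — so in particular of diameter $\le D_i$.

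Second, and this is where the spanning tree $T_C$ comes from: for each clustering $\fC_k$ ($k\in[\gamma]$) I would take $S_k$ to be the set of all cluster centers in $\fC_k$, attach a virtual source $\sstar$ adjacent to $S_k$, and call $\oDist_{\frac{1}{\log^3(n)},D_i}$ one more time. The returned forest $F$, restricted to the nodes of a single cluster $C$ and re-rooted at its center $v_C$ (deleting the virtual $\sstar$), is a spanning subtree $T_C$; by the second guarantee of \cref{def:oracle_dist} every node $u\in C$ satisfies $\dist_{T_C}(v_C,u)\le (1+\tfrac{1}{\log^3 n})D_i$, hence $T_C$ has diameter at most $2(1+\tfrac{1}{\log^3 n})D_i$. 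To get the diameter down to the required $D_i$ I would instead run the whole construction at a scale smaller by a constant factor (say use $\frac{D_i}{2\valSNC}$ as covering radius and target depth $\frac{D_i}{2(1+o(1))}$ when growing the trees), which is harmless since all the downstream lemmas only need the covering radius to be $\Theta(\tfrac{D_i}{\valSNC})$ and the diameter to be $\le D_i$; the constants work out because $\valSNC=\log^7 n$ is generously large. One subtlety: $\oDist$ only guarantees $V(F)$ contains all nodes within distance $D$ of $S$, but a cluster $C$ is by construction contained in the $\frac{D_i}{\valSNC}$-ball (indeed in a somewhat larger ball of radius $\le \tfrac{D_i}{2}$) around $v_C$, so every node of $C$ is reached by the oracle call and lies in $V(F)$ — so $T_C$ does span $C$. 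Node-disjointness of distinct clusters in the same clustering lets us root and orient all $T_C$ simultaneously (node-disjoint scheduling), and the ancestor/subtree-sum primitive lets each node learn its root and depth.

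The main obstacle I anticipate is making the interface honest: the Elkin-et-al.\ sparse cover algorithm is usually stated assuming an exact (or $(1+\eps)$) shortest-path / BFS primitive up to the relevant radius, whereas here we must implement it purely through $\oDist$, whose approximation error ($\tfrac{1}{\log^3 n}$) and whose ``forest-valued'' (rather than distance-label-valued) output both need care. I would handle the approximation by a clean accounting lemma: if every ball-growing step is replaced by a $(1+\delta)$-approximate one with $\delta = \tfrac{1}{\log^3 n}$ and there are $T=\polylog n$ steps, then target radii are inflated by $(1+\delta)^T = 1+o(1)$ and the covering/separation invariants of the sparse-cover recursion still hold with the same asymptotic guarantees; the generous gap between $\valSNC = \log^7 n$ and the $\polylog n$ iteration count leaves plenty of room. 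The forest-vs-labels issue is minor: depths in $F$ are recoverable by one ancestor-sum call. Everything else — contraction to handle already-clustered nodes, adding the $\tilde O(1)$ virtual sources, node-disjoint scheduling across clusters — is routine in the \aggregate model, so the whole computation fits in $\tilde O(1)$ rounds.
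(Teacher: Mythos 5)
The paper's own proof of \cref{lem:cyclefour} is a one-line citation: it invokes the weighted sparse-neighborhood-cover construction of \cite[Theorem C.4]{elkin2022decompositions}, which is already stated as an $\tilde O(1)$-round \aggregate algorithm given the forest oracle at the relevant scales and already produces both the clusterings and the per-cluster rooted trees of the stated diameter. Your proposal pursues the same underlying route (reduce to the Elkin et al.\ sparse-cover construction implemented via $\oDist$), but tries to re-derive the construction rather than cite it, and in doing so introduces a concrete gap.

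The gap is in how you obtain the spanning trees $T_C$. You first build the clusterings, and only afterwards make a fresh call to $\oDist$ with all cluster centers of a clustering $\fC_k$ as the source set, then ``restrict the returned forest $F$ to the nodes of a single cluster $C$''. This does not yield a spanning tree of $C$: the oracle's output is an arbitrary $(1+\tfrac{1}{\log^3 n})$-approximate shortest-path forest rooted at the source set, with no relation to the cluster boundaries you fixed earlier. The tree path in $F$ from some center $v_{C'}$ to a node $u\in C$ can pass through nodes of $C$ and vice versa, and the $F$-parent of a node in $C$ need not lie in $C$ — so the restriction $F\cap C$ may well be a disconnected forest, not a tree, and may fail to reach $v_C$ at all. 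There is no a priori alignment between the Voronoi-ish structure of an independent approximate SSSP call and the clusters you already committed to. The cited Theorem C.4 avoids this by constructing the clusters and their trees together, maintaining the invariant that each cluster's tree never leaves the cluster. Separately, your error-accumulation argument asserts $(1+\tfrac{1}{\log^3 n})^{T}=1+o(1)$ for $T=\polylog n$ iterations, which is only true if $T=o(\log^3 n)$ — you should state that the iteration count of the underlying construction is $O(\log^2 n)$ (or similar) to justify this, rather than leaving $T$ as an unspecified polylog. Finally, your ``run at radius $\tfrac{D_i}{2\valSNC}$ instead'' fix silently changes the covering radius from the one demanded by \cref{def:distance_structure}; to be honest you would need to verify that all downstream uses only require a $\Theta(\tfrac{D_i}{\valSNC})$ covering radius, or absorb the constant into $\valSNC$ from the start.
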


\cref{lem:cyclefour} directly follows from \cite[Theorem C.4]{elkin2022decompositions}, which is proven by adapting the algorithms of \cite{rozhon_ghaffari2019decomposition,chang_ghaffari2021strong_diameter} for the closely-related network decomposition problem. 


\begin{restatable}{lemma}{cyclefive}
\label{lem:cyclefive}
Assume we are given an oracle $\oPot_{D_i}$ and a sparse neighborhood cover with covering radius $\frac{D_i}{\valSNC}$ together with a rooted spanning tree $T_C$ of diameter at most $D_i$ for every cluster $C$ in the cover. 
Then, a distance structure for scale $D_i$ can be computed efficiently.
\end{restatable}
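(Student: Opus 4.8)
The goal is to upgrade the structural object from \cref{lem:cyclefour} — a sparse neighborhood cover with covering radius $\frac{D_i}{\valSNC}$, where each cluster $C$ comes with a rooted spanning tree $T_C$ of diameter at most $D_i$ — into a full distance structure for scale $D_i$ (\cref{def:distance_structure}). Comparing the two definitions, the cover and the trees $T_C$ are already in hand; the only missing piece is item (2): for each cluster $C$, a potential for scale $D_i$ with respect to $V \setminus C$, known to the nodes of $C$. So the entire content of the lemma is: \emph{produce these "boundary potentials'' for all clusters simultaneously, in $\tilde{O}(1)$ \aggregate rounds, using $\oPot_{D_i}$.}

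First I would recall what $\oPot_{D_i}$ gives us: on input a node set $S$, it returns a non-negative potential $\phi_{S,D_i}$ on $G_{S,D_i}$ that vanishes on $S$ and is at least $0.5\frac{D_i}{\valSNC}$ on every node at distance $\ge \frac{D_i}{\valSNC}$ from $S$. For a cluster $C$ we want exactly this object with $S = V \setminus C$ — morally, an approximate distance-to-the-boundary-of-$C$ function that is $1$-Lipschitz. The naive approach of calling $\oPot_{D_i}$ once per cluster is far too expensive (there can be polynomially many clusters), so the key observation is that \emph{within a single clustering $\fC_j$ the clusters are vertex-disjoint}, so all of the corresponding calls can be batched. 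Concretely, I would run, for each of the $\gamma = O(\log n)$ clusterings $\fC_j$ in the cover, a single node-disjoint-scheduled invocation: on the node-disjoint connected subgraphs induced by the clusters $C \in \fC_j$, invoke $\oPot_{D_i}$ with source set taken to be the complement inside each piece. The node-disjoint scheduling primitive (listed among the simulable operations in \Cref{subsec:intro_minor_aggregation}) lets us do this in $\tilde{O}(1)$ rounds, and summing over the $O(\log n)$ clusterings keeps the total round count $\tilde{O}(1)$.

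The subtlety I expect to be the main obstacle is reconciling "$V \setminus C$'' (a global object, living in $G_{V\setminus C, D_i}$ whose cheating node $\vD$ connects to all of $V$) with the node-disjoint execution, which only sees the subgraph on $C$. The fix is to exploit the structure of $G_{S,D}$: the cheating node $\vD$ attaches to every vertex with an edge of length $D = D_i$, so from the point of view of a node inside $C$, "reaching $S = V\setminus C$'' can happen either by walking to an actual boundary vertex of $C$ inside $G$, or by taking a length-$D_i$ hop to $\vD$ and a length-$D_i$ hop back — a round trip of $2D_i$. Since the potential only needs to certify distances up to $\Theta(D_i)$ and in fact we only care about the regime below $\frac{D_i}{\valSNC} \ll D_i$, the $\vD$-route is never the shortest one for the nodes that matter, so the potential is determined by the local geometry of $C$ together with its immediate $G$-boundary. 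Thus I would run the oracle on the graph $C$ augmented by (i) the incident boundary edges of $G$ leading out of $C$, with their far endpoints identified into the source set, and (ii) a private copy of the cheating node $\vD^{(C)}$ of length $D_i$ — this augmented graph is exactly $G_{S,D_i}$ restricted to what is reachable from $C$ within distance $O(D_i)$, and the oracle's output restricted to $C$ is the potential we need. After this, verifying the two numbered conditions of \cref{def:potential} for the restricted potential is a short check: condition (1) holds because boundary-adjacent nodes sit at distance $0$ from $S$, and condition (2) transfers directly since, for $v \in C$ with $\dist_G(v, V\setminus C) \ge \frac{D_i}{\valSNC}$, the $\vD$-detour of length $2D_i$ does not help, so $\dist_{G_{S,D_i}}(v,S) \ge \frac{D_i}{\valSNC}$ and the oracle guarantee kicks in. Finally each node of $C$ knows its own potential value by construction of the node-disjoint run, fulfilling the "known to nodes in $C$'' requirement, which completes the assembly of the distance structure.
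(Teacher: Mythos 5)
The high-level setup is right — you correctly identify that the only missing ingredient is the per-cluster potential for scale $D_i$ with respect to $V\setminus C$, and that the work should be organized clustering-by-clustering because clusters within a clustering are vertex-disjoint. But the core mechanism you propose does not fit the oracle model, and the gap is exactly the one the paper's bit-trick is designed to close.

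The difficulty is this: $\oPot_{D_i}$ is a black box that takes a \emph{single} node set $S \subseteq V$ and returns a potential on $G_{S,D_i}$. The oracle interface (see the Oracles paragraph in \Cref{subsec:intro_minor_aggregation}) explicitly forbids multiple oracle calls per round, and node-disjoint scheduling is a property of \aggregate rounds, not of oracle calls. So your plan to ``invoke $\oPot_{D_i}$ with source set taken to be the complement inside each piece'' amounts to making $|\fC|$ (potentially polynomially many) oracle calls, which cannot be batched. Your proposed fix — restrict to the subgraph on $C$ augmented with a private cheating node $\vD^{(C)}$ and the incident boundary edges, and run the oracle there — quietly replaces ``call the given black-box oracle'' by ``re-implement the oracle locally,'' which is not what the hypothesis grants you. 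There is also no guarantee that whatever the black-box oracle outputs globally for $S = V\setminus C$ agrees with what some locally simulated version would output; potentials are not canonically determined by the local geometry, they merely have to satisfy a Lipschitz property and two one-sided bounds.

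The natural single-call idea (call $\oPot_{D_i}$ once per clustering $\fC$ with $S$ equal to all nodes adjacent to a cluster they do not belong to) almost works but fails when a node is both clustered and adjacent to a different cluster — that node would incorrectly get potential $0$. The paper's proof resolves this with a short bit-trick: give each cluster $C\in\fC$ a $b=O(\log n)$-bit identifier (the cluster center's ID concatenated with its bitwise negation, ensuring any two clusters differ in some position). For each bit position $j$, call $\oPot_{D_i}$ once with $S = V\setminus V_{\mathrm{active},j}$ where $V_{\mathrm{active},j}$ is the union of clusters whose $j$-th bit is $1$, and define $\phi_{V\setminus C,D_i}(v) = \min_{j:\,C\text{ active at }j}\phi_j(v)$. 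By \cref{fact:potential_min_max} this is a potential; condition (1) of \cref{def:potential} holds because for any $v\notin C$ there is a round $j$ where $C$ is active but $v$'s cluster is not, so $v\in S$ and $\phi_j(v)=0$; condition (2) holds because if $\dist_G(v,V\setminus C)\ge D_i/\valSNC$ then in every active round $v$ is also that far from $V\setminus V_{\mathrm{active},j}$, hence $\phi_j(v)\ge 0.5D_i/\valSNC$. This uses only $O(\log n)$ oracle calls per clustering and $O(\log^2 n)$ overall. Your writeup is missing this decoupling device entirely; without it the construction does not stay within the allowed oracle budget.
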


Given a sparse neighborhood cover together with a tree for each cluster, it only remains to compute the potential for scale $D_i$ with respect to $V \setminus C$ for each cluster $C$ in the sparse neighborhood cover. 
One can compute the potentials for all the clusters in a given clustering $\fC$ simultaneously. The simplest approach for computing these potentials is to compute a single potential for scale $D_i$ with respect to all the nodes that are neighboring one of the clusters in $\fC$. This approach works as long as there does not exist a node that is both clustered and neighboring a different cluster. Even though this can indeed happen, there is a simple solution that solves this problem. The details can be found in the proof of \cref{lem:cyclefive} in \cref{sec:missingProofs}. 

\alert{this lemma is never proven, what is proven is a (differently stated and more precise) Theorem 5.1.. we should probably replace this lemma with that theorem.}
\begin{restatable}{lemma}{cycleeuler}
\label{lem:cycleeuler}
Assume the oracle $\oEuler$ is given. Then, $\oRound$ can be efficiently computed. 
\end{restatable}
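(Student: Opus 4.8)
The plan is to massage the fractional flow $f$ into a flow $f^\star$ on $H$ that satisfies the same demand $b$, has $\ell_H(f^\star)\le\ell_H(f)$, and whose support is a forest, invoking the assumed oracle $\oEuler$ only $\tO(1)$ times, and then to read the tree off of $f^\star$. This last step is easy and uses the single-source structure of $b$: since $b(v)\ge 0$ for every $v\ne s$, every connected component of $\supp(f^\star)$ other than the one containing $s$ has all of its demands nonnegative and summing to zero, hence identically zero, so all vertices with $b(v)\ne0$ lie in the tree $T$ that is the $s$-component of $\supp(f^\star)$; moreover the restriction of $f^\star$ to $T$ is the unique flow routing $b$ inside $T$, so its cost equals $\sum_{v:b(v)\ne 0}b(v)\dist_T(s,v)\le\ell_H(f^\star)\le\ell_H(f)$. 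Extracting $T$ and certifying this bound takes a constant number of connectivity and ancestor/subtree-sum computations, hence $\tO(1)$ \aggregate rounds.

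Producing $f^\star$ proceeds in two phases. \emph{Phase 1} is Cohen's bit-scaling, which makes $f$ integral. Scaling $f$ and $b$ by a common power of two is harmless (it preserves the support and scales the target inequality uniformly), and since the entries of $f$ are $\tO(1)$-bit dyadic numbers in the \aggregate model we may assume all entries are multiples of $\delta_0=2^{-K}$ with $K=\tO(1)$. I then double the granularity $K$ times. Given $f$ with all entries multiples of $\delta$, let $E_\delta$ be the set of edges carrying an odd multiple of $\delta$; reading flow conservation modulo $2\delta$ (and using $-x\equiv x\pmod{2\delta}$ for $x\in\delta\mathbb Z$) yields $\deg_{E_\delta}(v)\equiv b(v)/\delta\pmod 2$, so $E_\delta$ is Eulerian except at the (evenly many) vertices where $b(v)/\delta$ is odd, which I repair by attaching a single virtual node joined to those vertices by $\delta$-carrying, length-$0$ virtual edges. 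Calling $\oEuler$ on $E_\delta$ together with these virtual edges and then following out-edges decomposes this Eulerian subgraph into edge-disjoint closed walks; around each walk I push $\pm\delta$ units of flow, choosing the signs so as not to increase the cost and to eliminate the virtual edges. The push is safe because every edge in play carries a value in $\delta\mathbb Z\setminus 2\delta\mathbb Z$, so a $\delta$-push never changes the sign of any net flow; hence the cost is affine in the push amount along each walk and the two signs average to the current cost, so a non-increasing choice exists. Afterwards all entries are multiples of $2\delta$, and after $K=\tO(1)$ rounds $f$ is integral with $\poly(n)$-bounded entries.

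\emph{Phase 2} keeps invoking $\oEuler$ to kill the remaining undirected cycles of $\supp(f)$. In each round I Eulerianize $\supp(f)$ (pairing its odd-degree vertices through a virtual node with matching integral flow along the new edges), call $\oEuler$, decompose into edge-disjoint directed cycles, and on each cycle push flow in the cost-minimizing direction. The cost along a cycle is a convex piecewise-linear function of the push amount whose breakpoints are exactly the amounts at which some edge's net flow hits zero, so its minimizer is attained at a breakpoint with cost no larger than before; thus every cycle of the decomposition loses an edge and the cyclomatic number of $\supp(f)$ strictly drops. The crux is to show that one round drops it by a constant factor --- i.e.\ that $\supp(f)$ can always be augmented (by a judicious doubling of edges and choice of the virtual matching) into an Eulerian graph whose cycle decomposition has $\Omega(\text{cyclomatic number})$ cycles --- after which $O(\log n)=\tO(1)$ rounds make $\supp(f)$ a forest and the argument of Phase 1 delivers $f^\star$.

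I expect this last point --- that $\tO(1)$ Eulerian-tour rounds suffice in Phase 2 --- to be the main obstacle, since a single Euler tour can be one long cycle and then only one edge is removed per round, so one must force the Euler decomposition to contain $\Omega(\text{cyclomatic number})$ cycles while keeping every push cost-non-increasing. Secondary but genuinely fiddly are the virtual-node parity bookkeeping in both phases (in particular arranging that every virtual edge is eventually zeroed rather than surviving to the next granularity), and the routine check that all flow updates, connectivity queries, cycle decompositions and aggregations run in $\tO(1)$ \aggregate rounds given $\oEuler$ and the ancestor/subtree-sum primitive.
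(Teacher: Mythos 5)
Your proposal diverges from the paper's proof at the crucial point, and the gap you yourself flag is real and is not patchable along the lines you sketch. In Phase~2 you want to kill cycles of $\supp(f)$ by repeatedly Eulerianizing $\supp(f)$, calling $\oEuler$, decomposing the resulting orientation into arc-disjoint directed cycles, and pushing along each to a breakpoint. Each cycle in the decomposition kills at least one support edge, so the number of edges removed per round equals the number of cycles in the decomposition. But the number of arc-disjoint simple cycles covering an Eulerian orientation need not be $\Omega(\text{cyclomatic number})$: for example, $K_{2k+1}$ decomposes into $k$ Hamiltonian cycles, so an orientation may decompose into only $O(\sqrt{m})$ simple cycles while the cyclomatic number is $\Theta(m)$. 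Then a round removes $O(\sqrt{m})$ edges from a support with $\Theta(m)$ excess edges, giving $\Theta(\sqrt{m})$ rounds rather than $\tO(1)$. Worse, $\oEuler$ only returns \emph{some} orientation; you do not control which one, so you cannot steer toward a decomposition with many small cycles. Your proposed fix, ``augment $\supp(f)$ by doubling edges and choosing the virtual matching so that the Euler decomposition has $\Omega(\text{cyclomatic})$ cycles,'' is exactly the point that would require a new idea, and no such idea is supplied.

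The paper's proof avoids this obstacle entirely by never attempting to turn $\supp(f)$ into a forest. Instead, it runs in phases indexed by bit level $i$, maintaining a flow $f_i$ whose values are multiples of $2^i$ and whose demand $b_i$ is nonnegative off the source. In phase~$i$ it first finds an out-arborescence $A_i$ rooted at $s$ inside $\supp(f_i)$ spanning all positive-demand vertices (an Edmonds-style contraction argument, $O(\log n)$ rounds), and then selects a sub-forest $B_i\subseteq A_i$ by a parity rule on subtree counts of the ``odd'' vertices $O_i=\{v:b_i(v)\equiv 2^i\pmod{2^{i+1}}\}$. Peeling off the flow $\tilde f_i$ of value $2^i$ along $B_i$ leaves $f_i-\tilde f_i$ whose demands are all multiples of $2^{i+1}$; at that point the set of arcs with value $\equiv 2^i\pmod{2^{i+1}}$ is Eulerian, a single call to $\oEuler$ picks an orientation, and pushing $\pm2^i$ along it (sign chosen to not increase cost) makes $f_{i+1}$ a flow with all values multiples of $2^{i+1}$ and still-nonnegative demands. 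After $\tO(1)$ phases the flow is exhausted. The output tree is then assembled \emph{from the forests $B_i$}, processed in decreasing $i$ with distance labels $\tilde d(\cdot)$; it is never the support of any single rounded flow. In short, the roles of the Eulerian calls are different: in your plan they drive the elimination of support cycles, whereas in the paper's plan they only double the flow granularity, and the tree structure comes from the arborescences $B_i$. Your Phase~1 (classic Cohen bit-scaling from dyadic to integral) is fine but also unnecessary, since the paper just rescales $f$ and $b$ by a power of two up front.
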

The main ideas to prove this lemma were already discussed in the introduction.
The details can be found in \cref{sec:rounding}.

\subsection{Main Theorems}
\label{sec:main_theorems}

In this part, we state the main theorems of this paper. 

First of all, a simple induction proof on top of \cref{lem:cycleone} leads to the following result.

\begin{restatable}{lemma}{cyclesix}
\label{lem:cyclesix}
Assume oracle $\oRound$ is given. Then, a distance structure for every scale $D_i$ can be efficiently computed.
\end{restatable}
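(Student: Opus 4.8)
The plan is to prove \cref{lem:cyclesix} by a straightforward induction on the distance-scale index $i$, using \cref{lem:cycleone} as the inductive step. The base case is vacuous (or trivial): for the smallest distance scale $D_1 = \beta$, there are no strictly smaller distance scales, so the hypothesis of \cref{lem:cycleone} asking for a distance structure for every scale $D_j < D_1$ is empty, and \cref{lem:cycleone} directly produces a distance structure for $D_1$ given only access to $\oRound$. (Alternatively, if one is worried about degenerate cases, a distance structure for such a tiny scale can be computed directly, since covering radius $D_1/\valSNC < 1$ means every ball $B(v, D_1/\valSNC)$ is just $\{v\}$, so the trivial clustering into singletons works and the tree/potential requirements are immediate.)

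For the inductive step, suppose $i \geq 1$ and we have already efficiently computed distance structures for all scales $D_1, D_2, \ldots, D_{i-1}$, i.e., for every scale $D_j < D_i$. Then the hypothesis of \cref{lem:cycleone} is satisfied, and since $\oRound$ is given by assumption, \cref{lem:cycleone} efficiently computes a distance structure for scale $D_i$. Concatenating these computations for $i = 1, 2, \ldots, i_{\max}$ (recall from \cref{def:distance_scales} that there are only $i_{\max} = O(\log n)$ distance scales, since weights are polynomially bounded and $D_i = \beta^i$ grows geometrically with $\beta = \tO(1)$) yields distance structures for all scales. Since each invocation of \cref{lem:cycleone} runs in $\tilde{O}(1)$ \aggregate rounds, and we invoke it $O(\log n) = \tilde{O}(1)$ times sequentially, the total round complexity is $\tilde{O}(1)$, which is exactly what ``efficiently'' means here. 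The distance structures from earlier iterations are retained in the distributed storage and passed along to the next iteration.

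The only subtlety — and the one point I would be careful about — is making sure the number of iterations really is $\tilde O(1)$ rather than something larger, and that the outputs of one stage are in the correct form to serve as the ``distance structure for every scale $D_j < D_i$'' input to the next. Both are handled by \cref{def:distance_scales}: distance scales are exactly $\{D_i : D_i \leq n^2 \max_e \ell(e)\}$ with $D_i = \beta^i$, $\beta = 8\valSNC = \tO(1)$, so $i_{\max} = O(\log_\beta(n^2 \cdot \poly(n))) = O(\log n)$ since $\log \beta = \Theta(\log \log n) = \Omega(1)$; and \cref{lem:cycleone} by definition outputs a distance structure for scale $D_i$ in the sense of \cref{def:distance_structure}, which is precisely the object the next stage needs as part of its hypothesis. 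Hence no obstacle of substance arises: this lemma is genuinely a clean induction packaging of \cref{lem:cycleone}, and the ``hard part'' lives entirely in the proofs of \cref{lem:cycletwo}--\cref{lem:cyclefive} that combine to give \cref{lem:cycleone}, not here.
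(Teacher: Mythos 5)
Your proof is correct and follows essentially the same route as the paper's own argument: induction on the scale index $i$ with \cref{lem:cycleone} as the inductive step, and a final appeal to the fact that there are only $O(\log n)$ distance scales to conclude the $\tilde{O}(1)$ total round bound. The one cosmetic difference is that the paper folds the round-complexity bookkeeping into the inductive invariant (``$i \cdot \tilde{O}(1)$ rounds after $i$ stages'') rather than tallying it at the end, but the substance is identical.
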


Note that for $i = 1$, \cref{lem:cycleone} states that given access to $\oRound$, one can efficiently compute a distance structure for scale $D_1$.
The complete induction proof can be found in \cref{sec:missingProofs}.

Given access to a distance structure for every scale $D_i$, the theorem below follows directly from our $\tilde{O}(1)$-competitive $\ell_1$-oblivious routing scheme described in \cref{sec:l1-oblivious-routing}.

\begin{restatable}[Distance Structures give $\ell_1$-Oblivious Routing]{theorem}{cycleseven}
\label{theorem:cycleseven}
Assume a distance structure for every scale $D_i$ is given. Then, there exists a $\tilde{O}(1)$-competitive $\ell_1$-oblivious routing $R$ for $G$ for which $R$ and $R^T$ can be efficiently evaluated.
\end{restatable}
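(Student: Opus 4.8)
The plan is to build the $\ell_1$-oblivious routing $R$ for $G$ by stitching together, across all distance scales, the \emph{local} routings obtained from each distance structure, and then to verify competitiveness and efficient evaluability. Concretely, for each scale $D_i$ the distance structure provides a sparse neighborhood cover with covering radius $D_i/\valSNC$ together with a low-diameter tree $T_C$ for each cluster $C$ and a potential with respect to $V\setminus C$. From this data I would define, for scale $D_i$, a ``one-scale'' routing operator $R^{(i)}$ that routes any demand to the appropriate cluster centers: a node's demand is routed to the center $v_C$ of a cluster $C$ in the cover that fully contains the ball of radius $D_i/\valSNC$ around it, using the tree $T_C$. Since each clustering in the cover is a partition, the within-clustering routing is just a forest-routing along the $T_C$'s; since there are only $O(\log n)$ clusterings, a node picks (obliviously, i.e.\ by a fixed rule depending only on the cover, not on the demand) one cluster containing its ball. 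Summing these over all $O(\log n)$ scales and $O(\log n)$ clusterings per scale gives the global $R$. This is exactly the construction of \cref{sec:l1-oblivious-routing} referenced by the statement, so the bulk of the work is already encapsulated there; what remains here is to observe that the distance structures supply \emph{precisely} the input that construction needs.

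The key steps, in order, would be: (1) Recall/quote the construction of \cref{sec:l1-oblivious-routing}: given sparse neighborhood covers at all scales $\beta,\beta^2,\dots,\poly(n)$ with the low-diameter cluster trees and the boundary potentials, it produces a matrix $R$ that is $\tilde O(1)$-competitive. The potentials on $V\setminus C$ are what certify that a demand ``escaping'' a cluster at scale $D_i$ must have optimal cost $\Omega(D_i/\valSNC)$, which is what makes the telescoping over scales lose only polylog factors rather than polynomially. (2) Check that a distance structure for scale $D_i$ (\cref{def:distance_structure}) contains exactly these three ingredients at scale $D_i$, so that having one for \emph{every} scale $D_i$ supplies the full input. (3) Invoke the competitiveness analysis: a demand $b$ with $\norm{b}_\opt$ finite is decomposed scale by scale; at each scale the portion of demand that is ``local'' (endpoints within a common cluster) is routed at cost $\tilde O(D_i)$ per unit while the optimum also pays $\Omega(D_i/\valSNC)=\tilde\Omega(D_i)$ for it by the potential lower bound, and the $O(\log n)$ scales $\times$ $O(\log n)$ clusterings contribute the polylog. (4) Argue efficient evaluation: each of $Rb$ and $R^Tf$ reduces to, per scale and per clustering, a consensus/aggregation of demand values into cluster centers along $T_C$ and the reverse spreading of flow from centers, i.e.\ ancestor/subtree sums on the trees $T_C$ — which are node-disjoint within a clustering — handled by the ancestor-and-subtree-sum primitive (\Cref{lemma:ancestor-subtree-sum}) in $\tilde O(1)$ \aggregate rounds; summing over $O(\log^2 n)$ (scale, clustering) pairs keeps it at $\tilde O(1)$ rounds. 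Linearity of $R$ (hence the meaningfulness of $R^T$) is immediate because every step — picking clusters by a fixed rule, routing along fixed trees — is a fixed linear map independent of $b$.

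The main obstacle I expect is \emph{not} the competitiveness bound per se (that is essentially the content of \cref{sec:l1-oblivious-routing}, cited in the statement), but rather the bookkeeping to make the ``fixed'' oblivious choice of which cluster each node routes into truly oblivious and consistent across the two evaluation directions $Rb$ and $R^Tf$, and to ensure the trees $T_C$ used for routing at scale $D_i$ indeed have the claimed diameter $O(D_i)$ so that per-unit routing cost is $\tilde O(D_i)$ — this is guaranteed by item~1 of \cref{def:distance_structure}. A secondary subtlety is that the potentials in the distance structure are taken with respect to $V\setminus C$ rather than with respect to the graph $G_{S,D}$ that appears in the oracle definitions; one must check (as flagged in the \texttt{alert} comments of \cref{def:distance_structure}) that the intended reading — an approximate distance-to-boundary satisfying the smoothness constraints, measured in the $D_i$-truncated metric — is the one the competitiveness proof consumes. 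Once these are pinned down, the theorem follows by directly feeding the per-scale distance structures into the routing construction and reading off its guarantees.
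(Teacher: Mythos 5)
Your top-level plan — that the theorem follows by plugging the distance structures into the construction of \cref{sec:l1-oblivious-routing} — is exactly the paper's proof: the paper's argument for \cref{theorem:cycleseven} is a one-line invocation of the second part of \cref{thm:oblivious_routing} (the statement tailored to routing on $G$ itself when $\beta^{i-1}/\valSNC$ exceeds $\diam(G)$, which holds at the largest scale $D_{i_{\max}}$). So on that meta-level you have the same proof.

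However, your informal recap of \emph{how} the routing works contains a substantive error that is worth flagging. You describe a scheme where ``a node picks (obliviously $\dots$) one cluster containing its ball'' at each scale and routes its entire demand to that cluster's center, and you bound competitiveness by saying ``the optimum also pays $\Omega(D_i/\valSNC)$ for it.'' This is the naive strategy that \cref{sec:routing-overview} explicitly identifies as \emph{failing} in the worst case: if $u,v$ are at distance $1$ but happen to land in different clusters at scale $D_i$, the single-cluster-pick routing pays $\Theta(D_i)$ while the optimum pays $1$, and no per-scale multiplicative bound can save you — you would need $\tilde O(D_i)$ independent repetitions of the cover to average this out. The actual construction avoids this entirely by routing \emph{fractionally}, weighting each node's flow into cluster $C$ by $p_C(v)$, a function of the distance-to-boundary potential $\tilde d_C(v)$, and then carrying a product flow $f_{C,C'}(v) = \frac{p_C(v)}{w_j(v)}\cdot\frac{p_{C'}(v)}{w_{j+1}(v)}$ from level-$j$ centers to level-$(j+1)$ centers while ``remembering the origin $v$.'' The competitiveness proof then hinges on the Lipschitz bound $|f_{C,C'}(u)-f_{C,C'}(v)| \le O(\gamma\valSNC\, d_G(u,v)/D_j)$ (\cref{lem:flow-uv-diff}), which makes the per-scale routing cost \emph{additively} $\tilde O(d_G(u,v))$ regardless of $D_j$, and summing over $O(\log n)$ scales yields the polylog ratio. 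The potentials with respect to $V\setminus C$ are consumed precisely to compute $p_C(v)$ and establish the lower bound $w_j(v) \ge 0.25/\valSNC$ (\cref{lem:w-lower-bound}), not to lower-bound the optimum's cost as you suggest. Your efficient-evaluation sketch (per-scale, per-clustering ancestor/subtree sums on the $T_C$'s) is in the right spirit, but it too would need adjusting to push the $\gamma$-many fractional contributions per node rather than a single choice. None of this invalidates the one-line proof you ultimately give, but had you needed to actually reprove \cref{thm:oblivious_routing}, the argument as you sketched it would not go through.
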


As discussed in \Cref{sec:intro}, $\ell_1$-oblivious routing and rounding is sufficient to solve the $(1+\eps)$-SSSP tree and the $(1+\eps)$-transshipment problems~\cite{She17, BKKL17,Li20,goranci2022universally}.

\begin{restatable}[$\ell_1$-Oblivious Routing gives SSSP and transshipment]{theorem}{cycleeight}
\label{theorem:cycleeight}
Assume oracle $\oRound_{\eps/2}$ is given for some $\eps \in (0,1]$ and that there exists an efficient algorithm to evaluate $R$ and $R^T$ for some $\tilde{O}(1)$-competitive $\ell_1$-oblivious routing $R$ for $G$. Then, the $(1+\eps)$-transshipment problem and the $(1+\eps)$-SSSP-tree problem in $G$ can be solved in $\tilde{O}(1/\eps^2)$ \aggregate rounds. 
\end{restatable}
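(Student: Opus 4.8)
I would prove this by assembling, inside the \aggregate model, the standard chain of reductions ``$\tilde{O}(1)$-competitive $\ell_1$-oblivious routing $\to$ $(1+\eps)$-transshipment $\to$ $(1+\eps)$-SSSP tree'' due to \cite{She17,BKKL17,Li20,goranci2022universally,becker2019low}. The only genuinely new content is bookkeeping: each elementary step of that chain must be shown to be either a black-box use of one of the assumed oracles (a call to $\oRound_{\eps/2}$, or an evaluation of $R$ or $R^\top$) or a task implementable in $\tilde O(1)$ \aggregate rounds (sums, $\ell_1$-norms, inner products over node/edge vectors, and ancestor/subtree sums on a tree), and the iteration counts must be shown to multiply out to $\tilde O(1/\eps^2)$.

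\textbf{Transshipment.} Given the input demand $b \in \R^V$ (a stored node vector), I would run the multiplicative-weights / gradient-descent boosting procedure of Sherman \cite{She17b} in the form of \cite{BKKL17,zuzic2021simple}. Seeded with the $\tilde O(1)$-competitive routing $R$, it produces after $T = \tilde O(1/\eps^2)$ iterations a flow $f$ satisfying $b$ with $\ell(f) \le (1+\eps)\norm{b}_\opt$ together with a matching potential, i.e.\ a $(1+\eps)$-approximate primal--dual pair in the sense of \cref{sec:preliminaries}. Each iteration uses a constant number of matrix--vector products $Rb'$ and $R^\top g$ for vectors $b', g$ built from the current iterate (one oracle call each by assumption), plus $\tilde O(1)$ further \aggregate rounds to form gradients, $\ell_1$-norms and inner products, each a single consensus/aggregation over $\tilde O(1)$-bit values. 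So transshipment costs $T \cdot \tilde O(1) = \tilde O(1/\eps^2)$ rounds; if an exactly $1$-Lipschitz potential is desired one additionally replaces it by the largest dominated potential via \cref{fact:potential_min_max}, which is $\tilde O(1)$ rounds.

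\textbf{SSSP tree.} For \setssp I would first attach one virtual supersource connected to $S$ by zero-length edges (allowed by the model, and compatible with the input specification of $\oRound$), reducing to the single-source case. I would then boost once, with parameter roughly $\eps/4$, for the demand $b$ with $b(s) = -(n-1)$ and $b(v) = 1$ otherwise, obtaining a flow $f$ with $\ell(f) \le (1+\eps/4)\sum_v \dist_G(s,v)$ and a potential $\phi$ certifying $\dist_G(s,v) \le \phi(v) \le (1+\eps/4)\dist_G(s,v)$. Then I would apply the extraction framework of \cite{becker2019low}: over $O(\log n)$ phases, each phase calls $\oRound_{\eps/2}$ on the current (residual) single-source instance to get a tree $T_j$ with $\sum_v b_j(v)\dist_{T_j}(s,v) \le (1+\eps/2)\ell(f_j)$; using the ancestor/subtree-sum primitive one reads off $\dist_{T_j}(s,v)$ for all $v$ in $\tilde O(1)$ rounds, commits to the output forest every $v$ whose tree-path length is certified $\le (1+\eps)\dist_G(s,v)$ via $\phi$, and recurses on the remaining vertices with the demand and a corresponding flow restricted to them. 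A Markov-type averaging argument (the total over-length is $\le (\eps/2)\sum_v \dist_G(s,v)$ while each uncommitted vertex contributes more than $\eps\,\dist_G(s,v)$) shows the remaining distance mass drops by a constant factor per phase; since edge lengths are integral, after $O(\log n)$ phases no vertex is left. The total is $\tilde O(1/\eps^2)$ for boosting plus $O(\log n) \cdot (\text{one } \oRound_{\eps/2} \text{ call} + \tilde O(1)) = \tilde O(1/\eps^2)$ rounds.

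\textbf{Main obstacle.} The boosting part is routine once one notes that gradient descent touches $R$ only through matrix--vector products and otherwise needs only sums. The delicate part is the SSSP-tree extraction: one must verify that restricting the residual flow and demand to the uncommitted part of the graph, grafting partial root-paths into the output forest, and certifying a per-vertex $(1+\eps)$ guarantee from the only average-good tree (via the potential $\phi$) are all legal $\tilde O(1)$-round \aggregate operations, and one must allocate the $\eps$-budget so that composing a $(1+\eps/2)$-accurate flow with an $\oRound_{\eps/2}$ rounding yields a true per-vertex $(1+\eps)$ bound. I would import the combinatorics of \cite{becker2019low} wholesale and only re-check its model compatibility, deferring the details to \cref{sec:missingProofs}.
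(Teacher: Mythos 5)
Your transshipment argument matches the paper: Theorem~\ref{theorem:cycleeight} invokes \cref{lem:transshipment_primal_dual}, which is exactly the Sherman/MWU boosting loop you describe, with cost $\tilde O(1/\eps^2)$ dominated by the $R,R^\top$ evaluations, and $\oRound$ is not needed for this half.

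The SSSP-tree half has a genuine gap. You propose to boost \emph{once}, obtaining a flow $f$ and potential $\phi$, and then run $O(\log n)$ rounding phases against ``a corresponding flow restricted to'' the uncommitted vertices. But there is no well-defined, cheaply computable ``restriction'' of a flow $f$ satisfying the full single-source demand $b$ to a flow $f_j$ satisfying the residual demand $b_j$ (all mass on the still-unhappy vertices $V_j$) whose cost remains competitive with $\norm{b_j}_\opt$. Deleting the committed sinks breaks conservation; recovering a valid sub-flow in general requires a path decomposition of $f$, which is at least as hard as the rounding step you are already charging to $\oRound_{\eps/2}$. What the residual phase actually needs is a \emph{fresh} $(1+O(\eps))$-approximate flow for $b_j$, and the paper gets it the obvious way: the proof of \cref{theorem:boosting_rounding} re-invokes the full boosting routine (\cref{lem:transshipment_primal_dual}) inside \emph{each} of the $O(\log n)$ phases, with the demand reset to the current unhappy set $V_j$, and only then calls $\oRound_{\eps/2}$ on the freshly boosted flow $f_j$. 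The extra $O(\log n)$ repetitions of the $\tilde O(1/\eps^2)$ boost are absorbed into the $\tilde O(\cdot)$, so nothing is lost by re-boosting; the geometric decay of $\zeta_j=\sum_{v\in V_j}\dist_G(s,v)$ (your Markov argument) is then proved per-phase using the fresh pair $(f_j,\phi_j)$.

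Two smaller points. First, you have the potential inequality reversed: a potential with $\phi(s)=0$ satisfies $\phi(v)\le\dist_G(s,v)$, not $\phi(v)\ge\dist_G(s,v)$; the per-vertex certificate is $\dist_G(s,v)\le\dist_{T_j}(s,v)\le(1+\eps)\phi_j(v)\le(1+\eps)\dist_G(s,v)$, and the ``unhappy'' test is $\dist_{T_j}(s,v)>(1+\eps)\phi_j(v)$. Second, after the phases finish one must still assemble a \emph{single} output tree from the per-phase trees $T_1,\dots,T_k$; the paper does this by taking $\phi=\max_i\phi_i$ (valid by \cref{fact:potential_min_max}) and grafting parent pointers greedily by distance label (\cref{lemma:ancestor-subtree-sum}). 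Your write-up gestures at ``grafting partial root-paths'' but should be explicit that this final merge is itself a separate $\tilde O(1)$-round step, distinct from the per-phase commits.
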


Combining \cref{lem:cyclesix}, \cref{theorem:cycleseven} and \cref{theorem:cycleeight} results in the following theorem.

\begin{restatable}{theorem}{cyclenine}
\label{theorem:cyclenine}
Assume oracle $\oRound_{\eps/2}$ is given for some $\eps \in (0,1]$. The $(1+\eps)$-transshipment problem and the $(1+\eps)$-SSSP-tree problem in $G$ can be solved in $\tilde{O}(1/\eps^2)$ \aggregate rounds.
\end{restatable}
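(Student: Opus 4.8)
The plan is to prove \cref{theorem:cyclenine} simply by chaining together the three results that immediately precede it. The statement to prove is: assuming oracle $\oRound_{\eps/2}$ is given for some $\eps \in (0,1]$, the $(1+\eps)$-transshipment and $(1+\eps)$-SSSP-tree problems in $G$ can be solved in $\tilde{O}(1/\eps^2)$ \aggregate rounds. Since $\oRound_{\eps/2}$ in particular gives access to $\oRound = \oRound_{\frac{1}{20\log^3(n)\valSNC}}$ (note $\eps/2 \ge \frac{1}{20\log^3 n \valSNC}$ exactly when $\eps$ is not too tiny --- actually one should double-check this: $\oRound_\eps$ outputs a tree with the stronger guarantee $\eps' = \min(\eps, \frac{1}{20\log^3 n \valSNC})$, so $\oRound_{\eps/2}$ always provides at least the guarantee of $\oRound$ regardless of how $\eps$ compares; so access to $\oRound_{\eps/2}$ subsumes access to $\oRound$), we may invoke the earlier lemmas.

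The three steps, in order, are as follows. First, apply \cref{lem:cyclesix}: given access to $\oRound$ (which, as just noted, we have from $\oRound_{\eps/2}$), a distance structure for every scale $D_i$ can be computed efficiently, i.e., in $\tilde{O}(1)$ \aggregate rounds per scale and $\tilde{O}(1)$ scales total, so $\tilde{O}(1)$ rounds overall. Second, feed these distance structures into \cref{theorem:cycleseven}: having a distance structure for every scale $D_i$ yields a $\tilde{O}(1)$-competitive $\ell_1$-oblivious routing $R$ for $G$ such that both $R$ and $R^T$ can be evaluated efficiently (in $\tilde{O}(1)$ \aggregate rounds). Third, apply \cref{theorem:cycleeight} with this $R$ and with the given oracle $\oRound_{\eps/2}$: this directly gives that $(1+\eps)$-transshipment and $(1+\eps)$-SSSP-tree in $G$ are solvable in $\tilde{O}(1/\eps^2)$ \aggregate rounds. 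Composing the round complexities, steps one and two contribute $\tilde{O}(1)$ each and step three contributes $\tilde{O}(1/\eps^2)$, for a total of $\tilde{O}(1/\eps^2)$, as claimed.

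There is essentially no mathematical obstacle here, since all the heavy lifting is in the cited lemmas; the only things to be careful about are bookkeeping issues. Specifically: (i) verifying that access to $\oRound_{\eps/2}$ legitimately supplies the plain oracle $\oRound$ needed by \cref{lem:cyclesix} --- this follows because the precision parameter in $\oRound_\eps$ is capped at $\frac{1}{20\log^3(n)\valSNC}$, so any $\oRound_\eps$ with $\eps \ge 1$ coincides with $\oRound$, and for the purpose of \cref{lem:cyclesix} we only need the $\oRound$ guarantee, which $\oRound_{\eps/2}$ meets or exceeds; (ii) checking that the number of distance scales is $\tilde{O}(1)$ (indeed $i_{max} = O(\log n)$ by \cref{def:distance_scales}, since $D_{i_{max}} \approx \diam(G) \le n^{O(1)} \max_e \ell(e)$ and $\beta = \tilde O(1)$), so that iterating \cref{lem:cycleone} over all scales still costs only $\tilde O(1)$ rounds; and (iii) confirming that the $\tilde{O}(1)$ factors from steps one and two are additively (not multiplicatively, in a bad way) absorbed into the final bound, which they are since $\tilde{O}(1) + \tilde{O}(1) + \tilde{O}(1/\eps^2) = \tilde{O}(1/\eps^2)$ for $\eps \in (0,1]$.

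\begin{proof}[Proof of \cref{theorem:cyclenine}]
Since $\eps \in (0,1]$, the oracle $\oRound_{\eps/2}$ is provided. Its output tree satisfies the cost bound with parameter $\eps' = \min(\eps/2, \frac{1}{20\log^3(n)\valSNC}) = \frac{1}{20\log^3(n)\valSNC}$ whenever $\eps/2 \ge \frac{1}{20\log^3(n)\valSNC}$, and in general $\oRound_{\eps/2}$ always provides at least the guarantee of $\oRound = \oRound_{\frac{1}{20\log^3(n)\valSNC}}$; hence access to $\oRound_{\eps/2}$ subsumes access to $\oRound$.

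By \cref{lem:cyclesix}, given access to $\oRound$ we can efficiently compute a distance structure for every scale $D_i$; since there are only $O(\log n)$ distance scales (\cref{def:distance_scales}), this takes $\tilde{O}(1)$ \aggregate rounds in total. By \cref{theorem:cycleseven}, from distance structures for all scales we obtain a $\tilde{O}(1)$-competitive $\ell_1$-oblivious routing $R$ for $G$ such that $R$ and $R^T$ can be evaluated in $\tilde{O}(1)$ \aggregate rounds. Finally, by \cref{theorem:cycleeight}, given the oracle $\oRound_{\eps/2}$ and the efficiently-evaluable $\tilde{O}(1)$-competitive routing $R$, both the $(1+\eps)$-transshipment problem and the $(1+\eps)$-SSSP-tree problem in $G$ can be solved in $\tilde{O}(1/\eps^2)$ \aggregate rounds. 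Adding the round complexities of the three stages gives $\tilde{O}(1) + \tilde{O}(1) + \tilde{O}(1/\eps^2) = \tilde{O}(1/\eps^2)$, as claimed.
\end{proof}
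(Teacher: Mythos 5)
Your proof is correct and follows essentially the same route as the paper's: chain \cref{lem:cyclesix}, then \cref{theorem:cycleseven}, then \cref{theorem:cycleeight}, using the observation that $\oRound_{\eps/2}$ subsumes $\oRound$ because the effective precision parameter is capped at $\frac{1}{20\log^3(n)\valSNC}$. The additional bookkeeping you spell out (number of distance scales, round-complexity accounting) is implicit in the paper's terser argument but does not constitute a different approach.
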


The theorem above together with the fact that $\oRound$ can be efficiently implemented given $\oEuler$ implies the following result.

\begin{restatable}{theorem}{cycleten}
\label{theorem:cycleten}
Assume oracle $\oEuler$ is given and let $\eps \in (0,1]$. The $(1+\eps)$-transshipment problem and the $(1+\eps)$-SSSP-tree problem in $G$ can be solved in $\tilde{O}(1/\eps^2)$ \aggregate rounds.
\end{restatable}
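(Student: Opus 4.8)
The plan is to chain together the results already established in the excerpt. The statement of \Cref{theorem:cycleten} is essentially a corollary: we want to show that access to $\oEuler$ suffices to solve $(1+\eps)$-transshipment and $(1+\eps)$-SSSP-tree in $\tilde{O}(1/\eps^2)$ \aggregate rounds. First I would invoke \cref{lem:cycleeuler}, which states that given $\oEuler$ one can efficiently (i.e., in $\tilde{O}(1)$ \aggregate rounds) implement the rounding oracle $\oRound$. Since $\oRound$ is defined as $\oRound_1 = \oRound_{1/(20\log^3(n)\valSNC)}$, and a call to $\oRound_{\eps'}$ with a smaller precision parameter is at least as strong as one with a larger parameter, we also get $\oRound_{\eps/2}$ for any $\eps \in (0,1]$ (note $\eps/2 \ge \eps' = \min(\eps/2, \tfrac{1}{20\log^3 n \valSNC})$ means the oracle $\oRound_{\eps/2}$ asks only for a $(1+\min(\eps/2,\dots))$-approximation, which is exactly what $\oRound$ delivers). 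So each call to $\oRound_{\eps/2}$ can be simulated with $\tilde O(1)$ \aggregate rounds plus one call to $\oEuler$.

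Next I would apply \cref{theorem:cyclenine}, which asserts that given $\oRound_{\eps/2}$, both the $(1+\eps)$-transshipment problem and the $(1+\eps)$-SSSP-tree problem can be solved in $\tilde{O}(1/\eps^2)$ \aggregate rounds. Composing the two reductions: the algorithm of \cref{theorem:cyclenine} makes $\tilde{O}(1/\eps^2)$ \aggregate rounds, some of which are oracle calls to $\oRound_{\eps/2}$; we replace each such oracle call by the $\tilde{O}(1)$-round \aggregate procedure from \cref{lem:cycleeuler} (which itself makes one $\oEuler$ call per invocation). The total round count is $\tilde{O}(1/\eps^2) \cdot \tilde{O}(1) = \tilde{O}(1/\eps^2)$, and in this composed algorithm the only oracle used is $\oEuler$. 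This gives the claimed bound.

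One subtlety worth checking in the writeup is that the oracle-composition is legitimate in the \aggregate model: an algorithm with access to an oracle may, in each round, either do a regular \aggregate round or call a single oracle, so substituting a $\tilde{O}(1)$-round subroutine (which in turn uses the $\oEuler$ oracle at most once per round) in place of one $\oRound$ oracle call only inflates the round count by a $\tilde{O}(1)$ factor and preserves the invariant that at most one oracle call happens per round. I expect this bookkeeping — rather than any mathematical difficulty — to be the only real content of the proof; everything substantive has been pushed into \cref{lem:cycleeuler} and \cref{theorem:cyclenine}. A second minor point is matching precision parameters: one should explicitly note that $\oRound$ (with its fixed tiny precision) is a valid implementation of $\oRound_{\eps/2}$ for every $\eps\in(0,1]$, since the output guarantee of $\oRound_{\eps'}$ only gets stronger as $\eps'$ shrinks, and $\eps' = \min(\eps/2, \tfrac{1}{20\log^3 n\,\valSNC}) \le \tfrac{1}{20\log^3 n\,\valSNC}$, which is exactly the precision $\oRound$ provides.

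\begin{proof}[Proof of \Cref{theorem:cycleten}]
By \cref{lem:cycleeuler}, given access to $\oEuler$ the oracle $\oRound$ can be computed in $\tilde{O}(1)$ \aggregate rounds (with one call to $\oEuler$). Recall $\oRound = \oRound_{1/(20\log^3(n)\valSNC)}$, whose output guarantee — a tree $T$ with $\sum_{v:b(v)\neq 0} b(v)\dist_T(s,v) \le (1+\eps')\ell_H(f)$ for $\eps' = \min(1,\tfrac{1}{20\log^3(n)\valSNC}) = \tfrac{1}{20\log^3(n)\valSNC}$ — is at least as strong as the guarantee required of $\oRound_{\eps/2}$, since for any $\eps \in (0,1]$ the precision demanded of $\oRound_{\eps/2}$ is $\min(\eps/2, \tfrac{1}{20\log^3(n)\valSNC}) \ge \tfrac{1}{20\log^3(n)\valSNC}$. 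Hence $\oRound$ is a valid implementation of $\oRound_{\eps/2}$, and every call to $\oRound_{\eps/2}$ can be simulated in $\tilde{O}(1)$ \aggregate rounds using one call to $\oEuler$.

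By \cref{theorem:cyclenine}, given $\oRound_{\eps/2}$ the $(1+\eps)$-transshipment problem and the $(1+\eps)$-SSSP-tree problem in $G$ can be solved in $\tilde{O}(1/\eps^2)$ \aggregate rounds. Substituting into this algorithm the $\tilde{O}(1)$-round implementation of $\oRound_{\eps/2}$ from the previous paragraph in place of each oracle call, we obtain an algorithm that uses only the oracle $\oEuler$ and runs in $\tilde{O}(1/\eps^2) \cdot \tilde{O}(1) = \tilde{O}(1/\eps^2)$ \aggregate rounds. This composition respects the model constraint of at most one oracle call per round, since the substituted subroutine itself makes at most one $\oEuler$ call per round. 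This proves the claim.
\end{proof}
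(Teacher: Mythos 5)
Your structure — chain \cref{lem:cycleeuler} with \cref{theorem:cyclenine} and verify that the oracle precision parameters line up — matches the paper's proof, which is simply ``Directly follows from \cref{theorem:cyclenine} together with \cref{lem:cycleeuler}.'' However, your careful treatment of the precision parameters contains a genuine error, and it is worth flagging because it is precisely the subtlety you set out to nail down.

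You claim that $\oRound = \oRound_1 = \oRound_{1/(20\log^3(n)\valSNC)}$ is a valid implementation of $\oRound_{\eps/2}$ for every $\eps \in (0,1]$, justifying this via the inequality $\min\bigl(\eps/2, \tfrac{1}{20\log^3(n)\valSNC}\bigr) \ge \tfrac{1}{20\log^3(n)\valSNC}$. That inequality is backwards: $\min(a,b) \le b$ always, with equality only when $a \ge b$. Concretely, when $\eps/2 < \tfrac{1}{20\log^3(n)\valSNC}$, the oracle $\oRound_{\eps/2}$ demands a $(1+\eps/2)$-approximate tree, which is \emph{strictly tighter} than the $\bigl(1+\tfrac{1}{20\log^3(n)\valSNC}\bigr)$-approximation that $\oRound$ guarantees. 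So $\oRound$ alone does not implement $\oRound_{\eps/2}$ for small $\eps$, and your monotonicity observation (``smaller precision parameter is at least as strong'') actually runs in the wrong direction for the deduction you make: having the \emph{weaker} oracle $\oRound_1$ does not give you the \emph{stronger} oracle $\oRound_{\eps/2}$.

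The fix is to appeal to what is actually proved in \cref{sec:rounding}: \cref{thm:rounding} implements the \emph{exact} rounding oracle $\oRound_0$ from $\tO(1)$ calls to $\oEuler$. Since $\oRound_0$ returns a tree with $\sum_v b(v)\dist_T(s,v) \le \ell_H(f)$ (no slack at all), it is trivially a valid implementation of $\oRound_{\eps'}$ for every $\eps' \ge 0$, and in particular of $\oRound_{\eps/2}$ for every $\eps \in (0,1]$. With that substitution the rest of your composition argument — replacing each $\oRound_{\eps/2}$ call in the algorithm of \cref{theorem:cyclenine} by the $\tO(1)$-round subroutine that in turn calls $\oEuler$, and observing the total round count is $\tO(1/\eps^2)\cdot\tO(1) = \tO(1/\eps^2)$ — is correct and is indeed the ``only real content,'' as you put it. Note that \cref{lem:cycleeuler} as stated only promises $\oRound$, so if one takes that lemma literally the precision gap you (almost) noticed is real; the resolution is that its underlying proof, \cref{thm:rounding}, delivers the strictly stronger $\oRound_0$.
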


The \euler problem can be solved with near-linear work and polylogarithmic depth \cite{atallah_vishkin1984euler_pram}. Together with the theorem above and the fact that each $\aggregate$ round can be simulated with near-linear work and polylogarithmic depth, we obtain our main parallel result.

\mainparallel*

Moreover, our \congest algorithms for the \euler problem developed in \cref{sec:euler} together with general simulation results for the \congest model developed in prior work (see \cref{thm:compilation_pa_congest}), we obtain our main result in the \congest model.

\maindistributed*

We finish this section by stating the conditional results on universally-optimal SSSP and transshipment algorithms one can obtain from this work:

\begin{theorem} \label{lem:universalOptreduction1}
Suppose there exists a deterministic algorithm for partwise aggregation that runs in $\shortcutQuality{G} \cdot n^{o(1)}$ \congest rounds, then there exist deterministic $(1+\eps)$-SSSP and $(1+\eps)$-transshipment algorithms with a round complexity of $\shortcutQuality{G}) \cdot n^{o(1)}$, which is universally-optimal up to a $n^{o(1)}$-factor. 
\end{theorem}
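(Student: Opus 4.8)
The plan is to turn the crank on the reduction chain already assembled in the paper, replacing the two ``oracle'' primitives that cannot be implemented deterministically and efficiently in \congest without extra assumptions—namely $\oEuler$ (and hence $\oRound$)—with implementations that run in $\shortcutQuality{G} \cdot n^{o(1)}$ rounds once a fast deterministic partwise-aggregation routine is available. First I would recall that \cref{theorem:cyclenine} shows that, given $\oRound_{\eps/2}$, both $(1+\eps)$-transshipment and $(1+\eps)$-SSSP are solvable in $\tilde{O}(1/\eps^2)$ \aggregate rounds, and that \cref{theorem:cycleten} further reduces $\oRound$ to $\oEuler$. So it suffices to (i) show that each \aggregate round can be simulated in $\shortcutQuality{G}\cdot n^{o(1)}$ deterministic \congest rounds under the hypothesized partwise-aggregation algorithm, and (ii) show that $\oEuler$ can be implemented deterministically in $\shortcutQuality{G}\cdot n^{o(1)}$ \congest rounds under the same hypothesis.

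For step (i), I would invoke the standard compilation machinery: a deterministic partwise-aggregation algorithm running in time $T$ yields a deterministic simulation of one \aggregate round in $\tilde{O}(T)$ \congest rounds (this is the content of the general compilation result referenced as \cref{thm:compilation_pa_congest}, instantiated with $T = \shortcutQuality{G}\cdot n^{o(1)}$; the $n^{o(1)}$ absorbs the $\polylog n$ overheads of the simulation, the handling of the $\tilde{O}(1)$ virtual nodes, and the $\tilde{O}(1/\eps^2) = \polylog n$ round count for constant/polylogarithmic $\eps$). Contracting edges, consensus steps, aggregation steps, ancestor/subtree sums, and the addition of $\tilde{O}(1)$ virtual nodes all reduce to a bounded number of partwise aggregations and broadcasts/convergecasts on Steiner trees, each of which costs $\shortcutQuality{G}\cdot n^{o(1)}$ rounds. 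For step (ii), I would point to \cref{sec:euler}: the \congest Eulerian-orientation algorithm there is built from low-congestion cycle covers of \cite{parter_yogev2019cycle_covers_minor_closed,parter_yogev2019cycle_decomp_near_linear}, and its running time is governed by the quality of the shortcut/cycle-cover primitives; re-examining that construction, the only place where a $(\sqrt n + \Dhop)$ or $\Dhop$ bound enters is through the partwise-aggregation/shortcut subroutine, so substituting a $\shortcutQuality{G}\cdot n^{o(1)}$-round deterministic partwise aggregation yields an $\oEuler$ implementation in $\shortcutQuality{G}\cdot n^{o(1)}$ rounds as well. Chaining these with \cref{theorem:cycleten} gives the claimed deterministic $(1+\eps)$-SSSP and $(1+\eps)$-transshipment algorithms.

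Finally, for the universal-optimality clause I would cite the universal lower bound of \cite{haeupler2021universally}, which establishes that $(1+\eps)$-SSSP and $(1+\eps)$-transshipment both require $\tilde{\Omega}(\shortcutQuality{G})$ rounds on every communication graph $G$; since our upper bound is $\shortcutQuality{G}\cdot n^{o(1)}$, the two match up to a subpolynomial factor, which is precisely ``universally-optimal up to a $n^{o(1)}$-factor.''

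I expect the main obstacle to be step (ii)—verifying that the \congest Eulerian-orientation algorithm of \cref{sec:euler} is genuinely ``shortcut-quality-parametrized'' and not secretly reliant on a $\sqrt n$-type bound somewhere (e.g., in the diameter of the low-congestion cycle-cover paths or in gathering cycle structure at a coordinator). If the cycle-cover subroutines only come with worst-case $\tilde{O}(\sqrt n + \Dhop)$ guarantees rather than shortcut-quality guarantees, one would need either a drop-in replacement for those subroutines phrased in terms of $\shortcutQuality{G}$, or an argument that the relevant cycle-cover computation itself reduces to partwise aggregation. This is the one place where ``turn the crank'' might require genuine bookkeeping rather than a citation; everything else (the boosting, the rounding-to-Euler reduction, the \aggregate-to-\congest compilation) is black-box from earlier in the paper and from prior work.
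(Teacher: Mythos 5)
Your proposal follows essentially the same route as the paper's: reduce to $\tilde{O}(1/\eps^2)$ \aggregate rounds plus $\oEuler$ calls via \cref{theorem:cycleten}, compile to \congest through the hypothesized deterministic partwise aggregation (\cref{thm:compilation_pa_congest,thm:oracleSimulation}), and implement $\oEuler$ deterministically with the same primitive. The concern you raise in your final paragraph is already resolved within the paper—\cref{thm:euler_dist} item~(2) gives an unconditional deterministic $n^{o(1)}$-round \congestPA implementation of $\oEuler$ (the cycle covers from \cref{lem:general_cycle_cover} with $\eps=1/\sqrt{\log n}$ are invoked inside clusters of $O(\log^2 n)$ diameter, so no global $\sqrt{n}$ or $\Dhop$ term enters except through partwise aggregation), which is exactly what your step~(ii) requires.
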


We get even stronger conditional results if more efficient \congest algorithms for computing cycle covers as defined in \cite{parter_yogev2019cycle_decomp_near_linear,parter_yogev2019cycle_covers_minor_closed} are given. 

\begin{theorem}\label{lem:universalOptreduction2}
Suppose there exists a deterministic algorithm for partwise aggregation that runs in $\tO(\shortcutQuality{G})$ \congest rounds and a $(\tO(1),\tO(1))$ cycle cover algorithm for $\tO(1)$-diameter graphs which runs in $\tO(1)$ \congest rounds. Then, there exist deterministic $(1+\eps)$-SSSP and $(1+\eps)$-transshipment algorithms with a round complexity of $\tO(\shortcutQuality{G})$, which is universally-optimal up to polylogarithmic factors. 
\end{theorem}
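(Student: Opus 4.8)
The plan is to assemble three ingredients: the oracle-based reduction of \cref{theorem:cycleten}, the standard compilation of \aggregate algorithms into \congest via partwise aggregation, and a \congest implementation of the \euler oracle $\oEuler$ built from the two assumed primitives. I would first invoke \cref{theorem:cycleten}: given access to $\oEuler$, the $(1+\eps)$-transshipment and $(1+\eps)$-SSSP-tree problems are solvable in $\tO(1/\eps^2)$ \aggregate rounds. Then I would apply the compilation theorem \cref{thm:compilation_pa_congest}, which simulates one \aggregate round --- the contraction step, the consensus and aggregation steps, and the $\tO(1)$ virtual nodes --- using $\tO(1)$ calls to a partwise-aggregation routine. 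Since by assumption partwise aggregation is deterministic and runs in $\tO(\shortcutQuality{G})$ rounds, and since adding $\tO(1)$ virtual nodes changes the shortcut quality by at most a $\tO(1)$ factor, each \aggregate round costs $\tO(\shortcutQuality{G})$ \congest rounds; hence the whole computation costs $\tO(\shortcutQuality{G}) \cdot \eps^{-2}$ rounds, which is $\tO(\shortcutQuality{G})$ for constant $\eps$.

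It then remains to implement $\oEuler$ deterministically in $\tO(\shortcutQuality{G})$ \congest rounds, and this is where the assumed cycle-cover primitive enters. Here I would revisit the Eulerian-orientation algorithm of \cref{sec:euler} (which underlies \cref{thm:euler-intro}) and observe that it is modular: it computes an Eulerian orientation of an Eulerian subgraph $H$ using only $\tO(1)$ partwise aggregations together with $\tO(1)$ computations of a $(\tO(1),\tO(1))$ cycle cover, from which a short edge-disjoint cycle decomposition is extracted (possible since the edges of an Eulerian graph decompose into cycles) and each cycle is then oriented consistently. To compute such a cycle cover on a general-diameter subgraph of the virtual graph, I would use the standard shortcut-based reduction: run $\tO(\shortcutQuality{G})$-round partwise aggregation to produce a decomposition of the edge set into $\tO(1)$ classes, each of which --- after the appropriate contractions --- lies inside $\tO(1)$-diameter regions; invoke the assumed $\tO(1)$-round $(\tO(1),\tO(1))$ cycle cover algorithm on each such region; and stitch the cycles back together, paying only $\tO(1)$ additional rounds for bookkeeping. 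Every step here is deterministic, so the resulting $\oEuler$ implementation is deterministic and runs in $\tO(\shortcutQuality{G})$ rounds, and plugging it into the previous paragraph yields the claimed deterministic $(1+\eps)$-SSSP and $(1+\eps)$-transshipment algorithms.

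For optimality, I would invoke the universal lower bound of \cite{haeupler2021universally}, which shows that every (even randomized) algorithm for SSSP or transshipment requires $\tilde{\Omega}(\shortcutQuality{G})$ rounds on the communication network $G$; hence the $\tO(\shortcutQuality{G})$ upper bound is universally optimal up to polylogarithmic factors. I expect the main obstacle to be the middle paragraph: carefully lifting the assumed low-diameter cycle-cover algorithm to a general-diameter (virtual) graph through the partwise-aggregation/shortcut machinery while keeping all randomness out, and verifying that the \euler reduction of \cref{sec:euler} really only uses cycle covers on the pieces of a decomposition whose aggregation cost is governed by $\shortcutQuality{G}$. The remaining steps are bookkeeping and invocations of results established earlier in the paper.
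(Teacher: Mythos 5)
Your plan follows the paper's argument exactly: apply \cref{theorem:cycleten}, compile each \aggregate round to \congest via \cref{thm:compilation_pa_congest} using the assumed $\tO(\shortcutQuality{G})$-round deterministic partwise-aggregation routine, implement $\oEuler$ by re-parameterizing the Eulerian-orientation algorithm of \cref{sec:euler} (replacing the minor-free cycle cover of \cref{lem:minor_closed_cycle_covers} with the hypothesized $(\tO(1),\tO(1))$ cycle-cover primitive for $\tO(1)$-diameter graphs), and invoke the universal lower bound of \cite{haeupler2021universally} for optimality. The paper indeed treats this theorem exactly that way, stating that \cref{thm:main_distributed} already reduces everything to an efficient PA routine plus an efficient $\oEuler$ implementation via cycle covers.

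One correction to the middle paragraph, since it mischaracterizes how the cycle cover is actually used and would not work if followed literally. A $(\tO(1),\tO(1))$ cycle cover is not edge-disjoint (each non-bridge edge may lie in up to $\tO(1)$ cycles), and the fact that Eulerian graphs decompose into cycles is not what makes the extraction work --- if one could compute such a decomposition directly there would be no need for cycle covers at all. What the algorithm in \cref{sec:euler} actually does, in each of $\tO(1)$ iterations on the residual Eulerian graph, is: contract degree-2 nodes, run the separated strong-diameter network decomposition of \cite{elkin2022decompositions} on a subdivided version of the residual graph to obtain node-disjoint clusters of radius $\tO(1)$ (discarding only a constant fraction of edges), compute the $(\tO(1),\tO(1))$ cycle cover inside each cluster, run a deterministic MIS on the conflict graph of cycles to select an edge-disjoint subfamily covering a constant fraction of the clustered non-bridge edges, and orient that subfamily consistently. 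So the ``decomposition of the edge set into $\tO(1)$ classes'' you describe does not exist; the low-diameter pieces come from a network decomposition (with possibly $\Omega(n)$ clusters), and the edge-disjointness comes from the MIS step, not from Eulerian structure. With that correction, the re-parameterization you propose goes through and gives a deterministic $\oEuler$ in $\tO(\shortcutQuality{G})$ \congest rounds, completing the proof.
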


While the polylogarithmically tight algorithmic results assumed in \Cref{lem:universalOptreduction2} seem out of reach of current techniques, our conditional results show that the problem-specific part towards universally optimal shortest path algorithms, even deterministic ones, are essentially fully understood through the techniques of this paper. 

\Cref{lem:universalOptreduction2} and \Cref{lem:universalOptreduction1} follow from \Cref{thm:main_distributed} in \cref{sec:missingProofs} which shows that to get the near-optimal round complexity of $\tO_{\eps}(\shortcutQuality{G})$ in \cref{thm:main_distributed}, it suffices to solve efficiently the Partwise-Aggregation problem (see \cref{sec:computational-models}) and implement efficiently the oracle $\oEuler$ which can be done by an efficient solution of the cycle cover problem (see \cref{sec:euler}).



\section{$\ell_1$-Oblivious Routing with $\tilde{O}(1)$ Competitiveness and Evaluation Time via Sparse Neighborhood Covers}\label{sec:l1-oblivious-routing}

In this section, we prove our main result, an efficient $\ell_1$-oblivious routing in the distributed setting. We first give an overview of the new ideas required to overcome the barriers present in all previously known approaches, and then present a formal construction and proofs in \Cref{sec:routing-formal}.

\subsection{An Overview of New Ideas}\label{sec:routing-overview}

The first key contribution of this paper is a new graph-based $\ell_1$-oblivious routing with drastically improved guarantees. In order to simplify the discussion, in this overview we will explain the ideas from a randomized (low-diameter decomposition) perspective, rather than a deterministic (sparse neighborhood covers) one.

We develop this polylog-competitive $\ell_1$-oblivious routing by using (near-optimal) polylog-quality low-diameter decompositions (LDDs). A $\tilde{O}(1)$-quality low-diameter decomposition is a randomized partitioning of the nodes of $V$ into so-called \emph{clusters} of a specified radius $r$ such that two nodes $u, v$ are in different clusters with probability at most $\tilde{O}(1) \frac{\dist(u, v)}{r}$. Notably, we use a ``deterministic version'' of LDDs, namely sparse neighborhood covers; we mostly consider randomized approaches in this introductory discussion due to its relative simplicity. A randomized version of an LDD-based approach was considered in \cite{goranci2022universally}, but achieved only a $n^{o(1)}$-competitive routing. Moreover, many inherent barriers prevent an LDD-based approach from achieving polylog-competitiveness and we develop several new ideas described below to overcome them.

For the purposes of this section, we pretend we have access to an oracle constructing polylog-quality LDDs (or sparse neighborhood covers). Note that this is also an important open problem which we solve concurrently using our locality-friendly reduction framework from \cref{sec:localiterative}.

\underline{Idea 1: Remembering where the flows came from.}

Oblivious routings generally work by taking a demand $b : V(G) \to R$ and judiciously intermixing it until, eventually, all demand is routed to some single node. This is a valid solution to the transshipment problem, albeit its competitiveness depends on the intermixing strategy. Note that sending a negative demand along a directed edge corresponds to sending a positive demand over the reversed edge.

All current approaches to oblivious routing had the property that once two demand (i.e., flows) were merged in a single node during the intermixing, from this point on they were always routed together. Such strategies make the routing and the analysis simple: each intermixing step only depends on the demand vector given to it at the start, hence can be analyzed independently of other steps. For example, \cite{Li20} does a $\ell_1$-embedding based approach using $O(\log n)$ steps of intermixing, each blowing up the optimum by a multiplicative $(1 + 1 / \log n)$, yielding a polylog-competitive routing (in each step the current optimum is added to the solution). Similarly, \cite{goranci2022universally} does a LDD-based approach with $(\log n)^{1/4}$ steps of intermixing, each losing a $2^{\sqrt{\log n}}$ factor, hence yielding a $2^{(\log n)^{3/4}} = n^{o(1)}$-competitive routing. Unfortunately, all approaches that use LDDs and lose a polylog factor in each step (this is optimal) require at least $O(\frac{\log n}{\log \log n})$ steps, yielding only an (unusable) polynomially-competitive routing. This is a inherent barrier preventing LDDs-based approaches from being polylog-competitive.

We overcome this barrier by remembering the origin of each flow. In other words, instead of simply remembering that, after an intermixing step, there is $b'_u$ flow at a node $u$, we remember that there is $b'_u(v)$ flow at $u$ that originated at node $v$. Therefore, each flow is treated differently based on its origin, making the flows splittable even after being aggregated at the same node. This idea, along with the next one, allows us to convert the multiplicative polylog-losses to additive ones, hence achieve a polylog-competitive routing.

\underline{Idea 2: Reduce the outflow of nodes close to the boundary.}
Consider the following natural intermixing step. Sample an LDD of (large) radius $r$, and send the entire demand of each node $v$ to the center of its cluster in the LDD. Let us examine how the demand $b := \mathbb{1}_u - \mathbb{1}_v$ supported on two nodes $u$ and $v$ at distance $\dist_G(u, v) = 1$ gets intermixed. If $u, v$ are in the same cluster, their demand gets merged and the new transshipment cost is $0$. On the other hand, if they end up in different clusters, the new transshipment cost grows by at most $+2r$, but this happens (by LDD properties) only with probability $\tilde{O}(1) \cdot \frac{1}{r}$. Therefore, in expectation the new transshipment cost grows by an (acceptable) polylog factor. However, in worst-case the cost grows by an (unacceptable) multiplicative $O(r)$, requiring the algorithm to average over $\tilde{O}(r)$ different LDDs until the blowup becomes acceptable. The key idea in the approach of \cite{goranci2022universally} is to gradually increase the radius by an $n^{o(1)}$ factor until a single cluster consumes the entire graph and the demand is consolidated in a single node. This gradual increase enables one to average only over $n^{o(1)}$ different LDD samples and achieve a worst-case guarantee, even when the radius is $\Omega(n)$.

We, on the other hand, choose an approach that does not include a large number of repetitions to achieve worst-case guarantees. First, we note that if two very-close nodes $u$ and $v$ are not in the same cluster, then they are extremely close to the boundaries of their respective clusters. Our key idea is that nodes that are close to the boundary only send a small portion of their flow to the center of the LDD cluster. This allows us to repeat the LDD sampling until every node is far-away from the boundary in at least one sample, hence we repeat the sampling only $O(\log n)$ times.

We now consider the intermixing strategy associated with a single LDD sample of radius $r$. Formally, let $d(u)$ the distance of $u$ to the boundary of $u$'s LDD cluster. The node $u$ will only send approximately $d(u) / r$ amount of flow to the center of the cluster (up to a polylog constant of proportionality which we ignore here). Therefore, since $\dist_G(u, v) = 1$, we know the difference in the flows sent by $u$ and $v$ to the some $r$-distant cluster center is at most $|d(u) - d(v)|/r \le 1/r$ (this is the amount we have to pay for), yielding a cost of at most $\tilde{O}(1)$ per LDD sample independent of the LDD radius. Of course, this only describes the routing in the first step. Later, as the radius of the LDD clusters increase, the flow that originated at $u$ (called $u$-associated flow) is already intermixed in various nodes, we generalize the strategy in the following way. Each node $w$ sends approximately $d(u) / r$-fraction of its incoming $u$-associated flow (from $w$) to the center of the LDD cluster \emph{containing the origin $u$} (note: $w$ is not considered here). Note that this idea crucially requires us to remember where the flow comes from. Again, it is easy to show that the cost per LDD is $\tilde{O}(1)$. After the LDD radius is exponentially increased for a logarithmic number of steps, the LDDs start consuming the entire graph and the entire demand is consolidated at a polylogarithmic cost, achieving a polylog-competitive routing (at least for $u, v$ at distance $1$, but also generally).

\underline{Algorithmic remark.} 
In addition to being extremely simple this new $\ell_1$-oblivious routing is also routing directly on the graph G and thus provides a linear $n \times m$ map $R$ which directly maps any node-demand to a $\tO(1)$-approximate transshipment flow in $G$. Moreover, the evaluations of this routing map $R$ can trivially be evaluated in near-linear time because it can be written as a product $R = A \times B$ of two sparse matrices $A$ and $B$. Here, $A$ maps each node to $\tO(1)$ paths and $B$ map each path to its edges, where each edge is used by at most $\tO(1)$ paths. These two sparse mappings can furthermore be directly read off in near-linear time if the $O(\log^2 n)$ LDDs and (approximate) distances of nodes to the LDD boundary are given.

\subsection{Formal Construction of the Routing}\label{sec:routing-formal}

\begin{theorem}
\label{thm:oblivious_routing}
Let $G$ be a weighted graph. Assume a distance-structure for every scale $D_j, j < i$ is given.
Then, for a given set $S \subseteq V(G)$, there exists a $\textup{poly}(\log n)$-competitive $\ell_1$-oblivious routing $R$ for $G_{S,D_i}$ such that we can compute matrix-vector products with $R$ and $R^T$ in $\textup{poly}(\log n)$ \aggregate rounds in $G$. \\
Moreover, if $\beta^{i-1}/\valSNC$ is greater than the diameter of $G$, then there exists a $\textup{poly}(\log n)$-competitive $\ell_1$-oblivious routing $R$ for $G$ such that we can compute matrix-vector products with $R$ and $R^T$ in $\textup{poly}(\log n)$ \aggregate rounds in $G$.
\end{theorem}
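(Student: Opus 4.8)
The plan is to implement the two-step oblivious routing construction described in the overview of \Cref{sec:routing-overview}: first I would fix the target graph $G_{S,D_i}$ (or just $G$ in the ``moreover'' case) and set up the sequence of sparse neighborhood covers that the given distance-structures provide, one for each scale $D_j$ with $j < i$, which cover all balls of radius $D_j/\valSNC$. Since $\beta = \tO(1)$ and there are only $O(\log n)$ relevant scales, this gives $O(\log^2 n)$ clusterings in total, each coming with (a) a spanning tree $T_C$ of diameter $\le D_j$ for every cluster $C$, rooted at a center $v_C$, and (b) a potential $\phi_{V\setminus C}$ known to $C$ that acts as an approximate distance-to-boundary function. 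I would define the routing $R_S$ as a composition over these scales: at scale $D_j$, every node $w$ routes a $\approx \phi_{V\setminus C(u)}(u)/D_j$-fraction of its incoming $u$-associated flow to the cluster-center $v_{C(u)}$ of the cluster containing the \emph{origin} $u$, using the tree $T_{C(u)}$ as the routing path; the ``remembering the origin'' bookkeeping (Idea 1) is what makes $R_S$ linear, and the boundary-damping (Idea 2) is what makes the per-scale cost additive rather than multiplicative.

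The competitiveness analysis is the core of the argument and I would organize it as follows. It suffices, by linearity and the triangle inequality, to bound the cost of routing the elementary demand $b = \mathbbm{1}_u - \mathbbm{1}_v$ for an edge $\{u,v\}$ of length $\ell(\{u,v\})$; summing over an optimal flow decomposition then gives the general bound. For such a demand I would track, scale by scale, how much $u$-associated and $v$-associated flow sits at each node, and show two things: (i) at each scale the incremental transshipment cost incurred is $\tO(1)\cdot \ell(\{u,v\})$ — this is where the $|\phi_{V\setminus C}(u) - \phi_{V\setminus C}(v)| \le \dist_G(u,v)$ Lipschitz property of potentials (\Cref{fact:potential_min_max} and the potential definition) is used to bound the mismatch in the fractions sent by $u$ and $v$ — and (ii) once the scale $D_j$ exceeds $\dist_G(u,v)\cdot\valSNC$, the sparse-neighborhood-cover covering property guarantees that in \emph{some} clustering at that scale the ball $B(u, D_j/\valSNC)$ is entirely inside one cluster, so $u$ is ``far from the boundary'' ($\phi_{V\setminus C}(u) \ge 0.5 D_j/\valSNC$), forcing a constant fraction of the remaining flow to be consolidated; after $O(\log n)$ further scales essentially all flow is merged at a single center, at total cost $\tO(1)\cdot\ell(\{u,v\})$. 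Multiplying by $O(\log^2 n)$ scales and the $O(\log n)$ clusterings per scale keeps the competitiveness at $\poly\log n$. The ``moreover'' clause is the special case where the largest available scale already exceeds $\diam(G)\cdot\valSNC$, so the same construction run on $G$ itself (no cheating node needed) consolidates all demand; I would just note that $G_{S,D_i}$ and $G$ agree on distances up to $2D_i$ and invoke the same argument.

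For the efficiency claim I would use the algorithmic remark: $R_S$ factors as $R_S = A\cdot B$ where $B$ maps each (origin, scale, clustering) triple to the edge set of the corresponding tree path $T_{C(u)}$ and $A$ records the damping coefficients; both are $\tO(1)$-sparse per row/column since each node lies in $O(\log^2 n)$ clusters total and each tree edge is used by $\tO(1)$ paths. Evaluating $R_S b$ then amounts to $O(\log^2 n)$ rounds of ancestor/subtree-sum aggregation on the trees $T_C$ (\Cref{lemma:ancestor-subtree-sum}), which is $\tO(1)$ \aggregate rounds; evaluating $R_S^T f$ is the transpose computation, also a subtree-sum, and the $\tO(1)$ virtual nodes $\vD, \sstar$ of $G_{S,D_i}$ are handled by the virtual-node simulation. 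The main obstacle I anticipate is getting the ``additive, not multiplicative'' accounting in step (i) fully rigorous: one must carefully argue that the total flow still in transit at scale $D_j$ is small enough that the $\tO(1)$-per-scale costs sum to $\tO(1)$ overall rather than accumulating, and that the damping fractions across the $O(\log n)$ independent clusterings at a fixed scale compose correctly without the flow ever ``getting stuck'' near a boundary in every clustering simultaneously — this last point is exactly where the covering guarantee of the sparse neighborhood cover (every ball is covered by \emph{some} clustering) must be invoked quantitatively.
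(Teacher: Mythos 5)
Your proposal follows the same high-level blueprint as the paper's proof (a product-type flow indexed by cluster pairs at consecutive scales, origin-tagging to keep the map linear, damping of the outgoing fraction as a function of approximate distance-to-boundary, factorization $R = AB$ evaluated via subtree/ancestor sums), so the overall route is correct. However, as written the sketch glosses over one detail without which the construction is not well-defined, and it substitutes a vague "consolidation" heuristic for the paper's simpler termination mechanism.

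The missing detail is the \emph{truncation} in the damping coefficient. You route fractions ``$\approx \phi_{V\setminus C}(u)/D_j$'', but the paper uses $p_{C}(v) := \max\{0,\ \tilde d_C(v)/D_j - 0.25/\valSNC\}$. Subtracting the constant before clipping at zero is what forces the nesting property (Claim~\ref{obs:C-inside}): if $f_{C,C'}(v)\ne 0$ then $p_{C'}(v) > 0$ implies $v$ is at distance at least $0.25 D_{j+1}/\valSNC \ge 2D_j$ from $V'\setminus C'$, and since $C$ has radius $\le D_j$ this yields $C\subseteq C'$. Nesting is indispensable twice: it guarantees $r_C\in C'$ so that the routing path from $r_C$ to $r_{C'}$ can be taken inside the spanning tree $T_{C'}$ (with length $\le\alpha D_{j+1}$), and it gives the sparsity that makes $A,B$ evaluable in $\tO(1)$ \aggregate rounds (each $C$ nests into at most one $C'$ per clustering of $\mathcal{S}_{j+1}$). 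Without truncation, a node $v$ arbitrarily close to $\partial C'$ can still have $p_{C'}(v)>0$, the nesting fails, the paths are not inside $T_{C'}$, and the per-level cost and sparsity bounds break. Additionally, the paper avoids your informal ``once $D_j$ exceeds $\dist(u,v)\valSNC$ a constant fraction is consolidated'' argument entirely: termination is handled deterministically by appending a top-level cover $\mathcal{S}_i$ consisting of the single cluster $V'$ with center $v_{D_i}$ and the star spanning tree (using the cheating node), so that Claim~\ref{obs:flow-conservation} makes \emph{all} flow land at $v_{D_i}$ by construction; the competitive ratio then follows purely from the uniform per-level Lipschitz bound (Lemma~\ref{lem:flow-uv-diff}) summed over the $O(\log n)$ levels, which is exactly the ``additive not multiplicative'' accounting you flag as the main obstacle.
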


\textbf{Extending the sparse covers to $G_{S,D_i}$.} For each $j<i$, consider the sparse neighborhood cover of $G$ from the distance-structure for distance $D_j$. Recall from \Cref{def:distance_structure} that it has covering radius $D_j/\valSNC$. Extend it to a sparse neighborhood cover of $G_{S,D_i}$ by letting $v_{D_i}$ and $s^*$ each be a singleton component, and name the new sparse neighborhood cover $\mathcal S_j$. We now claim that it still has covering radius $D_j/\valSNC$, this time on $G_{S,D_i}$. Since $D_j/\valSNC<D_i$, and since $v_{D_i}$ and $s^*$ have distance at least $D_i$ to all other nodes, the ball around any node $v\in V$ does not contain $v_{D_i}$ or $s^*$, so it is still fully contained in some cluster. The balls around $v_{D_i}$ and $s^*$ are simply the singleton sets $\{v_{D_i}\}$ and $\{s^*\}$, respectively, so they are also covered by their respective singleton clusters. It follows that the new sparse neighborhood cover has covering radius $D_i\ge D_i/\valSNC$. As for the potentials, we simply define $\tilde d_{\{v_{D_i}\}}(v_{D_i})=D_i$ and $\tilde d_{\{s^*\}}(s^*)=D_i$ for the clusters $\{v_{D_i}\}$ and $\{s^*\}$, respectively. These clearly satisfy \Cref{def:potential}.

Let $\mathcal S_i$ be the sparse neighborhood cover consisting of the entire graph $G_{S,D_i}$ as a single cluster $V'$, which has radius $2 D_i$ when centered at $v_{D_i}$. Let spanning tree of the cluster be the star centered at $v_{D_i}$ with leaves $V$, with $s^*$ attached to an arbitrary leaf node. We can trivially define $\tilde d_{V'}(v)=D_i$ for all $v\in V'$, which satisfies \Cref{def:potential} (where property~(1) is vacuously satisfied since there are no nodes outside the cluster).

\textbf{Notation.} For each clustering $\mathcal C$ in some sparse neighborhood cover $\mathcal S_j$, let $\tilde d_C$ be the corresponding distance label. For each cluster $C$ in some clustering, let $r_C$ be the cluster center, and let $T_C$ be the cluster spanning tree rooted at $r_C$ of radius at most $\alpha D_j$. Let $\gamma:=O(\log n)$ be an upper bound on the number of clusters in each clustering as guaranteed by Definition~\ref{def:distance_structure}. Furthermore, to avoid clutter, define $V':=V\cup\{v_{D_i},s^*\}$ as the vertices of $G_{S,D_i}$.

For each node $v\in V'$, integer $j<i$, and cluster $C$ containing $v$ that is in some clustering $\mathcal C\in\mathcal S_j$, define the value $p_{C}(v):=\max\{0,\tilde d_C(v)/D_j-0.25/\valSNC\}$, and for each $j<i$, define $w_j(v):=\sum_{C\in\mathcal C\in\mathcal S_j}p_{C}(v)$. 

\begin{lemma}\label{lem:w-lower-bound}
For all $j<i$, we have $w_\ell(v)\ge0.25/\valSNC$.
\end{lemma}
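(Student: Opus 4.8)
The plan is to produce, for each $v\in V'$ and each $j<i$, a single cluster $C\ni v$ occurring in the sparse neighborhood cover $\mathcal S_j$ whose contribution $p_C(v)$ is already at least $0.25/\valSNC$; since every summand of $w_j(v)=\sum_{C\in\mathcal C\in\mathcal S_j}p_C(v)$ has the form $\max\{0,\cdot\}$ and is therefore nonnegative, this immediately gives the bound. (We read the statement with the evident correction of $w_\ell$ to $w_j$.)

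First I would treat an ordinary node $v\in V$. Since $\mathcal S_j$ has covering radius $D_j/\valSNC$, there is a clustering $\mathcal C\in\mathcal S_j$ and a cluster $C\in\mathcal C$ with $B(v,D_j/\valSNC)\subseteq C$; in particular $v\in C$, so $p_C(v)$ is defined. Hence every $u\notin C$ has $\dist_G(v,u)>D_j/\valSNC$, i.e.\ $\dist_G(v,V\setminus C)>D_j/\valSNC$. Here it does not matter whether one measures distances in $G$ or in $G_{S,D_i}$, because any path that uses one of the newly added weight-$D_i$ edges has length at least $D_i>D_j/\valSNC$, so the two distance functions agree up to this radius. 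Now $\tilde d_C$ is, by part~(2) of Definition~\ref{def:distance_structure}, a potential for scale $D_j$ with respect to $V\setminus C$, so clause~(2) of Definition~\ref{def:potential} yields $\tilde d_C(v)\ge 0.5\,D_j/\valSNC$. Consequently
\[
p_C(v)=\max\!\left\{0,\ \frac{\tilde d_C(v)}{D_j}-\frac{0.25}{\valSNC}\right\}\ \ge\ \frac{0.5}{\valSNC}-\frac{0.25}{\valSNC}\ =\ \frac{0.25}{\valSNC},
\]
and since every other term of $w_j(v)$ is nonnegative we conclude $w_j(v)\ge p_C(v)\ge 0.25/\valSNC$.

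It remains to deal with the two auxiliary vertices $v\in\{v_{D_i},s^*\}$ of $G_{S,D_i}$. Each of these is its own singleton cluster in $\mathcal S_j$, with associated label $\tilde d_{\{v\}}(v)=D_i$ by construction. Since $j<i$ we have $D_i/D_j=\beta^{\,i-j}\ge\beta=8\valSNC$, so $p_{\{v\}}(v)=\max\{0,\ D_i/D_j-0.25/\valSNC\}\ge 8\valSNC-0.25/\valSNC\ge 0.25/\valSNC$, and $w_j(v)$ again dominates this single term.

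The proof is essentially immediate once the definitions are unwound; the only point requiring care is the bookkeeping between the graph in which the covering property and the cluster potentials are stated ($G$) and the graph on which $w_j$ is evaluated ($G_{S,D_i}$), plus the check that the inequality $\dist_G(v,V\setminus C)>D_j/\valSNC$ extracted from the covering property is strong enough to trigger clause~(2) of Definition~\ref{def:potential} — which it is, with $D_j/\valSNC<D_i$ leaving a comfortable margin.
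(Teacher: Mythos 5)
Your proof is correct and follows essentially the same route as the paper's: use the covering radius of $\mathcal S_j$ to produce one cluster $C$ in which $v$ is padded by more than $D_j/\valSNC$, invoke clause~(2) of the potential definition to get $\tilde d_C(v)\ge 0.5D_j/\valSNC$, and observe that $w_j(v)$ dominates the single summand $p_C(v)\ge 0.25/\valSNC$. The only cosmetic difference is that you treat $v_{D_i}$ and $s^*$ as separate cases while the paper's extension of the cover (with singleton clusters carrying $\tilde d_{\{\cdot\}}=D_i$) lets the same argument apply to all of $V'$ uniformly; your extra remark about distances agreeing in $G$ and $G_{S,D_i}$ within radius $D_j/\valSNC$ makes explicit a point the paper handles in the paragraph preceding the lemma.
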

\begin{proof}
Since $\mathcal S_j$ is a sparse neighborhood cover of quality $\valSNC$, there exists a clustering $\mathcal C\in\mathcal S_j$ such that the ball around $v$ of radius $D_j/\valSNC$ is contained inside some cluster $C$ in $\mathcal C$. Equivalently, the distance of $v$ to a node outside $C$ is more than $D_j/\valSNC$. By property~(ii) of the definition of $\tilde d_C(v)$ (see Definition~\ref{def:distance_structure}), we have $\tilde d_C(v)\ge0.5D_j/\valSNC$. It follows that $w_j(v)\ge p_{C}(v)\ge\tilde d_C(v)/D_j-0.25/\valSNC\ge0.5/\valSNC-0.25/\valSNC=0.25/\valSNC$.
\end{proof}

We now describe the oblivious routing flow ``associated'' with each node $v\in V'$; the overall oblivious routing is simply the sum of the flows on each node $v$ multiplied by the demand at $v$.

The $v$-associated flow sends, for each $j<i$, and clusters $C\in\mathcal C\in\mathcal S_j$ and $C'\in \mathcal C'\in\mathcal S_{j+1}$ both containing $v$, a flow from the center $r_C$ to the center $r_{C'}$, and the amount of flow is $f_{C,C'}(v):=(p_{C}(v)/w_j(v))\cdot(p_{C'}(v)/w_{j+1}(v))$. (We do not yet specify the path from $r_C$ to $r_{C'}$ along which the flow is sent.) Over a single level $j$, the flow can be viewed as a ``product flow'' among all clusters $C$ in $\mathcal S_j$ and $C'$ in $\mathcal S_{j+1}$ containing $v$, weighted proportionally to $p_{C}(v)$ and $p_{C'}(v)$, respectively.

\begin{figure}
    \centering
    \includegraphics[width = .6\textwidth]{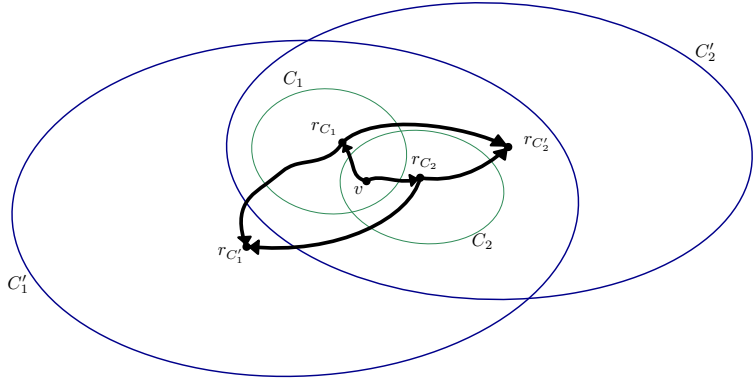}
    \caption{Example of the first two levels of the oblivious routing under construction. The node $v$ is deep enough in two clusters $C_1, C_2$ in the sparse neighborhood cover $\mathcal{S}_1$ to send flow to their centers $r_{C_1}, r_{C_2}$. 
    There are two clusters $C'_1, C'_2$ in the cover $\mathcal{S}_2$ where $v$ is deep enough to send flow to their centers. The flow is sent from $r_{C_1}$ and $r_{C_2}$ and the amount of flow sent follows the product distribution $f_{C, C'}(v) \propto p_C(v) p_{C'}(v)$. Note that by \cref{obs:C-inside} we have $C_1 \cup C_2 \subseteq C'_1 \cap C'_2$. }
    \label{fig:routing}
\end{figure}

\begin{claim}\label{obs:flow-conservation}
For each $v\in V'\setminus\{v_{D_i}\}$, the $v$-associated flow sends $1$ unit of net flow out of $v$, $1$ unit of net flow into the virtual node $v_{D_i}$, and zero net flow into/out of any other node. The $v_{D_i}$-associated flow has zero net flow into/out of every node.
\end{claim}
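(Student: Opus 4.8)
The plan is to verify flow conservation level-by-level and then telescope across the levels $j = 1, \dots, i-1$, keeping careful track of where the $v$-associated flow is deposited at each stage. Fix a node $v \in V' \setminus \{v_{D_i}\}$. For a single level $j < i$, consider the contribution of all the sub-flows $f_{C,C'}(v)$ routed from centers $r_C$ (for clusters $C \in \mathcal C \in \mathcal S_j$ containing $v$) to centers $r_{C'}$ (for clusters $C' \in \mathcal C' \in \mathcal S_{j+1}$ containing $v$). The key elementary observation is that, by definition, $f_{C,C'}(v) = (p_C(v)/w_j(v)) \cdot (p_{C'}(v)/w_{j+1}(v))$, so summing over all $C'$ at level $j+1$ gives $\sum_{C'} f_{C,C'}(v) = (p_C(v)/w_j(v)) \cdot \sum_{C'} p_{C'}(v)/w_{j+1}(v) = p_C(v)/w_j(v)$, using that $\sum_{C'} p_{C'}(v) = w_{j+1}(v)$ (which is well-defined and positive by \cref{lem:w-lower-bound}). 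Symmetrically, $\sum_C f_{C,C'}(v) = p_{C'}(v)/w_{j+1}(v)$. Hence the net flow that a cluster center $r_C$ at level $j$ pushes \emph{out} (as a source, via the sub-flows originating at it) equals exactly $p_C(v)/w_j(v)$, and the net flow a center $r_{C'}$ at level $j+1$ \emph{receives} equals exactly $p_{C'}(v)/w_{j+1}(v)$.

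Now telescope. At level $j$, center $r_C$ receives $p_C(v)/w_{j-1}(v)$ worth of flow (as a sink of level $j-1$) — except at the very first level $j=1$, where instead each $r_C$ with $C \ni v$ receives directly from $v$ an amount $p_C(v)/w_1(v)$, and these sum (over all such $C$) to $\sum_C p_C(v)/w_1(v) = 1$, so $v$ emits exactly one unit net. For an intermediate level $j$ with $1 < j < i$, a center $r_C$ (for $C \ni v$ in $\mathcal S_j$) receives $p_C(v)/w_j(v)$ as a sink of level $j-1$ and emits $p_C(v)/w_j(v)$ as a source of level $j$, so its net is zero; any node that is not such a center has net zero trivially. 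Finally, at the last level $j = i-1$, the sinks are the centers $r_{C'}$ for $C' \in \mathcal S_i$; but $\mathcal S_i$ consists of the single cluster $V'$ centered at $v_{D_i}$, and we set (or can set, using the definition of $p$ on this top cluster) $p_{V'}(v)/w_i(v) = 1$, so $v_{D_i}$ receives exactly one unit net and nothing else does. Collecting: $v$ emits one unit, $v_{D_i}$ absorbs one unit, every other node is balanced — which is precisely the claim. For the $v_{D_i}$-associated flow, note $v_{D_i}$ lies in the singleton cluster $\{v_{D_i}\}$ at every level $j < i$ (by construction of the extended covers $\mathcal S_j$), so $\tilde d_{\{v_{D_i}\}}(v_{D_i}) = D_i \ge D_j$ forces $p_{\{v_{D_i}\}}(v_{D_i}) \ge 1$; but one must also check $v_{D_i}$ belongs to no other cluster at any level $j < i$, which holds because $v_{D_i}$ is a singleton in each $\mathcal S_j$. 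Hence all the $f_{C,C'}(v_{D_i})$ degenerate to a self-loop at $v_{D_i} = r_{\{v_{D_i}\}} = r_{V'}$ and carry zero net flow everywhere.

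The main obstacle I anticipate is purely bookkeeping: making the telescoping identity airtight at the two boundary levels ($j=1$, where the source is $v$ itself rather than a cluster center, and $j = i-1$, where the sink collapses to the single top-level cluster $\mathcal S_i$), and confirming that the normalization $p_{V'}(v)/w_i(v) = 1$ is consistent with the definitions given for the top cluster in the ``Extending the sparse covers'' paragraph (where $\tilde d_{V'}(v) = D_i$ for all $v$). One should also be slightly careful that $w_j(v) > 0$ for every $j < i$ so that all the ratios $p_C(v)/w_j(v)$ are well-defined — this is exactly \cref{lem:w-lower-bound}. No geometric facts about the clusters (e.g. \cref{obs:C-inside}) are needed for flow conservation; those enter only later in the cost analysis. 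Once the per-level source/sink amounts are pinned down, the conclusion is immediate by summing the per-level conservation statements.
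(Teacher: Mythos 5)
Your proof is correct and follows essentially the same route as the paper's: verify conservation level-by-level using the telescoping identity $\sum_{C'}f_{C,C'}(v)=p_C(v)/w_j(v)=\sum_{C''}f_{C'',C}(v)$, then handle the two boundary levels ($v$ emits $1$ at the bottom, $v_{D_i}$ absorbs $1$ at the top). The paper compresses this into a few sentences whereas you spell out the per-level source/sink bookkeeping and the degenerate self-loop behavior for $v=v_{D_i}$, but the underlying argument is identical.
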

\begin{proof}
Fix any node $v\in V'$. Since $\sum_{C\in\mathcal C\in\mathcal S_j}p_{C}(v)/w_j(v)=1$ for all $j$, it is easy to see that each node $v$ receives exactly $p_{C}(v)/w_j(v)$ total flow from level $j-1$ and sends exactly $p_{C}(v)/w_j(v)$ total flow to level $j+1$, which is a net flow of zero. This also means that for $j=1$, the node $v$ sends $1$ unit of flow out. Finally, since $\mathcal S_i$ has a single cluster with center $v_{D_i}$, all flow eventually coalesces at node $v_{D_i}$. This means that if $v\ne v_{D_i}$, then the flow sends $1$ unit out of $v$ and $1$ unit into $v_{D_i}$, and if $v=v_{D_i}$, then the two units of flow cancel, and the net flow is zero on every node.
\end{proof}

\begin{claim}\label{obs:C-inside}
We have $f_{C,C'}(v)\ne0$ only if $C\subseteq C'$.
\end{claim}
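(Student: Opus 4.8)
The plan is to show that $f_{C,C'}(v)$ can only be nonzero when $v$ sits deep enough inside $C'$ that a ball around $v$ large enough to contain all of $C$ still fits inside $C'$. Recall $f_{C,C'}(v)=(p_C(v)/w_j(v))\cdot(p_{C'}(v)/w_{j+1}(v))$ for $C\in\mathcal C\in\mathcal S_j$ and $C'\in\mathcal C'\in\mathcal S_{j+1}$, both containing $v$, and that by \cref{lem:w-lower-bound} the denominators $w_j(v),w_{j+1}(v)$ are strictly positive. Hence $f_{C,C'}(v)\ne0$ forces $p_{C'}(v)\ne0$, i.e.\ $\tilde d_{C'}(v)>0.25\,D_{j+1}/\valSNC$. (We will not even need $p_C(v)\ne0$.)

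First I would dispose of the degenerate clusters of the extended covers. If $C'$ is the unique top cluster $V'$ of $\mathcal S_i$ (which happens only when $j=i-1$), then $C\subseteq V'=C'$ trivially. If $C'$ is one of the singleton clusters $\{v_{D_i}\}$ or $\{s^*\}$ added during the extension, then $v\in C'$ forces $v$ to be that virtual node, and the only cluster of $\mathcal S_j$ containing it is the matching singleton, so $C=C'$. In all remaining cases $C'$ is a genuine cluster of the distance structure for scale $D_{j+1}$, so its label $\tilde d_{C'}$ is a non-negative potential with respect to $V\setminus C'$ that vanishes outside $C'$ (\cref{def:distance_structure}, \cref{def:potential}); by the Lipschitz property of potentials this yields $\tilde d_{C'}(v)\le\dist(v,V'\setminus C')$ (the virtual nodes $v_{D_i},s^*$, being at distance $D_i>2D_j$ from $v$, are irrelevant here). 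Combining this with $p_{C'}(v)\ne0$ and the parameter choices $\beta=8\valSNC$, $D_{j+1}=\beta D_j$ gives
\[
\dist(v,V'\setminus C')\ \ge\ \tilde d_{C'}(v)\ >\ 0.25\,D_{j+1}/\valSNC\ =\ 0.25\,\beta\,D_j/\valSNC\ =\ 2D_j ,
\]
so the ball $B(v,2D_j)$ is entirely contained in $C'$.

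Finally I would invoke the diameter bound on $C$: since $v$ and any $w\in C$ both lie in the cluster $C$ of $\mathcal S_j$, whose spanning tree $T_C$ from the distance structure for scale $D_j$ has diameter at most $D_j$, we get $\dist(v,w)\le\dist_{T_C}(v,w)\le D_j<2D_j$, hence $w\in B(v,2D_j)\subseteq C'$. As $w\in C$ was arbitrary, $C\subseteq C'$, which is the claim.

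The only real subtlety, and the step I would be most careful about, is the inequality $\tilde d_{C'}(v)\le\dist(v,V'\setminus C')$: the labels $\tilde d_C$ originate from the distance structures on $G$ but are being reused on $G_{S,D_i}$, so one must check that the extension (including how $\tilde d$ is treated at $v_{D_i}$ and $s^*$, and that the label stays a valid potential with respect to the complement of $C'$ in $G_{S,D_i}$) does not break this bound. Everything past that point is a one-line distance comparison, made comfortable by the large polylogarithmic slack built into $\beta=8\valSNC$.
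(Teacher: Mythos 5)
Your proof is correct and follows essentially the same route as the paper: use $p_{C'}(v)\ne 0$ to force $\tilde d_{C'}(v)>0.25\,D_{j+1}/\valSNC$, translate (via the potential property of $\tilde d_{C'}$) into a lower bound $\dist(v,V'\setminus C')>2D_j$, and conclude with the diameter bound on $T_C$. The only cosmetic difference is that you peel off the singleton clusters $\{v_{D_i}\},\{s^*\}$ by hand, whereas the paper's argument covers them uniformly because the authors define $\tilde d_{\{v_{D_i}\}}(v_{D_i})=\tilde d_{\{s^*\}}(s^*)=D_i$ so the same potential/distance inequality applies; both treatments are fine.
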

\begin{proof}
The claim is vacuous for $j=i-1$ since $\mathcal S_i$ has a single cluster of all the nodes, so assume that $j<i-1$. If $f_{C,C'}(v)\ne0$ then in particular, $p_{C'}(v)=\max\{0,\tilde d_{C'}(v)/D_{j+1}-0.25/\valSNC\}>0$, which implies that $\tilde d_{C'}(v)\ge 0.25D_{j+1}/\valSNC$. Since $\tilde d_{C'}$ is a potential with respect to $V'\setminus C'$, by properties~(1)~and~(2) of Definition~\ref{def:potential}, we have $\dist_G(v,V'\setminus C')\ge\tilde d_{C'}(v)\ge0.25D_{j+1}/\valSNC$. In other words, $v$ has distance at least $0.25D_{j+1}/\valSNC$ to the outside of cluster $C'$. Since $\valSNC\le\beta/8$ by construction, this distance is at least $2D_{j+1}/\beta=2D_j$. Since $C$ is a cluster of radius at most $D_j$, node $v$ has distance at most $2D_j$ to all of $C$, and it follows that $C\subseteq C'$. 
\end{proof}

The following lemma is key to bounding the competitive ratio of the oblivious routing algorithm.
\begin{lemma}\label{lem:flow-uv-diff}
Fix two nodes $u,v\in V'$. For every $j<i$ and two clusters $C\in\mathcal C\in\mathcal S_j$ and $C'\in\mathcal C'\in\mathcal S_{j+1}$, we have $|f_{C,C'}(u)-f_{C,C'}(v)|\le O(\gamma \valSNC d_G(u,v)/D_j)$.
\end{lemma}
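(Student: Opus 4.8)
The plan is to expand $f_{C,C'}(v) = \bigl(p_C(v)/w_j(v)\bigr)\cdot\bigl(p_{C'}(v)/w_{j+1}(v)\bigr)$ as a product of two normalized quantities and bound the difference of products by the triangle inequality: if $a = a_1a_2$ and $b = b_1b_2$ with all quantities in $[0,1]$, then $|a-b| \le |a_1-b_1| + |a_2-b_2|$. So it suffices to show that each normalized weight $p_C(v)/w_j(v)$ is $O(\gamma\valSNC \, d_G(u,v)/D_j)$-Lipschitz as we move from $v$ to $u$ (and similarly at level $j+1$, where the $D_{j+1}$ in the denominator only helps). This reduces the lemma to a one-level statement about the quantities $p_C(\cdot)/w_j(\cdot)$.

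For the one-level bound, first I would establish that $p_C$ itself is $1/D_j$-Lipschitz: since $p_C(v) = \max\{0, \tilde d_C(v)/D_j - 0.25/\valSNC\}$ and $\tilde d_C$ is a potential (hence $1$-Lipschitz with respect to $\dist_G$ by \Cref{def:potential} / the potential conditions in \Cref{sec:preliminaries}), the map $v \mapsto \tilde d_C(v)/D_j$ is $(1/D_j)$-Lipschitz, and truncating at $0$ via $\max\{0,\cdot\}$ does not increase the Lipschitz constant. Summing over the at most $\gamma$ clusters $C$ in the clustering $\mathcal C$ gives that $w_j$ is $(\gamma/D_j)$-Lipschitz; and in fact $w_j$ restricted to any single clustering, and hence the relevant partial sums, are all Lipschitz with constant $O(\gamma/D_j)$. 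Then, using the quotient rule for Lipschitz functions together with the uniform lower bound $w_j(v) \ge 0.25/\valSNC$ from \Cref{lem:w-lower-bound} (and the uniform upper bound $p_C \le w_j = O(1)$, noting $p_C \le 1$ always since $\tilde d_C(v) \le \dist_G(v, V'\setminus C) + \ldots$ — actually more simply $p_C(v) \le \tilde d_C(v)/D_j$ and we only care about the ratio), we get
\[
\left| \frac{p_C(u)}{w_j(u)} - \frac{p_C(v)}{w_j(v)} \right|
\le \frac{|p_C(u)-p_C(v)|}{w_j(v)} + \frac{p_C(u)\,|w_j(v)-w_j(u)|}{w_j(u)w_j(v)}
= O\!\left(\valSNC \cdot \frac{d_G(u,v)}{D_j}\right) + O\!\left(\gamma\valSNC^2 \cdot \frac{d_G(u,v)}{D_j}\right),
\]
which is $O(\gamma\valSNC^2 d_G(u,v)/D_j)$. (The stated bound in the lemma has only one factor of $\valSNC$; I would double-check the exact constants, but since $\valSNC = \log^7 n$ and $\gamma = O(\log n)$ everything is $\poly\log n$, so at worst one absorbs the discrepancy into the $O(\cdot)$ and the overall $\tO(1)$ competitiveness is unaffected — or one notes $p_C(u) \le w_j(u)$ so the second term is really $O(\gamma\valSNC\, d_G(u,v)/D_j)$, recovering the stated bound exactly.) Combining the level-$j$ and level-$(j+1)$ estimates via the product-difference inequality yields the claim.

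The main obstacle I anticipate is getting the constants and the number of $\valSNC$ factors exactly right in the quotient estimate — in particular being careful that $p_C(u)/w_j(u) \le 1$ so that the "cross term" in the product-difference bound does not pick up an extra large factor, and that the lower bound $w_j \ge 0.25/\valSNC$ is applied to both $w_j(u)$ and $w_j(v)$. A secondary point to handle cleanly is the edge case $j = i-1$, where $\mathcal S_i$ is the single trivial cluster $V'$ with $\tilde d_{V'} \equiv D_i$: there $p_{C'}(v) = \max\{0, 1 - 0.25/\valSNC\}$ is a constant independent of $v$, so $f_{C,C'}(u) - f_{C,C'}(v)$ reduces to a single-level difference and the bound holds trivially — worth a one-line remark rather than a separate argument.
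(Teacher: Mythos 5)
Your proof is correct, and it takes a genuinely cleaner route than the paper's. The paper combines $f_{C,C'}(u)-f_{C,C'}(v)$ into a single fraction with denominator $w_j(v)w_{j+1}(v)w_j(u)w_{j+1}(u)$, expands the numerator, and after bounding ends up with $O(\gamma\valSNC\delta + (\gamma\valSNC\delta)^2)$; this forces them to dispatch the case $\delta \ge 1/(\gamma\valSNC)$ separately (using $f_{C,C'}\in[0,1]$) so that the linear term dominates. Your decomposition — writing $f_{C,C'}$ as a product of two ratios each in $[0,1]$, applying $|a_1a_2-b_1b_2|\le|a_1-b_1|+|a_2-b_2|$, and then bounding each one-level ratio via the quotient estimate with $p_C(u)/w_j(u)\le 1$ inserted before dividing — never produces a quadratic-in-$\delta$ term, so no case split is needed. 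You also correctly identified that using $p_C(u)\le w_j(u)$ in the "cross term" is exactly what removes the spurious extra $\valSNC$ factor, so your worry in the parenthetical is resolved in the way you suspected. Your remark about $j=i-1$ is fine but unnecessary: the level-$(j+1)$ ratio $p_{V'}/w_i$ is identically $1$ there, so the general argument already yields a zero contribution from that level without special casing.

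One small notational slip worth fixing when you write this up: the $\gamma$ in the Lipschitz bound for $w_j$ is the number of \emph{clusterings} in the sparse cover $\mathcal S_j$ (equivalently, the number of clusters containing a given node, one per clustering), not "the number of clusters in the clustering $\mathcal C$" — each clustering can have many clusters. The conclusion $|w_j(u)-w_j(v)|\le O(\gamma/D_j)\cdot d_G(u,v)$ is unaffected, since for each of the $\gamma$ clusterings the contribution to the difference is at most $d_G(u,v)/D_j$ (noting that when $u$ and $v$ land in different clusters of the same clustering, both $p$-values are individually $\le d_G(u,v)/D_j$ by $1$-Lipschitzness of the potential and the boundary condition $\tilde d_C\equiv 0$ on $V'\setminus C$).
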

\begin{proof}
Consider some $j<i$ and clusters $C\in\mathcal C\in\mathcal S_j$ and $C'\in \mathcal C'\in\mathcal S_{j+1}$.
To avoid clutter, let $d:=\dist_G(u,v)$ and $\delta:=\dist_G(u,v)/D_j$.
 The claim is trivial if $\delta\ge1/(\gamma \valSNC)$ since $f_{C,C'}(u),f_{C,C'}(v)\in[0,1]$ by construction. For the rest of the proof, assume that $\delta\le1/(\gamma \valSNC)$. It suffices to prove $f_{C,C'}(u)-f_{C,C'}(v)\le O(\gamma \valSNC d_G(u,v)/D_j)$ without the absolute value, since we can swap $u$ and $v$ to obtain the other direction.

By property~2(ii) of Definition~\ref{def:distance_structure}, both $|\tilde d_C(u)-\tilde d_C(v)|$ and $|\tilde d_{C'}(u)-\tilde d_{C'}(v)|$ are at most $d$, so both $|p_{C}(u)-p_{C}(v)|$ and $|p_{C'}(u)-p_{C'}(v)|$ are at most $d/D_j=\delta$. Since $w_j(v)=\sum_{C''\in\mathcal C''\in\mathcal S_j}p_{C''}(v)$ is the sum of at most $\gamma$ many terms, we have $|w_j(u)-w_j(v)|\le\gamma\delta$. Similarly, $|w_{j+1}(u)-w_{j+1}(v)| \le \gamma d/D_{j+1} \le \gamma d/D_j= \gamma\delta$. Therefore,
\begin{align*}
&f_{C,C'}(u)-f_{C,C'}(v)
\\&= \frac{p_{C}(v)}{w_j(v)}\cdot\frac{p_{C'}(v)}{w_{j+1}(v)} - \frac{p_{C}(u)}{w_j(u)}\cdot\frac{p_{C'}(u)}{w_{j+1}(u)}
\\&= \frac{p_{C}(v)p_{C'}(v)w_j(u)w_{j+1}(u) - p_{C}(u)p_{C'}(u)w_j(v)w_{j+1}(v)}{w_j(v)w_{j+1}(v)w_j(u)w_{j+1}(u)}
\\&\le \frac{p_{C}(v)p_{C'}(v)(w_j(v)+\gamma \delta)(w_{j+1}(v)+\gamma\delta) - (p_{C}(v)-\delta)(p_{C'}(v)-\delta)w_j(v)w_{j+1}(v)}{w_j(v)w_{j+1}(v)w_j(u)w_{j+1}(u)}
.
\end{align*}
Expanding out the numerator and canceling out the $p_{C}(v)p_{C'}(v)w_j(v)w_{j+1}(v)$ terms, we obtain
\begin{align*}
&p_{C}(v)p_{C'}(v)(w_j(v)w_{j+1}(v)+\gamma\delta w_j(v)+\gamma\delta w_{j+1}(v)+(\gamma\delta)^2) 
\\&\quad-(p_{C}(v)p_{C'}(v)-\delta p_C(v)-\delta p_{C'}(v)+\delta^2)w_j(v)w_{j+1}(v) 
\\={}&  p_{C}(v)p_{C'}(v)(\gamma\delta w_j(v)+\gamma\delta w_{j+1}(v)+(\gamma\delta)^2) + (\delta p_C(v)+\delta p_{C'}(v)-\delta^2)w_j(v)w_{j+1}(v).
\end{align*}
Using that $p_{C}(v)\le w_j(v)$ and $p_{C'}(v)\le w_{j+1}(v)$, and ignoring the last $-\delta^2$ term, this is at most
\begin{align*}
&w_j(v)w_{j+1}(v)(\gamma\delta w_j(v)+\gamma\delta w_{j+1}(v)+(\gamma\delta)^2) + (\delta w_j(v)+\delta w_{j+1}(v))w_j(v)w_{j+1}(v) 
\\={}& w_j(v)w_{j+1}(v)((\gamma+1)\delta w_j(v)+(\gamma+1)\delta w_{j+1}(v)+(\gamma\delta)^2)
\end{align*}
Dividing by the denominator $w_j(v)w_{j+1}(v)w_j(u)w_{j+1}(u)$ and then using $|w_j(u)-w_j(v)|\le\gamma\delta$ and $|w_{j+1}(u)-w_{j+1}(v)|\le\gamma\delta$, we upper bound $f_{C,C'}(u)-f_{C,C'}(v)$ by
\begin{align*}
&\frac{(\gamma+1)\delta w_j(v)+(\gamma+1)\delta w_{j+1}(v)+(\gamma\delta)^2}{w_j(u)w_{j+1}(u)}
\\\le{}& \frac{(\gamma+1)\delta (w_j(u)+\gamma\delta)+(\gamma+1)\delta(w_{j+1}(u)+\gamma\delta)+(\gamma\delta)^2}{w_j(u)w_{j+1}(u)} \\\le{}& \frac{(\gamma+1)\delta w_j(u)+(\gamma+1)\delta w_{j+1}(u)+O(\gamma\delta)^2}{w_j(u)w_{j+1}(u)}
\\={}&\frac{(\gamma+1)\delta}{w_{j+1}(u)} + \frac{(\gamma+1)\delta}{w_j(u)} + \frac{O(\gamma\delta)^2}{w_j(u)w_{j+1}(u)}
.\end{align*}
Finally, using that $w_j(u),w_j(v)\ge0.25/\valSNC$ from \Cref{lem:w-lower-bound}, this is at most $O(\gamma \valSNC\delta+(\gamma \valSNC\delta)^2)$. Since $\delta\le1/(\gamma \valSNC)$, the linear term dominates, and we obtain the desired $f_{C,C'}(u)-f_{C,C'}(v)\le O(\gamma \valSNC\delta)$.
\end{proof}

\begin{lemma}\label{lem:competitive-ratio-final}
Assume that $\valSNC\le\beta/8$.
For any demand vector $b:\mathbb R^{V'}$ for the graph $G_{S,D_i}$, the summation
\begin{align*}
    \sum_{j<i}\sum_{\substack{C\in\mathcal C\in\mathcal S_j,\\C'\in\mathcal C'\in\mathcal S_{j+1}}}\bigg| \sum_{v\in V'}b(v)f_{C,C'}(v) \bigg|\cdot D_j
\end{align*}  is at most $i\gamma^3\valSNC$ times the optimal cost of a transshipment flow on $G_{S,D_i}$ satisfying demand $b$. 
\end{lemma}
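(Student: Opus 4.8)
The plan is to bound the left-hand summation by writing it as a sum, over the levels $j < i$, of the transshipment cost incurred at level $j$, and then to bound each level's contribution by $O(\gamma^3 \valSNC)$ times the optimal transshipment cost. The key observation is that the inner expression $\sum_{v \in V'} b(v) f_{C,C'}(v)$ is the net flow pushed from cluster center $r_C$ to cluster center $r_{C'}$ at level $j$ by the combined oblivious routing (demand $b$ weighted against the $v$-associated flows). Since $D_j$ is (up to a polylog factor) an upper bound on the distance a unit of flow travels between $r_C$ and $r_{C'}$ along the still-unspecified path (recall $C \subseteq C'$ by \Cref{obs:C-inside}, $C'$ has radius at most $D_{j+1} = \beta D_j$, and $\beta = \tilde O(1)$), the quantity $\big|\sum_v b(v) f_{C,C'}(v)\big| \cdot D_j$ is, up to a polylog factor, the contribution of that $(C,C')$-pair's flow to the transshipment cost. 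So the whole task reduces to: for each level $j$, the total flow moved between level-$j$ and level-$(j+1)$ cluster centers has cost at most $O(\gamma^3 \valSNC)$ times $\opt$.

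**The core estimate.** To bound a single level $j$, I would fix an optimal transshipment flow $f^*$ for demand $b$ on $G_{S,D_i}$, decompose it into a weighted collection of paths (equivalently, write $b$ as a nonnegative combination of elementary demands $\one_u - \one_v$ with total weight $\sum \dist_G(u,v)$ equal to $\opt$), and then use linearity of $v \mapsto f_{C,C'}(v)$ in the routing to reduce to the case $b = \one_u - \one_v$. For such an elementary demand, $\sum_{v' \in V'} b(v') f_{C,C'}(v') = f_{C,C'}(u) - f_{C,C'}(v)$, and \Cref{lem:flow-uv-diff} gives $|f_{C,C'}(u) - f_{C,C'}(v)| \le O(\gamma \valSNC \dist_G(u,v)/D_j)$. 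Now I sum over the $(C,C')$ pairs at level $j$: only pairs where both $C$ and $C'$ contain $u$ (or both contain $v$) contribute, and since each clustering in $\mathcal S_j$ (resp.\ $\mathcal S_{j+1}$) contains at most $\gamma$ clusters and each sparse neighborhood cover has $O(\log n) = O(\gamma)$ clusterings, the number of relevant $(C,C')$ pairs is $O(\gamma^2)$ per level. (One has to be slightly careful: a pair contributes to $|f_{C,C'}(u) - f_{C,C'}(v)|$ only if at least one of $u,v$ lies in $C$ and the corresponding center relationship holds, but $O(\gamma^2)$ is a safe bound regardless.) Multiplying the per-pair bound $O(\gamma \valSNC \dist_G(u,v)/D_j)$ by the $O(\gamma^2)$ pairs and by the $D_j$ weight gives $O(\gamma^3 \valSNC \dist_G(u,v))$ for the level-$j$ contribution of this elementary demand. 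Summing over the path decomposition of $f^*$ yields $O(\gamma^3 \valSNC) \cdot \opt$ for level $j$, and summing over the $i$ levels gives the claimed $i \gamma^3 \valSNC \cdot \opt$ bound (absorbing constants into the implicit $\tilde O$ or noting that the statement asks for $i\gamma^3\valSNC$ up to constants).

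**The main obstacle.** The delicate point is the geometric cost bound: I claimed $\big|\sum_v b(v) f_{C,C'}(v)\big| \cdot D_j$ controls the actual transshipment cost of the $r_C \to r_{C'}$ flow, but this requires (a) that $r_C$ and $r_{C'}$ are genuinely within distance $O(D_j \cdot \text{polylog})$ of each other in $G_{S,D_i}$ — which follows from $C \subseteq C'$, $\diam(T_C) \le D_j$, $\diam(T_{C'}) \le D_{j+1} = \beta D_j$, and $v$ being common to both so $r_C, r_{C'}$ are both within $O(\beta D_j)$ of $v$ — and (b) that the routing path itself can be chosen of length $O(\beta D_j)$, e.g.\ routing within $T_{C'}$ from $r_C$ up to $r_{C'}$. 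The factor $\beta = 8\valSNC$ lost here is where one extra $\valSNC$ in the final bound comes from, which is why the statement has $\gamma^3 \valSNC$ rather than $\gamma^3$; I would track this carefully. A secondary subtlety is the path-decomposition / linearity step: one must check that the oblivious routing built from the $f_{C,C'}(v)$ really is linear in the demand (it is, by construction — the overall flow is $\sum_v b(v) \cdot (\text{$v$-associated flow})$), so that reducing to elementary demands $\one_u - \one_v$ and then invoking the flow-path decomposition of the optimum is legitimate. I expect step (a)–(b) — pinning down the $O(\beta D_j)$ routing-path length and thereby the exact power of $\valSNC$ — to be the part requiring the most care, while the combinatorial counting of $O(\gamma^2)$ pairs and the reduction to \Cref{lem:flow-uv-diff} are routine.
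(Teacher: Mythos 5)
Your proposal is essentially the paper's proof: decompose the optimal flow into a weighted collection of paths, apply \Cref{lem:flow-uv-diff} to each path's endpoints $u,v$, observe that at most $O(\gamma^2)$ cluster-pairs $(C,C')$ per level have $f_{C,C'}(u)$ or $f_{C,C'}(v)$ nonzero (each node lies in at most one cluster per clustering and each $\mathcal S_j$ has $\gamma$ clusterings — not, as you wrote, that ``each clustering contains at most $\gamma$ clusters,'' though the conclusion is the same), multiply by the weight $D_j$ to get $O(\gamma^3\valSNC\, d_G(u,v))$ per level and $O(i\gamma^3\valSNC\, d_G(u,v))$ overall, then sum over paths and use the triangle inequality to pass from $\sum_{u,v,F}F|f_{C,C'}(u)-f_{C,C'}(v)|$ to $|\sum_v b(v)f_{C,C'}(v)|$.

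The one misdirected part is your ``main obstacle'' paragraph. The lemma is a purely combinatorial inequality in which $D_j$ appears only as an abstract weight; no geometric claim about routing-path lengths is being made or needed here, so there is nothing to ``pin down'' about $r_C$-to-$r_{C'}$ distances in this proof. The path-length bound you describe (routing within $T_{C'}$ using $C\subseteq C'$, costing $O(D_{j+1})$) is exactly what \Cref{lem:competitive-ratio} uses when converting this lemma's bound into a bound on $\lVert ABb\rVert_1$, and that is also where the $D_j\to D_{j+1}$ shift (hence the extra factor of $\beta$) appears via \Cref{cor:competitive-ratio}. So your instinct about where the $\beta$ gets lost is right, just misattributed to the present lemma.
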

\begin{proof}
Decompose the optimal transshipment flow into a collection $\mathcal P$ of paths. Fix a path with endpoints $u,v$ of flow value $F$, which contributes cost $F\cdot d_G(u,v)$ to the transshipment solution. For any $j<i$, there are at most $\gamma^2$ pairs $(\mathcal C,\mathcal C')$ with $\mathcal C\in\mathcal S_j$ and $\mathcal C'\in\mathcal S_{j+1}$, and for each $(\mathcal C,\mathcal C')$, there are at most two pairs $(C,C')$ with $C\in\mathcal C$ and $C'\in\mathcal C'$ for which either $f_{C,C'}(u)$ or $f_{C,C'}(v)$ is nonzero: these are the pairs $C,C'$ that either both contain $u$ or both contain $v$. So the total summation of $F\cdot |f_{C,C'}(u)-f_{C,C'}(v)|\cdot D_j$ from level $j$ is $O(\gamma^3\valSNC d_G(u,v))\cdot F$ by \Cref{lem:flow-uv-diff}. There are $i-1$ levels for a total of $O(i\gamma^3\valSNC d_G(u,v))\cdot F$, which we charge to the cost $F\cdot d_G(u,v)$ in the transshipment solution. It follows that $\sum_{u,v,F}\sum_{C,C'}F\cdot|f_{C,C'}(u)-f_{C,C'}(v)|\cdot D_j$ is at most $O(i\gamma^3\valSNC d_G(u,v))$ times the transshipment cost, so it remains to relate this summation to the desired $\sum_{C,C'}|\sum_{v\in V}b(v)f_{C,C'}(v)|\cdot D_j$.

 Summing over all paths in $\mathcal P$, we obtain 
\begin{align}
\sum_{u,v,F}\sum_{C,C'}F\cdot|f_{C,C'}(u)-f_{C,C'}(v)|\cdot D_j &\ge \sum_{C,C'}\left| \sum_{u,v,F}F\cdot(f_{C,C'}(u)-f_{C,C'}(v)) \right|\cdot D_j.\label{eq:comp-ratio}
\end{align}
Since the paths in $\mathcal P$ together form a transshipment solution that satisfies demand $b$, the sum of $F\cdot(\mathbbm1_u-\mathbbm1_v)$ over all paths in $\mathcal P$ (with endpoints $u,v$ and flow $F$) is exactly the demand vector $b$. That is, $\sum_{u,v,F}F\cdot(\mathbbm1_u-\mathbbm1_v)=b$ as vectors. Multiplying each vector coordinate $v$ by $f_{C,C'}(v)$ for any $C,C'$ and then summing over all coordinates, we obtain $\sum_{u,v,F}F\cdot(f_{C,C'}(u)-f_{C,C'}(v)) =\sum_{v\in V'}b(v)f_{C,C'}(v)$. It follows that the right-hand side of (\ref{eq:comp-ratio}) is exactly our desired expression $\sum_{C,C'}\left|\sum_{v\in V}b(v)f_{C,C'}(v)\right|\cdot D_j$, and we are done.
\end{proof}

In the next corollary, we replace $D_j$ in the summation with $D_{j+1}$, which is needed for our final analysis, and we also add an additional restriction $C\subseteq C'$.
\begin{corollary}\label{cor:competitive-ratio}
Assume that $\valSNC\le\beta/8$.
For any demand vector $b:V'\to\mathbb R$ for the graph $G_{S,D_i}$, the summation
\begin{align*}
    \sum_{j<i}\sum_{\substack{C\in\mathcal C\in\mathcal S_j,\\C'\in\mathcal C'\in\mathcal S_{j+1}\\{C\subseteq C'}}}\bigg| \sum_{v\in V'}b(v)f_{C,C'}(v) \bigg|\cdot D_{{j+1}}
\end{align*} 

at most $O(i\gamma^3\valSNC \beta)$ times the optimal cost of a transshipment flow on $G_{S,D_i}$ satisfying demand $b$.
\end{corollary}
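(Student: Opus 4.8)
The plan is to derive \Cref{cor:competitive-ratio} directly from \Cref{lem:competitive-ratio-final} by two elementary manipulations: observing that the extra constraint $C \subseteq C'$ is vacuous, and rescaling the weights $D_j \to D_{j+1}$. No new flow-analysis is needed; this corollary is purely a bookkeeping step that repackages \Cref{lem:competitive-ratio-final} into the form the final competitive-ratio computation will consume (where a level-$j$ segment from $r_C$ to $r_{C'}$ gets routed along a path whose length is comparable to $D_{j+1}$, not $D_j$, and where one only wants to sum over nested pairs $C\subseteq C'$).

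First I would note that imposing $C\subseteq C'$ does not change the value of the double summation at all. By \Cref{obs:C-inside} we have $f_{C,C'}(v)\ne 0$ only when $C\subseteq C'$, so whenever $C\not\subseteq C'$ every term $f_{C,C'}(v)$ vanishes, and hence so does $\sum_{v\in V'}b(v)f_{C,C'}(v)$. Thus adding (or removing) the constraint $C\subseteq C'$ only drops (or reinstates) a collection of identically-zero terms, leaving the sum unchanged.

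Second, by \Cref{def:distance_scales} we have $D_i=\beta^i$, so $D_{j+1}=\beta D_j$ for every $j$. Since every summand is nonnegative (it is an absolute value multiplied by the positive scalar $D_{j+1}$), replacing the weight $D_j$ by $D_{j+1}=\beta D_j$ in each summand multiplies the entire double sum by exactly $\beta$. Combining this with \Cref{lem:competitive-ratio-final}, which bounds the $D_j$-weighted sum by $i\gamma^3\valSNC$ times the optimal transshipment cost on $G_{S,D_i}$ for demand $b$, yields that the $D_{j+1}$-weighted sum (equivalently, by the first step, its restriction to pairs with $C\subseteq C'$) is at most $\beta\cdot i\gamma^3\valSNC=O(i\gamma^3\valSNC\beta)$ times the optimal cost, which is exactly the claim.

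I do not expect any real obstacle here; the only point requiring mild care is the sign of the summands — they are nonnegative because they are $\lvert\cdot\rvert$ times $D_{j+1}>0$ — which is precisely what legitimizes both ``dropping the zero terms'' and ``pulling the constant factor $\beta$ through the sum.'' Everything else is immediate once \Cref{lem:competitive-ratio-final} and \Cref{obs:C-inside} are in hand.
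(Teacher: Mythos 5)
Your proof is correct and takes essentially the same route as the paper: use \Cref{obs:C-inside} to drop/add the $C\subseteq C'$ restriction (those terms all vanish), then apply \Cref{lem:competitive-ratio-final} and pull out the factor $\beta = D_{j+1}/D_j$. In fact you cite the right lemma; the paper's own two-line proof appears to misreference \Cref{lem:competitive-ratio} (which comes \emph{after} this corollary and relies on it), when it clearly means \Cref{lem:competitive-ratio-final}.
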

\begin{proof}
We invoke \Cref{obs:C-inside}, which allows us to eliminate the restriction $C\subseteq C'$, since all other pairs $(C,C')$ satisfy $f_{(C,C')}(v)=0$ for all $v$.
The statement follows from \Cref{lem:competitive-ratio} and the fact that $D_{j+1}= D_j\cdot \beta$.
\end{proof}

We now define our oblivious routing matrix $R$. It is the product of two matrices $A\times B$, where $A\in\mathbb R^{\vec{E}\times P}$ and $B\in\mathbb R^{P\times(V')}$ where $P$ is the set of all pairs of clusters $C,C'\subseteq V'$ such that (1)~$C\subseteq C'$, and (2)~there exists $j<i$ such that $C$ is a cluster in some clustering $\mathcal C\in\mathcal S_j$ and $C'$ is a cluster in some clustering $\mathcal C'\in\mathcal S_{j+1}$.

We construct $B\in\mathbb R^{P\times V'}$ as $B_{(C,C'),v}=f_{(C,C')}(v)$ if clusters $C,C'$ both contain $v$, and $0$ otherwise. For $A\in\mathbb R^{\vec{E}\times P}$, consider two clusters $C,C'$. Since $C\subseteq C'$, we have in particular the center $r_C$ is in $C'$. Take the path from $r_C$ to $r_{C'}$ in the spanning tree $T_{C'}$ of $C'$. For each edge $(u,v)$ on the path, assign entry $A_{(u,v),(C,C')}=1$ if the path travels in the direction $u$ to $v$, and $-1$ otherwise. All other entries on column $(C,C')$ get assigned $0$.

\begin{lemma}\label{lem:competitive-ratio}
Assume that $\valSNC\le\beta/8$.
For any demand vector $b:\mathbb R^{V'}$ for the graph $G_{S,D_i}$, let $\textsf{\textup{opt}}(b)$ be the optimal cost of a transshipment flow on $G_{S,D_i}$ satisfying demand $b$. Then, $\textsf{\textup{opt}}(b)\le\left\lVert ABb\right\rVert_1\le i\gamma^3\valSNC \beta\textsf{\textup{opt}}(b)$.
\end{lemma}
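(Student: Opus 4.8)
The plan is to prove the two inequalities $\opt(b) \le \|ABb\|_1$ and $\|ABb\|_1 \le i\gamma^3\valSNC\beta\cdot\opt(b)$ separately, since they correspond to the feasibility (routability) and competitiveness of the oblivious routing $R = AB$ respectively.

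\textbf{Upper bound (competitiveness).} First I would unpack $\|ABb\|_1$. The vector $Bb \in \R^{P}$ has, in coordinate $(C,C')$, the value $\sum_{v\in V'} b(v) f_{(C,C')}(v)$ (nonzero only when $C \subseteq C'$ and $v$ lies in both clusters, which is exactly the support structure built into $B$). Then $A$ maps each coordinate $(C,C')$ to the signed indicator of the tree path from $r_C$ to $r_{C'}$ inside $T_{C'}$, so $ABb = \sum_{(C,C')} (Bb)_{(C,C')} \cdot (\text{path vector}_{C\to C'})$. By the triangle inequality for $\|\cdot\|_1$, and since the tree path from $r_C$ to $r_{C'}$ has $\ell_1$-length (with respect to edge weights) equal to $\dist_{T_{C'}}(r_C, r_{C'}) \le \diam(T_{C'}) \le D_{j+1}$ (using that $C'$ is a cluster at scale $j+1$ and its spanning tree has diameter at most $D_{j+1}$), I get
\[
\|ABb\|_1 \;\le\; \sum_{j<i}\ \sum_{\substack{C\in\mathcal C\in\mathcal S_j,\ C'\in\mathcal C'\in\mathcal S_{j+1}\\ C\subseteq C'}} \Big|\sum_{v\in V'} b(v) f_{(C,C')}(v)\Big|\cdot D_{j+1}.
\]
This is exactly the quantity bounded in \Cref{cor:competitive-ratio}, which gives $\|ABb\|_1 \le O(i\gamma^3\valSNC\beta)\cdot\opt(b)$, yielding the claimed upper bound (absorbing the constant, or keeping it as stated).

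\textbf{Lower bound (feasibility).} For $\opt(b) \le \|ABb\|_1$, it suffices to show $ABb$ is itself a flow on $G_{S,D_i}$ satisfying demand $b$, since then its cost $\ell(ABb) = \|ABb\|_1$ (with respect to edge weights — here I should be careful whether the lemma's $\|\cdot\|_1$ is weighted or unweighted; if unweighted I instead bound the cost of $ABb$ by $\|ABb\|_1$ after noting weights are integers $\ge 1$, or more cleanly track the weighted norm throughout) is an upper bound on the optimum. To see $ABb$ satisfies demand $b$: by linearity it is enough to check that for each $v\in V'$, the column $(AB)_{\cdot, v}$ is a unit flow from $v$ to the fixed sink $v_{D_i}$ (and the zero flow when $v = v_{D_i}$). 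This is precisely \Cref{obs:flow-conservation}, modulo the observation that routing each product-flow amount $f_{C,C'}(v)$ along an actual path in $T_{C'}$ (rather than abstractly between centers) does not change the net divergence at any node — each such path contributes $+f_{C,C'}(v)$ at $r_{C'}$, $-f_{C,C'}(v)$ at $r_C$, and $0$ elsewhere, which is what $A$ encodes. Summing over all $(C,C')$ and using the telescoping already established in \Cref{obs:flow-conservation} gives net flow $b(v)\cdot(\mathbbm1_v - \mathbbm1_{v_{D_i}})$ from column $v$, hence $ABb$ routes $b$. Therefore $\opt(b) \le \ell(ABb) \le \|ABb\|_1$.

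\textbf{Main obstacle.} The routine parts are the triangle-inequality expansion and invoking \Cref{cor:competitive-ratio} and \Cref{obs:flow-conservation}. The one genuine subtlety I expect to spend care on is the bookkeeping between the \emph{weighted} transshipment cost $\ell(ABb)$ and the (possibly unweighted) $\ell_1$-norm $\|ABb\|_1$, together with the fact that a single directed edge of $G_{S,D_i}$ may be used by many different tree paths $r_C \to r_{C'}$ with cancellations; one must make sure the triangle inequality is applied at the level of path-vectors \emph{before} collapsing onto edges, so that no cancellation is lost in the upper bound while the divergence computation for the lower bound still goes through (there the cancellations are exactly what we want). Handling both directions consistently with the same normalization of $\|\cdot\|_1$ is the crux.
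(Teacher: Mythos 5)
Your proposal is correct and follows essentially the same route as the paper's proof: the lower bound via \Cref{obs:flow-conservation} (showing $ABb$ is a feasible transshipment flow for demand $b$, so its cost dominates $\opt(b)$), and the upper bound via the triangle inequality applied to the decomposition $ABb = \sum_{(C,C')}(Bb)_{(C,C')}\,A\mathbbm 1_{(C,C')}$, bounding each $\left\lVert A\mathbbm 1_{(C,C')}\right\rVert_1$ by the spanning-tree diameter $\le D_{j+1}$, and invoking \Cref{cor:competitive-ratio}. Your flagged subtlety about weighted vs.\ unweighted $\ell_1$-norm is a genuine notational gap in the paper (which silently treats $\|\cdot\|_1$ as the length-weighted norm, i.e.\ transshipment cost), but resolving it as you suggest reproduces the paper's argument rather than constituting a different one.
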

\begin{proof}
We begin with the first inequality. Observe that by construction, for each $v\in V$ and $\mathbbm 1_v$ as the indicator vector of $v$, the vector $AB\mathbbm 1_v$ is the final flow distribution on nodes $V'$ of the $v$-associated flow. By \Cref{obs:flow-conservation}, this flow distribution is $\mathbbm 1_{v_{D_i}}$, i.e., all flow eventually ends up at $v_{D_i}$. Since $\sum_{v\in V'}b(v)=0$, the sum of the flows $AB\mathbbm 1_v$ over all $v\in V'$ routes all demands successfully. In other words, $AB\mathbbm 1_v$ is a valid transshipment flow on $G_{S,D_i}$ satisfying demand $b$, so its cost $\left\lVert ABb\right\rVert_1$ is at least $\textsf{opt}(b)$.

We now prove the second inequality. By construction of $B$, entry $(C,C')$ of $Bb$ is exactly $\sum_vb(v)f_{(C,C')}(v)$. Next, for the indicator vector $\mathbbm 1_{(C,C')}$, by construction of $A$, we have $\left\lVert A\mathbbm 1_{(C,C')}\right\rVert_1$ is exactly the length of the path from $r_C$ to $r_{C'}$ in the spanning tree $T_{C'}$ of $C'$, which is at most $\alpha D_j$ for the level $j$ whose $\mathcal S_j$ contains cluster $C'$. It follows that
\begin{align*}
\left\lVert ABb\right\rVert_1 &= \left\lVert A\times\sum_{(C,C')}\left(\sum_{v\in V}b(v)f_{C,C'}(v)\right)\mathbbm 1_{(C,C')}\right\rVert_1 
\\&\le \sum_{(C,C')}\left|\sum_{v\in V}b(v)f_{C,C'}(v)\right|\left\lVert A\mathbbm 1_{(C,C')}\right\rVert_1
\\&\le \sum_{\substack{(C,C')\\C'\in\mathcal C'\in\mathcal S_j}}\left|\sum_{v\in V}b(v)f_{C,C'}(v)\right|\alpha D_j.
\end{align*}
By \Cref{cor:competitive-ratio}, this is at most $\alpha\cdot O(i\gamma^3\valSNC \beta)\cdot\textsf{opt}(b)$.
\end{proof}

Finally, we discuss distributed implementations of multiplying by $R$ and $R^T$ for $R=AB$.

\begin{lemma}\label{lem:distributed-oblivious}
Suppose each node $v\in V$ knows $\tilde d_C(v)$ for all clusterings $C$ containing $v$. For any vector $b\in\mathbb R^V$ where each node $v$ knows $b(v)$, we can compute $Rb$ in $\textup{poly}(\gamma\log n)$ \aggregate rounds (on network $G$) so that each entry $(u,v)$ of $Rb$ is known by both $u$ and $v$. Also, for any vector $c\in\mathbb R^{\vec{E}}$ where each entry $c_{(u,v)}$ is known by $u$\ and $v$, we can compute $R^Tc$ in $\textup{poly}(\gamma\log n)$ \aggregate rounds (on network $G$) so that each node $v\in V$ learns entry $v$ of $R^Tc$.
\end{lemma}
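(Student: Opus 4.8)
The matrix $R = AB$ is a product of two \emph{sparse} matrices, and the key observation is that each of $A$, $B$, $A^T$, $B^T$ can be applied via a bounded number of ancestor/subtree-sum computations along the cluster spanning trees $T_{C'}$, together with per-cluster aggregations. I would exploit \emph{node-disjoint scheduling}: for a fixed clustering $\mathcal C \in \mathcal S_j$, the clusters are vertex-disjoint, so all $\gamma$ clusters in $\mathcal C$ can be processed in parallel by running a single \aggregate computation on the vertex-disjoint subgraphs induced by their spanning trees. Since there are only $O(\log n)$ clusterings per scale and $O(\log n)$ scales, this multiplies the round complexity by only a $\poly(\gamma\log n)$ factor, which is the target.

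\textbf{Computing $Rb = A(Bb)$.} First compute $Bb$: entry $(C,C')$ of $Bb$ equals $\sum_{v}b(v)f_{C,C'}(v) = \big(\sum_{v\in C}b(v)p_C(v)/w_j(v)\big)\cdot p_{C'}(v)/w_{j+1}(v)$ --- wait, $f_{C,C'}(v)$ is \emph{not} separable across $C$ and $C'$ because the $v$ appears in both factors; so instead I would compute, for each pair $(C,C')$ with $C\subseteq C'$, the quantity $g_{C,C'} := \sum_{v\in C}b(v)f_{C,C'}(v)$ directly. Note $f_{C,C'}(v)\ne 0$ requires $v\in C$ (by Claim~\ref{obs:C-inside}, and $v\in C$ forces $v\in C'$), and each $v$ lies in at most $\gamma$ clusters of $\mathcal S_j$ and $\gamma$ of $\mathcal S_{j+1}$, so $v$ contributes to at most $\gamma^2$ pairs. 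Each node $v$ knows $\tilde d_C(v)$ for all its clusters (hence $p_C(v)$), and $w_j(v),w_{j+1}(v)$ are sums of $\le\gamma$ values broadcast within clusters via per-cluster aggregation on $T_C$; so $v$ can locally form its $\le\gamma^2$ contributions $b(v)f_{C,C'}(v)$. Summing these over $v\in C$ is a subtree-sum on $T_C$ (or on $T_{C'}$, indexing by which child-cluster the mass came from); doing this for all disjoint clusters of one clustering simultaneously, then iterating over the $O(\log^2 n)$ clusterings, gives all $g_{C,C'}$ with the cluster-center $r_C$ (and $r_{C'}$) holding the value. Then applying $A$: for each $(C,C')$, push $g_{C,C'}$ units along the $r_C$-to-$r_{C'}$ path in $T_{C'}$; aggregating over all pairs, this is again a collection of ancestor/subtree sums on each $T_{C'}$ (each edge's net flow is a signed sum over paths passing through it), done per-clustering-simultaneously. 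The result is a flow value on each $(u,v)\in\vec E$ of $G$, known to both endpoints. The competitive/correctness guarantee is exactly Lemma~\ref{lem:competitive-ratio}.

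\textbf{Computing $R^Tc = B^T(A^Tc)$} is the transpose of the above: applying $A^T$ to $c\in\mathbb R^{\vec E}$ produces, for each pair $(C,C')$, the signed sum of $c$ along the tree-path $r_C\!\to\! r_{C'}$ in $T_{C'}$ --- again an ancestor-sum computation on $T_{C'}$, computed for all clusters of a clustering in parallel. Then $B^T$ maps the pair-indexed vector back to nodes: node $v$ learns $\sum_{(C,C'):\,v\in C}f_{C,C'}(v)\cdot(A^Tc)_{(C,C')}$, which it can assemble since it knows $p_C(v),p_{C'}(v),w_j(v),w_{j+1}(v)$ and each of its $\le\gamma^2$ relevant pair-values can be delivered to $v$ via a broadcast/aggregation within $T_{C'}$.

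\textbf{Main obstacle.} The delicate point is the bookkeeping that makes ``apply $A$ (or $A^T$)'' reduce cleanly to ancestor/subtree sums: many pairs $(C,C')$ share the same tree $T_{C'}$, and their paths overlap, so one must aggregate all the $g_{C,C'}$ originating at various $r_C\in C'$ into a single weighted subtree-sum on $T_{C'}$ rooted at $r_{C'}$ (each node of $T_{C'}$ that is a center $r_C$ injects its mass, and we want net flow on each tree edge). This is exactly the ``ancestor and subtree sums'' primitive from \Cref{lemma:ancestor-subtree-sum}, but care is needed because a node $v\in C'$ may be the center of several nested clusters $C$ at different levels $j$, so one actually iterates the whole procedure once per level $j<i$ (only $O(\log n)$ of them) to keep the per-level trees well-defined, and relies on node-disjoint scheduling across clusters within each level. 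Everything else is routine: the total is $O(\log^2 n)$ clusterings $\times$ $O(1)$ tree-aggregation primitives $\times$ $\poly(\gamma)$ local work $= \poly(\gamma\log n)$ \aggregate rounds.
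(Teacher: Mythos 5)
Your proposal is correct and follows essentially the same approach as the paper's proof: locally computing $f_{C,C'}(v)$ from the $\tilde d_C(v)$ values, cluster-wide aggregations to collect the $\sum_{v\in C}b(v)f_{C,C'}(v)$ at cluster centers (with the crucial $\le\gamma$ bound coming from there being at most one $C'\supseteq C$ per clustering of $\mathcal S_{j+1}$), then subtree/ancestor sums on each $T_{C'}$ with masses initialized at the $r_C$ nodes to realize multiplication by $A$, and the transposed version for $R^T$. The only detail you omit is that the paper first passes to the virtual graph $G_{S,D_i}$ (via \Cref{thm:simulating-virtual-nodes}) since $R$ routes on $G_{S,D_i}$, not $G$; this is a minor bookkeeping point that costs only a constant-factor blowup.
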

\begin{proof}
We work on network $G_{S,D_i}$ instead, which can be simulated on network $G$ with a constant-factor blowup by \Cref{thm:simulating-virtual-nodes}.

To multiply $Rb$ for vector $b\in\mathbb R^V$, we first multiply $Bb$ as follows. For each $j<i$ and clusters $C\in\mathcal C\in\mathcal S_j$ and $C'\in\mathcal C'\in\mathcal S_{j+1}$ containing $v$, the node $v$ locally computes $f_{(C,C')}(v)$ using $\tilde d_C(v)$ and $\tilde d_{C'}(v)$. Then, for each level $j<i$ and each clustering $\mathcal C\in\mathcal S_j$, we perform a minor aggregation on each cluster $C\in\mathcal C$ to compute the sum $\sum_{v\in C}b(v)f_{(C,C')}(v)$ for all $C'$ (where $f_{(C,C')}(v)=0$ if $v\notin C'$). Here, we crucially use the fact that we only consider clusters $C'$ with $C\subseteq C'$, which is at most $\gamma$ many since there is at most one such cluster $C'$ in each clustering of $\mathcal S_{j+1}$. As a result of the aggregation, for each pair $(C,C')$, the center $r_C$ of $C$ knows entry $(C,C')$ in $Bb$. Finally, to compute $A(Bb)$, perform the following separately for each $j<i$ and each clustering $\mathcal C'\in\mathcal S_{j+1}$: for each cluster $C'\in\mathcal C'$ in parallel, take the spanning tree $T_{C'}$ rooted at $r_{C'}$ and for each $(C,C')$, initialize at node $r_C\in C'$ the value of entry $(C,C')$ in $Bb$, which is known by $r_C$. Call \Cref{lemma:ancestor-subtree-sum} to obtain all subtree sums, and for each node $v\in C'\setminus\{r_{C'}\}$ with parent $u$, the subtree sum at $v$ is exactly the contribution of the paths in cluster $C'$ to the total flow along edge $(u,v)$, possibly negated depending on the orientation of edge ($u,v)$. Each entry $(u,v)$ of $A(Bb)$ then becomes the sum of all such contributions over all $C'$ whose spanning tree $T_{C'}$ contains edge $(u,v)$,  This concludes multiplication by $R$.

To multiply $R^Tc$ for $c\in\mathbb R^{\vec{E}}$, we first multiply $A^Tc$ as follows. Perform the following separately for each $j<i$ and each clustering $\mathcal C'\in\mathcal S_{j+1}$: for each cluster $C'\in\mathcal C'$ in parallel, take the spanning tree $T_{C'}$ rooted at $r_{C'}$ and for each edge $(u,v)$ in $T_{C'}$, initialize the value of entry $(u,v)$ in $c$. Then, in $\textup{poly}(\log n)$ \aggregate rounds, we can compute, for each node $v\in C'$, the sum of all entries on the path from $r_{C'}$ to $v$ in $T_{C'}$ (\Cref{lemma:ancestor-subtree-sum}). In particular, for each pair $(C,C')$, this sum is entry $(C,C')$ of $A^Tc$, which node $r_C$ now knows. Finally, to compute $B^T(A^Tc)$, each cluster $C$ broadcasts entry $(C,C')$ of $A^Tc$ to all its nodes, which is again at most $\gamma$ many since we only consider clusters with $C\subseteq C'$. Each node $v$ can then locally compute $f_{(C,C')}(v)$ to learn its entry in $B^T(A^Tc)$.
\end{proof}


Together, \Cref{lem:competitive-ratio-final,lem:distributed-oblivious} conclude the first part of \Cref{thm:oblivious_routing}. 

For the last statement of the theorem, if $D_{i-1}/\valSNC$ is larger than the diameter of $G$, then we modify the sparse neighborhood covers as follows. For each $j<i$, remove the singleton clusters $\{v_{D_i}\}$ and $\{s^*\}$ from each clustering in $\mathcal S_j$ (and remove their potentials $\tilde d_{\{v_{D_i}\}}(v_{D_i})$ and $\tilde d_{\{s^*\}}(s^*)$). For the single clustering $V'$ in $\mathcal S_i$, replace it with a single cluster $V$ with an arbitrary spanning tree $T$ of $G$ with diameter at most $D_{i-1}$, which we show how to find later. Once again, we trivially define $\tilde d_V(v)=D_i$ for all $v\in V$. We construct the oblivious routing and matrices $A,B$ in the same way. By the same arguments, this is a $\poly(\log n)$-competitive $\ell_1$-oblivious routing matrix for $G$.

We now discuss how to compute the spanning tree $T$ of $G$ with diameter at most $D_{i-1}$. Since $\mathcal S_{i-1}$ is a sparse cover of covering radius $D_{i-1}/\valSNC$, there exists a clustering in $\mathcal S_{i-1}$ such that the ball of radius $D_{i-1}/\valSNC$ around any $v\in V$ is contained in some cluster $C$. Since $\beta^{i-1}/\valSNC$ is greater than the diameter of $G$, this cluster $C$ must contain all the vertices $V$. Therefore, the spanning tree $T_C$ of diameter at most $D_{i-1}$ is our desired spanning tree of $G$.

This concludes our distributed $\ell_1$-oblivious routing algorithm. 

\paragraph{Remark.}

Sparse matrices are particularly nice because vector-matrix products can trivially be computed in sparsity work and polylog n depths in \pram (once the matrix is computed). Unfortunately even for a path graph there cannot exist a sparse $\vec{E} \times V$ $l_1$-oblivious routing matrix.\footnote{Here is a simple argument that the oblivious routing matrix $R$ must have $\Omega(n^2)$ nonzero entries. Let the vertices of the path be $v_1,\ldots,v_n$ in that order, and assume for simplicity that $n$ is even. For each $i$ from $1$ to $n/2$, consider the demand $\mathbb{1}_{v_i} - \mathbb{1}_{v_{i+n/2}}$. Clearly, the optimal routing routes $1$ flow along each edge from $v_i$ to $v_{i+n/2}$, but note that \emph{any} feasible routing must route nonzero flow along those edges. This means that $R(\mathbb{1}_{v_i} - \mathbb{1}_{v_{i+n/2}})$ has $\Omega(n)$ nonzero entries, which implies that columns $v_i$ and $v_{i+n/2}$ in $R$ together have $\Omega(n)$ nonzero entries. Summed over all $i$, this is $\Omega(n^2)$ nonzero entries in $R$.} However as the next remark shows we easily can write our routing matrix $R$ as a product of two sparse $\tilde{O}(n) \times \tilde{O}(n)$ matrices. Vertex-matrix products for such a matrix product are similarly easy to evaluate. 

Note that $|P|=O(n\log^2 n)$ and that $B$ is a $\tilde{O}(n) \times n$ with $\tilde{O}(n)$ non-zero entries - at most $O(\log^2 n)$ per node. Indeed each node has its demand routed through only $O(\log^2 n)$ clusters-centers, namely the ones it is contained in, which is at most one for each of the $O(\log^2 n)$ clusterings. The matrix $A$ as defined above is not sparse as most paths between cluster centers can be $O(n)$ long. However we can decompose these paths in such a way that $A$ can be written as a product of two sparse matrices. For this note that for the paths from smaller clusters to larger clusters all come from the spanning tree $T_C$ of this clustering. Decompose this spanning tree into a set of polylog n many paths by first doing a heavy-light decomposition of the tree and then taking each of the resulting paths and for every power of two $2^j$ cutting it into segments by cutting the path at any node which is a multiple of $2^j$ (starting from any end). For a clustering do this for all spanning trees of all clusters simultaneously. This results in $O(n \log n)$ paths per clustering and $O(n \log^3 n)$ paths overall. Call the set of these paths $P'$. Note that any path between the cluster centers of two clusters $C,C'$ with $C \subset C'$ along the spanning tree $T_{C'}$ can be written as a concatenation of $O(\log n)$ paths from $P'$. Moreover each edge is used by at most $O(\log^3 n)$ paths in $P'$ since it is a union of $O(\log^2 n)$ disjoint set of paths and there are at most $O(\log n)$ intervals per paths that contain an edge $e$. Overall, we can write the matrix $A$ as a product of a sparse $\vec{E} \times P'$ matrix $A_1$ with $O(\log^2 n)$ one entries per edge (indeed there is only a $m' = O(n \log^2 n)$ size subset of edges, namely the union of the spanning trees for each clustering, which are used by the routing at all, all other edges never carry any flow) and a $P' \times P$ matrix $A_2$ which assigns each cluster center pair to at most $O(\log n)$ paths in $P'$. This writes $R=A_1 A_2 B$ as the product of three sparse matrices. Moreover, since every node is mapped to at most $O(\log^3 n)$ cluster pairs by $B$ and each cluster pair is mapped to at most $O(\log n)$ paths from $P'$ the matrix-product of $A_2 B$ is sparse and assigns each node to $O(\log^4 n)$ paths in $P'$. In total this writes $R$ as a product of a $m' \times O(n \log^3 n)$ matrix with is $O(\log^2 n)$-row sparse and a $O(n \log^3 n) \times n$ matrix which is $O(\log^4 n)$-column sparse. Given the distance structures and their trees the heavy-light decomposition can be computed in $\tilde{O}(1)$ \aggregate rounds \Cref{lemma:ancestor-subtree-sum} and therefore also in PRAM.

\section{Transshipment Flow Rounding}
\label{sec:rounding}

\newcommand{\lrE}{\overleftrightarrow{E}}
\newcommand{\rH}{\overrightarrow{H}}

In this section, we prove the following theorem. Note that the returned solution is exact (i.e., has $\eps = 0$), even though an $(1+\eps)$-approximate solution would suffice. We remind the reader of the definitions of $\oRound_{0}$ and $\oEuler$ from \Cref{def:oracle_rounding,def:oracle_euler}.

\begin{theorem}[Rounding]
\label{thm:rounding}
Let $G$ be a weighted graph. The oracle $\oRound_{0}$ can be implemented in $\poly(\log n)$ \aggregate rounds and $\poly(\log n)$ oracle calls to $\oEuler$. 
\end{theorem}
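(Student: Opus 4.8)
The plan is to prove Theorem~\ref{thm:rounding} by turning a fractional single-source transshipment flow $f$ (satisfying a demand $b$ with $b(v)\ge 0$ for all $v$ except the source $s$) into a flow supported on a tree, losing nothing in cost, using only $\poly(\log n)$ \aggregate rounds and $\poly(\log n)$ calls to $\oEuler$. The starting point is the observation from the introduction that the obstruction is entirely about \emph{cycles} in the support of $f$: if the support were already a forest, retaining the component containing $s$ would finish the job. So the algorithm should be an iterative cycle-cancellation procedure, where each round of cancellation is implemented via an Eulerian orientation supplied by $\oEuler$.

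Here is the sequence of steps I would carry out. \textbf{Step 1 (Scaling and integralization, following Cohen~\cite{cohen1995approximateMaxFlow}).} Rescale $f$ by a large power of two so that all flow values are integers in some range $[0,M]$; equivalently work with a multigraph $\rH$ where edge $(u,v)$ with flow value $k$ is replaced by $k$ parallel arcs carrying one unit each. The parity obstruction: a node $v$ may have an odd imbalance between the number of unit-arcs we want to keep pointing in vs.\ out. \textbf{Step 2 (Halving via Eulerian tours).} Consider the set of edges with odd flow value at the current scale; more precisely, build an auxiliary Eulerian subgraph $H$ capturing the ``half-integral part'' (after dividing by the current power of two, the edges whose scaled value is odd), possibly adding a virtual node connected to all odd-degree vertices to make $H$ Eulerian — this is exactly what the virtual-node allowance in $\oEuler$'s definition (Definition~\ref{def:oracle_euler}) is for. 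Call $\oEuler$ on $H$ to obtain an Eulerian orientation, and push half a unit of flow around each Eulerian circuit, choosing the direction that does not increase cost (the two directions have opposite cost changes, so one of them is $\le 0$). After this, all flow values become integral at the coarser scale, so the scaling factor can be halved. Iterating $O(\log M) = O(\log n)$ times (since weights are polynomially bounded and we can truncate tiny flow values, or bound $M$ by the encoding size), we reach a flow with all values in $\{0,1\}$, i.e., a subgraph $\rH_0$ of $G$, still of cost $\le \ell_H(f)$ and still routing $b$ (up to the integrality of $b$; if $b$ is non-integral one first scales $b$ too, or routes the fractional residual separately — I would handle the reduction to integral demands up front). \textbf{Step 3 (Killing cycles in the integral support).} Now $\rH_0$ may still contain directed or undirected cycles. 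The key insight is that a unit-capacity subgraph routing a single-source demand, once all useless (zero-net-flow) components are discarded, decomposes into paths from $s$; any cycle in it can be cancelled without changing the demand routed and without increasing cost. To do this in parallel: find a cycle structure — e.g.\ take the subgraph of edges not on any $s$-to-sink shortest-path-like skeleton, or more directly, repeatedly take a maximal set of edge-disjoint cycles via an Eulerian orientation of the (even-degree part of the) current support and cancel them. One clean way: look at $\rH_0$ as an undirected graph; its cyclomatic part can be exposed by computing a spanning forest and examining non-tree edges, but in \aggregate we instead use $\oEuler$ again: orient the maximal Eulerian subgraph of $\rH_0$ and subtract $\min$-cost circulations along its circuits, which removes at least the ``2-edge-connected mass.'' Repeat $O(\log n)$ times; each iteration strictly decreases a potential (number of edges, or number of independent cycles) by a constant factor, so after $\poly(\log n)$ rounds the support is a forest.

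\textbf{Step 4 (Extracting the tree and bounding cost).} Once the support is a forest, keep only the tree component $T$ containing $s$ and discard the rest (the discarded components carry zero net flow and hence zero demand, so correctness is preserved). Root $T$ at $s$; it spans exactly the vertices with nonzero demand (plus possibly some Steiner-like internal vertices, which is fine for the stated guarantee). For the cost bound: since every halving step and every cycle-cancellation step only decreased (or kept) the cost, and $\sum_{v:b(v)\ne0} b(v)\dist_T(s,v)$ equals the cost $\ell_H(f')$ of the final tree-flow (each unit of demand to $v$ travels the unique $s$-$v$ path in $T$, whose length is $\dist_T(s,v)$, and by flow conservation these add up to $\ell_H(f')$), we get $\sum_{v:b(v)\ne0} b(v)\dist_T(s,v) = \ell_H(f') \le \ell_H(f)$, giving the claim even with $\eps'=0$. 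Each \aggregate step uses ancestor/subtree-sum primitives (Lemma~\ref{lemma:ancestor-subtree-sum}) on the trees $T_{C'}$-style skeletons and spanning trees, and $\poly(\log n)$ rounds, matching the stated complexity; the virtual-node operations stay within the $\tilde O(1)$ budget since each $\oEuler$ call adds only one auxiliary vertex.

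The main obstacle I anticipate is \textbf{Step 3}: while ``cancel cycles'' is conceptually simple, doing it with only polylog many $\oEuler$ calls requires a genuine progress measure — a single Eulerian-circuit cancellation need not reduce the number of cycles by a constant \emph{fraction}, and naively one could be cancelling cycles one at a time ($\Omega(n)$ rounds). The resolution will be to cancel a \emph{maximal edge-disjoint collection} of cycles in one shot per round (this is exactly what one Eulerian orientation of the current even-degree support, plus circuit decomposition, yields), and to argue this geometrically halves either the edge count or the first Betti number of the support; combined with the $O(\log n)$ halving rounds of Step~2 this keeps everything within $\poly(\log n)$ oracle calls. A secondary subtlety is making sure the intermediate flows always remain valid transshipment flows for the \emph{same} demand $b$ — this is guaranteed because Eulerian-circuit pushes and cycle cancellations are circulations, which do not change the net demand at any vertex — and that the auxiliary graphs fed to $\oEuler$ genuinely satisfy its input requirements (Eulerian, subgraph of $G$ plus $\tilde O(1)$ virtual nodes), which the parity bookkeeping in Steps~1--2 ensures.
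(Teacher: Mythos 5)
Your plan diverges from the paper's in a way that leaves a real gap. The paper does \emph{not} first integralize and then kill cycles; it interleaves tree extraction with bit-scaling. Concretely, the paper runs phases $i = 0, 1, 2, \ldots$ (increasing granularity), maintaining a flow $f_i$ whose values are multiples of $2^i$, and in each phase it peels off a forest-supported flow $\tilde f_i$ carrying $2^i$ on each arc. The forest $B_i$ is chosen as the subgraph of an out-arborescence $A_i$ (computed via an Edmonds-style contraction algorithm, \cref{lem:arbor}) consisting of exactly those arcs whose subtree contains an odd number of demand-odd vertices. This parity choice is what guarantees that the residual $f_i - \tilde f_i$ has even demand everywhere, so that the ``odd'' arcs form an Eulerian graph, and one call to $\oEuler$ then rounds the residual up to multiples of $2^{i+1}$. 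The final tree is recovered from the $B_i$'s by a distance-label/parent-pointer merge. In this scheme there is never a moment where one must ``cancel all cycles in an integral flow.''

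Your Step~3 has two problems, one you flag yourself and one you don't. The one you flag -- that a single round of edge-disjoint cycle cancellation needn't halve anything -- is not resolved by the sketch you give: the flow's support is generally not Eulerian (demand vertices typically have odd degree), the ``maximal Eulerian subgraph'' is itself nontrivial to isolate in the \aggregate model, and no argument is given that cancelling the circuits of one Eulerian orientation halves the edge count or the Betti number. But the deeper issue is what happens \emph{after} all (directed or undirected) cycles are gone: an acyclic integral single-source flow need not be forest-supported. For instance, $s\!\to\!a$, $s\!\to\!b$, $a\!\to\!c$, $b\!\to\!c$, all with unit flow and demand $-2$ at $s$, $+2$ at $c$: this flow has no directed cycle, and cancelling the undirected $4$-cycle requires \emph{re-routing} (pushing a unit around it to concentrate the flow on $s\!\to\!a\!\to\!c$), not just discarding. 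The paper's introduction explicitly calls this out: ``we are not aware of any simple(r) way of deterministically obtaining a tree-supported flow even if $f$ is acyclic.'' Your Step~4 assumes the support becomes a forest and you can simply ``keep the tree containing $s$,'' which it won't be. This is precisely the difficulty the arborescence/subtree-parity construction of $B_i$ is designed to overcome, and without an analogue of it your reduction to $\oEuler$ does not go through.
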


The rest of the section is dedicated to the proof. Our algorithm is inspired by Cohen's rounding algorithm for integral flow~\cite{cohen1995approximateMaxFlow} in which one is given a fractional flow that satisfies integral demand constraints, and the algorithm constructs an integral flow of non-increased cost that satisfies the same demand. The algorithm uses bit-scaling, meaning that for $i = \log n, \ldots, 0$ (in decreasing order) it rounds the flow values to multiples of $2^{-i}$. The main observation is that, in the $i$-th step, the set of edges which are not multiples of $2^{-i}$ (but are multiples of $2^{-i-1}$ by the invariant), form a Eulerian graph (all nodes have even degrees). Such a graph can be decomposed into a collection of directed cycles, and on each cycle we either add or subtract $2^{-i-1}$. This does not impact the demand the flow satisfies, but makes the flow values multiples of $2^{-i}$, as requested. We extend this algorithm to round a fractional flow into a forest (or more precisely, a disjoint union of out-arborescences).


For simplicity, we assume that each $b(v)$ is now an integer with absolute value $\textup{poly}(n)$. (This can be guaranteed by first multiplying each $f(u,v)$ by a large enough $\textup{poly}(n)$ so that each $f(u,v)$ and $b(v)$ is integral, and then computing the tree $T$ with this modified instance, which is still a satisfying tree for the original instance since both $\sum_{v:b(v)\ne0}b(v)d_T(s,v)$ and $\ell(f)$ are scaled by the same amount.)

Our algorithm proceeds in phases, from phase $i=0$ up to $\lceil\lg(-b(s))\rceil$. Initialize $f_0=f$; we maintain the invariant that each flow $f_i$ satisfies some demand $b_i$ whose entries are all integer multiples of $2^i$. 
In each phase $i$, our goal is to split $f_i$ into two flows $f_{i+1}$ and $\tilde f_i$ (i.e., $f_i=f_{i+1}+\tilde f_i$) such that (1) flow $\tilde f_i$ is supported on a collection of arc-disjoint paths with $f(u,v)=2^i$ for each arc $(u,v)$ on some path, and (2) all flow values in $f_{i+1}$ are integer multiples of $2^{i+1}$.
Since the initial $f_0$ is assumed to be integral, property~(2) means that we maintain the invariant that each flow $f_j$ has values that are integer multiples of $2^j$.

In each phase $i$, the algorithm does the following:
 \begin{itemize}
 \item Compute an out-arborescence $A_i$ rooted at $s$ supported by arcs $\{(u,v)\in\lrE:f_i(u,v)>0\}$ and spanning all vertices $v\in V:b_i(v)>0$. We later show in \Cref{lem:arbor} that this can be done in $O(\log n)$ rounds of \aggregate.
 \item Let $O_i=\{v\in V:b_i(v)\bmod2^{i+1}=2^i\}$ ($O_i$ for ``odd''). We construct a subgraph $B_i$ of $A_i$ by taking all arcs $(u,v)$ in $A_i$ where the subtree rooted at $v$ in $A_i$ contains an odd number of vertices in $O_i$.
 \item Define flow $\tilde f_i:\lrE\to\mathbb R_{\ge0}$ as $\tilde f_i(u,v)=2^i$ if $(u,v)$ is in $B_i$, and $\tilde f_i(u,v)=0$ otherwise.
 \item Construct a directed graph $H_i$ as follows: for each arc $(u,v)\in\lrE$ satisfying\linebreak $(f_i(u,v)-\tilde f_i(u,v))\bmod2^{i+1}=2^i$, add arc $(u,v)$ to $H_i$. We will prove that each vertex in $H$ has an even number of incident arcs. 
 \item Since the undirected version of $H$ is Eulerian, we call $\oEuler$ on the undirected $H$ to obtain an edge orientation where each vertex in $H$ has the same number of in-edges and out-edges. Define function $h_i:\lrE\to\mathbb R$ as $h_i(u,v)=2^i$ if the orientation of $(u,v)$ is $(u,v)$, and $h_i(u,v)=-2^i$ if it is $(v,u)$, and $h_i(u,v)=0$ for $(u,v)$ not in $H$.
 \item Define $f_{i+1}$ as either $f_i-\tilde f_i\pm h_i$, where we choose $+$ or $-$ depending on whichever minimizes $\ell(f_{i+1})$.
 \end{itemize}
Using $f_i$ and $B_i$, the algorithm now recovers a tree $T$ such that $\sum_{v\ne s}b(v)d_T(s,v)\le \ell(f)$ as follows. Initialize distance upper bounds $\tilde d(s)=0$ and $\tilde d(v)=\infty$ for all $v\ne s$, and initialize the parent function $p(v)=\bot$ for all $v\in V$. For $i$ in decreasing order, do the following:
 \begin{itemize}
 \item Partition $B_i$ into a (disjoint) set of maximal arborescences. For each such arborescence $A \subseteq B_i$ on vertices $U$ with root $r\in U$, set $\tilde d(u) \gets \min\{\tilde d(u),\tilde d(r)+d_{A}(r,u)\}$ for all $u\in U$ (in parallel). Also, for all $u\in U\setminus\{r\}$ whose value $\tilde d(u)$ was lowered, set $p(u)$ to the parent of $u$ in $A$. We note that the exact distances to the root of the arborescence can be computed in $\tilde{O}(1)$ rounds (e.g., via \Cref{lemma:ancestor-subtree-sum}).
 \end{itemize}

\subsection{Correctness}

\begin{lemma}\label{lem:structure}
The following properties are satisfied for each $i$:
 \begin{enumerate}
 \item $f_{i+1}+\tilde f_i$ satisfies demand vector $b_i$.
 \item All flow values in $f_{i+1}$ are integer multiples of $2^{i+1}$.
 \item For the demands $b_{i+1}$ satisfied by $f_{i+1}$, we have $b_{i+1}(v)\ge0$ for all $v\ne s$.
 \item The conditions of \Cref{lem:arbor} are satisfied on flow $f_i$, so that an arborescence $A_i$ rooted at $s$ is successfully found.
 \item The flow $\tilde f_i$ satisfies $\tilde f_i(u,v)\le f_i(u,v)$ for all $(u,v)\in\lrE$.
 \item The flow $f_i-\tilde f_i$ satisfies a demand vector whose entries are multiples of $2^{i+1}$.
 \item The graph $H_i$ is Eulerian.
 \item We have $f_{i+1}(u,v)\ge0$ for all $(u,v)\in\lrE$ and $\ell(f_{i+1})+\ell(\tilde f_i)\le \ell(f_i)$.
 \end{enumerate}
\end{lemma}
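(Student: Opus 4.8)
The natural route is induction on $i$, carrying as the inductive hypothesis the three already-announced invariants for the start of phase $i$: that $f_i$ is a (nonnegatively representable) flow all of whose values are integer multiples of $2^i$, that $f_i$ satisfies a demand $b_i$ whose entries are also integer multiples of $2^i$, and that $b_i(v)\ge 0$ for every $v\ne s$. The base case $i=0$ is exactly the hypothesis of $\oRound_0$ after the integrality preprocessing. Granting the invariants at the start of phase $i$, I would prove the eight items essentially in the order $(4),(5),(6),(7),(8),(2),(1),(3)$, the last being arranged so that $(1),(2),(3),(8)$ simultaneously re-establish the invariants for phase $i+1$.

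Items $(4)$ and $(5)$ are quick: since $f_i\ge 0$ and $b_i$ is a single-source demand, the hypotheses of \Cref{lem:arbor} hold on $f_i$, producing the out-arborescence $A_i$; and since every arc $(u,v)$ of $A_i$ (hence of $B_i\subseteq A_i$) has $f_i(u,v)>0$ with $f_i(u,v)$ a multiple of $2^i$, we get $f_i(u,v)\ge 2^i=\tilde f_i(u,v)$, while $\tilde f_i$ vanishes off $B_i$; in particular $g:=f_i-\tilde f_i\ge 0$ has values that are multiples of $2^i$. The technical core is a parity computation on $B_i$: let $k_v$ count the vertices of $O_i$ in the subtree of $A_i$ rooted at $v$, so that by construction the arc into $v$ lies in $B_i$ iff $k_v$ is odd and $k_v=\one{v\in O_i}+\sum_{c\text{ a child of }v}k_c$. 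Writing $\tilde b_i$ for the demand routed by $\tilde f_i$, a one-line reduction modulo $2^{i+1}$ of $\tilde b_i(v)=\one{k_v\text{ odd}}\,2^i-\bigl(\#\{\text{children }c:\ k_c\text{ odd}\}\bigr)2^i$ gives $\tilde b_i(v)\equiv\one{v\in O_i}\,2^i\equiv b_i(v)\pmod{2^{i+1}}$ for $v\ne s$, and the $s$-coordinate is forced by $\sum_v\tilde b_i(v)=\sum_v b_i(v)=0$; hence $f_i-\tilde f_i$ satisfies a demand with all entries multiples of $2^{i+1}$, which is $(6)$. Item $(7)$ follows by a second mod-$2$ count: the number of arcs of $H_i$ at $v$ is, modulo $2$, equal to $(g_{\mathrm{in}}(v)-g_{\mathrm{out}}(v))/2^i=(b_i(v)-\tilde b_i(v))/2^i$, which is even by $(6)$.

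For the update step, note first that $H_i$ being Eulerian makes the orientation returned by $\oEuler$ a circulation, so $h_i$ has zero demand; thus $f_{i+1}+\tilde f_i=f_i\pm h_i$ satisfies exactly $b_i$ (item $(1)$) and $f_{i+1}$ satisfies $b_{i+1}:=b_i-\tilde b_i$. Item $(2)$ is then immediate: on every arc of $H_i$ the value of $g$ is an odd multiple of $2^i$ and $h_i$ moves it by $\pm 2^i$, landing on a multiple of $2^{i+1}$, while off $H_i$ the value is unchanged from $g$, which is a multiple of $2^{i+1}$ there. For $(8)$, nonnegativity holds because every arc touched by $h_i$ carries at least $2^i$ of $g$ in $g$'s direction, so even subtracting $2^i$ keeps the value $\ge 0$; for the cost inequality I would partition the edges of $H_i$ into those oriented with ($E^+$) and against ($E^-$) the direction of $g$, observe that the ``$+h_i$'' update changes $\ell(\cdot)$ by $2^i\sum_{E^+}\ell(e)-2^i\sum_{E^-}\ell(e)$ and the ``$-h_i$'' update by the negative of this, so the cheaper choice has cost at most $\ell(g)$; combined with $\ell(f_i)=\ell(f_i-\tilde f_i)+\ell(\tilde f_i)$ (no cancellation, since $\tilde f_i$ lies inside $f_i$ in the same direction on each arc) this gives $\ell(f_{i+1})+\ell(\tilde f_i)\le\ell(f_i)$. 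Finally $(3)$ needs the sharper bound $\tilde b_i(v)\le b_i(v)$ for $v\ne s$, obtained from the same subtree-parity case analysis: if $v$'s in-arc is absent from $B_i$ then $\tilde b_i(v)\le 0$; if it is present and $v\notin O_i$ then an odd (hence positive) number of $v$'s out-arcs lie in $B_i$, again forcing $\tilde b_i(v)\le 0$; and if it is present with $v\in O_i$ then $b_i(v)\ge 2^i\ge\tilde b_i(v)$. In every case $b_{i+1}(v)=b_i(v)-\tilde b_i(v)\ge 0$, which together with $(1),(2),(8)$ reinstates the invariants for phase $i+1$ and closes the induction.

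I expect the two parity arguments to be the main obstacle: pinning down that the demand $b_i-\tilde b_i$ of $f_i-\tilde f_i$ is a clean multiple of $2^{i+1}$ from the subtree-count definition of $B_i$ (item $(6)$, which also drives $(7)$ and the Euler call), together with the companion case analysis giving $\tilde b_i(v)\le b_i(v)$ for item $(3)$. The cost bookkeeping in $(8)$ is the only other place requiring genuine care; everything else is routine propagation of the invariants.
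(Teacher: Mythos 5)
Your proposal is correct and follows essentially the same route as the paper: the same induction carrying items (2), (3), (8) as the start-of-phase invariants, the same proof order (modulo swapping the last two items), the same parity computation on subtree counts for item (6) (your $k_v$ is the paper's $N_v$), the same Euler-degree argument for (7), and the same averaging/linearity argument for the cost bound in (8). The only cosmetic difference is that you handle the $s$-coordinate in item (6) by the forcing argument $\sum_v \tilde b_i(v)=\sum_v b_i(v)=0$ rather than checking it directly, which is a perfectly valid (and arguably cleaner) way to sidestep the fact that $s$ has no parent arc.
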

\begin{proof}
We prove the properties together by induction on $i\ge0$, where we assume properties~(2)~and~(3) for $i-1$ if $i>0$ (and obtain them for free when $i=0$) and prove all nine properties for $i$.
 \begin{enumerate}
 \item[(4)] Property~(4) follows from property~(3) on $i-1$.
 \item[(5)] We have $\tilde f_i(u,v)=2^i$ for all $(u,v)$ in $B_i$, which is contained in $A_i$, which is comprised of arcs $(u,v)\in \overrightarrow E^+$ with $f_i(u,v)>0$. By property~(2) on $i-1$, we also have $f_i(u,v)$ is a multiple of $2^i$, so $f_i(u,v)\ge2^i$. Property~(5) follows.
 \item[(6)] For each $v$, let $N_v$ denote the number of vertices in $O_i$ in the subtree rooted at $v$ in $A_i$. Let $C_v$ denote the children of $v$ in $A_i$. Then, the summation $N_v+\sum_{u\in C_v}N_u$ double-counts every vertex in $O_i$ in the subtree rooted at $v$, except possibly $v$. In other words, the summation is odd iff $v\in O_i$. Equivalently, $v\in O_i$ iff there are an odd number of odd terms in the summation, which by definition of $B_i$ means that $v$ is incident to an odd number of arcs in $B_i$. By construction of $\tilde f_i$, it follows that for the demand vector $\tilde b_i$ satisfied by $\tilde f_i$, we have $\tilde b_i(v)\equiv2^i\bmod2^{i+1}$ iff $v\in O_i$.

Therefore, if $v\in O_i$, then $b_i(v)-\tilde b_i(v)\equiv 2^i-2^i\equiv0\bmod2^{i+1}$. By property~(2) on $i-1$, the values of flow $f_i$ are all multiples of $2^i$, so for $v\notin O_i$, we have $b_i(v)\equiv0\bmod2^{i+1}$, which along with $\tilde b_i(v)=0\bmod2^{i+1}$ gives $b_i(v)-\tilde b_i(v)\equiv0\bmod2^{i+1}$, finishing property~(6).
 \item[(7)] By property~(6) proved above, $b_i(v)-\tilde b_i(v)\equiv0\bmod2^{i+1}$. We can write $b_i(v)-\tilde b_i(v)$ as 
\begin{align*}
 b_i(v)-\tilde b_i(v)&=\sum_{u:(u,v)\in\lrE}\big((f_i(u,v)-f_i(v,u))-(\tilde f_i(u,v)-\tilde f_i(v,u))\big) 
\\&= \sum_{u:(u,v)\in\lrE}\big( (f_i(u,v)-\tilde f_i(u,v))-(f_i(v,u)-\tilde f_i(v,u)) \big) .
\end{align*}
 By property~(2) on $i-1$ and by construction of $\tilde f_i$, all values $f_i(u,v)-\tilde f_i(u,v)$ are multiples of $2^i$. It follows that there are an even number of $(u,v)$ or $(v,u)$ satisfying $f_i(u,v)-\tilde f_i(u,v)\equiv2^i\bmod2^{i+1}$ or $f_i(u,v)-\tilde f_i(u,v)\equiv2^i\bmod2^{i+1}$ respectively. It follows that $H_i$ is Eulerian, concluding property~(7).
 \item[(8)] We first prove $f_{i+1}(u,v)\ge0$ for all $(u,v)\in\lrE$. Recall that $f_{i+1}=f_i-\tilde f_i\pm h_i$, so by property~(5), it suffices to only consider $(u,v)$ in $H$. By construction, we have $(f_i(u,v)-\tilde f_i(u,v))\bmod2^{i+1}=2^i$  for all arcs $(u,v)$ in $H$, and along with property~(5), this means that $f_i(u,v)-\tilde f_i(u,v)\ge2^i$. Adding or subtracting $h_i(u,v)$ can decrease the value by at most $2^i$, so $f_i(u,v)-\tilde f_i(u,v)\pm h_i(u,v)\ge0$.

Finally, $\ell(f_{i+1})+\ell(\tilde f_i)\le \ell(f_i)$ follows from the fact that $\frac12\big(\ell(f_i-\tilde f_i+h_i)+\ell(f_i-\tilde f_i-h_i)\big)=\ell(f_i-\tilde f_i)$ and that we take the one with smaller cost.
 \item[(2)] By property~(2) on $i-1$ and by construction of $\tilde f_i$, all values $f_i(u,v)-\tilde f_i(u,v)$ are multiples of $2^i$. If $(f_i(u,v)-\tilde f_i(u,v))\bmod2^{i+1}=2^i$, then arc $(u,v)$ is added to $H$, and $h_i(u,v)=\pm2^i$, so $f_{i+1}(u,v)\bmod2^{i+1}=0$. If $(f_i(u,v)-\tilde f_i(u,v))\bmod2^{i+1}=0$, then $(u,v)$ is not added to $H$, so $h_i(u,v)=0$ and $f_{i+1}(u,v)\bmod2^{i+1}=0$.
 \item[(3)] Let $\tilde b_i$ be the demand vector for $\tilde f_i$; we first show that $b_i-\tilde b_i\ge0$. By property~(2) and property~(3) on $i-1$, we have that $b_i(v)$ is a nonnegative multiple of $2^i$. Consider now flow $\tilde f_i$ that satisfies demand vector $\tilde b_i$. The only way for $\tilde b_i(v)>0$ is if the parent arc $(u,v)$ is in $B_i$ and no child arc is in $B_i$, in which case $\tilde b_i(v)=2^i$. By construction of $B_i$, this can only happen if $v\in O_i$, which gives $b_i(v)-\tilde b_i(v)\ge2^i-2^i\ge0$. For all other $v\ne s$, we have $b_i(v)-\tilde b_i(v)\ge b_i(v)\ge0$. It follows that $b_i-\tilde b_i\ge0$.

We next consider the addition of $h_i$. Since we orient arcs in $H$ so that each vertex has the same number of in-edges and out-edges, adding or subtracting $h_i$ to $f_i-\tilde f_i$ does not change the demand vector. This finishes property~(3), and this last observation also proves property~(1).
 \end{enumerate}
This concludes all the properties, and hence the proof.
\end{proof}

\begin{lemma}
$\sum_{v\ne s}b(v)\tilde d(v)\le \ell(f)$.
\end{lemma}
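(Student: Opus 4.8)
I would prove the inequality in two steps: first show $\sum_{i}\ell(\tilde f_i)\le\ell(f)$, and then show $\sum_{v\ne s}b(v)\tilde d(v)\le\sum_{i}\ell(\tilde f_i)$. Chaining the two gives the statement. The first step is essentially free from \Cref{lem:structure}: iterating Property~(8), which says $\ell(f_{i+1})+\ell(\tilde f_i)\le\ell(f_i)$, yields
\[
\ell(f)=\ell(f_0)\ \ge\ \ell(f_{i^*+1})+\sum_{i=0}^{i^*}\ell(\tilde f_i)\ \ge\ \sum_{i=0}^{i^*}\ell(\tilde f_i),
\]
where $i^*=\lceil\lg(-b(s))\rceil$ and we used nonnegativity of flow costs. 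I would also record here that in fact $b_{i^*+1}=0$ (hence $f_{i^*+1}$ is a circulation): by Properties~(2)--(3), $-b_i(s)=\sum_{v\ne s}b_i(v)$ is always a nonnegative integer multiple of $2^i$, and it is nonincreasing in $i$ since $\tilde b_i(s)\le0$; as $-b_0(s)=-b(s)\le 2^{i^*}<2^{i^*+1}$, the nonnegative multiple of $2^{i^*+1}$ that is $-b_{i^*+1}(s)$ must be $0$, and then every entry of $b_{i^*+1}$ is $0$. This is what lets me write $b=\sum_{i=0}^{i^*}\tilde b_i$ in the second step, so that $\sum_{v\ne s}b(v)\tilde d(v)=\sum_{i=0}^{i^*}\sum_v\tilde b_i(v)\tilde d(v)$ (the $v=s$ term vanishes since $\tilde d(s)=0$).

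\textbf{Second step: telescoping over the tree-recovery loop.} Index the iterations of the tree-recovery loop by $i=i^*,i^*-1,\dots,0$, and let $\tilde d^{(i)}$ denote the label function right after $B_{i^*},\dots,B_i$ have been processed, so $\tilde d^{(i^*+1)}$ is the initial function ($0$ at $s$, $+\infty$ elsewhere) and $\tilde d^{(0)}=\tilde d$. Define the potential $\Phi_i:=\sum_v b_i(v)\,\tilde d^{(i)}(v)$, where $b_i$ is the demand of $f_i$. Then $\Phi_{i^*+1}=0$ (as $b_{i^*+1}=0$) and $\Phi_0=\sum_v b(v)\tilde d(v)$, so it suffices to prove $\Phi_i\le\Phi_{i+1}+\ell(\tilde f_i)$ for every $i$. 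Writing $b_i=b_{i+1}+\tilde b_i$ and using that processing $B_i$ only lowers labels (and keeps $\tilde d(s)=0$), the part of $\Phi_i-\Phi_{i+1}$ carried by $b_{i+1}$ is $\sum_v b_{i+1}(v)\big(\tilde d^{(i)}(v)-\tilde d^{(i+1)}(v)\big)\le 0$, because $b_{i+1}(v)\ge 0$ for $v\ne s$ by Property~(3). Hence it remains to bound the part carried by $\tilde b_i$, for which I would argue arborescence by arborescence. For a maximal arborescence $A\subseteq B_i$ with root $r$, the flow $\tilde f_i$ restricted to $A$ carries exactly $2^i$ away from $r$ on every arc, so flow conservation gives $\sum_{w\in\mathrm{subtree}_A(v)}\tilde b_i(w)=2^i$ for each arc $(u,v)$ of $A$; summing over $v$ this produces the clean identity $\sum_{v\in A}\tilde b_i(v)\,d_A(r,v)=2^i\sum_{e\in A}\ell(e)=\ell(\tilde f_i\!\restriction\! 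A)$. Since the update sets $\tilde d^{(i)}(v)\le\tilde d^{(i+1)}(r)+d_A(r,v)$ for all $v\in A$ (with equality at $v=r$), and since $\tilde b_i(v)$ is positive exactly at the $O_i$-leaves of $A$ — where moreover the residual demand satisfies $b_i(v)\ge2^i$ — one combines these facts, together with the discarded slack $\sum_v b_{i+1}(v)(\tilde d^{(i+1)}(v)-\tilde d^{(i)}(v))\ge0$, to cancel the $\tilde d^{(i+1)}(r)$ contributions and obtain $\Phi_i\le\Phi_{i+1}+\ell(\tilde f_i)$. Summing over $i$ and invoking the first step finishes the proof.

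\textbf{Main obstacle.} The delicate point is the final per-arborescence estimate. Because $\tilde b_i$ has mixed signs (it is negative at branching nodes and at arborescence roots), the substitution $\tilde d^{(i)}(v)\le\tilde d^{(i+1)}(r)+d_A(r,v)$ only helps at the nodes with $\tilde b_i(v)\ge0$ and points the wrong way at branching nodes, where one instead needs a matching \emph{lower} bound on $\tilde d^{(i)}(v)$ (morally $\tilde d^{(i)}(v)\ge\tilde d^{(i+1)}(r)+d_A(r,v)$, which can fail when $v$ was already assigned a cheaper label in an earlier phase). The resolution is that whenever a branching node's label is ``too small'' relative to $r$, the labels of the leaves below it in $A$ either were set in this very phase from $r$ (so their drop $\tilde d^{(i+1)}(v)-\tilde d^{(i)}(v)$, weighted by $b_{i+1}(v)\ge 0$, supplies the needed slack, possibly $+\infty$ when the old label was $\infty$) or were already small — so the overcharge is always absorbed. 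Making this case analysis airtight — keeping the full identity $\sum_v\tilde b_i(v)\tilde d^{(i+1)}(v)\le\ell(\tilde f_i)+\sum_v b_i(v)\big(\tilde d^{(i+1)}(v)-\tilde d^{(i)}(v)\big)$ rather than prematurely dropping the slack term — is where the real bookkeeping lies; the individual inequalities are routine.
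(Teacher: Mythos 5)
Your proposal takes a genuinely different route from the paper's. The paper runs a single-commodity tracking argument: it tags each (integral) commodity unit with a running ``distance traveled'', executes the flows $\tilde f_i$ in decreasing $i$, and shows by induction that after each phase every unit sitting at a node $v\ne s$ has distance at least the current $\tilde d(v)$; the lemma then drops out because each $v$ ends up with exactly $b(v)$ units and the total distance traveled is $\sum_i\ell(\tilde f_i)\le\ell(f)$. You instead go aggregate: you define $\Phi_i=\sum_v b_i(v)\tilde d^{(i)}(v)$, observe $\Phi_{i^*+1}=0$ (via $b_{i^*+1}=0$, which you derive correctly), and aim for the one-step telescoping bound $\Phi_i\le\Phi_{i+1}+\ell(\tilde f_i)$. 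Your first step ($\sum_i\ell(\tilde f_i)\le\ell(f)$ from Property~(8)) matches the paper, and your per-arborescence identity $\sum_{v\in A}\tilde b_i(v)\,d_A(r,v)=\ell(\tilde f_i\text{ on }A)$ is correct and is the right way to express the cost in your framework. The potential approach is arguably cleaner than the informal commodity experiment if it can be closed.

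But as written the proposal does not prove the lemma. You correctly locate the obstacle: $\tilde b_i$ is negative at roots and branching nodes of each arborescence in $B_i$, so the update-rule upper bound $\tilde d^{(i)}(v)\le\tilde d^{(i+1)}(r)+d_A(r,v)$ cannot be plugged into $\sum_v\tilde b_i(v)\tilde d^{(i)}(v)$ termwise; at a branching node one needs a matching \emph{lower} bound on $\tilde d^{(i)}(u)$, which genuinely can fail when $u$ already received a small label from an earlier phase via a different arborescence. Your proposed resolution --- that the overcharge is always absorbed by the drops $\tilde d^{(i+1)}(v)-\tilde d^{(i)}(v)$ weighted by $b_{i+1}(v)\ge 0$ at the leaves below $u$ --- is only a sketch, and you say as much (``making this case analysis airtight \dots is where the real bookkeeping lies''). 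The inequality $\Phi_i\le\Phi_{i+1}+\ell(\tilde f_i)$ is therefore asserted, not proved: you would need to actually carry out the slack/drop accounting (with the quantities $\sigma_v:=\tilde d^{(i+1)}(r_v)+d_{A_v}(r_v,v)-\tilde d^{(i)}(v)$ and $\delta_v:=\tilde d^{(i+1)}(v)-\tilde d^{(i)}(v)$, $\min(\sigma_v,\delta_v)=0$, which you implicitly introduce) and in particular show how the $b_{i+1}(v)\ge 0$ slack at leaves of $O_i$ below a ``cheap'' branching node compensates $2^i(c_u-1)\sigma_u$, which is not immediate because $b_{i+1}(v)$ can be $0$ there. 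For what it is worth, the paper's one-sentence inductive step also has to cope with the case of a unit departing a non-root branching node, so you are in exactly the right territory --- but a blind submission needs to actually close the case analysis, and yours leaves it open.
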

\begin{proof}
By \Cref{lem:structure}, we can apply induction on $i$ to obtain the following two properties:
 \begin{enumerate}
 \item[i.] $\sum_i\ell(\tilde f_i)\le \ell(f)$, which follows from property~(8), and
 \item[ii.] For all $i$, flow $\sum_{j\ge i}\tilde f_j$ satisfies demand vector $b_i$, which follows from property~(1) and the fact that each $f_j$ satisfies demand $b_j$.
 \end{enumerate}

Imagine performing the following single-commodity flow experiment. Begin with $-\infty$ commodity units on node $s$, and we tag each (integral) commodity unit with a value representing its ``distance traveled'' throughout the flows. In decreasing order of $i$, we execute flow $\tilde f_i$ by shifting commodity around (in integral amounts) according to the flow, and for each commodity unit shifted, we increase its ``distance traveled'' value by the distance shifted. Note that the total increase in value over all commodity units increases by exactly $\ell(\tilde f_i)$. Also, note that we are never deficient in commodity anywhere (outside of $s$), since our distribution of flow after iteration $i$ is exactly $\sum_{j\ge i}\tilde f_i$, which satisfies demand vector $b_j$ by property~(ii), and is therefore nonnegative at all $v\ne s$ by property~(3). Once we execute all flows $\tilde f_i$, each vertex $v\ne s$ has exactly $b(v)$ demand by property~(ii), and the total value of all commodity units is $\sum_i\ell(\tilde f_i)$, which is at most $\ell(f)$ by property~(i). 

It remains to show that the total ``distance traveled'' value of all commodity units is at least $\sum_{v\ne s}b(v)\tilde d(v)$. To do so, we claim that after executing any flow $\tilde f_i$, all units of flow at any $v\ne s$ have their value at least the current $\tilde d(v)$, which in turn is at least the final $\tilde d(v)$. This implies the lemma, since each $v\ne s$ ends up with exactly $b(v)$ units of flow after all flows $\tilde f_i$ are executed.

We now prove the claim by induction on decreasing $i$. Suppose that, right before executing flow $\tilde f_i$, a commodity unit at vertex $u$ had value $d_{\text{old}}$, and suppose that $\tilde d(u)=d_{\text{old}}$ at that time. If the unit travels to vertex $v$ in flow $\tilde f_i$, then its new value is $d_{\text{old}}+d_{A_i}(u,v)$, which is at least the new $\tilde d(v)\le d_{\text{old}}+d_{A_i}(u,v)$ by the way the algorithm updates $\tilde d(v)$.
\end{proof}

\begin{lemma}
For each $v\in V$, we have $d_T(s,v)\le\tilde d(v)$.
\end{lemma}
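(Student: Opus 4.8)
My plan is to maintain, throughout the tree‑recovery loop, the following \emph{$\tilde d$-consistency invariant}: at every point in time, (a) for every vertex $v$ with $p(v)\ne\bot$, writing $u:=p(v)$, we have $\tilde d(v)\ge\tilde d(u)+\ell(\{u,v\})$; and (b) every $v\ne s$ with $\tilde d(v)<\infty$ has $p(v)\ne\bot$, while always $\tilde d(s)=0$ and $p(s)=\bot$. Granting the invariant the lemma follows quickly: (a) shows $\tilde d$ is non-increasing along parent pointers, hence the pointers are acyclic (using that the edge weights are positive integers, so $\tilde d$ in fact strictly decreases; the general nonnegative-length case of $\oRound$ follows equally well from the phase structure of the recovery loop, see below), so $T$ is a forest in which each tree has a unique $\tilde d$-minimal root, and $s$ is the root of its own tree. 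Now fix $v$; if $\tilde d(v)=\infty$ the bound is trivial. Otherwise, by (b) the parent chain $v=v_0,v_1=p(v_0),v_2=p(v_1),\dots$ never hits $\bot$ before reaching $s$, is finite, and terminates at $v_m=s$; this chain is exactly the $T$-path from $v$ to $s$. Telescoping (a) along it gives
\[
\tilde d(v)\ \ge\ \tilde d(v_m)+\sum_{k=0}^{m-1}\ell(\{v_{k+1},v_k\})\ =\ 0+d_T(s,v),
\]
which is the lemma.

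\textbf{Proving the invariant.} I would induct on the phases $i$, processed in decreasing order. Initially $p\equiv\bot$ and only $\tilde d(s)=0$ is finite, so (a),(b) hold; moreover the first phase that assigns a finite label to the root $r$ of some arborescence $A\subseteq B_i$ must have $r=s$, since before that only $s$ is finite. For the step at phase $i$, consider one maximal arborescence $A\subseteq B_i$ on vertex set $U$ with root $r$ (distinct arborescences of $B_i$ are vertex-disjoint, so analyzing one suffices). Since $d_A(r,r)=0$, the label $\tilde d(r)$ is unchanged by this phase and $p(r)$ is untouched, so the value $\tilde d(r)$ used in the update $\tilde d(u)\gets\min\{\tilde d(u),\tilde d(r)+d_A(r,u)\}$ is the same before and after. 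Take a non-root $v\in U$ with $A$-parent $u$; since $A$'s edges point away from $r$, the $r$-to-$v$ path in $A$ ends with edge $\{u,v\}$, so $d_A(r,v)=d_A(r,u)+\ell(\{u,v\})$. If the update lowers $\tilde d(v)$, then afterwards
\[
\tilde d(v)=\tilde d(r)+d_A(r,v)=\bigl(\tilde d(r)+d_A(r,u)\bigr)+\ell(\{u,v\})\ \ge\ \tilde d(u)+\ell(\{u,v\}),
\]
where the last step uses that the same parallel update sets $\tilde d(u)\le\tilde d(r)+d_A(r,u)$; and since $p(v)$ is simultaneously set to $u$, (a) holds for $v$. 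If instead $\tilde d(v)$ is unchanged, then $p(v)$ is unchanged too (the algorithm reassigns $p(v)$ only when $\tilde d(v)$ is lowered), and since all labels are monotonically non-increasing, $\tilde d(p(v))+\ell(\{p(v),v\})$ can only have dropped, so the previously valid inequality persists; vertices outside $\bigcup U$ are untouched and the same monotonicity remark applies. Part (b) is preserved because a label drops from $\infty$ to finite only via a strict lowering, which triggers assignment of a non-$\bot$ pointer — together with the base-case observation that the first root to become finite is $s$, so any arborescence with a finite-labelled root inherits a valid parent chain toward $s$.

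\textbf{Main obstacle.} The difficulty is bookkeeping rather than depth: one must verify that for non-root vertices ``$\tilde d(v)$ changes'' and ``$p(v)$ reassigned'' happen together, that the root's label is frozen during its own phase so the value entering the parallel $\min$ is unambiguous, and that $\tilde d$-monotonicity is precisely what carries (a) through for the vertices not touched in a phase. The only other mild subtlety is ruling out cycles / non-termination of the parent chain: with the positive integer weights of the input this is immediate from strict decrease of $\tilde d$; in the fully general nonnegative-length setting one instead orders the chain by the phase in which each pointer was last set — a pointer set in phase $i$ points either within $B_i$ or to a $B_i$-component root whose pointer was set in a strictly later phase (or to $s$) — which is a well-founded measure that still forces termination at $s$.
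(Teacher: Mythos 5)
Your proof is correct and rests on the same central observation as the paper's: at the moment $p(v)$ is set to $u$ inside an arborescence $A$ rooted at $r$, the parallel update forces $\tilde d(v)=\tilde d(r)+d_A(r,v)\ge\tilde d(u)+\ell(\{u,v\})$, and since $\tilde d(u)$ only decreases afterward while $\tilde d(v)$ stays fixed (parent pointers are reassigned exactly when labels drop), the inequality persists to the end. You package this as an invariant maintained throughout the recovery loop and then telescope along the parent chain, whereas the paper derives the same final-time inequality and closes via induction on $d_T(s,v)$; the extra bookkeeping you add about chain termination and zero-length virtual edges is careful but not genuinely a different route.
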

\begin{proof}
We induct on the distance $d_T(s,v)$ in increasing order. Given a vertex $v$, consider its parent $u$ in the tree $T$ (rooted at $s$), which satisfies $d_T(s,u)<d_T(s,v)$. At the time of updating $p(v)$ to $u$, we had $\tilde d(v)$ in some arborescence $A$ with root $r$, we had $\tilde d(v)=\tilde d(r)+d_A(r,v)=\tilde d(r)+d_A(r,u)+\ell(u,v)\ge \tilde d(u)+\ell(u,v)$ for the values of $\tilde d(u),\tilde d(v)$ at that time. The value $\tilde d(u)$ can only decrease over time, and on the other hand, the value $\tilde d(v)$ does not change since we assumed its final value. Therefore, $\tilde d(v)\ge\tilde d(u)+\ell(u,v)$ at the end. By induction, $d_T(s,u)\le\tilde d(u)$, and we obtain $\tilde d(v)\ge\tilde d(u)+\ell(u,v)\ge d_T(s,u)+\ell(u,v)=d_T(s,v)$, completing the induction.
\end{proof}

\begin{lemma}\label{lem:arbor}
Consider a flow $f$ on a network graph $G$ satisfying demands $b$ where $b(v)\ge0$ for all $v\ne s$ (for some source node $s$). Let $H$ be the directed graph with arcs $(u,v)$ whenever $f(u,v)>0$. Then, we can compute an arborescence of $H$ rooted at $s$ spanning all nodes $v$ with $b(v)>0$ in $O(\log n)$ \aggregate rounds.
\end{lemma}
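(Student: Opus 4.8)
The plan is to split the proof into a purely combinatorial reachability statement and a separate algorithmic construction.

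\textbf{Step 1 (reachability).} First I would show that every $v$ with $b(v)>0$ is reachable from $s$ in $H$. Let $R$ be the set of nodes reachable from $s$ along arcs $(u,v)$ with $f(u,v)>0$; clearly $s\in R$, and if $u\in R$ and $f(u,v)>0$ then $v\in R$, so no positive-flow arc leaves $R$. Summing the conservation constraint $\sum_{u:(u,v)\in\overleftrightarrow E}(f(u,v)-f(v,u))=b(v)$ over all $v\in V\setminus R$, the terms internal to $V\setminus R$ cancel and the contribution of arcs from $R$ has $\sum_{u\in R,\,v\notin R}f(u,v)=0$, so $\sum_{v\notin R}b(v)=-\sum_{u\in R,\,v\notin R}f(v,u)\le0$. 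But $s\in R$ forces $b(v)\ge0$ for every $v\notin R$, so this sum is also $\ge0$, hence equals $0$, and every term vanishes. Thus $b(v)=0$ for all $v\notin R$, so all nodes of positive demand lie in $R$. It therefore suffices to output any out-arborescence of $H$ rooted at $s$ that spans $R$.

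\textbf{Step 2 (construction).} I would build this arborescence by a Boruvka-style contraction in the \aggregate model. We maintain a minor of $H$ whose supernodes partition (a superset of) $R$, with $S^\star$ denoting the supernode containing $s$; each supernode stores a spanning out-arborescence of itself inside $H$, the one for $S^\star$ being rooted at $s$. In one round, every supernode other than $S^\star$ that has a positive-flow in-arc (at an appropriate node of itself) selects one such arc and points to the supernode containing its tail; this defines a functional graph on the supernodes in which $S^\star$ is the only sink. The crucial observation is that a cycle $X_0\to X_1\to\dots\to X_{k-1}\to X_0$ in this functional graph corresponds to a genuine directed cycle of $H$ (read the selected arcs backwards), so (i)~the selected arcs inside $S^\star$'s tree-component already form an out-arborescence rooted at $s$, and (ii)~inside each cyclic component the selected arcs — the cycle together with the ``tails'' hanging off it, whose arcs point outward from the cycle — form, from any cycle node, a spanning out-arborescence of that component's vertices in $H$. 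We contract every functional-graph component: $S^\star$ absorbs its tree-component, each cyclic component becomes a new supernode rooted at one of its cycle nodes, and the flow, demands, and the ``$b\ge0$ off $S^\star$'' property all descend to the minor. Since the functional graph has one fewer arc than nodes, it has at most $1+\lfloor N/2\rfloor$ weakly connected components, so the number of (relevant) supernodes drops by a constant factor each round; after $O(\log n)$ rounds everything reachable has merged into $S^\star$, and its stored arborescence is the output. Each round is a constant number of minor-aggregations plus one call to the subtree/ancestor-sum primitive (\Cref{lemma:ancestor-subtree-sum}) to locate cycles, pick roots, and reglue the trees, for a total of $O(\log n)$ \aggregate rounds.

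\textbf{Main obstacle.} The delicate part is the bookkeeping that keeps the per-supernode out-arborescence consistent through the contractions of cyclic components while guaranteeing geometric progress on the reachable supernodes: one must choose the new root of a contracted cyclic component to be a cycle node that still admits an incoming arc from outside that component (so it participates next round), and one must argue that a reachable supernode always selects its in-arc from another reachable supernode (so reachable supernodes are never ``trapped'' inside an unreachable one). Both points reduce to re-running the reachability argument of Step~1 at the level of the current minor; the flow-cycle/functional-cycle correspondence, the component counting, and the tree gluing are then routine.
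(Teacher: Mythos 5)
Your approach is essentially the same as the paper's: flow conservation establishes that all positive-demand nodes are reachable, and the arborescence is built by an Edmonds/Boruvka-style contraction where each (super)node selects an in-arc and the weakly connected components of the selected subgraph are collapsed. The paper's own correctness argument also leans on flow conservation twice --- once to rule out stuck supernodes, once at the end to show positive-demand nodes are spanned --- which together play the role of your Step~1.

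The divergence is in the structural claim about what a contracted component looks like, and this is exactly where your proof is incomplete. The paper asserts that every contracted component not containing $s$ is a \emph{simple directed cycle}, which makes the un-contraction trivial: drop the cycle arc entering whichever vertex the outer in-arc lands on. You, more cautiously, note that the generic shape is a $\rho$ --- a directed cycle with out-trees hanging off --- and then flag the entry-point question as the ``main obstacle.'' That flag is warranted, but your proposed fix (``re-running the reachability argument at the level of the current minor'') does not close it: reachability gives you \emph{some} in-arc into each reachable supernode $C$, but no control over whether that arc lands on $C$'s cycle or deep on an out-tree, where the stored selected arcs only point away from the cycle and hence cannot be assembled into a spanning out-arborescence rooted at the entry vertex. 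To finish you would need one of: (i)~a proof that the in-arc can always be chosen to land on the cycle (this is what the paper's cycle-only claim would give you, though its stated justification --- a node with ``no incoming neighbor'' --- does not obviously apply to a $\rho$, every vertex of which already has picked in-degree one); (ii)~a pick rule that provably avoids producing tails; or (iii)~supernode state rich enough to be re-rooted at an arbitrary entry vertex, with that choice deferred until un-contraction time when the entry arc is known. As written, Step~2 is a plan rather than a proof.
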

\begin{proof}
The algorithm resembles Edmonds' arborescence algorithm, except adapted to span a subset of vertices. On each iteration, each vertex $v\ne s$ chooses an incoming neighbor $u_v$ if it exists. The algorithm contracts all connected components of the (undirected) graph consisting of all such edges $(u_v,v)$, and deletes any node $v$ without an incoming neighbor. The contracted component containing $s$ is named the new node $s$. The algorithm stops when there is a single node $s$. To construct the arborescence, the algorithm reverses the contractions in reverse chronological order, maintaining an arborescence of the current graph at each step. On each step, all contracted components are un-contracted. Edges in the arborescence of the contracted graph remain as edges in the arborescence of the un-contracted graph. This concludes the description of the algorithm, which clearly runs in $O(\log n)$ \aggregate rounds.

We will show that $(*)$ each contracted component that contains $s$ is always an arborescence, while each contracted component not containing $s$ is always a directed cycle. For the contracted component containing $s$, we add the edges of its arborescence to the new arborescence.  For each contracted component that is a directed cycle, we add all but one edge into the new arborescence. Aside from the proof of $(*)$, we can apply the standard analysis of Edmonds' algorithm to show that the algorithm constructs an arborescence rooted at $s$. We also need to show that all nodes $v\in V$ with $b(v)>0$ are spanned by the arborescence, which we show after $(*)$.

We now prove the statement $(*)$. The proof that the contracted component containing $s$ is an arborescence is also standard, and relies on the fact that we do not select an incoming neighbor into $s$. On the other hand, the proof that every other contracted component is a directed cycle is non-standard and relies on the specific construction of the graph $H$. Suppose for contradiction that such a component $C$ is not a directed cycle; then, there must be node $u$ in $C$ with no incoming neighbor but at least one outgoing neighbor. Let $S\subseteq V$ be the set of original vertices that eventually get contracted to $u$. Then, $S$ has no incoming neighbor in $H$ and at least one outgoing neighbor. By construction of $H$, this means that the flow $f$ flows out of $S$ on at least one arc, but never into $S$, so the net flow out of $S$ is positive. Since the flow satisfies demands $b$, it follows that $\sum_{v\in S}b(v)<0$. This contradicts the fact that $b(v)\ge0$ for all $v\in S$, since $S$ does not contain $s$.

Finally, we show that each node $v\in V$ with $b(v)>0$ is spanned by the arborescence. Suppose not; then, in some contracted graph, the node that $v$ got contracted to must have no incoming neighbors (after which that node is deleted). For this to happen, the set $S\subseteq V$ of original vertices that got contracted to that node must satisfy $\sum_{v\in S}b(v)\le0$. On the other hand, $b(v)>0$ and $b(u)\ge0$ for all $u\in S$ since $s\notin S$, a contradiction.
\end{proof}


\section{Distributed Algorithms for the Eulerian-Orientation Problem}
\label{sec:euler}

In this section, we show how to construct the \euler oracle $\oEuler$. 
Instead of talking about oracles, we talk about solving the \euler problem, in which we are asked to orient each edge such that each node has the same indegree and outdegree. 
The main result of the section is the following theorem. 

\begin{restatable}{theorem}{thmeuler}
\label{thm:euler_dist}
There are deterministic algorithms for the \euler problem (that is, implementing the oracle $\oEuler$) with the following round complexities. 
\begin{enumerate}
    \item $\tO(\hopDiameter{G})$ \congest rounds if $G$ comes from a fixed minor-free family. 
    \item $n^{o(1)}$ \congestPA rounds. 
    \item $\tO(\sqrt{n} + \hopDiameter{G})$ \congest rounds. 
\end{enumerate}
\end{restatable}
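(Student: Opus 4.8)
The plan is to reduce the \euler oracle to producing a consistently oriented edge-disjoint cycle decomposition of the input graph, and to obtain such a decomposition by iteratively peeling short cycles off a low-congestion cycle cover. First, since adding $\tO(1)$ virtual nodes costs only a $\tO(1)$ factor in \congest, it suffices to orient the edges of a graph $H$ all of whose vertices have even degree. Such an $H$ is bridgeless: if $\{u,v\}$ were a bridge, the component of $H-\{u,v\}$ containing $u$ would have $u$ as its unique odd-degree vertex, contradicting the handshake lemma, so every edge lies on a cycle. Hence the low-congestion cycle-cover machinery of \cite{parter_yogev2019cycle_covers_minor_closed,parter_yogev2019cycle_decomp_near_linear} applies and yields a family $\mathcal{C}$ of cycles in which every edge lies on at least one and at most $\rho$ cycles (congestion) and every cycle has at most $d$ edges (dilation); I would use the known settings $\rho,d=\tO(1)$ in $\tO(\hopDiameter{G})$ \congest rounds for minor-free $G$, $\rho,d=n^{o(1)}$ in $n^{o(1)}$ \congestPA rounds in general, and the corresponding worst-case \congest construction in $\tO(\sqrt n+\hopDiameter{G})$ rounds.

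Given $\mathcal{C}$, one iteration proceeds as follows. Form the (virtual) conflict graph whose vertices are the cycles and whose edges join two cycles sharing a graph-edge; since a cycle has $\le d$ edges each shared with $\le\rho-1$ other cycles, this conflict graph has maximum degree $\le\rho d$. Color it deterministically with $O((\rho d)^2)$ colors by a Linial-style algorithm, simulating one conflict-graph round by routing colors along each cycle (length $\le d$, at most $\rho$ colors crossing any edge) in $\tO(\rho+d)$ \congest rounds, and using $O(\log^* n)$ such rounds. Select the color class that covers the most still-present edges; because the color classes partition $\mathcal{C}$ and jointly cover every remaining edge, this class covers at least a $1/O((\rho d)^2)$ fraction of them, and as a color class it is a set of pairwise edge-disjoint cycles. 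Orient each of its cycles consistently (say, start at the cycle's minimum-ID vertex and walk to its smaller-ID neighbor), fix that orientation on those edges, and delete them.

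Deleting an edge-disjoint union of cycles decreases every degree by an even number, so the remaining graph is again even-degree and bridgeless and the process recurses (recomputing a cycle cover each time); after $O((\rho d)^2\log n)$ iterations no edges remain. Each edge is oriented exactly once, and since every selected cycle is oriented consistently, each cycle passing through a vertex $v$ contributes exactly one incoming and one outgoing arc at $v$, so $\mathrm{indeg}(v)=\mathrm{outdeg}(v)$ and the output is a valid Eulerian orientation. The total round complexity is $O((\rho d)^2\log n)$ iterations, each costing one cycle-cover construction plus $\tO(\rho+d)$ extra rounds; substituting the three parameter regimes gives the claimed $\tO(\hopDiameter{G})$ \congest, $n^{o(1)}$ \congestPA, and $\tO(\sqrt n+\hopDiameter{G})$ \congest bounds. (The near-linear-work polylog-depth \pram guarantee underlying $\oEuler$ can alternatively be obtained from the classical parallel Euler-tour routine of \cite{atallah_vishkin1984euler_pram}.)

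I expect two steps to be the main obstacles. The first is carrying out the conflict-graph coloring deterministically and within the bandwidth constraints of \congest, where each ``vertex'' of the conflict graph is an entire (possibly long) cycle of $G$ and one must route and aggregate along these cycles while also tracking which cycles survive each peel. The second, and more delicate, is the round-complexity accounting: one must ensure that the product of (cycle-cover quality), (number of peeling iterations), and (per-iteration construction cost) stays inside the target bound in each model --- in particular the worst-case \congest bound $\tO(\sqrt n+\hopDiameter{G})$ requires a cycle-cover construction whose quality is small enough that only $\mathrm{polylog}(n)$ peels are needed --- and, relatedly, that the cycle-cover constructions (usually stated for simple, $2$-edge-connected graphs) still work for the multigraphs and possibly disconnected even-degree graphs that arise after adding virtual nodes or peeling cycles.
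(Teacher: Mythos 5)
Your overall strategy — compute a low‑congestion cycle cover, pick an edge‑disjoint subfamily via a symmetry‑breaking step on the conflict graph, orient and peel, and recurse — is indeed the paper's strategy. However, there is a factual error that makes your round‑complexity accounting fail, and two missing ideas that the paper uses to repair exactly this.

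You state that the minor‑free cycle cover of \cite{parter_yogev2019cycle_covers_minor_closed} gives dilation $d=\tO(1)$ in $\tO(\hopDiameter{G})$ rounds. It does not: the cited result (the paper's \cref{lem:minor_closed_cycle_covers}) gives a $(\tO(\hopDiameter{G}),\tO(1))$ cycle cover, i.e.\ dilation $d=\tO(\hopDiameter{G})$ — and this is inherent, since the cycle $C_n$ has no cycle cover with dilation $o(n)$. Plugging the true $d$ into your accounting gives $O((\rho d)^2\log n)=\tO(\hopDiameter{G}^2)$ peels, each costing $\tO(\hopDiameter{G})$, so $\tO(\hopDiameter{G}^3)$ in total rather than $\tO(\hopDiameter{G})$. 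The analogous issue also breaks your \congestPA and worst‑case \congest bounds. The paper's way around this is to first compute a separated strong‑diameter network decomposition (of the subdivided residual graph) so that every cluster has radius $O(\log^2 n)$, and to run the cycle cover \emph{inside each cluster}; this is what makes $d=\tO(1)$ legitimate. But restricting to clusters destroys your ``even‑degree $\Rightarrow$ bridgeless'' argument — a cluster of an even‑degree graph can have many bridges — so the per‑cluster cycle covers may miss many edges. The paper absorbs this by first contracting all degree‑2 vertices of the residual graph, which forces $|E(H'_i)|\ge 2|V(H'_i)|$; then the $\le |V(H'_i)|$ bridges (summed over all clusters) plus the $|E|/10$ edges discarded by the decomposition are still a constant fraction, so each peel orients $\Omega(|E(H'_i)|/\polylog n)$ edges. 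Both the degree‑2 contraction and the clustering step are absent from your proposal, and neither is optional. Finally, for item (3) the paper does not obtain $\tO(\sqrt n+\hopDiameter{G})$ by a uniform peeling argument: it peels with a cheaper (larger‑parameter) cycle cover until only $O(n)$ edges remain, and then switches to a separate $\tO(\sqrt{n}+\hopDiameter{G})$ procedure (\cref{euler:lem:sqrt}) based on edge‑pairing and ruling‑set path merging to finish; your single‑phase recursion would not land inside that bound.
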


There is a simple deterministic parallel algorithm for the \euler problem with near-linear work and polylogarithmic depth \cite{atallah_vishkin1984euler_pram}. 
The algorithm starts by pairing up all edges at every node of $H$. This splits $E(H)$ into a set of closed walks. In the following $O(\log |E(H)|)$ rounds, edges of the same closed walk are collected and the whole walk is consistently oriented. 
This ensures the equal indegree and outdegree at each node of $H$. 

\begin{theorem}[\cite{atallah_vishkin1984euler_pram}]
\label{thm:euler_pram}
There is a deterministic $O(\log n)$ depth and $O(m \log n)$ work algorithm for the \euler problem on CRCW \pram.  
\end{theorem}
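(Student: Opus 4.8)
The plan is to reproduce the classical argument of Atallah and Vishkin~\cite{atallah_vishkin1984euler_pram}: reduce the \euler problem to consistently orienting a family of edge-disjoint closed walks that cover $E(H)$, and then carry out that orientation in parallel by pointer jumping.

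\emph{Decomposing $H$ into closed walks.} Since every node of $H$ has even degree, at each node $v$ we pair up its incident edges arbitrarily into $|\Gamma_H(v)|/2$ pairs; reading off the edges incident to each node, this costs $O(1)$ depth and $O(m)$ work, and lets each edge $e=\{u,v\}$ learn the edge paired with it at $u$ and the edge paired with it at $v$. These pairings define a $2$-regular (multi)graph $A$ on vertex set $E(H)$ in which $e$ and $e'$ are adjacent whenever they are paired at a common node; each $e\in E(H)$ takes part in exactly two pairings (one per endpoint), so $A$ is a disjoint union of cycles. A cycle $e^{(1)},e^{(2)},\ldots,e^{(k)}$ of $A$ spells out a closed walk $v_1\xrightarrow{e^{(1)}}v_2\xrightarrow{e^{(2)}}\cdots\xrightarrow{e^{(k)}}v_1$ in $H$, consecutive edges meeting at the vertex at which they were paired; because the two pairings of $e^{(i)}$ happen at its two (distinct, after discarding self-loops) endpoints, the vertex $e^{(i)}$ shares with $e^{(i-1)}$ differs from the one it shares with $e^{(i+1)}$. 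These walks are edge-disjoint and cover $E(H)$; parallel edges of $H$ require no special treatment since $A$ is a multigraph.

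\emph{Consistent orientation suffices.} Orient each closed walk by directing $e^{(i)}$ from $v_i$ to $v_{i+1}$. Then for any node $v$ and any pair $\{e,e'\}$ chosen at $v$: in whichever closed walk $e,e'$ appear they are consecutive, so one immediately precedes the other, and the orientation rule points the predecessor into $v$ and the successor out of $v$. Hence each of the $|\Gamma_H(v)|/2$ pairs at $v$ contributes exactly one in-edge and one out-edge, giving $\mathrm{indeg}(v)=\mathrm{outdeg}(v)$ regardless of which of its two directions each walk is oriented in. Self-loops of $H$, if present, are oriented arbitrarily at the start since they contribute equally to in- and out-degree.

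\emph{Orienting the cycles of $A$ in parallel.} The only real obstacle is to pick, simultaneously for all cycles of $A$ and in $O(\log n)$ depth, a globally consistent traversal direction without a linear-depth sequential sweep. I would use the standard doubling device: replace each $e\in E(H)$ by two oriented copies, so that each cycle of $A$ of length $k$ becomes two directed cycles of length $k$ that are reverses of one another, every copy carrying a successor pointer computable from the pairings in $O(1)$ depth. Running $O(\log n)$ rounds of pointer jumping along these successor pointers, each copy learns the smallest edge-identifier $e^\star$ on its directed cycle (and, by propagating one extra bit the same way, whether $e^\star$'s successor on this cycle is the smaller- or larger-identifier neighbor of $e^\star$); declaring canonical the directed cycle in which $e^\star$'s successor is the smaller-identifier neighbor of $e^\star$ selects exactly one member of each reverse pair. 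Orienting every $e\in E(H)$ according to the copy that lies on the canonical directed cycle yields a consistent orientation of all closed walks, hence an Eulerian orientation by the previous step. Each pointer-jumping round does $O(m)$ work, for $O(m\log n)$ total work and $O(\log n)$ depth on a CRCW \pram, with concurrent read/write used only for the routine broadcast and minimum primitives.
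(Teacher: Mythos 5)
Your proposal reconstructs exactly the approach the paper sketches (and attributes to Atallah--Vishkin): pair up the incident edges at every vertex to decompose $E(H)$ into edge-disjoint closed walks, observe that any consistent orientation of each walk already balances in- and out-degree at every vertex, and then consistently orient all walks in parallel via $O(\log n)$ rounds of pointer jumping. Your additional details — forming the $2$-regular pairing graph $A$ on $E(H)$, the two-oriented-copies trick for canonical direction selection, and the self-loop remark — are correct fillings-in of the same argument, so the proposal matches the paper's proof in both substance and bounds.
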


However, to the best of our knowledge there are no non-trivial distributed algorithms known for the Eulerian-Orientation problem. 
Here we provide a fast distributed algorithms using the Low Congestion Cycle Covers of \cite{parter_yogev2019cycle_covers_minor_closed,parter_yogev2019cycle_decomp_near_linear,hitron_parter2021adversarial_compilers}. 
First, we prove several preparatory lemmas in \cref{sec:eulerTechnical} and postpone the proof of \cref{thm:euler_dist} in \cref{sec:eulerMain}.

\subsection{Technical Preparations}
\label{sec:eulerTechnical}

First, we prepare several necessary tools. We start by proving that if we contract degree two nodes in a graph, in the contracted graph we can simulate \congestPA rounds. 

\begin{lemma}
\label{lem:simulation_contraction}
Let $G$ be a graph and $G'$ be a graph we get by contracting every node of degree $2$ in $G$. 
Then we can simulate one \congestPA round in $G'$ in $O(1)$ \congestPA rounds in $G$. 

Moreover, if each edge $e \in E(G')$ is arbitrarily oriented, we can consistently orient the whole path in $G$ contracting to $e$ in that direction in $O(\log n)$ \congestPA rounds. 
\end{lemma}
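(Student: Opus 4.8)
The plan is to exploit the structure of the contraction. Contracting all degree-two vertices of $G$ replaces every maximal path $P_e = (a = u_0, u_1, \dots, u_k = b)$ whose internal vertices $u_1,\dots,u_{k-1}$ all have degree two by a single edge $e = \{a,b\}$ of $G'$; the vertex set of $G'$ is exactly $\{v \in V(G) : \deg_G(v) \neq 2\}$, and distinct paths $P_e$ are edge-disjoint and share only endpoints, so their interiors $\operatorname{int}(P_e) := \{u_1,\dots,u_{k-1}\}$ are pairwise vertex-disjoint. (A connected component of $G$ that is a simple cycle disappears under contraction; it has a trivial Eulerian orientation and we treat such components separately.) The key observation is that $\{\operatorname{int}(P_e) : e \in E(G'),\ k_e \ge 2\}$ together with the singletons $\{\{v\} : \deg_G(v)\neq 2\}$ forms a partition of $V(G)$ into connected parts, each $\operatorname{int}(P_e)$ inducing a path in $G$. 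A single partwise-aggregation (PA) call therefore operates on all these interior paths simultaneously; and since a degree-two vertex recognizes itself as a \emph{near-endpoint} of its path when one of its two $G$-neighbors has degree $\neq 2$, by injecting a value only at near-endpoint interior vertices one PA call lets every vertex of $\operatorname{int}(P_e)$ learn the (at most two) values carried at the two ends of $P_e$. All this local structure (degrees of neighbors, near-endpoint status, which $G'$-edge a given $G$-edge belongs to) is learned in $O(1)$ \congestPA rounds.

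For the first claim I would simulate a \congestPA round of $G'$ as follows. If the $G'$-round is a CONGEST round, each real node $a$ first sends, on each incident $G$-edge entering a path $P_e$, the message $m_e$ it wishes to send across $e$; one PA call over the interiors then delivers, for every $P_e$, the two endpoint messages to the interior vertices adjacent to $a$ and to $b$, respectively; a final CONGEST round in $G$ lets those interior vertices forward the messages on to $a$ and $b$. A real node identifies which edge $e$ an arriving message belongs to purely by the physical $G$-edge it arrives on, so multi-edges cause no ambiguity; paths of length $1$ or $2$ are handled by the two CONGEST rounds without any PA. If instead the $G'$-round is a PA call for a partition $\{Q_j\}$ of $V(G')$, I pull it back: set $\widetilde Q_j := \bigcup_{v\in Q_j}\{v\}\ \cup\ \bigcup_{e\subseteq Q_j}\operatorname{int}(P_e)$, and let $\operatorname{int}(P_e)$ for every edge $e$ crossing between two parts be its own part (with a fresh id derived from $e$); these are connected and partition $V(G)$, and deciding which interiors cross takes $O(1)$ rounds via one PA over the interiors. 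A single PA call in $G$ on this partition, with interior vertices injecting the identity element, then produces at every real node of $\widetilde Q_j$ exactly $\bigoplus_{v\in Q_j} x_v$, which is what that node would see in $G'$. In every case $O(1)$ \congestPA rounds in $G$ suffice.

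For the second claim, fix an edge $e=\{a,b\}$ oriented, say, from $a$ to $b$; I want to orient $P_e$ as $a\to u_1\to\cdots\to u_{k-1}\to b$. The interior vertex $u_1$ adjacent to the tail $a$ is designated the root of the path $\operatorname{int}(P_e)$ --- it recognizes this because its real neighbor $a$ announces that $e$ leaves $a$. Viewing each $\operatorname{int}(P_e)$ as a tree rooted at its $a$-side end, one application of the ancestor/subtree-sum primitive (\Cref{lemma:ancestor-subtree-sum}) run on all these path-trees in parallel lets every interior vertex learn its parent and hence orient its path-edges parent-to-child, which is the $a\to b$ direction; the far end $u_{k-1}$ orients its last edge toward $b$. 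Doing this for all $e$ simultaneously costs $O(\log n)$ \congestPA rounds, and paths of length $1$ (resp.\ $2$) are oriented directly by their endpoints (resp.\ single interior vertex).

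The step I expect to be the main obstacle is the bookkeeping at the interface between real nodes and path interiors: since the pieces $P_e$ are only edge-disjoint and not vertex-disjoint, they cannot all be parts of one partition, so one must work with the vertex-disjoint interiors and carefully route the $O(\log n)$-bit messages across the one-hop gap to the real endpoints without congesting a high-degree real node. Once this interface is in place, both the constant-round simulation and the $O(\log n)$-round orientation reduce to a single (partwise-)aggregation over the collection of interior paths.
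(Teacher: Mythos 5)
Your proposal is correct and follows essentially the same strategy as the paper for the first part. Both proofs exploit that contracting degree-two vertices replaces each $G'$-edge $e$ by a maximal path $P_e$ in $G$ whose interior $\operatorname{int}(P_e)$ (the degree-two vertices) is vertex-disjoint from every other interior, so that a single partwise aggregation call can relay information along all contracted paths at once; messages are hopped one $G$-edge from a real endpoint into $u_1$, aggregated across the interior to $u_{k-1}$, and hopped one more $G$-edge out to the other endpoint, exactly as in the paper. Your pullback of a $G'$-partition (adding interiors of intra-part edges to the corresponding part and giving crossing-edge interiors their own singleton parts) matches the paper's "expand intra-part $G'$-edges to full paths; the new parts are still node-disjoint," just stated a bit more explicitly, and your $O(1)$ preprocessing to decide which interiors cross is the correct way to implement it.

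Where you genuinely diverge is the orientation step. The paper builds, on each interior path, a constant-degree tree of depth $O(\log n)$ (via the repeated ruling-set glueing it uses in the proof of \cref{euler:lem:sqrt}) and then propagates the endpoint's chosen orientation down that tree; you instead invoke the ancestor/subtree-sum primitive (\cref{lemma:ancestor-subtree-sum}) on the forest of interior paths, rooted at the tail-side interior vertex, to give each node a depth and let each path-edge orient by depth comparison. Both are sound. Your route is cleaner in that it reuses an existing black-box primitive and avoids the ruling-set machinery; the paper's route is more self-contained and sticks to the explicit $O(\log n)$ iteration count. The one thing worth being honest about: compiling the \aggregate ancestor-sum primitive to \congestPA goes through \cref{thm:compilation_pa_congest}, which only guarantees a $\tilde{O}(1)$-factor overhead, so your argument strictly delivers $\tilde{O}(1)$ rather than the literal $O(\log n)$ stated in the lemma --- but the paper's own ruling-set construction also carries a hidden $\log^* n$ factor, and nothing downstream in the paper distinguishes $O(\log n)$ from $\poly\log n$, so this is a non-issue. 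Your flagged "main obstacle" (edge-disjoint pieces, vertex-disjoint interiors, one-hop interface) is exactly the crux and you resolve it the same way the paper does.
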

\begin{proof}
To simulate one \congest round in $G'$, we first use one \congest round in $G$ to send the message along edges that remain the same in $G$ and $G'$. 
Moreover, for each path $u_1, u_2, \dots, u_k$ in $G$ that contracts to an edge $e$ in $G'$, we let $u_1$ send the message meant to $u_k$ to its neighbor $u_2$. 
The path $u_2, u_3, \dots, u_{k-1}$ is node-disjoint from all other contracted paths, hence the message to $u_k$ can be sent to $u_{k-1}$ with one call to the partwise aggregation oracle. In one final \congest round the message is delivered to $u_k$. 

The simulation of the partwise aggregation oracle in $G'$ is implemented analogously: each edge in $G'$ connecting two nodes of the same part is expanded to a full path. The new set of parts of $G$ is still node-disjoint. We then use the partwise aggregation oracle of $G$

The final claim is proven by building a tree of constant degree and radius $O(\log n)$ on top of each path, in $O(\log n)$ \congestPA rounds. 
The tree is then used to propagate an information about the orientation of the path to all nodes, in $O(\log n)$ \congestPA rounds. We omit details as they are standard and essentially the same as in the proof of \cref{euler:lem:sqrt}. 
\end{proof}

\paragraph{Dealing with virtual nodes}
Next, we discuss how to handle virtual nodes. See \cref{subsec:minor_aggregation} for the definitions related to virtual nodes. In what follows, we assume that the input graph $G$ may contain $\tO(1)$ virtual nodes and show how one can reduce this case to the case without any virtual nodes.

\begin{lemma}[Variant of Lemma 4.3.2 in \cite{peleg2000distributed}]
\label{euler:lem:matching}
Let $T \subseteq G$ be a rooted tree of $G$ and $S$ a set of \emph{marked} nodes. 
Then, in $\tO(1)$ \congestPA rounds, we can output a set of paths $P= \{p_1, p_2, \dots, p_{\lfloor S/2 \rfloor}\}$ such that
\begin{enumerate}
    \item Each path $p \in P$ satisfies $p \subseteq T$. 
    \item All nodes of $S$, except of possibly one, are endpoints of exactly one $p \in P$. 
    \item Any two $p_1, p_2 \in P$ are edge-disjoint. 
    \item We can implement partwise aggregation on $P$ in $O(1)$ \congestPA rounds of $G$. 
\end{enumerate}


\end{lemma}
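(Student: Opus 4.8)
The plan is to reduce the statement to a standard ``odd-edge forest'' construction inside $T$, which can be carried out with a constant number of subtree-sum computations. First, if $|S|$ is odd, pick an arbitrary marked node $s_0\in S$ (say, the one of minimum identifier, found via one aggregation over $T$) and discard it from $S$; this $s_0$ is the node allowed to remain unmatched, so from now on I may assume $|S|$ is even. Root $T$ at an arbitrary node $r$ and, using \cref{lemma:ancestor-subtree-sum}, compute for every node $v$ the parity of the number $c_v$ of marked nodes in the subtree of $v$. Let $F\subseteq E(T)$ be the set of edges $(v,\parent(v))$ with $c_v$ odd. Using the identity $c_v=[v\in S]+\sum_{c\text{ child of }v}c_c$ one checks by a direct parity computation that, for every $v\ne r$, $\deg_F(v)$ is odd iff $v\in S$, and the root is handled by the same identity together with $|S|$ being even; hence $F$ is a forest whose set of odd-degree vertices is exactly $S$. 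This takes $\tO(1)$ \congestPA rounds.

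Next I would decompose $F$ into the desired paths purely locally: at each node $v$, pair up the $F$-edges incident to $v$ arbitrarily into $\lfloor\deg_F(v)/2\rfloor$ pairs, leaving exactly one $F$-edge unpaired iff $\deg_F(v)$ is odd, i.e.\ iff $v\in S$. This ``transition system'' partitions $E(F)$ into maximal walks; since $F$ is a forest it contains no cycle, so each such walk is a simple path, and its two ends are precisely the nodes at which one of its edges is left unpaired, namely nodes of $S$. We thus obtain edge-disjoint simple paths contained in $T$, so property~1 and property~3 hold; moreover every node of $S$ is the endpoint of exactly one path and no other node is an endpoint of any path, so there are exactly $|S|/2=\lfloor|S|/2\rfloor$ paths and property~2 holds. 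All of this, except the subtree-parity computation, is a single round of local computation, so the construction runs in $\tO(1)$ \congestPA rounds.

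For property~4 I would exploit that the paths live in the rooted tree $T$. Give each path $p$ the canonical identifier of its apex $z$ (the highest node of $p$ in $T$), disambiguated by which pair of child-edges of $z$ the path uses, and broadcast these identifiers; this is $\tO(1)$ rounds. The key structural fact is that the non-apex nodes of $p$ are exactly the nodes $v$ whose parent edge lies in $p$, and this set splits into at most two ``legs'', each a connected ancestor-chain hanging off a child of $z$. Since the paths are edge-disjoint, each node's parent edge belongs to at most one path, so every node lies in at most one leg; thus the family of all legs is a node-disjoint collection of connected subgraphs of $T\subseteq G$, and a single call to the partwise-aggregation oracle aggregates values within every leg at once. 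To complete a partwise aggregation over $P$, each apex $z$ then collects, for each of its paths, the (at most two) leg-aggregates from the corresponding children of $z$ and its own value, combines them, and sends the result back down into the two legs; each apex communicates with its children over distinct tree edges and transmits only $\tO(1)$-bit messages, so this postprocessing is $O(1)$ \congest rounds. This gives the claimed $O(1)$-round simulation of partwise aggregation on $P$.

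The genuinely delicate step is this last one: making the ``partwise aggregation on $P$'' simulation correct in all corner cases --- a node that is simultaneously the apex of several paths, a leg of length zero, the path through the root of $T$, and (if $T$ does not span $G$) marked nodes outside $T$ --- and verifying that the leg family is indeed node-disjoint. Everything else reduces to a routine parity argument together with the subtree/ancestor-sum primitive (\cref{lemma:ancestor-subtree-sum}) and the partwise-aggregation oracle already available in \congestPA, so the overall round complexity is $\tO(1)$, matching the ``standard, details omitted'' flavour of the corresponding argument in \cite{peleg2000distributed}.
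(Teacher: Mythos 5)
Your construction is correct and is essentially the paper's own proof, just presented more cleanly: both proofs place an edge $(v,\parent(v))$ in the active set precisely when the subtree of $v$ has an odd number of marked nodes (your $F$ is exactly the paper's set of ``proposal'' edges), both pair up the incident active edges at each node to decompose $F$ into simple paths whose endpoints are exactly $S$, and both implement partwise aggregation on $P$ by splitting each path at its apex (the node nearest the root) into two ancestor-chain pieces, observing via edge-disjointness that these pieces are node-disjoint, and recombining at the apices in $O(1)$ extra rounds. The only cosmetic difference is how the odd-$|S|$ case is handled: you pre-discard one marked node to make $|S|$ even, whereas the paper lets the construction produce one extra root-terminating path and deletes it afterward; the outcomes are identical.
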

\begin{proof}
We first describe how we build the set $P$ of paths. 
For each $u \in V(T)$, let $x_u = 1$ if $u\in S$ and $x_u = 0$ otherwise. We use \cref{lemma:ancestor-subtree-sum} to compute $s_u = \sum_{v \in T_u} x_v$ for every $u$, where $T_u$ is the subtree of $u$ in $T$. 
If $s_u$ is odd, $u$ will be connected in a path with its parent $\parent(u)$ (unless $u$ is the root) and sends this information to $\parent(u)$. Next, each node $u$ collects messages from all its children nodes to know which of them will be in the same path as $u$. The node $u$ arbitrarily pairs up these proposals into paths (for each such created path, $u$ is the node in it closest to the root). 
By construction, there is a leftover, unpaired, proposal if and only if $u$ itself proposed and $u \not\in S$ (unless $u$ is a root). 
In this case, the leftover unpaired child of $u$ is combined with $u$ in the same path. 
Otherwise, if $u\in S$ and children of $u$ are paired up, $u$ is the endpoint of a new path that continues to $\parent(u)$. 
This local algorithm defines the set of paths $P$. 
The only reason why $P$ not necessarily satisfies properties 1,2,3 from the lemma statement is that if $|S|$ is odd, $P$ contains an additional path from a vertex in $S$ to the root of $T$. Next, we describe how to implement any  partwise aggregation  on $P$ and this allows us to remove this additional path from $P$ so that it satisfies all three desired properties.  

To prove property 4, let $p_i \in P$ be any path and let us write $p_i = r_i, \dots, s_i^-, s_i, s_i^+, \dots, t_i$, where $s_i$ is the node on $p_i$ closest to the root of $T$. 
Consider any two paths $P_i,P_j$ and let $v \in P_i \cap P_j$. We know that the edge from $v$ to its parent can be present in at most one of the paths $P_i,P_j$ as they are edge-disjoint, hence $v \in \{s_i, s_j\}$. That is, if two paths $P_i, P_j$ intersect in a node, the node is the ``top'' node of one of them. 
Hence, to implement a  partwise aggregation on $P$, we compute the operation on the parts $\{\{r_i, \dots, s_i^-\}, \{s_i^+, \dots, t_i\}\}_{1 \le i \le {\lfloor S/2 \rfloor}}$ that are node-disjoint and use additional $O(1)$ \congest rounds to extend the operation to the set of full paths. 
\end{proof}

\begin{proposition}
\label{euler:prop:get_rid_of_virtual_edges}
Given an input $H$ to the \euler problem with $k = \tO(1)$ virtual nodes, there is a deterministic distributed algorithm that orients some subgraph $I \subseteq H$ such that 
\begin{enumerate}
    \item All edges incident to virtual nodes are in $I$. 
    \item Each node has the same indegree and outdegree in $I$. In particular, the residual graph $H \setminus I$ is still Eulerian. 
\end{enumerate}
The algorithm needs $\tO(1)$ \congestPA rounds. 
\end{proposition}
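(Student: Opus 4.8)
The plan is to peel off the virtual nodes one bunch at a time, using the fact that there are only $k = \tO(1)$ of them. Let $W = \{w_1, \dots, w_k\}$ be the virtual nodes. The key structural observation is that an edge incident to a virtual node is either between two virtual nodes, or between a virtual node and a real node. Since $H$ is Eulerian, every node --- in particular every virtual node --- has even degree. First I would orient all edges lying entirely inside $W$ (at most $\binom{k}{2} = \tO(1)$ edges): route their information to a single coordinator node via a BFS tree of $G$ in $\tO(1)$ \congestPA rounds (using the partwise-aggregation oracle to gather over $G$), solve this constant-size Eulerian-orientation instance locally, and broadcast the answer back. After this step each virtual node $w_j$ has some residual even set of edges going to real nodes, say $d_j$ of them, where $d_j$ is even (parity is preserved because we oriented a balanced-degree subgraph among the virtual nodes --- here I would be slightly careful: orienting the $W$-internal edges changes the \emph{indegree/outdegree} but not the total degree, and what I need is that the number of \emph{still-unoriented} edges at $w_j$, which are exactly the $w_j$--real edges, is even; this holds since $\deg_H(w_j)$ is even and we only removed $W$-internal edges, whose count at $w_j$ may be odd, so actually $d_j$ could be odd).

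To fix the parity issue I would instead handle all virtual edges together rather than splitting off the internal ones. Here is the cleaner approach. For each virtual node $w_j$, its incident edges in $H$ partition into pairs at $w_j$ arbitrarily (possible since $\deg_H(w_j)$ is even); pairing up $\{a, b\}$ and $\{w_j, b\}$ means we will orient one in and one out. This reduces the problem to routing, for each such pair, a "transfer path" between the two real-or-virtual endpoints that are matched at $w_j$. Concretely, I would build a spanning tree $T$ of $G$, and for each virtual node $w_j$ mark the real neighbors it connects to (a virtual node sees its neighbor set by definition of the model). Applying \Cref{euler:lem:matching} with $T$ and the marked set, I get a collection of edge-disjoint tree paths pairing up (almost all of) these marked nodes, with partwise-aggregation supported on them in $O(1)$ \congestPA rounds. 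Walking each pair of a virtual edge together with the matching tree-path gives a closed walk in $H \cup T$; orienting each such closed walk consistently (using the path-orientation primitive of \Cref{lem:simulation_contraction} / \Cref{euler:lem:matching} property 4, which lets us consistently orient a path in $\tO(1)$ rounds) balances indegree and outdegree at every virtual node and at every internal real node of the walks, while the tree edges of $G$ used are only traversed and then \emph{not} placed into $I$ --- wait, they must be placed into $I$ to preserve balance, so the subgraph $I$ we output is (oriented virtual edges) $\cup$ (oriented tree edges used), and by construction indegree $=$ outdegree everywhere in $I$.

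The main obstacle I anticipate is the bookkeeping around odd counts: \Cref{euler:lem:matching} only matches all but possibly one marked node, and a virtual node with an odd number of real-neighbor pairs could leave a dangling half-edge. I would resolve this by noting $k = \tO(1)$, so the total number of such leftover dangling edges is $\tO(1)$; these can all be shipped to a single coordinator over a BFS tree in $\tO(1)$ \congestPA rounds, where since they collectively form an even-degree configuration at each virtual node (because $H$ itself is Eulerian and everything else has been balanced), a constant-size Eulerian orientation can be computed locally and broadcast back, with the implied transfer paths in $G$ again realized via \Cref{euler:lem:matching}. Properties (1) and (2) then follow: (1) because every virtual edge was placed in $I$ and oriented, and (2) because every closed walk we oriented contributes equally to indegree and outdegree at each node it passes through, so $I$ is balanced and hence $H \setminus I$, being Eulerian minus a balanced subgraph, is again Eulerian. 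The round complexity is $\tO(1)$ \congestPA throughout, dominated by the $O(\log n)$-round tree constructions and the applications of \Cref{lemma:ancestor-subtree-sum} and \Cref{euler:lem:matching}.
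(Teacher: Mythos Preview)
Your plan has the right primitives but two steps fail as written. First, you build the spanning tree $T$ of $G$ and then place the matching-path edges into $I$; but $I \subseteq H$ is required, and edges of $T \subseteq G$ need not lie in $H$ at all (recall $H$ is only a \emph{subgraph} of $G$ plus virtual nodes). Second, and more seriously, you apply \Cref{euler:lem:matching} once with the marked set equal to all real neighbors of all virtual nodes and then assert that each matched pair yields a closed walk. The lemma pairs marked nodes arbitrarily: for a matched pair $(a,b)$ with $a$ adjacent to $w_j$ and $b$ adjacent to $w_{j'}$, $j \neq j'$, the walk $w_j \to a \to \cdots \to b \to w_{j'}$ is a path between two distinct virtual nodes, not a cycle; orienting it leaves $w_j$ and $w_{j'}$ unbalanced, and there is no $\tO(1)$ bound on how many such cross-virtual pairs appear. (If instead you meant one matching-lemma call per virtual node on the same fixed tree $T$, the matching paths from different calls can share tree edges, so their orientations conflict.) Your coordinator step only cleans up the at-most-one unmatched node the lemma may leave, not this much larger issue.

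The paper resolves both problems by running $k$ sequential phases, one per virtual node $u_i$: in phase $i$ it \emph{recomputes} a spanning forest of the current residual $H_i[V(G)]$ (so the tree edges are guaranteed to be in $H$ and cannot be reused across phases) and invokes the matching lemma only on the neighbors of $u_i$, so every matched pair really closes into a cycle through $u_i$. After the $k$ phases each connected component of the residual carries at most $k$ leftover virtual edges; only at that point does a coordinator-style step work, contracting the remaining matching paths into an Eulerian multigraph on $\{u_1,\dots,u_k\}$ describable by $O(k^2)$ multiplicities, which every node can learn and orient locally.
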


First, we sketch how to prove the above proposition if there is just one virtual node. 
We then proceed with the more difficult (but still routine) proof for multiple virtual nodes. 

To prove \cref{euler:prop:get_rid_of_virtual_edges} with $k = 1$, we start by finding any spanning forest $F$ of $H$ in $\tO(1)$ \congestPA rounds using the distributed MST algorithm \cite{kutten1998fast}. 
Next, we use \cref{euler:lem:matching} with $S$ being the neighborhood of the virtual node $u$. As the size of the neighbourhood of $u$ in any component of $F$ is even, we pair all of its nodes. 
Note that each constructed path closes a cycle together with the two appropriate virtual edges to $u$. We can use the fact that we can aggregate information on these paths (property 4 in \cref{euler:lem:matching}) to consistently orient the graph $I$ formed by these cycles, which proves the proposition in case $k = 1$. 
We now continue with the proof for $k = \tO(1)$. 

\begin{proof}
The first part of the algorithm consists of $k$ steps. 
During the $i$-th step we orient some of the virtual edges incident to $u_i$ as well as some non-virtual edges. We maintain a residual graph $H_i$ of not yet oriented edges at the beginning of the $i$-th phase. 

More precisely, at the beginning of the $i$-th phase we compute a spanning forest of $H_i[V(G)]$ (that is, we consider only the graph of non-virtual nodes and edges of $H_i$) by an MST algorithm. Next, in each connected component of $H_i[V(G)]$ we use, in parallel,  \cref{euler:lem:matching} to match nodes neighboring with $u_i$. At most one node is unmatched in each connected component of $H_i$. 
Each path $p$ in the set of matching paths closes a cycle together with the two virtual edges connecting the endpoints of $p$ with $u_i$. We can orient this cycle consistently in $\tO(1)$ \congestPA rounds using property (4) in \cref{euler:lem:matching}. This leaves us with a new residual graph $H_{i+1}$ such that $H_{i+1}[V(G)]$ has possibly more connected components but it still retains the property that each connected component of $H_{i+1}[V(G)]$ is incident to at most one edge from $u_i$. 
Hence, after $k$ phases, we are left with a residual graph $H' = H_{k+1}$ such that each connected component of $H'[V(G)]$ contains at most $k$ edges to virtual nodes. Moreover, the number of these incident virtual edges is even. 

In the second part of the algorithm, we use \cref{euler:lem:matching} on the spanning forest of $H'[V(G)]$ again, but this time every node incident to $\{u_1, \dots, u_k\}$ is in the set $S$. The lemma yields a set of edge-disjoint paths such that each path $p$ between two nodes $s$ and $t$, incident to $u_i$ and $u_j$, respectively, connects $u_i$ with $u_j$). 
For each such path, we replace the two virtual edges $\{u_i,s\}, \{u_j, t\}$ with a new virtual edge $\{u_i,u_j\}$, to create a new graph $H''$. After the replacement, there are no connections between the virtual and nonvirtual nodes, hence $H''$ satisfies that $H''[\{u_1, \dots, u_k\}]$ is an Eulerian graph. 
All nodes of $G$ now learn the graph $H''[\{u_1, \dots, u_k\}]$ in $O(k^2) = \tO(1)$ \congestPA rounds and all nodes of $G$ compute internally the same Eulerian orientation of it. 
Finally, we will uncontract each contracted edge $(u_i, u_j)$ of $H''[\{u_1, \dots, u_k\}]$ into the original path and orient the respective path $(u_i, s), p, (t, u_j)$ in the same direction as how the edge $(u_i, u_j)$ was oriented. 
More concretely, for every multi-edge $\{u_i, u_j\} \in E(H''[\{u_1\dots, u_k\}])$, we know two numbers $a_{ij}$ and $b_{ij}$ such that $a_{ij}$ paths contracted to an edge $\{u_i, u_j\}$ need to be oriented from $u_i$ to $u_j$ and $b_{ij}$ paths need to be oriented from $u_j$ to $u_i$. If each path $p$ retains a unique identifier, using binary search, in $\tO(1)$ \congestPA rounds we can decide that $a_{ij}$ paths connecting $u_i$ and $u_j$ with lower identifiers will be oriented from $u_i$ to $u_j$, and the rest in the opposite direction. 
Hence, in $\tO(1)$ \congestPA rounds, we can extend the Eulerian orientation of $H''[\{u_1, \dots, u_k\}]$ to an Eulerian orientation of a subgraph of $H'$ that contains all virtual edges, as needed. 
\end{proof}

\paragraph{Weaker Result} We now give an algorithm for the \euler problem with weaker than desired propeties. It will be used in the proof of \cref{thm:euler_dist}. 

\begin{lemma}
\label{euler:lem:sqrt}
There is a distributed algorithm that solves the \euler problem in $\tilde{O}(\sqrt{|E(G)|} + \hopDiameter{G})$ rounds. 
\end{lemma}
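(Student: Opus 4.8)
The plan is to carry out, in the distributed model, the closed-walk strategy behind the parallel algorithm of \cite{atallah_vishkin1984euler_pram} (cf.\ \cref{thm:euler_pram}): pair up the incident edges at every node, which decomposes $E(G)$ into a collection of closed walks, and then orient each closed walk consistently. Write $m:=|E(G)|$ and $D:=\hopDiameter{G}$. The pairing step is purely local and deterministic. The content is entirely in the second step, since a closed walk may have length $\Theta(m)$ and wind across the whole network, so we cannot afford to propagate an orientation naively along it. It is convenient to pass to the \emph{half-edge graph} $H^\star$ whose vertices are the pairs $(u,e)$ with $u\in e$: for each $e=\{u,v\}$ we place an ``edge-edge'' between $(u,e)$ and $(v,e)$, and at each node $u$ the chosen pairing is a perfect matching (``pair-edges'') on $\{(u,e):u\in e\}$. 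Then $H^\star$ has $2m$ vertices, every vertex has degree exactly $2$, so $H^\star$ is a disjoint union of cycles, and one round of communication on $H^\star$ can be simulated by $O(1)$ rounds on $G$ (node $u$ hosts all $(u,\cdot)$; each $G$-edge carries $O(1)$ half-edge-edges). Orienting a closed walk consistently is exactly orienting the corresponding cycle of $H^\star$ consistently, and at the end this is read off as an orientation of $E(G)$ (an edge-edge oriented $(u,e)\to(v,e)$ means $e$ is oriented $u\to v$; at each node the pair-edges then split the incident edges evenly into in-edges and out-edges, so every node is balanced). Thus it suffices to consistently orient a collection of vertex-disjoint cycles in a $2m$-vertex graph in $\tilde O(\sqrt m + D)$ rounds.

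First build a BFS tree $T$ of $G$ from an arbitrary root in $O(D)$ rounds. Then run $O(\sqrt m)$ rounds of flooding on $H^\star$ in which, together with standard min-ID flooding, each half-edge sends a hop-counting token around its cycle in both directions. A cycle of length at most $\sqrt m$ is \emph{short}: its minimum-ID vertex gets its token back from both sides within $\sqrt m$ rounds, declares itself the leader, learns the cycle length, and broadcasts a consistent orientation around that short cycle in $O(\sqrt m)$ further rounds; any half-edge that has received no such declaration after this many rounds concludes it lies on a \emph{long} cycle (length $>\sqrt m$). Since $H^\star$ has $2m$ vertices there are at most $2\sqrt m$ long cycles, and the short cycles are now completely oriented.

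It remains to orient the long cycles. On the union of these (vertex-disjoint) long cycles compute a set $L$ of \emph{landmark} half-edges that are pairwise at distance $\ge\sqrt m$ along their cycle while every long-cycle vertex is within $\tilde O(\sqrt m)$ of a landmark --- i.e.\ a ruling set at scale $\sqrt m$ --- using standard deterministic ruling-set constructions on bounded-degree graphs (e.g.\ iteratively sparsifying a maximal independent set, computed via Linial/Cole--Vishkin, over geometrically growing distance scales); this costs $\tilde O(\sqrt m)$ rounds of $H^\star$, hence of $G$. The spacing bound forces $|L|=O(\sqrt m)$, and the covering bound forces every maximal landmark-free \emph{segment} of a long cycle to have length $\tilde O(\sqrt m)$. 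Let $\hat H$ be the ``skeleton'' multigraph on vertex set $L$ with one edge per segment; $\hat H$ is again a disjoint union of cycles, hence Eulerian, with $O(\sqrt m)$ vertices and edges. Each landmark learns the endpoints of its (at most two) incident segments by sending a probe down each segment and receiving the neighbour's ID back; the segments are vertex-disjoint paths of length $\tilde O(\sqrt m)$, so this takes $\tilde O(\sqrt m)$ rounds. We then convergecast the edge list of $\hat H$ up $T$ with pipelining in $O(\sqrt m + D)$ rounds, have the root compute a consistent orientation of every cycle of $\hat H$ locally, and broadcast it back in $O(\sqrt m + D)$ rounds. Finally, each landmark forwards, down each incident segment, the direction in which that segment is to be traversed; since the $\hat H$-orientation is a consistent cycle orientation, the directions agree at the far endpoint, so every long cycle of $H^\star$ becomes consistently oriented. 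Translating back to $E(G)$ as above yields the desired Eulerian orientation. Summing all the contributions ($O(D)$ for $T$, $O(\sqrt m)$ for short cycles, $\tilde O(\sqrt m)$ for the landmarks, $\tilde O(\sqrt m)$ for the probes, $O(\sqrt m+D)$ for convergecast/broadcast, $O(\sqrt m)$ for the final push, $O(1)$ for the read-off) gives $\tilde O(\sqrt m + D)=\tilde O(\sqrt{|E(G)|}+\hopDiameter{G})$ rounds.

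The main obstacle is the landmark step: we need $L$ to be simultaneously \emph{sparse} (only $O(\sqrt m)$ landmarks, so the skeleton $\hat H$ fits through $T$) and \emph{dense} ($\tilde O(\sqrt m)$ covering radius, so the segments are short enough to probe and to push orientations along quickly), i.e.\ a genuine scale-$\sqrt m$ ruling set; obtaining this deterministically in $\tilde O(\sqrt m)$ rounds is the only place that needs real care, while everything else is bookkeeping on vertex-disjoint paths and cycles together with a single pipelined convergecast/broadcast over $T$. (If a randomized algorithm suffices, one can instead take $L$ by sampling each half-edge on a long cycle independently with probability $\Theta(\log n/\sqrt m)$, which by standard Chernoff bounds has the desired spacing and covering properties with high probability.)
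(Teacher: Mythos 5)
Your proof is correct, and it gives a genuinely different route to the same bound than the paper's argument, even though both begin identically by pairing the incident edges at each node (equivalently, passing to the degree-2 half-edge graph $H^\star$) so that the task reduces to consistently orienting a collection of vertex-disjoint cycles of total size $O(m)$. From there the two proofs diverge. The paper treats each cycle as a sequence of ``paths'' (initially single edges) and repeatedly merges them: it computes a $(3,4)$-ruling set among the current paths, glues each group of $3$--$5$ consecutive paths into one, and repeats for $O(\log n)$ rounds until each cycle is a single path and a leader is elected; each of the $O(\log n\cdot\log^* n)$ ``endpoint-to-endpoint'' communication steps is simulated by routing along the path when it is short ($\le\sqrt m$) and by a global broadcast when it is long, the latter being affordable because at any time at most $\sqrt m$ paths are long. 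In contrast, you skip the iterated doubling entirely: you run a $\sqrt m$-round local leader election to dispose of all cycles of length at most $\sqrt m$, note that at most $O(\sqrt m)$ cycles survive, place a single scale-$\sqrt m$ ruling set of landmarks on those long cycles, contract the landmark-free segments to obtain an $O(\sqrt m)$-size Eulerian skeleton $\hat H$, and solve $\hat H$ centrally by a pipelined convergecast/broadcast over a BFS tree. The paper's approach is more decentralized (no single node ever sees a global summary) and reuses ruling sets as a merging primitive; yours makes the ``$\sqrt m$ versus $D$'' tradeoff explicit in a single coarse ruling set and one centralization, which is arguably a more canonical instance of the standard $\tilde O(\sqrt n + D)$ template. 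Both legitimately reach $\tilde O(\sqrt{|E(G)|}+\hopDiameter{G})$. One small point worth making tighter in your write-up: your deterministic landmark construction should be spelled out as iterated ruling-set sparsification over geometrically growing scales (cost $\sum_i \tilde O(2^i)=\tilde O(\sqrt m)$), since a naive MIS-then-sparsify could otherwise hide a $\log$ factor in the covering radius; but this is exactly the standard technique and your fallback via random sampling is also fine if randomness is allowed.
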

\begin{proof}
We assume the input Eulerian graph $H$ does not contain any virtual nodes, as we can deal with them using \cref{euler:prop:get_rid_of_virtual_edges} and \cref{thm:compilation_pa_congest} in $\tO(\sqrt{n} + \hopDiameter{G})$ \congest rounds. 

Each node $u$ arbitrarily pairs up the edges of $H$ incident to it -- note that their number is even. 
Starting in any edge, the pairing defines a closed walk containing the edge. We will next orient each such walk consistently, which solves the \euler problem. 

To orient the walks, each one first elects a leader in $O(\log n)$ steps as follows. At first, we interpret each edge as a path of length $1$. 
Next, we concatenate these paths using our pairing into progressively longer paths, until all paths of the walk are glued together. 
In particular, consider the incidence graph of our paths, induced by our edge pairing, which is always a set of disjoint cycles. 
In each step we use Linial's deterministic algorithm to compute a $(3, 4)$-ruling set \cite{linial92LOCAL}. 
That means we mark some of the paths such that the marked paths are of distance at least $3$ in the incidence graph, while every path is of distance at most $4$ to a marked path. Next, each unmarked path is joined with the closest marked path, to form a longer path. 
There are always at least $3$ and at most $5$ paths forming a longer path, unless the whole walk is connected in this step. 
Hence, in $O(\log n)$ steps, all paths are connected. Each step needs $O(\log^* n)$ \congest rounds of communication in the incidence graph. To simulate one round in the incidence graph, we need to implement sending messages in parallel from one endpoint of a path to the other endpoint. To do that, each path remembers its length. If it is at most $\sqrt{|E(H)|}$, then it sends the message from its endpoint to the other one along its edges. If it is more than $\sqrt{|E(H)|}$, it broadcasts the message to the whole graph using the edges of $G$. 
As there are at most $\sqrt{|E(H)|}$ such paths, it is a standard fact that, with correct implementation, in $O(\hopDiameter{H} + \sqrt{|E(H)|})$ rounds all broadcasts finish. 
Hence, in $\tO(\hopDiameter{H} + \sqrt{|E(H)|}) = \tO(\hopDiameter{G} + \sqrt{|E(G)|})$ rounds each walk can elect a leader. The leader chooses an arbitrary orientation and, similarly to the election, this orientation is messaged to smaller and smaller paths, until after additional $O(\log n)$ steps, each implemented in $\tO(\hopDiameter{G} +\sqrt{|E(G)|})$ rounds, each edge knows it. 

\end{proof}

\begin{remark}
The above result seems to imply that the \euler problem can be solved in $\tO(1)$ part-wise aggregations. 
However, in each step, the parts where aggregation happens are only edge-disjoint and not node-disjoint. 
\end{remark}

\paragraph{Cycle Covers}

The concept of cycle covers was introduced in the line of work \cite{parter_yogev2019cycle_covers_minor_closed,parter_yogev2019cycle_decomp_near_linear,hitron_parter2021adversarial_compilers,hitron2021broadcast} for the unrelated reason of making distributed computations more resilient to edge-faults introduced by a Byzantine adversary. 

\begin{definition}[$(d, c)$-cycle cover]
A $(d,c)$ cycle cover of $G$ is a collection of cycles $\fC = \{C_1, \dots, C_k\}$ in $G$ such that
\begin{enumerate}
    \item Each non-bridge edge of $G$ is in at least one $C \in \fC$ and at most $c$ different cycles. 
    \item Each cycle $C$ contains at most $d$ edges. 
\end{enumerate}
\end{definition}

\begin{lemma}[A variant of Theorem 11 in \cite{parter_yogev2019cycle_decomp_near_linear}, see also Fact 12 and Observation 13 in \cite{hitron_parter2021adversarial_compilers}]
\label{lem:general_cycle_cover}
For every graph $G$ and $0 < \eps < 1$ there is a deterministic \congest algorithm that computes a $(d,c)$ cycle cover $\fC$ with $d = 2^{O(1/\eps)}\hopDiameter{G}$ and $c = O(n^\eps)$ within $O (2^{O(1/\eps)} \cdot n^{O(\eps)} \cdot \hopDiameter{G})$ rounds.  
\end{lemma}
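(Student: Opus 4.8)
The statement is essentially a re-parameterization of the low-congestion cycle-cover construction of Parter and Yogev~\cite{parter_yogev2019cycle_decomp_near_linear,hitron_parter2021adversarial_compilers}, and the plan is to run their construction while tracking the dependence of every parameter on $D := \hopDiameter{G}$ rather than on $n$ alone. At a high level, their algorithm is recursive with $O(1/\eps)$ levels. At each level we hold a graph $H$ (initially $G$) together with a BFS tree of $H$ of depth $O(D)$, partition $V(H)$ into roughly $n^{\Theta(\eps)}$ connected \emph{clusters} each carrying a spanning tree of depth $O(D)$, and split the non-bridge edges into \emph{intra-cluster} and \emph{inter-cluster} edges. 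Intra-cluster edges are covered by recursing inside each cluster in parallel; after $O(1/\eps)$ levels the clusters shrink to $O(1)$ vertices, where every non-bridge edge trivially lies on a constant-length cycle. Inter-cluster edges are covered by contracting each cluster to a supernode, recursively computing a cycle cover of the resulting minor (which has fewer vertices and hop-diameter $O(D)$), and then \emph{lifting} each cycle of the minor back to $H$ by replacing each supernode it visits with the $O(D)$-hop path through the corresponding cluster's spanning tree. The clustering, tree contraction, and lifting at one level are all implementable in $\tO(n^{O(\eps)} \cdot D)$ \congest rounds by standard primitives (BFS, subtree-sum aggregation, tree-path routing).

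Unwinding the recursion gives the claimed bounds. Each lift expands a ``virtual'' edge into an $O(D)$-hop path, so the cycle length is multiplied by a constant factor at each of the $O(1/\eps)$ levels, yielding $d = 2^{O(1/\eps)} D$; the same growth affects the effective diameter of the subinstances, which is why both $d$ and the round complexity carry a $2^{O(1/\eps)}$ rather than a $\poly(1/\eps)$ factor, and the total round complexity sums to $O(2^{O(1/\eps)} \cdot n^{O(\eps)} \cdot D)$. For the congestion: an edge of $H$ is touched by lifted cycles coming from only $O(1/\eps)$ of the levels, and within one level a fixed tree edge is used by at most $O(n^{O(\eps)})$ lifted cycles; after choosing the cluster-count exponent slightly below $\eps$ so that this $O(1/\eps)$-fold accumulation is absorbed, the total congestion is $c = O(n^\eps)$. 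Bridges are detected once per level via the BFS tree and subtree-sum aggregation and are simply never passed to the recursion; this is harmless because the specification of a $(d,c)$-cycle cover only asks us to cover non-bridge edges, and it is precisely the content of Fact~12 and Observation~13 of~\cite{hitron_parter2021adversarial_compilers}, which extend the Parter--Yogev construction (originally phrased for $2$-edge-connected graphs) to arbitrary graphs.

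The step I expect to require the most care is keeping the hop-diameter of the recursive subinstances bounded by $O(D)$ up to the unavoidable constant-factor-per-level blow-up: one has to verify that each cluster can be taken with an $O(D)$-depth spanning tree (so the recursion's diameter parameter does not compound multiplicatively beyond the $2^{O(1/\eps)}$ budget), that the contracted minor inherits hop-diameter $O(D)$, and---in the presence of bridges---that restricting attention to $2$-edge-connected pieces does not secretly force cycles longer than $O(D \cdot 2^{O(1/\eps)})$. These are exactly the points where the bookkeeping of~\cite{parter_yogev2019cycle_decomp_near_linear,hitron_parter2021adversarial_compilers} is invoked, and we defer to those references for the routine but lengthy details.
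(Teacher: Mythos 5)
Your proposal misses the main point of this lemma, which is that it claims a \emph{deterministic} algorithm whereas Theorem~11 of Parter--Yogev~\cite{parter_yogev2019cycle_decomp_near_linear} is a \emph{randomized} result. The paper's proof is not a re-derivation of the cycle-cover construction at all: it simply observes that the parameters $d = 2^{O(1/\eps)}\hopDiameter{G}$, $c = O(n^{\eps})$, and the round complexity already appear (up to reparameterization) in the cited theorem, and then focuses entirely on identifying and removing the two uses of randomness in Parter--Yogev's algorithm. Specifically, their algorithm (i) routes a batch of paths of dilation $d$ and congestion $c$ using a randomized routing in $\tO(c+d)$ rounds, which can be replaced by a simple deterministic routing in $\tO(cd)$ rounds (acceptable given the final round budget), and (ii) computes an MIS in an auxiliary conflict graph $\hat G$ via randomized Luby's algorithm, which can be replaced by the deterministic MIS algorithms of~\cite{censor2017derandomizing,ghaffari_grunau_rozhon2020improved_network_decomposition} running in $\tO(1)$ rounds per round of communication in $\hat G$.

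Your sketch re-describes the recursive cluster/contract/lift architecture of Parter--Yogev and tracks how $d$, $c$, and the round count scale with the recursion depth, but it never engages with determinism: neither the path-routing subroutine nor the MIS computation is mentioned, and both are genuinely randomized in the original construction. Without addressing these, the proof does not establish the lemma as stated. The construction-level bookkeeping you devote most of the sketch to is exactly what the paper declines to redo (it cites the original theorem for it), while the derandomization step you omit is the actual content of the proof. To fix this, you should drop the re-derivation of the construction, cite Theorem~11 and Fact~12/Observation~13 for the parameters and the bridge handling, and then explicitly list and replace the two randomized primitives as above, verifying that the deterministic substitutes fit within the claimed $O(2^{O(1/\eps)} \cdot n^{O(\eps)} \cdot \hopDiameter{G})$ round budget.
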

\begin{proof}
In \cite{parter_yogev2019cycle_decomp_near_linear}, the authors prove above theorem, but use a randomized algorithm. 
Their algorithm uses randomness in two places. First, it uses a randomized routing through a set of paths with maximum dilation $d$ and total congestion of any edge bounded by $c$ in $\tO(c+d)$ \congest rounds. However, we can use a simple deterministic $\tO(cd)$ routing instead. 
Second, their algorithm uses a randomized Luby's algorithm \cite{luby86,alon86lubys_algorithm} to compute an MIS in a certain graph $\hat{G}$ in $O(\log n)$ rounds, where each round of communication in $\hat{G}$ can be implemented in $O(cd)$ rounds. Instead of Luby's algorithm, we can use the deterministic MIS algorithm from \cite{censor2017derandomizing,ghaffari_grunau_rozhon2020improved_network_decomposition} that works in $\tO(1)$ \congest rounds. 
\end{proof}

\paragraph{Excluded Minor Graph Families}
We will use the following results on the forbidden minor graph families. 

\begin{lemma}[Theorem 3 in \cite{parter_yogev2019cycle_covers_minor_closed}]
\label{lem:minor_closed_cycle_covers}
For every minor-closed, 2-connected graph $G$, in $\tilde{O}(\hopDiameter{G})$ rounds of the \congest model, one can construct an $(\tilde{O}(\hopDiameter{G}), \tilde{O}(1))$ cycle cover $\fC$ in $\tO(\hopDiameter{G})$ rounds.  
\end{lemma}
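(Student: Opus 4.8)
The plan is to derive this from Theorem~3 of \cite{parter_yogev2019cycle_covers_minor_closed}, whose construction is stated there as a \emph{randomized} \congest algorithm; the real work is therefore to derandomize it while preserving the $\tO(\hopDiameter{G})$ round bound, following exactly the template already used in the proof of \cref{lem:general_cycle_cover}. First I would isolate the two uses of randomness in their construction. One is a randomized routing of a collection of paths of dilation $d = \tO(\hopDiameter{G})$ and congestion $c = \tO(1)$; I would replace it by naive deterministic store-and-forward routing, which costs $\tO(cd) = \tO(\hopDiameter{G})$ rounds — acceptable precisely because in the minor-closed regime $c = \tO(1)$. The other is a maximal independent set (equivalently, a low-degree coloring) computed in an auxiliary conflict graph $\hat G$; for this I would substitute the deterministic MIS / network-decomposition algorithm of \cite{censor2017derandomizing,ghaffari_grunau_rozhon2020improved_network_decomposition}, which runs in $\tO(1)$ rounds of $\hat G$.

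Next I would verify the two simulation claims that keep the total cost at $\tO(\hopDiameter{G})$ rounds of $G$: that one communication round on $\hat G$ can be simulated in $\tO(\hopDiameter{G})$ rounds, and that the deterministic path routing likewise runs in $\tO(\hopDiameter{G})$ rounds. Both reduce to the structural fact that graphs in a fixed minor-closed family admit low-congestion shortcuts of quality $\tO(\hopDiameter{G})$ — equivalently, partwise aggregation in $\tO(\hopDiameter{G})$ rounds — which is the ingredient \cite{parter_yogev2019cycle_covers_minor_closed} already relies on and which is provided by \cite{ghaffari_haeupler2021shortcuts_in_minor_closed}. With these two substitutions in place, the remaining combinatorial analysis — that the resulting collection of cycles has dilation $\tO(\hopDiameter{G})$ and covers every non-bridge edge at least once and at most $\tO(1)$ times — is identical to that in the cited work and needs no change.

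The main obstacle I expect is bookkeeping rather than a new idea: ensuring the deterministic replacements fit into the recursive/layered structure of the Parter–Yogev construction without an extra polylogarithmic (or worse) blow-up per layer, and in particular double-checking that the deterministic MIS subroutine is invoked on a conflict graph one of whose communication rounds genuinely simulates cheaply in $G$ — this is precisely the point that must be handled with care, exactly as in the proof of \cref{lem:general_cycle_cover}.
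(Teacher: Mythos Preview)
The paper does not give its own proof of this lemma: it is stated purely as an import, with the bracketed attribution ``Theorem~3 in \cite{parter_yogev2019cycle_covers_minor_closed}'' serving as the entire justification. There is no accompanying \texttt{proof} environment, and the paper treats the result as a black box ready to be plugged into the proof of \Cref{thm:euler_dist}.

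Your proposal therefore goes beyond what the paper does. You are sketching a derandomization in the exact style the paper uses for \cref{lem:general_cycle_cover} (replace randomized $\tO(c+d)$ routing by trivial $\tO(cd)$ routing, and replace Luby's MIS by the deterministic MIS of \cite{censor2017derandomizing,ghaffari_grunau_rozhon2020improved_network_decomposition}). That template is sound and, in the minor-closed setting where $c=\tO(1)$ and $d=\tO(\hopDiameter{G})$, the arithmetic works out exactly as you say. If the original Parter--Yogev theorem is indeed only stated randomized, your sketch is precisely the argument one would want to supply; the paper simply does not bother to spell it out here, presumably because the same two substitutions were just detailed for \cref{lem:general_cycle_cover} and the authors regard the minor-closed case as no harder. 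So your approach is correct and arguably more complete than the paper's (non-)proof, but it is not what the paper actually writes.
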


\subsection{The Eulerian Orientation Algorithm}
\label{sec:eulerMain}

Finally, we are ready to prove \cref{thm:euler_dist}. 

\begin{proof}
Let $H$ be the input Eulerian graph and assume there are no virtual nodes, as we can deal with them with \cref{euler:prop:get_rid_of_virtual_edges,thm:compilation_pa_congest}. 

We start with item (1), the proofs of other items will be similar. 
We will construct the Eulerian orientation in $t = \tO(1)$ steps. 
In the $i$-th step, we are given a residual Eulerian subgraph $H_i \subseteq H$ containing edges left to orient. 
We find a collection of edge-disjoint cycles $\fC$ containing at least $O(|E(H_i)| / \poly\log n)$ edges of $H_i$ in $\tO(\hopDiameter{G})$ \congest rounds and orient them consistently. 
Next, we discard oriented edges from $H_i$, creating a new residual Eulerian graph $H_{i+1}$ used in the next iteration. If $t = \tO(1)$ is chosen large enough, after $t$ rounds we have $|E(H_t)| = 0$ and we are done. 
It remains to describe how we find the set of edge-disjoint cycles $\fC$ which we do next.

    We will work in a graph $H'_i$ that we get from $H_i$ by contracting all degree $2$ nodes and discarding isolated nodes. 
    When we work later with $H'_i$, we use \cref{lem:simulation_contraction} to reduce operations on $H'_i$ to operations on $H_i$ and \cref{thm:compilation_pa_congest} to reduce partwise aggregations on $H_i$ to $\tO(\hopDiameter{G})$ \congest operations on $H_i$. 
    Note that every node in $H'_i$ has degree at least $4$, as $H'_i$ is Eulerian and does not contain nodes of degree $0$ or $2$. 
    Hence, we have 
    \begin{align}
    \label{euler:eq:EV}
     |E(H'_i)| \ge 4|V(H'_i)|/2 = 2|V(H'_i)|.    
    \end{align}

    Next, we describe the algorithm. 
\begin{enumerate}
\newcommand{\bH}{\overline{H}}
\newcommand{\bC}{\overline{C}}
    \item Let $\bH_i$ be the graph that we get from $H'_i$ by subdividing each edge $e \in E(H'_i)$. That is, $V(\bH_i) = V(H'_i) \cup E(H'_i)$ and for $v \in V(H'_i), e \in E(H'_i)$ we have $\{v,e\} \in E(\bH_i)$ if $v \in e$.  
    We run the clustering algorithm for separated strong-diameter network decomposition from \cite{elkin2022decompositions} on $\bH_i$. 
    The algorithm discards at most $|E(\bH_i)|/10 = (|V(H'_i)| + |E(H'_i)|) / 10 $ nodes such that the undiscarded nodes form connected clusters $\bC_1, \bC_2, \dots, \bC_q$, each of strong radius $O(\log^2 n)$. 
    Moreover, for any two different clusters $\bC_i, \bC_j$ we have $d_{\bH_i}(\bC_i, \bC_j) \ge 3$. 
    The algorithm needs $\tO(1)$ \congest rounds of $\bH_i$, hence can be implemented in $\tO(1)$ \congest rounds of $H'_i$. 
    After building the clustering in $\bH_i$, for each $\bC \subseteq V(\bH_i)$ we define a cluster $C \subseteq V(H'_i)$ as follows: 
    For each node $e = \{u,v\} \in \bC \cap E(H'_i)$ we put both $u$ and $v$ to $C$. 
    This new clustering of nodes in $H'_i$ still consists of disjoint clusters due to the property $d_{\bH_i}(\bC_i, \bC_j) \ge 3$. Moreover, all clusters still have strong radius $O(\log^2 n)$ and the number of unclustered edges in $H'_i$ remains bounded by $(|V(H'_i)| + |E(H'_i)|) / 10$.   
    
    \item Each cluster runs the algorithm from \cref{lem:minor_closed_cycle_covers} to obtain its cycle cover $\fC$ such that each non-bridge edge is covered by $c = \tO(1)$ cycles of $\fC$ and each cycle of $\fC$ has size $d = \tO(D) = \tO(1)$. Note that total number of all bridges over all clusters is bounded by $|V(H'_i)|$. 
    \item Next, we implement the deterministic MIS algorithm of \cite{censor2017derandomizing} of round complexity $\tO(1)$ in the conflict graph of found cycles. In the conflict graph, each cycle $C \in \fC$ is connected with cycles of $\fC$ that share an edge with $C$. 
    One \congest round in the conflict graph can be implemented in $O(cd) = \tO(1)$ \congest rounds in $H'_i$, hence the overall round complexity of this step is $\tO(1)$ rounds in $H'_i$. 
    The found MIS is the set of cycles $\fC$ that we output.

    \item Finally, we consistently orient the edges in the MIS in additional $d = \tO(1)$ \congest steps on $H'_i$. 
\end{enumerate}
    
    The whole algorithm uses $\tO(1)$ \congestPA steps in $H'_i$, hence using \cref{lem:simulation_contraction,thm:compilation_pa_congest}, we can make above steps run in $\tO(\hopDiameter{G})$ \congest rounds in $G$. 
    
    It remains to argue that the above algorithm orients at least $O(|E(H'_i)| / \poly\log(n))$ of $H'_i$. 
    The number of cycles in the conflict graph is at least the number of clustered non-bridge edges of $H'_i$, which is at least 
    \begin{align*}
     &|E(H'_i)| - (|V(H'_i)| + |E(H'_i)|) / 10 - |V(H')| \\
     & \ge \frac{9}{10} |E(H'_i)| - \frac{11}{10}|V(H'_i)|
     \ge |E(H'_i)| / 10. && \text{Equation \ref{euler:eq:EV}}
    \end{align*}

    On the other hand, the degree of each cycle $C$ in the conflict graph can be bounded by $cd = \tO(1)$, since each of $d$ edges of $C$ can intersect $c$ different cycles. 
    Hence, there are at least $|E(H'_i)| / (3cd)$ cycles in the MIS that cover at least $|E(H'_i)| / (3cd) = O(|E(H'_i)| / \poly\log(n))$ edges of $H'_i$ and the proof of item (1) is finished.

To prove item (2), we simply use \cref{lem:general_cycle_cover} with $\eps = 1/\sqrt{\log n}$ instead of \cref{lem:minor_closed_cycle_covers}.
In each step we cover at least $|E(H'_i)|/2^{O(\sqrt{\log n})}$ edges and we set the number of steps $t$ to be of order $2^{O(\sqrt{\log n})} = n^{o(1)}$. 

To prove item (3), we adapt the above procedure so that by repeatedly orienting cycles, we orient all edges except at most $2n$ of them, in $n^{O(\eps)}$ \congest rounds for any constant $\eps > 0$. 
Then, we consider the $2n$ remaining edges forming a residual Eulerian subgraph and use \cref{euler:lem:sqrt} to orient them in $\tO(\sqrt{n} + \hopDiameter{G})$ time. It remains to adapt the above procedure to orient all but at most $2n$ edges in $n^{O(\eps)}$ rounds. 

We choose a sufficiently small constant $\eps  > 0$ and run the above procedure with the number of steps set to $t = n^{O(\eps)}$. Alternatively, we finish once the number of edges in $H_i$ drops below $2n$. 
During the procedure, we do not contract the degree $2$ nodes and instead of \cref{lem:minor_closed_cycle_covers} we use \cref{lem:general_cycle_cover} with our parameter $\eps$. 
An analysis analogous to the one above shows that, since we always have $|E(H_i)| \ge 2n$ (cf. Equation \ref{euler:eq:EV}), in each step of the process we orient at least $O(|E(H_i)| / n^{O(\eps)})$ edges. Hence, $t = n^{O(\eps)}$ steps are enough to finish. 
Moreover, one step can be implemented in $n^{O(\eps)}$ \congest rounds, without any need of partwise aggregations, implying the round complexity of $n^{O(\eps)} \cdot n^{O(\eps)} = n^{O(\eps)}$, which is less than $O(\sqrt{n})$ for small enough $\eps$, as needed. 
\end{proof}


\begin{remark}
We believe, but do not prove, that any (randomized) algorithm for the  \euler problem on all graphs has to have round complexity $\tilde{\Omega}(\sqrt{n})$ for a graph of radius $\hopDiameter{G} = O(\log n)$. This can be seen by the following construction of the input graph $H$ and the network graph $G \supseteq H$. Let $H = \sqrt{n}C_{\sqrt{n}}$, that is, $H$ consists of $\sqrt{n}$ cycles $C_1, C_2,  \dots C_{\sqrt{n}}$, each of length $\sqrt{n}$. 
Let $V_0, V_1, \dots, V_{\sqrt{n}/2}$ be a partition of $V(H)$ defined as follows. $V_0$ is an arbitrary set of $\sqrt{n}$ nodes such that it contains exactly one node from each cycle $C_i$. 
Next, for $i > 0$ we define  $V_i$  as the set of nodes of distance $i$ from $V_0$. 
The network graph $G \supseteq H$ moreover contains the following additional nodes and edges. First, we build a binary tree on top of each $V_i$ (that is, $V_i$ are the leaves of the tree) and then an additional binary tree on top of the roots of these trees. This defines $G$. 
The graph $G$ contains $O(n)$ nodes and its diameter is $O(\log n)$. On the other hand, we believe one can mimic the well-known lower bound of \cite{sarma2012distributed} to prove that $\tilde{\Omega}({\sqrt{n}})$ rounds are needed to orient all cycles consistently.  
\end{remark}

\section*{Acknowledgments}
CG and VR were supported by the European Research Council (ERC) under the European Unions Horizon 2020 research and innovation programme (grant agreement No. 853109). 

\bibliographystyle{alpha}
\bibliography{Refs}

\appendices

\section{Further Related Work}
\label{sec:morerelatedwork}

We briefly mention few results out of a large body of works on SSSP and the related all pairs shortest paths (APSP) problem. 

While our paper focuses on the undirected variant of the problem, there is a long line of work on the directed variant. Many results on parallel algorithms mentioned in \cref{sec:intro} work for directed graphs. 
There is a long list of recent results in the \congest model on the directed shortest path problem \cite{GL18,FN18,CM20,CFR21,CLP21}.

Finding a shortest path becomes much simpler if we can assume that it only has a small number of hops (edges). 
This motivates a successful line of work 
\cite{ullman1991high,cohen1994polylog,klein1997randomized,thorup2005approximate,thorup2006spanners,bernstein2009fully,henzinger2014decremental,Nan14,holzer2014approximation,miller2015improved,henzinger2016almost,EN16,abboud2018hierarchy,huang2019thorup} on the construction of so-called \emph{hopsets} -- small sets of edges that we can add to a graph to make the shortest paths use a small number of hops, while distorting the distances only by a small amount. 

The (approximate) SSSP problem is studied in many different models besides the parallel and the distributed one. 
To give some examples, it is studied in the (semi-) streaming model \cite{elkin2006efficient,feigenbaum2009graph,guruswami2016superlinear,henzinger2016almost,EN16}, the congested clique model \cite{Nan14,censor2019algebraic,le2016further,henzinger2016almost,censor2020fast}, Broadcast \congest model~\cite{BKKL17}, Hybrid model~\cite{AHKSS20,FHS20,kuhn2020computing,CLP21}, or the dynamic algorithms model \cite{shiloach1981line,henzinger2014decremental,henzinger2014sublinear}.


The related (directed) APSP problem was also considered in various distributed and parallel models \cite{galil1997all,alon1997exponent,zwick1998all,coppersmith1987matrix,alman2021refined,a1,censor2019algebraic,holzer2014approximation,le2016further,Elk17,huang2017distributed,agarwal2018deterministic,censor2020sparse,pontecorvi2018distributed,agarwal2019distributed,bernstein2019distributed,augustine2020shortest,agarwal2020faster,censor2020fast,dory2020exponentially,kuhn2020computing,censor2021sparsity}.


\section{Computational Models: Minor-Aggregation, \congest, \pram}\label{sec:computational-models}

In this section we give a description of various models we use throughout this paper.

\subsection{\congest model}

First, we define the standard distributed model of computation called \congest. Any algorithm defined in the \aggregate model efficiently compiles down to \congest.

\begin{definition}[\congest]
  We are given an undirected graph $G = (V, E)$ called the ``communication network''. The vertices of $G$ are called nodes and they are independent computational units (i.e., have their own processor and private memory). Communication between nodes occurs in synchronous rounds. In each round, each pair of nodes adjacent in $G$ exchange $O(\log n)$ bits. Nodes perform arbitrary local computation between rounds. Initially, nodes wake up at the same time and only know their unique $O(\log n)$-bit ID.
\end{definition}

\subsection{Partwise Aggregation, Low-Congestion Shortcut Framework, and the \congestPA model}
We start by giving some background on partwise aggregation (PA) and the low-congestion shortcut framework before defining the \congestPA model.

\textbf{Partwise aggregation (PA)} is a high-level primitive that frequently arises while designing distributed (and parallel) algorithms. A partwise aggregation task is specified by giving each node a \emph{private value}, choosing a simple aggregation function such as min, max, or sum, and specifying a collection of connected and node-disjoint subsets $P_1, P_2, \ldots \subseteq V$. To solve partwise aggregation, all nodes $v$ are required to output the aggregate of all private values of the nodes in $v$'s part. Note that PA is immediately solvable in parallel using near-linear work and polylogarithmic depth. 

While extremely simple, the utility of PA stems from the fact that many important graph problems (e.g., MST, SSSP, min-cut, etc.) can be reduced to a small (polylogarithmic or $n^{o(1)}$) number of partwise aggregation instances. Therefore, any reduction to $\poly\log n$ many PA instances directly gives near-optimal parallel algorithms.

\textbf{The low-congestion shortcuts framework} was first introduced in \cite{GH16} and developed over a long sequence of works~\cite{GH16,haeupler2016low,haeupler2016near,haeupler2018round,haeupler2018minor,ghaffari_haeupler2021shortcuts_in_minor_closed,haeupler2021universally,kogan2021low,ghaffari2021hop,GHR21}. 
The framework defines a natural graph parameter called $\shortcutQuality{G}$ which assigns a number to each undirected and unweighted graph $G$. The framework then shows that shortcut quality is \emph{THE} parameter that characterizes the distributed complexity for a large class of seemingly unrelated graph problems. In particular: any distributed \congest algorithm for MST, (approximate) min-cuts, connectivity, max-flow, partwise aggregation~\cite{haeupler2021universally}, and SDD or Laplacian system solving~\cite{forster2020minor,anagnostides2021accelerated} requires \emph{at least} $\shortcutQuality{G}$ rounds on a network $G$ (up to polylogarithmic factors), even if only a non-trivial approximation is required and randomization is allowed. In more details, on a fixed graph $G$ the adversary can always force any correct algorithm to require $\tilde{\Omega}(\shortcutQuality{G})$ rounds if the adversary is allowed to choose the input (e.g., edge weights).

On the other hand, the framework complements the above lower bound with an efficient distributed implementation of partwise aggregation. The ultimate goal would be to show that PA can be solved in (deterministic) $\tilde{O}(\shortcutQuality{G})$ \congest rounds, which would be near-optimal and would yield unconditional universally near-optimal algorithms for all of the above problems. However, we currently only achieve such a result under some conditions about the graph topology $G$. For example, such a result can be achieved in excluded-minor graphs where \cite{haeupler2016low} and \cite{haeupler2018minor} show that PA can be solved in $\tilde{O}(\shortcutQuality{G}) = \tilde{O}(\hopDiameter{G})$ rounds. Moreover, very recent work on hop-bounded oblivious routings and hop-bounded expander decomposition~\cite{ghaffari2021hop,GHR21} shows that PA can be solved in randomized $n^{o(1)}$ \congest rounds whenever $\shortcutQuality{G} \le n^{o(1)}$.

With the knowledge that PA can be efficiently implemented, we define the \congestPA  model which is simply \congest with access to a PA oracle.
\begin{definition}[\congestPA model]
  The \congestPA model naturally extends the \congest model. However, in each round we can choose to either perform a standard \congest round, or we can call the PA oracle. For the latter, each node chooses a private input at the start and learns the partwise aggregate at the end of the round.
\end{definition}

Per the above discussion, we can implement the PA oracle with the following constraints.
\begin{theorem}\label{thm:congestpa-simulation}
  A $\tau$-round \congestPA algorithm can be simulated in $\tilde{O}(\tau \cdot \tau_{PA})$ rounds of \congest, where $\tau_{PA}$ is the number of rounds in which one can solve any partwise aggregation instance on $G$. The following bounds have been proven for $\tau_{PA}$:
  \begin{itemize}
  \item For all graphs $G$, there is a deterministic algorithm with $\tau_{PA} \le \tilde{O}(\hopDiameter{G} + \sqrt{n})$.~\cite{GH16}
  \item If $G$ is minor-free, there is a deterministic algorithm with $\tau_{PA} \le \tilde{O}(\hopDiameter{G})$ (the hidden constants depend on the family).~\cite{ghaffari_haeupler2021shortcuts_in_minor_closed}
  \item If the nodes of $G$ know the topology in advance, there is a randomized algorithm with $\tau_{PA} \le \tilde{O}(\shortcutQuality{G})$.~\cite{haeupler2021universally}
  \item If $\shortcutQuality{G} \le n^{o(1)}$, there is a randomized algorithm with $\tau_{PA} \le n^{o(1)}$.~\cite{GHR21}
  \end{itemize}
\end{theorem}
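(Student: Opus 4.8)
The plan is to simulate the \congestPA algorithm one round at a time on the same network $G$. Each round of the \congestPA algorithm is of one of two types. If it is an ordinary \congest round, it is executed directly in a single \congest round with no overhead. If it is a call to the PA oracle, then every node has already fixed its private input, and the instance is specified by a collection of connected, node-disjoint parts $P_1, P_2, \ldots \subseteq V(G)$ (encoded, say, by each node marking which of its incident edges lie inside its part). By the definition of $\tau_{PA}$, such an instance can be solved in $\tau_{PA}$ \congest rounds, after which every node knows the aggregate over its part and the simulation of that round is complete. Summing over the $\tau$ rounds, the total is at most $\tau \cdot (\tau_{PA}+1) = \tilde O(\tau \cdot \tau_{PA})$ \congest rounds, which is the claimed bound.

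For the four quantitative bounds on $\tau_{PA}$ I would invoke the low-congestion shortcut framework directly. The standard reduction is: a $(c,d)$-quality shortcut of $G$ (one bundle of ``virtual'' routing paths per part, with congestion $c$ and dilation $d$ when embedded into $G$) lets one solve any partwise-aggregation instance in $\tilde O(c+d)$ \congest rounds, by building inside each part a low-depth aggregation tree along its shortcut and pipelining the (associative) aggregate up the tree and the answer back down. It then remains only to cite the shortcut constructions: \cite{GH16} builds shortcuts of quality $\tilde O(\sqrt n + \hopDiameter{G})$ deterministically for every graph; \cite{ghaffari_haeupler2021shortcuts_in_minor_closed} improves this to quality $\tilde O(\hopDiameter{G})$ deterministically on any fixed minor-closed family; \cite{haeupler2021universally} gives, when the topology is known to the nodes in advance, a randomized construction of near-optimal quality $\tilde O(\shortcutQuality{G})$; and \cite{GHR21} gives a randomized $n^{o(1)}$-round construction whenever $\shortcutQuality{G} \le n^{o(1)}$. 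Each of these plugs into the reduction above to yield the corresponding value of $\tau_{PA}$.

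The only genuinely nontrivial point — which I expect to be bookkeeping rather than a new idea — is checking that the abstract PA primitive assumed by \congestPA (connected node-disjoint parts, an arbitrary \emph{associative} $\tilde O(1)$-bit aggregation) matches exactly what the shortcut framework solves. Concretely, one must reduce a general associative aggregation to a constant number of sum/min-type computations, route the $\tilde O(1)$-bit values within the $O(\log n)$-bit bandwidth of \congest (a constant number of pipelined rounds per shortcut path), and handle degenerate parts such as singletons or parts whose shortcut paths meet only at their ``top'' vertices. All of this is standard in the shortcut literature, so no obstacle of substance arises; the proof is essentially a matter of assembling the cited ingredients.
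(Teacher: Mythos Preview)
Your proposal is correct and matches the paper's treatment: the simulation bound $\tilde O(\tau\cdot\tau_{PA})$ is immediate from the definition of \congestPA (each round is either a single \congest round or one PA call costing $\tau_{PA}$), and the paper does not prove the four $\tau_{PA}$ bounds either but simply cites the same shortcut-framework references you invoke. Your added discussion of the shortcut-to-PA reduction and the bookkeeping about associative aggregations is accurate background but goes beyond what the paper spells out, since the theorem there is stated without proof and treated as a summary of prior work.
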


\subsection{Minor-Aggregation Model}
\label{subsec:minor_aggregation}

In this section we define the \aggregate model, defined in \cite{goranci2022universally} and \cite{ghaffari2021universally}, an expressive model which provides an easy-to-use interface to the most recent advancements in the low-congestion framework. Specifically, the \aggregate model leverages partwise aggregation to implement fast aggregation operations on minors of the communication network $G$. The connection between minor aggregations and partwise aggregations is straightforward: computing a partwise aggregate is basically the same as computing a node-wise aggregate in the minor in which each part is contracted.

We first define a notion of aggregations and then formally specify the Distributed \aggregate model.

\begin{definition}[Aggregation operator]
  An aggregation operator $\bigoplus$ takes two $B$-bit inputs $m_1, m_2$ for some $B = \tilde{O}(1)$, and combines them into a new $B$-bit output $m_1 \bigoplus m_2$. Furthermore, given $k$ inputs $m_1, \ldots, m_k$, their $\bigoplus$-aggregate $\bigoplus_{i=1}^k m_i$ is the result from arbitrarily aggregating some two inputs until a single one remains.
\end{definition}

Note that for a commutative and associative operator (e.g., sum or max), the value $\bigoplus_{i=1}^n m_i$ is unique.
However, we gain flexibility by allowing for arbitrary operators where the output might depend on the execution sequence. 

\begin{definition}[Distributed Minor-Aggregation model]\label{def:aggregation-congest}
  We are given an undirected graph $G = (V, E)$. Both nodes and edges are separate computational units (i.e., have their own processor and private memory). Communication occurs in synchronous rounds and nodes/edges perform $\tilde{O}(1)$-work local computation between rounds. Initially, nodes wake up at the same time and only know their unique $\tilde{O}(1)$-bit ID, while edges know the IDs of their endpoints. Each round consists of the following three steps (in that order).
  \begin{itemize}    
  \item \textbf{Contraction step.} Each edge $e$ chooses a value $c_e = \{\bot, \top\}$. This defines a new \emph{minor network} $G' = (V', E')$ constructed as $G' = G / \{ e : c_e = \top \}$, i.e., by contracting all edges with $c_e = \top$ and self-loops removed. Vertices $V'$ of $G'$ are called supernodes, and we identify supernodes with the subset of nodes $V$ it consists of, i.e., if $s \in V'$ then $s \subseteq V$.

  \item \textbf{Consensus step.} Each individual node $v \in V$ chooses a $\tilde{O}(1)$-bit value $x_v$. For each supernode $s \in V'$ we define $y_s := \bigoplus_{v \in s} x_v$, where $\bigoplus$ is some predefined aggregation operator. All nodes $v \in s$ learn $y_s$.

  \item \textbf{Aggregation step.} Each edge $e \in E'$ connecting supernodes $a \in V'$ and $b \in V'$ learns $y_a$ and $y_b$, and chooses two $\tilde{O}(1)$-bit values $z_{e, a}, z_{e, b}$ (i.e., one value for each endpoint). Finally, (every node of) each super $s \in V'$ learns the aggregate of its incident edges in $E'$, i.e., $\bigotimes_{e \in \text{incidentEdges(s)}} z_{e, s}$ where $\bigotimes$ is some predefined aggregation operator.
  \end{itemize}
\end{definition}

\textbf{Distributed storage and inputs.} Our \aggregate algorithm operates on objects that are \emph{``distributedly'' stored}, meaning that each processor has its own part of the object in private memory. Specifically, we store a \emph{node vector} $x \in \R^V$ by storing $x_v$ in node $v$. Similarly, edge vectors $x \in \R^{E}, x \in \R^{\vec{E}}, x\in \R^{\overleftrightarrow{E}}$ are distributed over edges. Subgraphs $H \subseteq G$ are specified by distributedly storing the characteristic vectors $\mathbb{1}_{V(H)} \in \R^{E(G)}$ and $\mathbb{1}_{E(H)} \in \R^{E(G)}$. Initially, the problem-specific input is assumed to be distributedly stored. For transshipment and SSSP, we assume the lengths (edge vector) and demands (node vector) are stored. At termination, the output is required to be distributedly stored.

\textbf{Minors.} Algorithms designed in this model can take advantage of several high-level primitives typically unavailable in other distributed models (due to the power of being able to solve PA). Notably, any algorithm can be run on any minor in a black-box manner. For example, this allows extremely simple descriptions of contraction-based algorithms such as Boruvka's MST algorithm.

\begin{corollary}\label{corollary:computation-on-minors}
  Any $\tau$-round \aggregate algorithm on a minor $G' = G / F$ of $G = (V, E)$ can be simulated via a $\tau$-round \aggregate algorithm on $G$. Initially, each edge $e \in E$ needs to know whether $e \in F$ or not. Upon termination, each node in $G$ learns all the information that the supernode it was contained in learned.
\end{corollary}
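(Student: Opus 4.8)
The plan is to show that one round of the given $\tau$-round \aggregate algorithm on $G' = G/F$ can be simulated by one round of an \aggregate algorithm on $G$, after an $O(1)$-round preprocessing phase; iterating this for all $\tau$ rounds proves the corollary. For the preprocessing, I would first contract exactly the edges of $F$ and run one consensus step in which every node submits its own ID aggregated by $\min$; afterwards every node knows the minimum ID in its $F$-component $S_{v'}$, and we call the attaining node the \emph{leader} $\ell(v')$ of the supernode $v' \in V(G')$ and use this minimum ID as the ID of $v'$ in the simulated algorithm. One further aggregation/consensus step lets every edge joining two distinct $F$-components learn the leader IDs of both its endpoints. Throughout, we maintain the invariant that \emph{every} node in $S_{v'}$ stores the entire current state of the $G'$-node $v'$ (and each surviving edge of $G'$ stores its own $G'$-state) — this invariant is exactly what yields the final claim that each node of $G$ learns everything its supernode learned.

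A round of the $G'$-algorithm first contracts some $F' \subseteq E(G')$ to form $G'' = G'/F'$, then runs a consensus step, then an aggregation step on $G''$; I would simulate it as follows. In the contraction step on $G$, every edge of $F$ sets $c_e = \top$, every edge corresponding to an edge of $F'$ also sets $c_e=\top$ (each such edge can evaluate the $G'$-algorithm's rule locally, since both endpoints store the relevant supernode states), and all other edges set $c_e=\bot$; by associativity of graph minors the resulting minor of $G$ is exactly $G/(F \cup \hat{F'}) = G''$, with the same node set and the same edge set that the $G'$-algorithm operates on this round. The consensus step is the one place that needs care: we must inject exactly one value $x_{v'}$ per $G'$-node into the aggregate of each $G''$-supernode even though $S_{v'}$ may contain many original nodes. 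I would use a ``flagged'' aggregation operator $\bigoplus^{*}$ acting on pairs $(\mathrm{flag},\mathrm{value})$: the leader $\ell(v')$ submits $(\mathrm{real}, x_{v'})$, every non-leader submits $(\mathrm{dummy}, \star)$, and $\bigoplus^{*}$ discards dummy operands while combining real operands via the original operator $\bigoplus$. Aggregating over a $G''$-supernode $s$ then returns $(\mathrm{real}, y_s)$ with $y_s$ a valid $\bigoplus$-aggregate of $\{x_{v'} : v' \in s\}$, which all original nodes of $s$ learn — matching the $G'$-algorithm's consensus output, since the \aggregate model permits an arbitrary aggregation schedule and so any schedule our simulation induces is legal for the correct $G'$-algorithm. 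The aggregation step on $G''$ is then copied verbatim, as $G''$ has literally the same edges and supernodes in both views; edges choose the $z$-values the $G'$-algorithm dictates and each supernode learns the $\bigotimes$ of its incident $z$-values. Each node and edge then folds the freshly learned aggregates into its stored state, re-establishing the invariant.

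Each simulated round costs exactly one \aggregate round on $G$; the flagged operator still uses $\tilde O(1)$-bit messages and all local computation stays $\tilde O(1)$-work, so adding the $O(1)$ preprocessing rounds (which can be folded into initialization) gives the claimed $\tau$-round bound, and the termination guarantee is immediate from the invariant. The single genuinely delicate point is the consensus step: naively having every node of $S_{v'}$ submit $x_{v'}$ would scale $v'$'s contribution by $|S_{v'}|$, which is wrong for non-idempotent operators such as sum, and leader election together with the flagged operator $\bigoplus^{*}$ is precisely the device that fixes this. Everything else reduces to bookkeeping that follows from the associativity of graph minors and from the fact that, once $F$ is contracted, the graph the $G'$-algorithm manipulates is the same combinatorial object as the corresponding minor of $G$.
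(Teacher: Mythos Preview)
The paper provides no proof for this corollary; it is stated as an immediate consequence of the model definition. Your argument correctly supplies the details and, in particular, isolates the one genuinely non-trivial point: naively having every $G$-node in $S_{v'}$ submit $x_{v'}$ in the consensus step would over-count for non-idempotent operators such as sum, and your leader-election-plus-flagged-operator device is exactly the right remedy. The observation that any $\bigoplus^{*}$-aggregation schedule over flagged pairs projects to \emph{some} valid $\bigoplus$-schedule over just the real values --- and that the model explicitly allows an arbitrary aggregation order --- is the key correctness point, and you state it.

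Two minor remarks. First, your simulation actually uses $\tau + O(1)$ rounds: the leader-election preprocessing is a genuine \aggregate round, and ``folding it into initialization'' is not an operation the model offers. The corollary as written elides this additive constant; since every use in the paper only needs $\tilde O(\tau)$, this is harmless, but your claim of an exact $\tau$-round bound is not quite justified. Second, the parenthetical ``since both endpoints store the relevant supernode states'' slightly misattributes where the information lives: in the model, edges are separate computational units that maintain their own copy of the corresponding $G'$-edge's state (updated via the $y_a,y_b$ values learned each round), rather than reading node memory directly. This is only an exposition quibble; the simulation itself is sound.
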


\textbf{Node-disjoint scheduling.} It is immediate, but useful to note, that we can can simultaneously run algorithms on node-disjoint subsets.
\begin{corollary}\label{lemma:node-disjoint-scheduling}
  Let $G$ be an undirected graph. Given any $\tau$-round \aggregate algorithms $A_1, \ldots, A_k$ running on node-disjoint and connected subgraphs (of $G$) $H_1, H_2, \ldots, H_k$, we can run $A_1, \ldots, A_k$ simultaneously within a $\tau$-round \aggregate algorithm $A$ that runs on $G$.
\end{corollary}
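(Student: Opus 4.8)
The plan is to run the $k$ algorithms in lockstep, round by round; since every $A_i$ runs for exactly $\tau$ rounds this yields a $\tau$-round \aggregate algorithm $A$ on $G$. Recall that, from the distributed storage of the subgraphs, each node and each edge of $G$ knows the at most one index $i$ for which it lies in $H_i$. I would maintain the invariant that at the start of each round every node $v\in V(H_i)$ holds exactly the local state it would hold at the corresponding point of an isolated run of $A_i$, while nodes outside $\bigcup_i H_i$ hold irrelevant state; the task is then to show that a single round of the three Minor-Aggregation steps preserves this invariant.

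In the \textbf{contraction step}, each edge internal to some $H_i$ sets its value $c_e$ exactly as $A_i$ would, and every other edge sets $c_e=\bot$. Let $F_i$ be the set of edges $A_i$ contracts this round. Because the $H_i$ are pairwise node-disjoint, the only contracted edges are internal to the $H_i$'s, so each supernode of the resulting minor $G'$ is either a singleton $\{v\}$ with $v\notin\bigcup_i H_i$ or is contained entirely in some $V(H_i)$; in the latter case these supernodes are exactly the supernodes of the minor $H_i':=H_i/F_i$ formed by $A_i$. Furthermore an edge of $G$ internal to $H_i$ survives in $G'$ (and is not a self-loop) if and only if it survives as an edge of $H_i'$, so $G'$ restricted to the supernodes inside $H_i$ is precisely $H_i'$, together with at most some extra edges joining those supernodes to supernodes lying outside $H_i$ (or inside a different $H_j$).

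In the \textbf{consensus step}, each $v\in V(H_i)$ chooses the value $x_v$ dictated by $A_i$ and every node outside $\bigcup_i H_i$ chooses an arbitrary value, e.g.\ the identity of $\bigoplus$. For a supernode $s\subseteq V(H_i)$ the quantity $y_s=\bigoplus_{v\in s}x_v$ is then an aggregate over exactly the set $s$, i.e.\ over exactly the nodes $A_i$ would aggregate, so it is an admissible value for $A_i$ as well (the model allows an arbitrary aggregation order, so any value the combined run produces is also legal for $A_i$; alternatively, assuming $\bigoplus$ is commutative and associative makes $y_s$ literally equal). Symmetrically, in the \textbf{aggregation step} each edge of $E'$ that is internal to some $H_i'$ reads the (by induction correct) endpoint values and outputs the $z$-values $A_i$ prescribes, while every other edge of $E'$ outputs the identity of $\bigotimes$ at both endpoints; the incident edges of a supernode $s\subseteq V(H_i)$ in $E'$ are its incident edges in $H_i'$ plus possibly some edges to the outside that now contribute the identity, so $\bigotimes_{e}z_{e,s}$ over $E'$ equals the value $A_i$ computes over $H_i'$. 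Each node then performs $A_i$'s $\tO(1)$-work local computation, restoring the invariant. Iterating $\tau$ rounds, every node of $H_i$ ends with the output of $A_i$; the round count is $\tau$ and the per-round local work is $\tO(1)$.

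The only genuinely delicate point is the interaction of the contractions: one must check that contracting the union $\bigcup_i F_i$ neither merges supernodes of different $H_i$'s nor alters the within-$H_i$ minor, which is exactly where node-disjointness of the $H_i$ is used. The additional edges of $G'$ incident to supernodes of $H_i$ (those leading outside $H_i$) and the nodes outside all $H_i$ are then handled harmlessly by always feeding them the identity element of the relevant operator; and if one wishes to avoid relying on the arbitrary aggregation order it suffices to note that the construction goes through verbatim whenever $\bigoplus$ and $\bigotimes$ are commutative and associative, which covers all uses in this paper.
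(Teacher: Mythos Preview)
Your proof is correct and is precisely the argument the paper has in mind; the paper itself does not give a proof, simply remarking that the statement is ``immediate, but useful to note.'' One very minor point: you rely on the aggregation operators having identity elements, which the abstract definition does not guarantee, but this is trivially handled by tagging messages with a ``null'' bit and redefining the operator to ignore nulls---a standard trick that fits within the $\tilde O(1)$-bit message budget.
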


\textbf{Ancestor and subtree sums.} The following operation, proven in \cite{ghaffari2021universally}, is often useful.
\begin{lemma}[Ancestor and subtree sums]\label{lemma:ancestor-subtree-sum}
  Let $F$ be a forest (each edge $e$ knows whether $e \in E(F)$ or not) and suppose that each connected component has a unique root (each node knows whether it is the root). Suppose each node $v$ has an $\tilde{O}(1)$-bit private input $x_v$. There is a deterministic $\tilde{O}(1)$-round Minor-Aggregation algorithm that computes for each node $v$ the values values $p_v := \bigoplus_{\ell \in A(v)} x_w$ and $s_v := \bigoplus_{\ell \in D(v)} x_w$, where $A(v)$ and $D(v)$ are the set of ancestors and descendants of $v$ of $F$ with respect to the root, where $\bigoplus$ is some pre-defined aggregation operator.
\end{lemma}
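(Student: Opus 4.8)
The statement to prove is \Cref{lemma:ancestor-subtree-sum} (Ancestor and subtree sums). My plan is to first establish a primitive version restricted to a \emph{single path} (a caterpillar-free path component), then bootstrap to arbitrary trees via a heavy-light-style decomposition, and finally reduce forests to trees by node-disjoint scheduling (\Cref{lemma:node-disjoint-scheduling}). The whole argument must only use the three elementary operations exposed by \Cref{def:aggregation-congest} (contract, consensus, aggregate-over-incident-edges), so every sub-step has to be phrased as a constant number of these.

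First I would handle the case where $F$ is a collection of rooted paths. The key observation is the classic pointer-jumping / parallel-prefix trick: after $O(\log n)$ rounds of repeatedly contracting a matching of edges, a path of length $n$ collapses to a point, and if each contraction records the partial aggregate of the two merged segments together with enough bookkeeping (which endpoint is the ``top'' one, toward the root), then uncontracting in reverse order lets every node recover both its ancestor-aggregate $p_v$ and its descendant-aggregate $s_v$. Concretely, in each round I pair up edges along the path (a $2$-coloring of the path's edges, computable since the path structure is known locally), contract the $\top$-colored ones, and in the consensus step propagate the combined $x$-aggregate of each contracted super-segment; the aggregation step is used to learn which neighbor is on the root side. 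Because contractions and the associated aggregations are exactly one \aggregate round each, this is $O(\log n)$ rounds total. This is essentially the same mechanism used in \cref{euler:lem:sqrt} to elect a leader on a closed walk, just specialized to compute prefix sums rather than consistently orient.

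Next I would reduce a general rooted tree to the path case. Compute a heavy-light decomposition of $F$: this requires subtree sizes, which is itself an instance of the problem for the ``$x_v = 1$'' input — so strictly I should first prove subtree-size computation directly (this is the one genuinely self-referential point and I would handle it by the same pointer-jumping argument run on the tree's Euler-tour-like contraction, or more simply by $O(\log n)$ rounds of leaf-peeling/contracting stars). With the heavy-light decomposition in hand, $F$ decomposes into $O(\log n)$ ``layers'' of heavy paths; on each layer the heavy paths are node-disjoint, so \Cref{lemma:node-disjoint-scheduling} lets me run the path algorithm on all of them in parallel. To assemble the global ancestor-aggregate $p_v$, I note the root-to-$v$ path crosses at most $O(\log n)$ heavy paths and $O(\log n)$ light edges; each heavy path already knows the aggregate from its top node downward (the path primitive), and I can splice these together by a second $O(\log n)$-round pointer-jumping over the (contracted) tree of heavy paths. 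The descendant-aggregate $s_v$ is symmetric: $s_v$ is $x_v$ plus the sum of $s_w$ over $v$'s children, and using subtree sizes plus the heavy-child pointer this again reduces to a logarithmic number of path-aggregations and splices. Finally, a forest is just node-disjoint trees, so \Cref{lemma:node-disjoint-scheduling} finishes it, and every node learns whether it is a root by hypothesis.

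The main obstacle I expect is the apparent circularity: heavy-light decomposition needs subtree sizes, which is the very quantity the lemma computes. I would break this by giving a standalone, decomposition-free routine for subtree sizes using $O(\log n)$ rounds of ``rake and compress'' (repeatedly contract degree-$1$ leaves into their parents, accumulating their $x$-values, and contract degree-$2$ chains), which is a self-contained tree-contraction argument not relying on any prior aggregation structure; this is analogous in spirit to the degree-$2$ contraction used in \Cref{lem:simulation_contraction}. A secondary nuisance is keeping the $\tilde{O}(1)$-bit message constraint: partial aggregates are single values of $\tilde{O}(1)$ bits by assumption on $\bigoplus$, and the only extra bookkeeping is $O(\log n)$-bit identifiers and ``which side is the root'' flags, all of which fit. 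Once these two points are dispatched, everything else is a routine composition of $O(\log n)$-round pieces, giving the claimed $\tilde{O}(1)$-round bound.
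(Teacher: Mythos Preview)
The paper does not actually prove this lemma; it is stated with the remark ``proven in \cite{ghaffari2021universally}'' and used as a black box throughout. So there is no in-paper proof to compare against, and your task is really to give a self-contained argument.

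Your overall plan (bootstrap via tree contraction, then heavy--light, then node-disjoint scheduling for forests) is sound and is essentially the standard route. Two remarks:

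First, a small slip: you write that a $2$-coloring of the path's edges is ``computable since the path structure is known locally.'' It is not---edge parity along a path is global. This is easily repaired (compute a $3$-coloring via Linial in $O(\log^* n)$ rounds, or take a maximal matching using IDs), and the extra $O(\log^* n)$ factor per iteration is harmless for a $\tilde O(1)$ bound, but as written that step does not go through.

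Second, your architecture is more elaborate than necessary. Once you have rake-and-compress working well enough to compute subtree \emph{sizes}, the same contraction sequence already computes arbitrary subtree aggregates $s_v$ and, with the symmetric bookkeeping on the uncontract pass, arbitrary ancestor aggregates $p_v$. The heavy--light decomposition and the separate path-prefix primitive are then redundant: you could run a single rake--compress pass (leaves into parents; maximal-matching contractions along degree-$2$ chains), record partial aggregates on each contracted piece, and expand. This avoids the circularity entirely rather than patching it, and keeps the argument to one mechanism instead of three. Your version is correct, just longer than it needs to be.
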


\textbf{Simulation in \congest and \pram.} It is immediately clear that any \aggregate algorithm can be turned into a fast \pram algorithm. However, the power of the \aggregate framework comes from the possibility of compiling it to an efficient distributed (i.e., \congest) algorithm, assuming the partwise aggregation oracle can be solved efficiently---which is provided by the low-congestion shortcut framework (under some conditions). This is formalized in the following theorem.

\begin{theorem}\label{thm:compilation_pa_congest}
  Let $A$ be any (deterministic) $\tau$-round \aggregate algorithm on $G$ with $n := |V(G)|$ and $m := |E(G)|$. In \pram, we can deterministically simulate $A$ with $\tilde{O}(m \cdot \tau)$ work and $\tilde{O}(\tau)$ depth. Furthermore, in the distributed setting, we can simulate $A$ in $\tilde{O}(\tau)$ \congestPA rounds.
\end{theorem}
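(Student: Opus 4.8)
The plan is to simulate one \aggregate round with $\tO(1)$ overhead in each target model and then chain $\tau$ such simulations. Recall from \Cref{def:aggregation-congest} that a round has three parts: (i) a \emph{contraction step}, whose only global content is that each node learns the identity of its supernode (the connected component, in $G/\{e:c_e=\top\}$, that it belongs to); (ii) a \emph{consensus step}, which aggregates inside every supernode $s$ the node values into $y_s=\bigoplus_{v\in s}x_v$ and delivers $y_s$ to all $v\in s$; and (iii) an \emph{aggregation step}, in which every surviving edge between supernodes $a$ and $b$ learns $y_a,y_b$, produces values $z_{e,a},z_{e,b}$, and every supernode $s$ learns the $\bigotimes$-aggregate of $z_{e,s}$ over the edges $e$ incident to $s$. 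I would implement these three pieces separately; since the operators $\bigoplus,\bigotimes$ need not be commutative or associative, I always form an aggregate by repeatedly combining two partial results along a fixed rooted spanning tree of the relevant supernode, which is one of the valid execution orders the model permits.

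For the \pram bound: computing the supernode labels together with a rooted spanning forest of $\{e:c_e=\top\}$ is the standard connected-components primitive, doable in $\tO(1)$ depth and $\tO(m)$ work (e.g.\ $O(\log n)$ rounds of pointer jumping / Boruvka contraction). Given the rooted forest, the consensus step is the \pram analogue of the ancestor/subtree-sum routine of \Cref{lemma:ancestor-subtree-sum}: fold $x_v$ up each tree and broadcast the root value back down, in $\tO(1)$ depth and $\tO(m)$ work. In the aggregation step every surviving edge $(u,v)$ already knows $y_a$ and $y_b$ (these were written to all members of $a$ and $b$, in particular to $u$ and $v$), so it computes $z_{e,a},z_{e,b}$ locally; each node then folds together the $z$-values of its incident surviving edges directed at its own supernode, and a second subtree aggregation delivers the required $\bigotimes$-aggregate to every supernode, again in $\tO(1)$ depth and $\tO(m)$ work. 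Hence one \aggregate round is $\tO(1)$ depth and $\tO(m)$ work, so $\tau$ rounds cost $\tO(\tau)$ depth and $\tO(m\tau)$ work; the $\tO(1)$ virtual nodes allowed over the whole execution contribute only an additional $\tO(1)$ factor, via the virtual-node simulation described in \Cref{subsec:minor_aggregation}.

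For the \congestPA bound the crucial point is that the consensus step is \emph{verbatim} a partwise-aggregation instance: the supernodes are connected, pairwise node-disjoint subgraphs of $G$ that cover $V$, so a single PA call with private inputs $x_v$ produces all $y_s$ and delivers each $y_s$ throughout $s$. The aggregation step then costs one ordinary \congest round (each surviving edge $(u,v)$ exchanges $y_a$ and $y_b$ across itself so both endpoints can form $z_{e,a},z_{e,b}$) followed by one more PA call (each node contributes the fold of the $z$-values it owns, aggregated inside its supernode). Thus each \aggregate round compiles to $O(1)$ PA calls plus $O(1)$ \congest rounds, i.e.\ $O(1)$ \congestPA rounds, and a $\tau$-round algorithm compiles to $\tO(\tau)$ \congestPA rounds, with virtual nodes again absorbed into the $\tO(\cdot)$.

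The main obstacle is the contraction step in the distributed model: a PA call presupposes the parts are already known, but here the parts are precisely the supernodes under construction, and in \congest a node does not a priori know which component of $\{e:c_e=\top\}$ it lies in. I would bootstrap this with a Boruvka-style procedure: maintain a partition into connected pieces (initially singletons), and in each of $O(\log n)$ phases let every current piece select, through a PA call over the current (connected, node-disjoint) pieces, an incident $\top$-edge leaving it, then merge the pieces along the selected edges; after $O(\log n)$ phases every node holds a canonical $\tO(1)$-bit supernode ID and the selected edges form a rooted spanning forest of each supernode. One must verify that the intermediate pieces remain connected in $G$ (they are, being connected in the subgraph of $\top$-edges) so that PA keeps applying, and deal with multi-edges and self-loops of the minor by orienting each surviving edge to a fixed endpoint before folding the $z$-values; these points are routine but require care, while non-commutativity of $\bigoplus,\bigotimes$ and message-size bookkeeping are already handled by the fixed-spanning-tree aggregation convention fixed above.
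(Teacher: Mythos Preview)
Your proposal is correct and follows the standard simulation strategy; the paper's own proof simply declares the \pram part ``immediate'' and cites \cite{goranci2022universally,ghaffari2021universally} for the \congestPA part, so you have in fact supplied more detail than the paper does. Your Boruvka-style bootstrap for learning supernode labels via $O(\log n)$ PA calls on the growing pieces is exactly the mechanism underlying those cited results.
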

\begin{proof}
  The \pram part of the theorem is immediate. The randomized \congestPA simulation is proven in \cite{goranci2022universally}, albeit without explicitly defining the \congestPA model. The deterministic \congestPA simulation is proven in \cite{ghaffari2021universally}, albeit also without explicitly defining the \congestPA model.
\end{proof}

\subsection{Virtual Nodes in the \aggregate Model}

Another useful high-level primitive is adding $\tilde{O}(1)$ arbitrarily-connected virtual nodes to the communication network $G$ and running an arbitrary \aggregate algorithm in a black-box manner on the new graph while only incurring a $\tilde{O}(1)$ slowdown. Note that such an operation is completely inadmissible in other models of distributed computing like \congest~\cite{peleg2000distributed} where nodes are allowed to send/receive arbitrary (i.e., non-aggregate) information to/from each one of their neighbors separately, as this would allow efficiently communicating the entire network topology to a single node and solving the problem locally. On the other hand, virtual-node addition can be efficiently simulated in the \aggregate model due to the ``aggregate''-nature of communication. For example, we use this operation to connect a ``supersource'' node, enabling us to use our standard SSSP algorithm to implement a set-source-$S$ shortest path.

Given a graph $G$, we can construct a \emph{virtual graph} by repeatedly adding a \emph{virtual node} and connecting it to an arbitrary set of neighbors. This is formalized in the following definition.
\begin{definition}
  A virtual graph $G\virt$ extending a connected graph $G = (V, E)$, is a graph whose node set can be partitioned into $V$ and a set of so-called \emph{virtual nodes} $V\virt$, i.e., $V(G\virt) = V \sqcup V\virt$. Furthermore, the edge set $E(G\virt)$ can be partitioned into $E$ and a set of so-called \emph{virtual edges} $E\virt$ with the property that every virtual edge is adjacent to at least one virtual node. We say $G\virt$ has at most $\beta$ virtual nodes if $|V\virt| \le \beta$.
\end{definition}

\textbf{Distributed storage of virtual graphs.} In order to ``construct'' a virtual graph, one needs to be able to distributedly store it; unfortunately, this requires specifying some low-level storage details. A virtual graph $G\virt$ is distributedly stored in $G$ in the following way. A virtual edge connecting a non-virtual $u$ and a virtual $v$ is only stored in $u$ (other nodes do not need to know about its existence). A virtual edge between two virtual nodes is required to be known by all nodes. All nodes know about all virtual nodes.


\textbf{Simulations on virtual graphs.} We can add $\beta$ many virtual (arbitrarily interconnected) nodes to any graph $G$ and still simulate any Minor-Aggregation algorithm on the virtual graph with a $O(\beta + 1)$ blowup in the number of rounds. We say a node/edge $a$ simulates a virtual node/edge $v$ if $a$'s private memory contains all data of $v$'s private memory in the virtual graph (i.e., $a$ knows what $v$ is outputting in each round, and learns everything $v$ receives). The following simulation result formalizes this (proven in \cite{ghaffari2021universally}).
\begin{theorem}\label{thm:simulating-virtual-nodes}
  Suppose $A\virt$ is a $\tau$-round Minor-Aggregation algorithm on a virtual graph $G\virt(\beta)$ that extends $G$ and has at most $\beta$ virtual nodes. Any such $A\virt$ can be simulated in $G$ with a $\tau \cdot O(\beta + 1)$-round Minor-Aggregation algorithm.
  Upon termination, each non-virtual node $v \in V(G)$ learns all information learned by $v$ and all virtual nodes in $A\virt$.

  Each virtual node is simulated by all (non-virtual) nodes. Similarly, the virtual edges connecting two virtual nodes are simulated by all nodes, while the edges connecting a virtual and a non-virtual node are simulated by the non-virtual node.
\end{theorem}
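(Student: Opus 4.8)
The plan is to simulate $A\virt$ round by round: one round of $A\virt$ on $G\virt$ is carried out by $O(\beta+1)$ rounds of Minor-Aggregation on $G$, so $\tau$ rounds cost $\tau\cdot O(\beta+1)$ in total. The simulation maintains the storage invariant built into the theorem statement: every non-virtual node of $G$ keeps a copy of the state of all $\beta$ virtual nodes and of all $\binom{\beta}{2}$ virtual-virtual edges (an extra $\tilde{O}(\beta)$ bits per node), while a virtual-real edge $\{u,v\}$ is stored only at its real endpoint $u$. Consequently any step that involves only virtual nodes and virtual-virtual edges is performed locally and identically at every real node, for free; all communication goes into the virtual-real interaction, which I handle using the fact that a single Minor-Aggregation round on the connected graph $G$ can contract $G$ into one supernode and thereby broadcast a $\tilde{O}(1)$-bit aggregate to every node.

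First I would simulate the contraction step, which amounts to letting every real node learn the supernode structure of the minor $G\virt'$. Contracting the $\top$-marked real edges inside $G$ yields the ``real supernodes''; using $O(\beta)$ Minor-Aggregation rounds, each real supernode $S$ aggregates the $\beta$-bit vector recording which virtual nodes are glued to it by a $\top$-marked virtual-real edge, so every node of $S$ learns this set $A(S)$. The supernodes of $G\virt'$ are then exactly the connected components of the auxiliary graph whose vertices are the real supernodes together with the at most $\beta$ virtual nodes, and whose edges are the $\top$-marked virtual-virtual edges and the pairs $\{S,v\}$ with $v\in A(S)$. Since this auxiliary graph has only $O(\beta)$ ``hub'' vertices (the virtual ones), its connected components can be found in $O(\beta)$ contract-$G$-into-one aggregation rounds, e.g. by repeatedly letting every hub learn the minimum current label among its neighbouring real supernodes and vice versa, a label propagation that converges in $O(\beta)$ rounds and whose outcome is broadcast to every real node. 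At the end of this, every real node knows, for the $G\virt'$-supernode $T$ containing it, which virtual nodes and which other real supernodes lie in $T$.

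Next I would simulate the consensus and aggregation steps. For consensus, each real supernode $S$ computes $y_S:=\bigoplus_{u\in S}x_u$ by one native Minor-Aggregation round and simultaneously elects a representative node (say the maximum-ID one). A few more contract-$G$-into-one rounds then let every real node learn, for each $G\virt'$-supernode $T$, the aggregate $\bigoplus_{S\subseteq T}y_S$ over the real supernodes $S$ of $T$; crucially only the representative of each $S$ feeds $y_S$ in (tagged with $T$'s identifier computed above), which stops a supernode from being counted more than once when $\bigoplus$ is not idempotent. Adding the locally computable virtual part $\bigoplus_{v\in T\cap V\virt}x_v$ gives $y_T$ at every real node; and because every virtual node is mirrored at every real node, this also makes each virtual node ``learn'' its $y_T$, as the consensus requirement asks. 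The aggregation step is symmetric: native Minor-Aggregation on $G$ supplies each real supernode's aggregate over its incident real minor-edges, virtual-virtual minor-edges are handled locally from the replicated core, and the virtual-real minor-edges (each joining a supernode that contains a virtual node to another supernode) are aggregated and broadcast by the same contract-$G$-into-one rounds; each real node then combines the three contributions into $\bigotimes_e z_{e,T}$ for its supernode $T$.

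The main obstacle is the one the construction is built around: contracting $G\virt$ can fuse several real supernodes that are pairwise non-adjacent in $G$ into a single $G\virt'$-supernode by routing through the virtual nodes, so no purely local aggregation inside $G$ can see such a supernode in one shot. Coping with this is exactly what forces the $O(\beta)$ ``contract all of $G$, aggregate through the $O(\beta)$ virtual hubs, broadcast back'' rounds, plus the representative election that stops an overlapping supernode from contributing several times. With these in place, the correctness of each of the three steps is immediate from the definitions, the per-round cost is $O(\beta+1)$ Minor-Aggregation rounds on $G$, giving the claimed $\tau\cdot O(\beta+1)$, and the closing sentence of the theorem is simply the invariant we maintained throughout: virtual nodes and virtual-virtual edges live (in replicated form) at every real node, and each virtual-real edge stays with its real endpoint. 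This argument is carried out in full in \cite{ghaffari2021universally}.
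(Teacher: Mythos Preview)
The paper does not actually prove this theorem; it simply states the result and defers to \cite{ghaffari2021universally} with the sentence ``proven in \cite{ghaffari2021universally}.'' Your proposal is therefore not competing against any argument in the present paper, and since you also close by citing \cite{ghaffari2021universally}, you are aligned with what the paper does. The sketch you give---replicating virtual nodes and virtual--virtual edges at every real node, storing virtual--real edges at their real endpoint, and handling each simulated round by $O(\beta)$ global contract-all-of-$G$ broadcasts to propagate labels through the $\beta$ virtual hubs, with representative election to avoid double-counting under non-idempotent $\bigoplus$---is the standard approach and is correct at the level of detail given.
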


Suppose we are given a \aggregate algorithm on a graph with virtual nodes. We can simulate the algorithm in \congest by first applying \Cref{thm:simulating-virtual-nodes}, thereby obtaining a (non-virtual) \aggregate algorithm. Then, we can use the general simulation results \Cref{thm:compilation_pa_congest,thm:oracleSimulation} to obtain a \congest algorithm.

\subsection{Oracles in the \congest and Minor-Aggregation Model}

An oracle is a black-box function that takes a certain input (on the original graph) and produces a specified output. A \aggregate/\congest algorithm with access to an oracle can, in each round, choose to perform a regular \aggregate/\congest round, or to make a single oracle call (i.e., multiple oracle calls per round are forbidden). For example, \congestPA is simply the \congest model with access to the PA oracle. 
The reason why oracles are useful are results like the following fact that follows from \cref{thm:compilation_pa_congest}. 

\begin{fact}[Oracle simulation]
  \label{thm:oracleSimulation}
  Suppose $A$ is a $\tau$-round \aggregate algorithm with access to oracles $\fO^1, \ldots, \fO^k$. The following simulation results hold:
  \begin{enumerate}
  \item Suppose each oracle $\fO^i$ call can be simulated in \pram in at most $\tau_{work}$ work and $\tau_{time}$ time. Then $A$ can be simulated with a \pram algorithm in total work $\tilde{O}(\tau \cdot (\tau_{work} + m))$ and total time $\tilde{O}(\tau \cdot (\tau_{time} + 1))$.

  \item Suppose each oracle $\fO^i$ call can be simulated in \congestPA in at most $\tau_{oracle}$ rounds. Then $A$ can be simulated with a \congestPA algorithm in at most $\tilde{O}(\tau \cdot (\tau_{oracle} + 1))$ rounds.
  
  \item Suppose each oracle $\fO^i$ call can be simulated in \congest in at most $\tau_{oracle}$ rounds. Then $A$ can be simulated with a \congest algorithm in at most $\tilde{O}(\tau \cdot (\tau_{oracle} + \sqrt{n} + \hopDiameter{G} ))$ rounds. 
  
    \item Suppose each oracle $\fO^i$ call can be simulated in \congest in at most $\tau_{oracle}$ rounds on a fixed minor-free family. 
    Then $A$ can be simulated with a \congest algorithm in at most $\tilde{O}(\tau \cdot (\tau_{oracle} + \hopDiameter{G} ))$ rounds. 
  \end{enumerate}
\end{fact}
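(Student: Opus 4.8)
The plan is to prove all four items by modular composition of the general simulation results already established in the excerpt — chiefly \Cref{thm:compilation_pa_congest} (a $\tau$-round \aggregate algorithm compiles to a \pram algorithm of $\tilde O(m\tau)$ work and $\tilde O(\tau)$ depth, and to an $\tilde O(\tau)$-round \congestPA algorithm) together with \Cref{thm:congestpa-simulation} (a $\tau$-round \congestPA algorithm runs in $\tilde O(\tau\cdot\tau_{PA})$ \congest rounds, with $\tau_{PA}=\tilde O(\sqrt n+\hopDiameter{G})$ in general and $\tau_{PA}=\tilde O(\hopDiameter{G})$ on a fixed minor-closed family) — plus the trivial remark that a $\tau$-round algorithm issues at most $\tau$ oracle calls, since the model permits at most one call per round.

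First I would split the execution of $A$ into its \emph{regular} \aggregate rounds and its \emph{oracle-call} rounds, each of which number at most $\tau$. For item~(1), the regular rounds are simulated via \Cref{thm:compilation_pa_congest} at a total cost of $\tilde O(m\tau)$ work and $\tilde O(\tau)$ depth, while the $\le\tau$ oracle calls contribute an additional $\tau\cdot\tau_{work}$ work and $\tau\cdot\tau_{time}$ depth by hypothesis; summing gives $\tilde O(\tau(\tau_{work}+m))$ work and $\tilde O(\tau(\tau_{time}+1))$ depth. For item~(2), the identical bookkeeping inside \congestPA gives $\tilde O(\tau)$ rounds for the regular part (again by \Cref{thm:compilation_pa_congest}) plus $\tau\cdot\tau_{oracle}$ for the calls, i.e.\ $\tilde O(\tau(\tau_{oracle}+1))$ rounds. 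The one point to be careful with here is that the oracle outputs are redistributed so that they can feed into subsequent regular rounds, which is fine since every intermediate object is distributedly stored.

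For items~(3) and~(4) the key subtlety is to avoid paying the partwise-aggregation overhead on the oracle calls. Concretely, I would compile only $A$'s regular rounds down to an $\tilde O(\tau)$-round \congestPA algorithm, then replace each of its \congestPA rounds by its \congest simulation at cost $\tau_{PA}$, obtaining $\tilde O(\tau(\sqrt n+\hopDiameter{G}))$, resp.\ $\tilde O(\tau\cdot\hopDiameter{G})$, \congest rounds for the regular part via \Cref{thm:congestpa-simulation}; the $\le\tau$ oracle calls are simulated directly in \congest at $\tau_{oracle}$ rounds each, contributing $\tau\cdot\tau_{oracle}$. Adding the two parts yields $\tilde O(\tau(\tau_{oracle}+\sqrt n+\hopDiameter{G}))$ and $\tilde O(\tau(\tau_{oracle}+\hopDiameter{G}))$ respectively.

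The only genuinely delicate point, and the one I would spell out, is that an oracle is defined to act ``on the original graph'' while $A$ may conceptually run on a minor or a virtual graph of $G$; one must argue that assembling the (distributedly stored) oracle input, invoking the oracle, and redistributing its output incurs only a $\tilde O(1)$ overhead — but this is precisely what the distributed-storage conventions together with \Cref{corollary:computation-on-minors} and \Cref{thm:simulating-virtual-nodes} provide, and the $\tilde O(1)$ factor is absorbed into the $\tilde O(\cdot)$. Beyond this bookkeeping, the statement is pure composition of the stated lemmas and presents no substantive obstacle.
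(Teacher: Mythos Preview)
Your proposal is correct and matches the paper's approach: the paper does not give a detailed proof but simply states that the fact ``follows from \Cref{thm:compilation_pa_congest}'', and your argument is precisely the natural unpacking of that claim (together with \Cref{thm:congestpa-simulation} for items~(3) and~(4)). Your care in separating regular rounds from oracle calls, and in not routing the oracle calls through the partwise-aggregation overhead in items~(3)--(4), is exactly the right bookkeeping.
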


\section{Missing Proofs of \cref{sec:localiterative}}
\label{sec:missingProofs}
\cycleone*

\begin{proof}
Combining \cref{lem:cycletwo} and \cref{lem:cyclethree} implies that $\oPot_{D_i}$ and $\oDist_{\eps,D}$ can be efficiently computed for every $D \in \left[\frac{D_i}{\valSNC},D_i \right]$ and every $\eps = \frac{1}{\tilde{O}(1)}$.
Being able to efficiently compute $\oDist_{\eps,D}$ for every $D \in \left[\frac{D_i}{\valSNC},D_i\right]$ and every $\eps = \frac{1}{\tilde{O}(1)}$  together with \cref{lem:cyclefour} implies that we can efficiently compute a sparse neighborhood cover with covering radius $\frac{D_i}{\valSNC}$ together with a rooted spanning tree $T_C$ of diameter at most $D_i$ for every cluster $C$ in the cover.
Moreover, once we have computed such a sparse neighborhood cover, \cref{lem:cyclefive} together with the fact that we can efficiently compute $\oPot_{D_i}$ implies that we can efficiently compute a distance structure for the scale $D_i$, as desired.
\end{proof}

\cycletwo*

\begin{proof}
Directly follows from the first part of \cref{thm:oblivious_routing}.
\end{proof}

\cyclethree*

\begin{proof}

Consider an arbitrary $D \in \left[\frac{D_i}{\valSNC},D_i \right]$ and $S  \subseteq V$. 
We show how to answer both $\oDist_{\frac{1}{\log^3(n)},D}(S)$ and $\oPot_{D_i}(S)$.

In case $S = \emptyset$, we can answer $\oDist_{\frac{1}{\log^3(n)},D}(S)$ by returning an empty forest and $\oPot_{D_i}(S)$ by setting $\phi_{S,D_i}(v) = D_i$ for every $v \in V$.
Hence, it remains to consider the case $S \neq \emptyset$.

When $S$ is the first input given to $\oObliv_{D_i}$, then $\oObliv_{D_i}$ implements matrix-vector products with some fixed matrix. We denote this matrix by $R_S$.
We invoke \cref{theorem:boosting_rounding} with $R = R_S$, $H = G_{S,D_i}$, $\eps = \frac{1}{10\log^3(n)\valSNC}$ and $s = \sstar$ to get an algorithm running in $\tilde{O}(1)$ \aggregate rounds when given access to the rounding oracle $\oRound_{\eps/2} = \oRound$ and an oracle which performs matrix-vector products with $R_S$ and $R_S^T$.

The output of the algorithm consists of two parts.
First, a potential $\phi$ for $G_{S,D_i}$ with $\phi(\sstar) = 0$. \\
Second, a tree $T \subseteq G_{S,D_i}$ rooted at $\sstar$ such that for each $v \in V(G_{S,D_i})$ we have $\dist_T(\sstar, v) \leq (1+\eps) \phi(v)$.

We can efficiently simulate this algorithm in $G$, viewing $G_{S,D_i}$ as a virtual graph with two virtual nodes that extend $G$.
Indeed, \cref{thm:simulating-virtual-nodes} implies that each \aggregate round in $G_{S,D_i}$ can be simulated with $O(1)$ \aggregate rounds in $G$. Moreover, we can use the oracle $\oObliv_{D_i}$ to compute the $\tilde{O}(1)$ matrix-vector products with $R_S$ and $R_S^T$. Finally, we can directly simulate the oracle $\oRound$ for the virtual graph $G_{S,D_i}$ with the oracle $\oRound$ for the actual graph $G$, as the set of valid input graphs for both oracles are the same.

Next, we show that $T$ can be used to efficiently answer $\oDist_{\frac{1}{\log^3(n)},D}(S)$ and $\phi$ can be used to efficiently answer $\oPot_{D_i}(S)$.

\paragraph{Implementing the Forest Oracle}

We answer $\oDist_{\frac{1}{\log^3(n)},D}(S)$ by returning the rooted forest $F$ with 

\begin{enumerate}
    \item $V(F) = \{v \in V \colon \dist_T(\sstar, v) \leq (1+\eps)(D_i + D)\}$,
    \item each node $v \in V(F) \cap S$ is a root in $F$, and
    \item each node $v \in V(F) \setminus S$ has the same parent in $F$ and $T$.
\end{enumerate}

Note that $F$ is only well-defined if for each node $v \in V(F) \setminus S$, the parent of $v$ in $T$, which we denote by $p_v$, is actually a vertex in $F$.
We first verify that $p_v \in V$, which is equivalent to $p_v \notin \{\sstar, v_D\}$. Clearly, $p_v \neq s^*$, as there is no edge in $G_{S,D_i}$ between $s^*$ and a node in $V \setminus S$.
For the sake of contradiction, assume that $p_v = v_D$. Then, 
\begin{align*}\dist_T(\sstar,v) = \dist_T(\sstar, v_D) + \dist_T(v_D,v) \geq 2D_i + D_i> (1+\eps)(D_i + D),
\end{align*}
which is a contradiction with $\dist_T(\sstar,v) \leq (1+\eps)(D_i + D)$, which follows from $v \in V(F)$.
Hence, $p_v \in V$ and as $\dist_G(\sstar,p_v) \leq \dist_G(\sstar,v)\leq (1+\eps)(D_i + D)$, we can conclude that $p_v \in V(F)$.

Next, we show that each node $v \in V$ with $\dist_G(S,v) \leq D$ is contained in $V(F)$. 
This is indeed the case as 
\begin{align*}\dist_T(\sstar,v) \leq (1+\eps)\dist_{G_{S,D_i}}(\sstar,v) \leq (1+\eps)(D_i + D).
\end{align*}

In particular, each node in $S$ is contained in $V(F)$ and therefore $F$ is a forest rooted at $S$. 
It therefore remains to verify that for each node $v \in V(F)$, $\dist_F(S,v) \leq (1+\frac{1}{\log^3(n)})D$.
To that end, consider the unique path between $\sstar$ and $v$ in the tree $T$. This path contains at least one vertex in $S$. Let $s$ be the last vertex in $S$ on that path. From the previous discussion, it follows that $\dist_F(s,v) = \dist_T(s,v)$, as the unique path between $s$ and $v$ in $T$ is fully  contained in $F$. We therefore have,

\begin{align*}
    &\dist_F(S,v) \leq \dist_T(s,v) \leq \dist_T(\sstar,v) - D_i \\
    &\leq (1+\eps)(D_i + D) - D_i\leq D + 2\eps D_i\leq \left( 1+\frac{1}{\log^3(n)} \right) D,
\end{align*}

where the last inequality follows from $D \geq \frac{D_i}{\valSNC}$ and $\eps \leq \frac{1}{2\log^3(n)\valSNC}$.

\paragraph{Implementing the Potential Oracle}
We answer $\oPot_{D_i}(S)$ by setting for each $v \in V$
$$\phi_{S,D_i}(v) = \max(\phi(v) - D_i,0).$$

As $\phi_{S,D_i}$ can be written as a maximum of two potentials, $\phi_{S,D_i}$ is a valid (non-negative) potential according to \cref{fact:potential_min_max}. It therefore remains to verify the two conditions stated in \cref{def:potential}. We first verify the first condition. Consider am arbitrary vertex $v \in S$. We have

\begin{align*}\phi(v) \leq \phi(\sstar) + |\phi(v) - \phi(\sstar)| \leq 0 + \ell_{G_{S,D_i}}(s^*,v) \le D_i
\end{align*}

and therefore $\phi_{S,D_i}(v) = 0$.

To verify the second condition, consider an arbitrary $v \in V$ with $\dist_G(v,S) \geq \frac{D_i}{\valSNC}$. We have

\begin{align*}\phi_{S,D_i}(v) \geq \phi(v) - D_i \geq \frac{\dist_T(\sstar, v)}{1+\eps} - D_i \geq \frac{D_i + \dist_G(v,S)}{1+\eps} - D_i \geq \frac{D_i/\valSNC - \eps D_i}{1+\eps} \geq 0.5\frac{D_i}{\valSNC},
\end{align*}

where the last equality follows from $\eps \leq \frac{1}{10 \valSNC}$. 
\end{proof}

\cyclefour*

\begin{proof}
Directly follows from \cref{thm:sparse_cover} with $D = D_i$.
\end{proof}

\cyclefive*

\begin{proof}
Given the sparse neighborhood cover together with the trees, the only missing piece for the distance structure for the scale $D_i$ is a potential for the scale $D_i$ with respect to $V \setminus C$ for every cluster $C$ in one of the clusterings (which only needs to be known to the nodes inside $C$).

Let $\fC$ be one of the clusterings in the sparse neighborhood cover.  
We show how to simultaneously compute for each cluster $C \in \fC$ a potential for scale $D_i$. 

For each node $v$, let $id(v)$ denote the unique $O(\log n)$-bit identifier of $v$.
We assign each cluster $C \in \fC$ an identifier by concatenating the identifier of its cluster center $id(v_C)$ with the bit-wise negation of $id(v_C)$.
We call the resulting $b$-bit identifier $id(C)$.
For $j \in [b]$, we say that a cluster is active in round $j$ if the $j$-th bit of its identifier is $1$.
Note that for two distinct clusters $C_1,C_2 \in \fC$, there exists at least one round in which $C_1$ is active and $C_2$ is inactive.
We denote with $V_{active,j} \subseteq V$ the set of vertices that are contained in a cluster that is active in round $j$.
In round $j$, we use oracle $\oPot_{D_i}$ with input $V \setminus V_{active,j}$ to compute a potential $\phi_j$ for scale $D_i$ with respect to $V \setminus V_{active,j}$.

For each cluster $C \in \fC$ and $v \in V$, we set $\phi_{V \setminus C, D_i}(v) = \min_{\text{$j \colon$ $C$ is active in round $j$}} \phi_j(v)$.

We show that $\phi_{V \setminus C,D_i}$ is a potential for scale $D_i$ with respect to $V \setminus C$.

First, $\phi_{V \setminus C, D_i}$ is the minimum of non-negative potentials and therefore a non-negative potential itself according to \cref{fact:potential_min_max}.
It therefore remains to verify that $\phi_{V \setminus C, D_i}$ satisfies the two conditions of \cref{def:potential}. To verify the first condition, consider an arbitrary $v \in V \setminus C$. We have to show that $\phi_{V \setminus C, D_i}(v) = 0$. There exists some round $j$ in which $C$ is active and $v$ is not contained in an active cluster. For this $j$, $\phi_j(v) = 0$ and therefore $\phi_{V \setminus C, D_i}(v) = 0$, as desired.

It remains to check the second condition. To that end, consider an arbitrary $v \in V$ with $\dist_G(v,V \setminus C) \geq \frac{D_i}{\valSNC}$.
We need to show that this implies $\phi_{V \setminus C,D_i}(v) \geq 0.5\frac{D_i}{\valSNC}$.
Consider an arbitrary round $j$ in which $C$ is active.
As $\dist_G(v,V \setminus C) \geq \frac{D_i}{\valSNC}$, we have $\dist_G(v,V \setminus V_{active,j}) \geq \frac{D_i}{\valSNC}$ which in turn implies $\phi_j(v) \geq 0.5\frac{D_i}{\valSNC}$.
Hence, $\phi_{V \setminus C,D_i}(v) \geq 0.5\frac{D_i}{\valSNC}$.

The result now follows by iterating over all of the $O(\log n)$ clusterings in the sparse neighborhood cover.
\end{proof}

\cycleeuler*
\begin{proof}
Directly follows from \cref{thm:rounding}. 
\end{proof}

\cyclesix*

\begin{proof}
We prove by induction on $i$ that a distance structure for every scale $D_j$ with $j \leq i$ can be computed in $i \cdot \tilde{O}(1)$ \aggregate rounds.
The base case $i = 1$ directly follows from \cref{lem:cycleone}.
Now, consider an arbitrary $i > 1$ (such that $D_i$ is a scale) and assume that the claim above holds for $i-1$. Hence, we can compute a distance structure for every scale $D_j$ with $j \leq i - 1$ in $(i-1)\tilde{O}(1)$ \aggregate rounds.
Now, applying \cref{lem:cycleone} with the scale $D_i$ implies that in $\tilde{O}(1)$ additional \aggregate rounds we can compute a distance structure for the scale $D_i$, which finishes the induction step.
This finishes the proof, as there are only $O(\log n)$ different scales.
\end{proof}

\cycleseven*

\begin{proof}
Directly follows from the second part of \cref{thm:oblivious_routing}.
\end{proof}

\cycleeight*

\begin{proof}
First, \cref{lem:transshipment_primal_dual} implies that the $(1+\eps)$-transshipment problem can be solved in $\tO(1/\eps^2)$ \aggregate rounds.
Second, the tree returned by the algorithm of \cref{theorem:boosting_rounding} is a $(1+\eps)$-SSSP-tree and the algorithm runs in $\tO(1/\eps^2)$ \aggregate rounds, as each of the $\tilde{O}(1)$ matrix-vector products with $R$ and $R^T$ can be computed in $\tO(1)$ \aggregate rounds.
\end{proof} 

\cyclenine*

\begin{proof}
\cref{lem:cyclesix} states that a distance structure for every scale $D_i$ can be efficiently computed (as $\oRound$ can be implemented with $\oRound_{\eps/2}$).
Together with \cref{theorem:cycleseven}, this implies that there exists an $\tilde{O}(1)$-competitive $\ell_1$-oblivious routing $R$ for $G$ for which $R$ and $R^T$ can be efficiently evaluated.
Combining this with \cref{theorem:cycleeight}, we can conclude that the $(1+\eps)$-transshipment problem and the $(1+\eps)$-SSSP-tree problem in $G$ can be efficiently solved.
\end{proof}

\cycleten*

\begin{proof}
Directly follows from \cref{theorem:cyclenine} together with \cref{lem:cycleeuler}. 
\end{proof}

\mainparallel*

\begin{proof}
\cref{thm:euler_pram} states that there exists a \pram algorithm with near-linear work and $\tO(1)$ depth that implements $\oEuler$. The result now follows by \cref{theorem:cycleten}. 
\end{proof}

\maindistributed*

\begin{proof}
\cref{thm:euler_dist} states that the oracle $\oEuler$ can be implemented with the desired round complexity. The result for the single source shortest path then follows from \cref{theorem:cycleten,thm:oracleSimulation}. 
To extend it to the shortest forest from the set $S$ problem, we add a virtual node $s$ to our graph and connect it to nodes of $S$ via edges of weight $0$ (technically, we do not allow zero weights, but we can choose the weight to be negligible to all other weights). Then, we use the fact that, by definition, the oracle $\oEuler$ supports virtual nodes and, by \cref{thm:simulating-virtual-nodes} we can simulate virtual nodes in the \aggregate model. 
\end{proof}

\section{Transshipment Boosting via Multiplicative Weights}\label{sec:boosting}



\begin{lemma}
\label{lem:transshipment_primal_dual}
Let $R \in \R^{V \times \vec{E}}$ be an $\alpha$-approximate $\ell_1$-oblivious routing on a weighted graph $H = (V, E)$ with weight vector $\ell \in \R^E$, let $b \in \R^V$ be a demand vector with $\sum_{v \in V} b_v = 0$, and let $0 < \eps < 1/2$. Suppose we can compute matrix-vector products with $R$ and $R^T$ in $M$ \aggregate rounds. Then, in $\eps^{-2}\poly(\alpha, M, \log n)$ \aggregate rounds, we can compute a flow $f \in \R^{\vec{E}}$ that satisfies the demand $b$, and a vector of potentials $\phi$ such that $\ell(f) \le (1+\eps)b^T\phi$.
\end{lemma}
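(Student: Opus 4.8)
The plan is to boost the $\alpha$-approximate $\ell_1$-oblivious routing $R$ into a $(1+\eps)$-approximate transshipment solution via the multiplicative-weights / gradient-descent framework of Sherman~\cite{She17b} (see also~\cite{BKKL17, zuzic2021simple}); all that needs to be verified is that every step of that framework fits into the \aggregate model within the claimed round budget.

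First I would recast transshipment as $\ell_1$-minimization with a linear constraint. Let $W = \diag(\ell) \in \R^{\vec E \times \vec E}$ and let $B \in \R^{V \times \vec E}$ be the signed incidence matrix, so that ``$f$ satisfies $b$'' reads $Bf = b$ and $\ell(f) = \norm{Wf}_1$. Substituting $x = Wf$ and setting $A := BW^{-1}$, the primal becomes $\min\{\norm{x}_1 : Ax = b\}$ with dual $\max\{b^\top\phi : \norm{A^\top\phi}_\infty \le 1\}$; note that $(A^\top\phi)_e = (\phi_u-\phi_v)/\ell_e$ encodes exactly the potential condition $|\phi_u-\phi_v|\le\ell_e$. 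The rescaled routing $\tilde R := WR$ satisfies $A\tilde R b = B W^{-1} W R b = B R b = b$ (so it is an exact linear right inverse of $A$ on the demand subspace) and $\norm{\tilde R b}_1 = \ell(Rb) \le \alpha\norm{b}_\opt$, i.e.\ $\tilde R$ is precisely the $\alpha$-competitive $\ell_1$-solver the framework expects. Matrix--vector products with $\tilde R$ and $\tilde R^\top$ cost $M + O(1)$ \aggregate rounds by hypothesis (multiplying by the diagonal $W$ is purely local), while products with $A$ and $A^\top$ cost $O(1)$ \aggregate rounds, since $(Ag)_v$ is a signed sum over the edges incident to $v$ (one aggregation step) and $(A^\top\phi)_e$ is a function of the two endpoints of $e$.

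Next I would run the boosting procedure on this $\ell_1$-instance: gradient descent on a softmax-smoothed reformulation of the transshipment objective with temperature $\mu = \Theta(\eps/\log n)$, where each iteration (i) evaluates a gradient of the form $A$ (resp.\ $A^\top$) applied to a coordinatewise function of the current iterate, (ii) turns the current dual iterate into a candidate primal $x$ and repairs feasibility by adding $\tilde R(b-Ax)$ --- this restores $Ax = b$ exactly while inflating $\norm{x}_1$ by only $\alpha\norm{b-Ax}_\opt$ --- and (iii) performs a coordinatewise update step. The standard analysis (where $0<\eps<1/2$ is used to control the softmax error) shows that $O(\alpha^2\eps^{-2}\log n)$ iterations produce a feasible potential $\phi$ and a flow $f = W^{-1}x$ with $Bf = b$ and $\ell(f) = \norm{x}_1 \le (1+\eps)\,b^\top\phi$. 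Each iteration uses $O(1)$ products with $A, A^\top, \tilde R, \tilde R^\top$, a constant number of global scalar reductions --- e.g.\ $b^\top\phi$ and $\norm{x}_1$, obtained by contracting $G$ to a single supernode and summing node/edge values with one consensus or aggregation step --- and $\tilde O(1)$ local work per node and edge, for a total of $O(M + \polylog n)$ \aggregate rounds per iteration and hence $\eps^{-2}\poly(\alpha, M, \log n)$ rounds overall.

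I expect the genuine difficulty to already be resolved by prior work: the boosting convergence analysis is entirely citable. What remains, and what I would be most careful about, is the interface matching --- exhibiting $\tilde R = WR$ as the $\ell_1$-solver required by the framework --- and the model bookkeeping: confirming that the global sums and the coordinatewise softmax updates respect the \aggregate model's $\tilde O(1)$ per-processor memory and computation budget, so that the $M$-round cost of evaluating $R$ and $R^\top$ is the only non-trivial contribution to the round complexity.
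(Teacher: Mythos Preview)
Your proposal is correct and follows essentially the same route as the paper: both invoke Sherman's multiplicative-weights boosting on the $\ell_1$-oblivious routing and verify that each iteration fits into the \aggregate model. The paper's proof is shorter because it black-boxes the boosting step by citing Lemma~6.1 of~\cite{goranci2022universally}, which already establishes the \aggregate implementation; you instead unpack the framework (the $A = BW^{-1}$ reformulation, the softmax gradient step, the feasibility repair via $\tilde R(b-Ax)$) and check the model bookkeeping directly.

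One small structural difference worth noting: the black-box the paper cites only returns a flow satisfying a nearby demand $b'$ with $\norm{b-b'}_\opt \le \eps\norm{b}_\opt$, so the paper adds a separate post-processing step---rerunning with $\eps' = \eps/\alpha$ and then routing the residual $b-b'$ once through $R$---to obtain exact feasibility. Your version bakes the feasibility repair into every iteration, which is the more standard presentation of Sherman's method and yields $Bf=b$ directly without the extra pass. Both arrive at the same bound.
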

\begin{proof}
  Lemma 6.1 of \cite{goranci2022universally} proves there exist a $\tilde{O}(\alpha^2 \eps^{-2} M)$ \aggregate round algorithm that computes a flow $f' \in \R^{\vec{E}}$ and potentials $\phi' \in \R^{V}$ satisfying (1) $\ell(f') \le (1 + \eps) b^T \phi \le (1 + \eps) \norm{b}_\opt$ , and (2) flow $f'$ satisfies a potentially different demand $b'$ with the guarantee $\norm{b - b'}_\opt \le \eps \norm{b}_\opt$ (i.e., that one can near-optimally route the difference). This nearly matches our required guarantees with the exception that we require our final flow $f$ to perfectly satisfy the guarantees.

  For future reference, we show that $\norm{b}_\opt \le (1 + 5 \eps) b^T \phi$. First, there exists a flow $f_2^*$ that satisfies $b'$ of $\ell(f_2^*) \le \norm{b'}_\opt \le \eps \norm{b}_\opt$. Since $f' + f_2^*$ satisfies $b$, we have that $\norm{b}_\opt \le \ell(f') + \ell(f_2^*) \le (1 + \eps) b^T \phi + \eps \norm{b}_\opt$. Therefore, $(1 - \eps) \norm{b}_\opt \le (1 + \eps) b^T \phi$, or $\norm{b}_\opt \le (1 + 5 \eps) b^T \phi$.

  To fix $f$, we can simply call the above procedure with an enhanced guarantee of $\eps' := \eps / \alpha$, guaranteeing $\norm{b - b'}_\opt \le \frac{\eps}{\alpha} \norm{b}_\opt$. Then, utilizing the $\alpha$-competitive oblivious routing $R$, we compute $f'' := R(b - b')$ which satisfies $\ell(f'') \le \alpha \norm{b - b'}_\opt \le \alpha \frac{\eps}{\alpha} \norm{b}_\opt = \eps \norm{b}_\opt$. Therefore, the flow $f' + f''$ (perfectly) satisfies $(b - b') + b' = b$ and has cost $\ell(f' + f'') \le \ell(f') + \ell(f'') \le (1 + \eps)b^T \phi + \eps \norm{b}_\opt \le (1 + \eps)b^T \phi + \eps (1 + 5 \eps) b^T \phi \le (1 + 5 \eps) b^T \phi$. Reducing $\eps$ by a factor of $5$ gives the final result.
\end{proof}

\begin{restatable}[Boosting + Rounding]{theorem}{boosting}
\label{theorem:boosting_rounding}
Let $R$ be an $\alpha$-approximate $\ell_1$-oblivious routing on a weighted graph $H$ and $s \in V(H)$ be arbitrary. For a given $\eps \in (0,1]$, one can compute:
\begin{itemize}
  \item A potential $\phi$ with $\phi(s) = 0$. 
  \item A tree $T$ rooted at $s$ such that for each $v \in V(H)$ we have that $\dist_T(v) \le (1 + \eps) \phi(v)$.    
  \end{itemize}
  The computation runs in $\tilde{O}(\poly(\alpha)\eps^{-2})$ \aggregate rounds in $H$, together with $\tilde{O}(\poly(\alpha)\eps^{-2})$ many computations of matrix-vector products with $R$ and $R^T$ and $\poly(\log n)$ oracle calls to $\oRound_{\eps / 2}$.
\end{restatable}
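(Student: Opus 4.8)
The plan is to assemble the algorithm from two ingredients that are already available — the multiplicative-weights boosting of \cref{lem:transshipment_primal_dual} and the rounding oracle $\oRound_{\eps/2}$ — glued together by the adaptive ``peeling'' scheme of \cite{becker2019low,Li20}, which amplifies an \emph{average}-quality transshipment primal/dual pair into a \emph{per-vertex} shortest-path tree. Throughout I would keep $H$ distributedly stored in $G$ with $\tO(1)$ virtual nodes, so that every \aggregate operation on $H$ costs $\tO(1)$ rounds on $G$ by \cref{thm:simulating-virtual-nodes}, and fix a boosting accuracy $\eps_0 = \Theta(\eps)$ with a sufficiently small hidden constant (chosen small relative to the precision $\min(\eps/2, 1/(20\log^3 n \valSNC))$ delivered by $\oRound_{\eps/2}$).

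First I would run one call of \cref{lem:transshipment_primal_dual} on $H$ with routing $R$, accuracy $\eps_0$, and the single-source demand $b$ given by $b(v)=1$ for $v\neq s$ and $b(s)=-(|V(H)|-1)$. Since all demand is delivered to the single sink $s$, the optimum decomposes into shortest paths, so $\norm{b}_\opt=\sum_{v\neq s}\dist_H(s,v)$, and the lemma returns a flow $f$ satisfying $b$ together with a potential $\phi$ with $\phi(s)=0$ and $\ell(f)\le(1+\eps_0)\,b^{\top}\phi\le(1+\eps_0)\norm{b}_\opt$; replacing $\phi$ by $\max(\phi,0)$ (still a potential, by \cref{fact:potential_min_max}, and still vanishing at $s$) we may take $\phi\ge 0$. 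Feeding $(H,\ell_H,f)$ to $\oRound_{\eps/2}$ then yields a spanning tree $T$ rooted at $s$ with $\sum_{v\neq s}\dist_T(s,v)\le(1+\eps/2)\ell(f)$. At this point $\phi$ is a pointwise lower bound and $T$ a pointwise upper bound on $\dist_H(s,\cdot)$, and their total gap $\sum_v(\dist_T(s,v)-\phi(v))$ is an $O(\eps)$ fraction of $\sum_v\dist_H(s,v)$ — but individual vertices may still be served poorly.

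Then I would run the peeling loop: maintain $\phi$ as the pointwise maximum of all potentials produced so far (still a potential) and $T$ as a forest on $V(H)$ in which every vertex keeps the parent pointer giving the smallest root-distance seen so far, recomputing all $\dist_T(s,\cdot)$ in $\tO(1)$ rounds via \cref{lemma:ancestor-subtree-sum}. In round $k=1,2,\dots$ set $B_k=\{v\neq s:\dist_T(s,v)>(1+\eps)\phi(v)\}$ and stop once $B_k=\emptyset$; otherwise re-invoke \cref{lem:transshipment_primal_dual} (accuracy $\eps_0$) on the single-source demand supported on $B_k$, round the resulting flow $f_k$ with $\oRound_{\eps/2}$ to a tree $T_k$ with new potential $\psi_k$, and update $\phi$ and $T$ by the maxima above. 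Combining boosting's guarantee with weak duality and the rounding bound shows $\sum_{v\in B_k}(\dist_{T_k}(s,v)-\psi_k(v))$ is only an $O(\eps)$ fraction of $\sum_{v\in B_k}\dist_H(s,v)$; since every $v$ that remains bad after the update satisfies $\dist_{T_k}(s,v)>(1+\eps)\psi_k(v)$ and hence $\dist_{T_k}(s,v)-\psi_k(v)>\tfrac{\eps}{1+\eps}\dist_{T_k}(s,v)\ge\tfrac{\eps}{2}\dist_H(s,v)$ (using $T_k\subseteq H$), a constant fraction of the $\dist_H$-weight of $B_k$ leaves the bad set each round, so the loop halts after $K=\poly\log n$ iterations. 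At termination $\dist_T(s,v)\le(1+\eps)\phi(v)$ for all $v$, and $T\subseteq H$ also gives $\dist_H(s,v)\le\dist_T(s,v)$, so $(\phi,T)$ satisfies the stated properties. The cost is $K=\poly\log n$ invocations of \cref{lem:transshipment_primal_dual} (each $\tO(\poly(\alpha)\eps^{-2})$ \aggregate rounds and equally many products with $R$ and $R^{\top}$), $K$ calls to $\oRound_{\eps/2}$, and $\tO(1)$ calls to \cref{lemma:ancestor-subtree-sum} per round — exactly the claimed bound.

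The main obstacle is the per-vertex amplification in the peeling loop: a single boosting$+$rounding call controls only a weighted \emph{sum} of distances, so one must choose the demand on $B_k$ (and, correspondingly, whether progress is measured against the current potential or the current tree) carefully enough that \emph{(i)} the restricted transshipment optimum stays comparable to the ``mass'' of $B_k$ and \emph{(ii)} the new potential is large enough on almost all of $B_k$ to certify $\dist_{T_k}(s,v)\le(1+\eps)\phi_{k+1}(v)$, together with ensuring that the boosting accuracy $\eps_0$ is small enough relative to the rounding error that the surviving bad mass shrinks by a \emph{constant} factor (rather than a $1-\Theta(\eps)$ factor) per round. This balance is precisely the accounting carried out in \cite{becker2019low} and reused in \cite{Li20}, and I would import it with only cosmetic changes; the one genuinely new (but routine) piece is carrying out the tree merges and distance recomputations inside the \aggregate model, for which \cref{lemma:ancestor-subtree-sum} is exactly the right primitive.
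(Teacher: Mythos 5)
Your outline tracks the paper's own proof almost line for line: the same peeling loop over the still-unsatisfied (the paper says ``unhappy'') vertices, the same boost-then-round subroutine via \cref{lem:transshipment_primal_dual} and $\oRound_{\eps/2}$, the same running pointwise maximum of potentials and running best-parent tree, and the same $\tO(1)$-round-per-phase cost via \cref{lemma:ancestor-subtree-sum}. So the plan is the right one.

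The one step that does not close as sketched is the quantitative contraction. You lower-bound the gap of each surviving bad vertex by $\tfrac{\eps}{2}\dist_H(s,v)$ and upper-bound the total gap by $\delta\sum_{v\in B_k}\dist_H(s,v)$, where $\delta$ is the combined boosting-plus-rounding error; this gives $\sum_{B_{k+1}}\dist_H<\tfrac{2\delta}{\eps}\sum_{B_k}\dist_H$, which is a contraction only when $\delta<\eps/2$. But $\oRound_{\eps/2}$ alone can contribute $\eps/2$ to $\delta$: its precision is $\min\!\big(\eps/2,\tfrac{1}{20\log^3 n\,\valSNC}\big)$, so once $\eps\le\tfrac{1}{10\log^3 n\,\valSNC}$ the $\eps/2$ branch is binding, and boosting adds strictly more. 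In exactly that regime $\delta>\eps/2$ and no constant-factor decay follows, no matter how small you make $\eps_0$. The paper's accounting sidesteps this by not dropping to $\dist_H$ vertex by vertex: it sums $\phi_i(v)\le\dist_{T_i}(s,v)$ over the newly-happy vertices together with $(1+\eps)\phi_i(v)\le\dist_{T_i}(s,v)$ over the still-unhappy ones, compares against $\sum_{V_i}\dist_{T_i}\le(1+\delta)\sum_{V_i}\phi_i$, and deduces that the \emph{unhappy potential mass} $\sum_{V_i\setminus U_i}\phi_i$ contracts by a factor $\delta/\eps<1$, which only needs $\delta<\eps$ (met with $\eps_0=\eps/10$, $\eps'\le\eps/2$, giving $\delta\le 2\eps/3$). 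This is then translated back to $\zeta_i=\sum_{V_i}\dist_H(s,v)$ via $\phi_i(v)\le\dist_H(s,v)\le\dist_{T_i}(s,v)$ and the aggregate $(1+\delta)$ bound. The amortization over the $(1+\eps)$ slack, rather than a per-vertex $\eps/2$ bound, is the load-bearing detail; you point to it (``the accounting carried out in~\cite{becker2019low}'') but your rendition of it drops the very factor that makes the constants work, so it should be written out. A smaller point: choosing the boosting accuracy $\eps_0$ ``small relative to the precision of $\oRound_{\eps/2}$'' is the wrong calibration — when $\eps$ is large that precision is $\Theta(1/\polylog n)$ and such an $\eps_0$ would blow the $\tO(\eps^{-2})$ round budget; both $\eps_0$ and $\eps'$ need only be small relative to $\eps$, so $\eps_0=\Theta(\eps)$ as the paper chooses.
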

  The main ideas of the solution to this task were mostly presented as the Algorithm SSSPTree of \cite{haeuplerLi2018sssp}. We adapt the ideas from this and describe them here for completeness.
\begin{proof}


First we present a procedure with $k = O(\log n)$ phases that in each phase constructs a tree-potential pair $\phi_i, T_i$ by invoking \cref{lem:transshipment_primal_dual}. 
During this procedure, we maintain a set of \emph{unhappy} nodes $V_i$. 
At the beginning, $V_i = V \setminus \{s\}$. 
A node $v \in V_i$ becomes happy in a phase $i$ if $d_{T_i}(s, v) \le (1+\eps)\phi_i(v)$. 
Let us moreover define $\zeta_i = \sum_{v \in V_i} \dist(s, v)$. 

We now describe the procedure in a greater detail and prove that $\zeta_i$ decreases by a constant factor every phase, hence after $ k = O(\log n)$ phases $\zeta_k  = 0$ and every node is happy. 
In each phase, we invoke \cref{lem:transshipment_primal_dual} with our $\ell_1$-oblivious routing $R$ and a demand $b(s) = -|V_i|$ and $b(v) = 1$ for all $v \in V_i$. 
 \Cref{lem:transshipment_primal_dual} returns a flow $f_i$ and a potential $\phi_i$ with $\ell(f_i) \le (1 + \frac{\eps}{10}) b^T \phi_i$; suppose that (by translation) $\phi_i(s) = 0$. 
 Using the rounding oracle $\oRound_{G, \eps/2}$ we round $f$ to a tree $T$ satisfying $\sum_{v, b(v) \neq 0} b(v) \dist_{T_i}(s, v) \le (1 + \frac{\eps}{2}) \ell(f_i)$. 
    Hence, have
  \begin{align}
  \label{eq:sss}
    \sum_{v \in V_i} \dist_{T_i}(s, v) 
    \le \left( 1 + \frac{\eps}{2}\right) \ell(f) 
    \le \left( 1 + \frac{\eps}{2} \right) \left( 1 + \frac{\eps}{10} \right) b^T \phi_i 
    \le \left( 1 + \frac{2}{3} \eps \right) \sum_{v \in V_i} \phi_i(v).  
  \end{align}
 
Let $U_i \subseteq V_i$ be the set of nodes $v$ that become happy in phase $i$, that is, $v \in U_i$ if  $\dist_{T_i}(s, v) \le (1+\eps)\phi_i(v)$. 
  We can compute the root-to-$v$ distances in $T_i$ with the ancestor sum operation (\Cref{lemma:ancestor-subtree-sum}) in $\tilde{O}(1)$ \aggregate rounds. 
  Therefore, we can check for each node $v \in V_i$ whether $v \in U_i$ or not and define $V_{i+1} = V_i \setminus U_i$. 

  Since $\phi_i$ is a potential, we have $ \sum_{v\in U_i} \phi_i(v) \le \sum_{v \in U_i} \dist_{T_i}(s, v) $ and by definition $(1+\eps)\sum_{v\in V_i \setminus U_i} \phi_i(v) \le \sum_{v \in V_i \setminus U_i} \dist_{T_i}(s, v)$. 
  Together with above inequality we get
  \begin{align*}
      \sum_{v \in U_i} \phi_i(v) + \left( 1 +\eps \right)\sum_{v \in V_i \setminus U_i} \phi_i(v) \le \left( 1 + 2\eps/3 \right) \sum_{v \in V_i} \phi_i(v),
  \end{align*}
  hence
  \begin{align*}
   \eps \sum_{v \in V_i \setminus U_i} \phi_i(v) \le 2\eps/3 \sum_{v \in V_i} \phi_i(v) 
  \end{align*}
  or
  \begin{align}
  \label{eq:psss}
   \sum_{v \in U_i} \phi_i(v) \ge \sum_{v \in V_i} \phi_i(v)/3.  
  \end{align}
  
    Hence, we have 
    \begin{align*}
        \zeta_{i} - \zeta_{i+1}
        &= \sum_{v \in U_i} \dist(s, v)
        \ge \sum_{v \in U_i} \phi_i(v) && \text{$\phi_i$ is a potential} \\
        &\ge \sum_{v \in V_i} \phi_i(v)/3 &&  \text{Eq. \cref{eq:psss}}\\
        &\ge \sum_{v \in V_i} \dist_{T_i}(s, v)/4 &&  \text{Eq. \cref{eq:sss}}\\
        &\ge  \sum_{v \in V_i} \dist(s, v)/4
        = \zeta_i / 4 
    \end{align*}
  
Hence, after $k = O(\log n)$ phases $\zeta_k = 0$ and every node is happy. 

 Next, we use potentials $\phi_1, \ldots, \phi_k$ and trees $T_1, \ldots, T_k$ to construct a common potential $\phi$ and tree $T$ that we output. 
 First, we construct a common potential $\phi \in \R^V$ by $\phi(v) = \max_{i=1}^k \phi_i(v)$. 
 Note that $\phi$ is potential by \cref{fact:potential_min_max} and if $v$ got happy in the $i$-th phase, we have $d_{T_i}(s, v) \le (1+\eps)\phi(v)$. 

Second, we construct a common tree $T$ by letting each node maintain its $T$-\emph{parent} and a \emph{distance label} which is an upper bound on the distance to the root in $T$. First, direct all trees $T_i$ from the root to the leafs using \Cref{lemma:ancestor-subtree-sum}. 
Then, set $T := T_1$ (by copying the parent pointers) and initialize the distance label of $v$ with $\dist_{T_i}(s, v)$. Then, we process $T_2, T_3, \ldots, T_k$ one-by-one. Each node $v$ considers whether the distance label of the $T_i$-parent plus the weight of the $T_i$-parent-edge is smaller than the current distance label. If yes, its sets its parent pointer to its $T_i$-parent and updates the distance label. After processing all $T_i$, the parent pointers characterize the tree $T$. Furthermore, it is easy to show that the distance label for a node $v$ is upper bounded by $\min_{i=1}^k \dist_{T_i}(s, v)$, hence every node $v$ satisfies $\dist_{T}(s, v) \le (1+\eps)\phi(v)$. 
\end{proof}

\end{document}